\documentclass[11pt]{article}

\usepackage[margin=1.125in]{geometry}
\usepackage[table]{xcolor}
\definecolor{DarkGreen}{rgb}{0.1,0.5,0.1}
\definecolor{DarkRed}{rgb}{0.5,0.1,0.1}
\definecolor{DarkBlue}{rgb}{0.1,0.1,0.5}

\usepackage[small]{caption}
\usepackage[pdftex]{hyperref}
\hypersetup{
    unicode=false,          
    pdftoolbar=true,        
    pdfmenubar=true,        
    pdffitwindow=false,      
    pdfnewwindow=true,      
    colorlinks=true,       
    linkcolor=DarkBlue,          
    citecolor=DarkGreen,        
    filecolor=DarkGreen,      
    urlcolor=DarkBlue,          
    pdftitle={},
    pdfauthor={},    
}
\usepackage{amssymb,amsmath,amsthm,amsfonts,enumitem}
\usepackage{fullpage,nicefrac,comment}
\usepackage{tikz}
\usetikzlibrary{arrows,decorations.pathmorphing,decorations.shapes,snakes,patterns,positioning, shapes.callouts, shapes.arrows}
\usepackage[noend]{algpseudocode}
\usepackage{algorithm}
\usepackage{thmtools}
\usepackage{thm-restate}
\usepackage{fontawesome5}

\makeatletter
\AddToHook{env/algorithmic/before}{\def\@currentcounter{ALG@line}}
\makeatother

\usepackage{cleveref}
\crefalias{ALG@line}{line}
\crefname{line}{line}{lines}
\Crefname{line}{Line}{Lines}



\makeatletter
\providecommand{\theHALG@line}{}
\renewcommand{\theHALG@line}{\theHalgorithm.\arabic{ALG@line}}
\makeatother

\newcommand{\cA}{\ensuremath{\mathcal{A}}}

\newcommand{\cE}{\ensuremath{\mathcal{E}}}
\newcommand{\cS}{\ensuremath{\mathcal{S}}}

\newcommand{\cD}{\ensuremath{\mathcal{D}}}

\newcommand{\F}{{\mathbb F}}

\newcommand{\cT}{\mathcal{T}}

\newcommand{\cL}{\mathcal{L}}
\newcommand{\cI}{\mathcal{I}}
\newcommand{\ctL}{\Tilde{\mathcal{L}}}

\newcommand{\EE}{\mathbb{E}}

\newcommand{\ADLLR}{\ensuremath{\mathrm{DALLRC}}}
\newcommand{\DLLR}{\ensuremath{\mathrm{DLLRC}}}
\newcommand{\DLCC}{\ensuremath{\mathrm{DLCC}}}

\newcommand{\inabs}[1]{\left|#1\right|}

\newcommand{\inset}[1]{\left\{#1\right\}}

\newcommand{\inparen}[1]{\left(#1\right)}
\newcommand{\inbrak}[1]{\left[#1\right]}

\newcommand{\argmin}{\mathrm{argmin}}
\newcommand{\polylog}{\mathrm{polylog}}
\newcommand{\poly}{\mathrm{poly}}

\newcommand{\dist}{\delta}

\newcommand{\eps}{\varepsilon}
\renewcommand{\epsilon}{\varepsilon}

\newtheorem{theorem}{Theorem}[section] 
\newtheorem{lemma}[theorem]{Lemma} 
\newtheorem{definition}[theorem]{Definition}

\newtheorem{corollary}[theorem]{Corollary} 

\newtheorem{remark}{Remark}

\newtheorem{claim}[theorem]{Claim}

\newtheorem{proposition}[theorem]{Proposition}

\newcommand{\rep}{\mathrm{Rep}}

\algrenewcommand\algorithmiccomment[1]{\hfill\textcolor{blue!50!black!50}{\small $\triangleright$\ #1}}

\newcommand{\Cot}{C^{\otimes t}}

\newcommand{\DALLR}[1]{\ensuremath{\mathcal{P}^{(#1)}}}
\newcommand{\cApre}{\mathcal{P}}
\newcommand{\cP}{\mathcal{P}}

\newcommand{\Cstar}{C^{\star}}

\newcommand{\Tone}{\ensuremath{\textsc{Test}_1}}
\newcommand{\Ttwo}{\ensuremath{\textsc{Test}_2}}
\newcommand{\Tthree}{\ensuremath{\textsc{Test}_3}}

\global\long\def\inn{in}
\global\long\def\out{out}
\newcommand{\Sp}{\mathrm{Sp}}
\newcommand{\Tm}{\mathrm{T}}

\newcommand{\Spop}{\mathrm{Len}^{(Out)}}
\newcommand{\Spcon}{\mathrm{Sp}^{(Pre)}}
\newcommand{\Tmcon}{\mathrm{T}^{(Pre)}}
\newcommand{\Sppre}{\mathrm{Sp}^{(Pre)}}
\newcommand{\Tmpre}{\mathrm{T}^{(Pre)}}
\newcommand{\Speval}{\mathrm{Sp}^{(Eval)}}
\newcommand{\Tmeval}{\mathrm{T}^{(Eval)}}
\newcommand{\Testt}{\ensuremath{\textsc{Test}^{(t)}}}

\newcommand{\TTestt}{\ensuremath{\Tm_{\Testt}}}
\newcommand{\Tmdectwo}{\ensuremath{\Tm_{\mathrm{Dec}_{C \otimes C}}}}
\newcommand{\Tmdetect}{\ensuremath{\Tm_{\textsc{Detect}_C}}}
\newcommand{\Spdectwo}{\ensuremath{\Sp_{\mathrm{Dec}_{C \otimes C}}}}
\newcommand{\Spdetect}{\ensuremath{\Sp_{\textsc{Detect}_C}}}

\title{Time- and Space-Efficient List Decoding up to Capacity}

\author{Dorsa Fathollahi\thanks{Stanford University, \ \texttt{dorsafth@stanford.edu}.  Work partially supported by NSF grants CNS-2321489 and CFF-2231157.}, Noga Ron-Zewi\thanks{University of Haifa, \ \texttt{nronzewi@ds.haifa.ac.il}. Work partially supported by  BSF grant 2021683, and by the European Union (ERC, ECCC, 101076663). Views and opinions expressed are however those of the author(s) only and do not necessarily reflect those of the European Union or the European Research Council. Neither the European Union nor the granting authority can be held responsible for them.}, and Mary Wootters\thanks{Stanford University, \texttt{marykw@stanford.edu}. Work partially supported by NSF grants CNS-2321489 and CFF-2231157.}. }

\begin{document}
    \maketitle

\begin{abstract}
    In the theory of error correcting codes, \emph{list-decoding} refers to the following problem.  Given a code $C \subseteq \Sigma^N$ and a received word $y \in \Sigma^N$, find all codewords $c \in C$ so that $\delta(c,y) \leq \rho$, where $\delta$ is relative Hamming distance and $\rho \in (0,1)$.  Codes that approach the optimal trade-off between the \emph{rate} $R := \frac{\log_{|\Sigma|}(|C|)} N$ and the \emph{list-decoding radius} $\rho$ are said to \emph{achieve capacity}.
    By now, there are constructions of capacity-achieving list-decodable codes with fast near-linear-time list-decoding algorithms, but most existing work has not considered \emph{space} complexity.  
    
    In a recent line of work, Cook and Moshkovitz~\cite{CM25_enc,CM25,CM26} initiated the study of \emph{low-space} deterministic algorithms for error correcting codes.  In particular, in~\cite{CM26}, they gave a construction of list-decodable codes with deterministic near-linear-time and sublinear space list-decoding algorithms.  However, these codes were far from achieving capacity.

    In this paper, we present list-decodable codes approaching capacity with deterministic time- and space-efficient list-decoding algorithms.  More precisely, for any $R \in (0,1)$ and any arbitrarily small constant $\tau > 0$, we present a family of codes $C\subseteq \Sigma^N$ with rate $R$ that are deterministically list-decodable up to radius $\rho = 1 - R - \tau$, in time $N^{1 + \tau}$ and space $N^{\tau}$ with constant output list size and constant alphabet size.  Our results can be extended to capacity-achieving list-\emph{recoverable} codes.

    Our main tool is a \emph{deterministic} version of a locally list-recoverable code.  It is common wisdom that a local decoding algorithm cannot be deterministic.  However, we observe that if the algorithm is allowed to do some non-local (but still small time  and space) pre-processing,  ``deterministic local list-recovery'' is possible.  Moreover, such codes behave nicely under black-box transformations that are common in coding theory, like composition, intersection, concatenation, the AEL distance-amplification procedure, and tensoring.  We take advantage of this, and the main ingredient in our construction is a \emph{high-rate} deterministic locally list-recoverable code (with sublinear-space, near-linear-time global pre-processing).    Similar notions of deterministic local decoding were implicit in prior work \cite{CM25,CM26};  we hope that making this notion explicit will be useful for future applications.
\end{abstract}
\newpage 
\small
\tableofcontents
\normalsize
\newpage 
\section{Introduction}\label{sec:intro}
An \emph{error correcting code} $C$ is a set $C \subseteq \Sigma^N$, where $\Sigma$ is a finite \emph{alphabet} and $N > 0$ is the \emph{block length}.  The goal is to be able to recover a \emph{codeword} $c \in C$, given a corrupted version of $c$.  In \emph{list-decoding}, the level of corruption may be very large, and to compensate a decoder is allowed to return a short list of possible codewords.  Formally, we have the following definition.
\begin{definition}[List Decoding]
    Let $C\subseteq \Sigma^N$.  Let $\rho > 0$ and let $L$ be a positive integer.  We say that $C$ is \emph{$(\rho, L)$-list-decodable} if for any $y \in \Sigma^N$, $|\{ c \in C\,:\, \delta(c,y) \leq \rho\}| \leq L$.  
\end{definition}
That is, $C$ is list-decodable if, given $y \in \Sigma^N$ that is sufficiently close to some codeword $c \in C$, it is possible to recover a short list of codewords with the guarantee that $c$ is in the list.  The algorithmic goal is to recover this list efficiently, given $y$. 

When $\rho < \frac{\delta(C)} 2$ is less than half the minimum distance of the code,\footnote{Here and throughout the paper, we use $\delta(C) = \min_{c \neq c' \in C}\delta(c,c')$ to denote the \emph{minimum (relative) distance} of~$C$, where $\delta(c,c') = \frac 
{\{i \in [N] \mid c_i \neq c'_i\} } N$ is the \emph{(relative)  Hamming distance} between $c$ and $c'$.}  $C$ is always list-decodable up to radius $\rho$ with list size $L = 1$; this is called \emph{unique decoding}.  However, when $\rho > \delta(C)/2$, (worst-case) unique decoding is impossible, and we have list-decoding with $L > 1$.  List-decoding (and its generalization, list-\emph{recovery}\footnote{List-recovery is a generalization of list-decoding, where instead of an input vector $y$, the input is a list of sets $\cS = (S_1, \ldots, S_N) \in {\Sigma \choose \leq \ell}^N$, where each $S_i \subseteq \Sigma$, with $|S_i| \leq \ell$.  The goal is to return all of the codewords $c \in C$ so that $c_i \in S_i$ for all but a $\rho$-fraction of $i \in [N]$.  List-decoding is the special case when $\ell = 1$.  See \Cref{def:LR} for a formal definition.  Our results for list-decoding extend to list-recovery as well.}) 
have become useful primitives not just in error correcting codes, but throughout theoretical computer science; see for example~\cite{madhu_survey,venkat_survey,salil_survey,nic_venki_survey} for surveys on list-decodable and list-recoverable codes and some of their applications. 

By now, we have constructions of near-optimal list-decodable codes with fast algorithms.  For example, \cite{HRW19,KRSW18,KRRSS20,GHKS24,GHKS25,ST25}
give near-linear-time algorithms for list-decoding (or list-recovering) particular codes up to \emph{capacity}.  This means that they achieve the optimal trade-off between the \emph{radius} $\rho$ and the \emph{rate} $R := \frac{1}{N} \log_{|\Sigma|}(|C|)$ of the code.    The rate $R$ measures how much redundancy the code adds; it is a number between $0$ and $1$, and we want it to be as close to $1$ as possible.  The radius $\rho$ captures how many errors the code can correct; it is also between $0$ and $1$, and we also want it to be as close to $1$ as possible.  Capacity-achieving list-decodable codes are list-decodable up to radius $\rho$ approaching $1 - R$, which is known to be the best possible.

While the results mentioned above obtain fast near-linear time algorithms and near-optimal parameters, they do not focus on space.  Indeed, of the works above, those that are deterministic have near-linear space.  The algorithms in \cite{HRW19, KRSW18} can be implemented with sublinear space, but they are randomized.

\begin{remark}[Determinism and time-efficiency]\label{rem:randomness} 
 We note that the ``deterministic'' requirement is important here.  If we wanted a near-linear time and sublinear space \emph{randomized} list decoding algorithm, then existing 
\emph{local} list decoding algorithms~\cite{HRW19, KRSW18} would work, by correcting one symbol at a time.  However, since it is well-known that local (list) decoding algorithms cannot be deterministic (more on this later), these constructions seemed inherently randomized if we want to preserve a notion of locality.

The ``near-linear-time'' requirement is also important.  If we view a local list-decoding algorithm as a near-linear time global list-decoding algorithm by correcting one symbol at a time, as above, then the straightforward derandomization achieved by enumerating over all possible random seeds would require at least quadratic time, since it can be shown that the number of seeds must be at least linear in the block length $N$.

Thus, the challenge is to obtain low-space \emph{deterministic} list decoding algorithms, with \emph{near-linear-time} algorithms.

This challenge is partially motivated by a desire to understand the role/necessity of randomization in local decoding and in time- and space-efficient algorithms.  Additionally, deterministic constructions are useful as they can be composed more easily and used as building blocks in other pseudorandom constructions. 
Finally, as mentioned above, locally list-decodable codes are useful for several applications in theoretical computer science, in areas such as pseudorandomness and cryptography, where deterministic and low-space algorithms may be desirable.
\end{remark}

The problem of space-efficient deterministic algorithms for error-correcting codes was introduced recently by Cook and Moshkovitz in \cite{CM25_enc,CM25}. 
In \cite{CM25_enc}, the authors gave a space-efficient and near-linear time deterministic \emph{encoding} algorithm for error-correcting codes, while in \cite{CM25}, the authors gave such an algorithm for \emph{unique decoding}. 
In a follow-up work~\cite{CM26}, Cook and Moshkovitz adapted the techniques from \cite{CM25} to obtain a deterministic list decoding algorithm with space $N^{\tau}$ and time $N^{1 + \tau}$ for any small constant $\tau >0$.  This algorithm works up to radius $\rho = 1 -\gamma$ for any small constant $\gamma$.  However, the guarantees are far from capacity, meaning that the rate is asymptotically much less than $\gamma$.\footnote{More precisely, the rate of these codes is $(\gamma \tau)^{O(1/\tau^3)}$.}

In this work, we present deterministic list-decoding algorithms with space $N^{\tau}$, running time $N^{1 + \tau}$, which can approach capacity for any rate.
Our main result is the following theorem.
\begin{theorem}[Informal, see \Cref{cor:main}]\label{thm:main_informal}
Fix any $R \in (0,1)$, and any sufficiently small constant $\tau > 0$.  For infinitely many values of $N$, there is a code $C \subseteq \Sigma^N$ of rate at least $R$, with $|\Sigma| = O(1)$, so that:
\begin{itemize}
    \item $C$ is $(\rho,L)$-list-decodable for $\rho = 1 - R - \tau$ and $L = O(1)$.
    \item There is a deterministic list-decoding algorithm for $C$ with the above parameters that runs in time $N^{1 + \tau}$ and uses space $N^{\tau}$.
\end{itemize}
\end{theorem}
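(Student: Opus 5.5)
The plan is to build the code in two layers: a high-rate inner ``deterministic locally list-recoverable'' code (with global pre-processing), followed by the AEL distance-amplification step, which pushes the list-decoding radius up to $1-R-\tau$.

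\textbf{Defining the primitive.} I would first make precise the notion of a deterministic local list-recovery algorithm. It has a global pre-processing phase, which runs deterministically in time $N^{1+\tau}$ and space $N^{\tau}$ and outputs $L=O(1)$ short ``advice'' descriptions. Each description determines a local algorithm that computes any coordinate of a candidate codeword in time and space $N^{\tau}$. The guarantee is that every codeword consistent with the input lists on a $(1-\alpha)$-fraction of coordinates appears among the candidates.

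\textbf{High-rate base code.} I would take a constant-rate, high-rate base code $C_0$ (say a Reed--Solomon or multiplicity code over a moderately sized field). It should be list-recoverable from a small constant fraction $\alpha$ of errors with input lists of size $\ell$, using a deterministic algorithm.

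I would then tensor it $t=O(1/\tau)$ times. The block length of $C_0$ is then $N^{1/t}\le N^{\tau}$, and the rate stays $R_0^t$, which is close to $1$. The deterministic local procedure is obtained by recursion on the tensor dimension. Pre-processing picks a fixed, small, structured family of axis-parallel lines or subcubes, list-recovers each, and prunes. I would check a claimed output symbol by list-recovering along lines through that coordinate; this recursion costs $N^{O(\tau)}$ time per symbol and $N^{O(\tau)}$ space.

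The pruning step needs a combinatorial list-recovery bound for tensor codes with constant output list size. Since tensoring blows up the list size, I would interleave the tensoring with concatenation or composition with a high-rate code of small list size. Throughout, I would use that composition, intersection, and concatenation preserve the primitive. Concretely, each of these operations only multiplies the time and space by $N^{O(\tau)}$, and the lists compose.

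\textbf{AEL step to reach capacity.} Next I would apply AEL using a deterministic constant-degree expander (for example a Ramanujan graph, explicitly computable in small space) and an inner code of rate $R$ that is near-capacity list-recoverable, found by brute force since it has constant size. Decoding proceeds as follows. We list-decode each inner block by brute force, which takes constant time. The expander mixing lemma then shows that, for each true codeword, all but an $\alpha$-fraction of outer symbols land in the resulting lists of size $\ell$.

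Finally, we run the deterministic local list-recovery of the outer code, obtaining $O(1)$ candidates. For each candidate we re-encode it symbol by symbol via the local algorithm, so space stays at $N^{\tau}$, and check its distance to $y$ in streaming fashion. We output the candidates within radius $\rho$. The total time is $N\cdot N^{O(\tau)}$; rescaling $\tau$ gives time $N^{1+\tau}$ and space $N^{\tau}$.

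\textbf{Main obstacle.} I expect the hardest part to be the deterministic pre-processing for the high-rate tensor code. We must identify $O(1)$ global candidates, each described by small advice, without randomly sampling lines and without storing $N$ symbols. My first attempt would be to fix a deterministic set of parallel ``anchor'' subcubes and list-recover them. I would then use distance properties of the tensor code to argue that some anchor is correct for every true codeword, and that disagreeing candidates can be pruned via a deterministic averaging argument over all lines, computed in a streaming way.
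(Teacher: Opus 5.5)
Your overall architecture matches the paper's: deterministic local list-recovery with global pre-processing, a tensor code $\Cot$ with $t=O(1/\tau)$ as the high-rate object, AEL to reach capacity, and then unrolling the local algorithms into a global decoder. The problem is that the tensor step cannot deliver exact candidates, and the ingredient that repairs this is missing.

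\textbf{Gap 1: the tensor step is only approximate.} Your local rule is to list-recover the row $\{i'\}\times[n]$ through the coordinate and select using anchor values. On any row where more than a $\rho_0$-fraction of the input is corrupted, the true row is not in that row's list. The adversary can afford this on about a $\rho/\rho_0$ fraction of rows. So the tensor construction only yields candidates agreeing with the codeword on a $1-\eps$ fraction of coordinates. Both your AEL step and your final ``re-encode and output'' step need exact candidates. Globally unique-decoding the $\eps$-close candidate is not an option, because low-space unique decoding of $\Cot$ is exactly what is unavailable for $t\ge 3$.

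The paper's remedy is a separate high-rate deterministic locally correctable code. It takes lifted Reed--Solomon codes via the improving-set machinery of \cite{CM25}, then applies AEL and concatenation to get constant radius and alphabet. This code is \emph{intersected} with the tensor code. Linearity keeps the rate at least $1-\zeta_1-\zeta_2$. Running the DLCC's local corrector on top of the approximate local algorithm then recovers the codeword exactly. Your passing mention of ``intersection'' supplies neither this object nor this use of it.

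\textbf{Gap 2: the pruning needs a tolerant low-space tester.} Because candidates are only $\eps$-close, the list-size bound comes from mapping each survivor to its nearest codeword of $\Cot$ and invoking the combinatorial list-recovery bound. So you need a deterministic, low-space, \emph{tolerant} test of closeness to $\Cot$: accept anything $O(\eps_{t-1})$-close, reject anything not $\eps_t$-close. This is needed in addition to closeness to $\cS$ and deduplication.

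Averaging line membership over all lines is not tolerant. A good candidate's errors occupy an $\eps$-fraction of rows, and once $\eps\gg 1/n$ these can hit a constant fraction of the lines in every other direction. The paper instead averages over all axis-parallel planes, using robustness of the plane test for $t\ge 3$ \cite{Vid2015}. On each plane it computes the distance exactly by uniquely decoding $C\otimes C$ when close, and falls back to the row/column rejection rate otherwise. This pins the estimate between $\frac{\gamma_t}{2}\delta(x,\Cot)$ and $\delta(x,\Cot)/\rho_2^{(!)}$.

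\textbf{Anchors.} A fixed small anchor family can simply be corrupted by the adversary. A single correct anchor also cannot disambiguate a row list. You need $m=O(1)$ anchor columns on which distinct row-list codewords stay $\tfrac34\delta(C)$-apart. The paper gets this by enumerating \emph{all} $O(n)$ seeds of a sampler at each level and recursing on every sampled column.

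\textbf{Smaller points.} No interleaving with concatenation is needed to control list size: for constant $t$, $\Cot$ already has constant combinatorial list size by \cite{HRW19}. The base code must, however, be linear over its own alphabet for the tensor-code facts to hold. This is why the paper concatenates the code of \cite{ST25} with a brute-forced linear inner code.
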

\begin{remark}[Results also hold for list-recovery]
    As noted above, our results are more general, and imply capacity-achieving \emph{list-recovery}, a generalization of list-decoding.  See \Cref{def:LR} for a definition and \Cref{cor:main} for the full statement.
\end{remark}

Our approach is in fact a version of the approach that we dismissed in \Cref{rem:randomness}.  That is, it is well-known that local (list-)decoding algorithms, while low-space, cannot be deterministic.  However, following prior work~\cite{CM25,CM26}, we observe that if some (sublinear-space, near-linear-time) pre-processing is allowed, then there is such a thing as ``deterministic local (list-)decoding,'' and it can be used to develop deterministic space- and time-efficient global list-decoding algorithms.   We explain this in more detail next in \Cref{sec:tech}.

\subsection{Technical Overview}\label{sec:tech}
As noted above, our results hold for list-recovery as well. Since list-recovery (as opposed to list-decoding) is also required for some steps of our approach, we generalize our discussion to list-recovery in this section. 

List-recovery is a generalization of list-decoding where instead of a corrupted codeword $y \in \Sigma^N$, we receive as input a list of sets $\cS = (S_1, \ldots, S_N)$, where each $S_i \subseteq \Sigma$ and $|S_i| \leq \ell$ for some parameter $\ell \geq 1$.  The goal is to return all $c \in C$ so that $c_i \in S_i$ for all but at most a $\rho$ fraction of indices $i \in [N]$; and in particular we hope that there are not more than $L$ such codewords.  Thus, list-decoding is the special case of list-recovery when $\ell = 1$.

The technical meat of our result is a \emph{deterministic} locally list-recoverable code.  
Before we explain this, we explain what a locally list-recoverable code is, why it cannot be deterministic, and how we get around this. 

\paragraph{Locally List-Recoverable Codes and \emph{Deterministic} Locally List-Recoverable Codes.}
The notion of \emph{locally list-recoverable codes}  was first formally defined in \cite{GKORS18}, following similar notions for list-decoding \cite{GL89,STV01} and unique  decoding \cite{BFLS91,STV01,KT00}. 
Let $C \subseteq \Sigma^N$, and suppose that the input to our list-recovery problem is a list of sets $\cS \in {\Sigma \choose \leq \ell}^N$.  We would like to return the list $\cL = \inset{ c \in C \,:\, \delta(c,\cS) \leq \rho }$,
where $\delta(c,\cS) = \Pr_{i \in [N]}[ c_i \not\in S_i ]$.  A \emph{local list-recovery algorithm} $\cP$  takes input $\mathbf{1}^N$ and outputs a list of \emph{local algorithms} $A_1, \ldots, A_L$.  
Each one of the local algorithms $A_j$ takes an input $i \in [N]$, is allowed to make $Q$ randomized queries to the input $\cS$, and outputs an element of $\Sigma$.  The idea is that each local algorithm $A_j$ should correspond to some $c \in \cL$.  That is, for all $c \in \cL$, there should be some $j \in [L]$ so that for all $i \in [N]$, $\Pr[A_j(i) = c_i] \geq 2/3$, where the probability is over the random choices of $A_j$.  See \Cref{defn:LLR} for a formal definition.

Existing local list-recovery algorithms can generally be turned into \emph{randomized} low-space global list-recovery algorithms.  Indeed, for each $j \in [L]$, we form $A_j$, and run it repeatedly on each $i \in [N]$ until we have each symbol $c_i$ for the codeword $c \in \cL$ that $A_j$ corresponds to (assuming that the decoding error is sufficiently small, which can be guaranteed by repetition).  We can output these symbols one at a time, re-using the space in between symbols.  The fact that $A_j$ makes only a few queries to $\cS$ typically means that it also uses sublinear space, and we are done.

However, as noted in \Cref{rem:randomness}, this approach seems inherently randomized.  We \emph{cannot} have a deterministic local list-recovery algorithm, at least if $L \cdot Q \ll N$ and $\rho = \Omega(1)$.  Indeed, in that case an adversary could design an input $\cS$ that is garbage on the $L\cdot Q \ll N$ coordinates that the algorithm could ever read, without significantly affecting which codewords are at most $\rho$-far from $\cS$.  Thus, such an algorithm could only return garbage.

To get around this, we modify the definition of a local list-recovery algorithm.  Instead of the algorithm $\cP$ taking as input $\mathbf{1}^N$, we now give it (global) access to the input $\cS \in {\Sigma\choose \leq \ell}^N$, and allow it to do some time- and space-efficient pre-processing.  (This explains why we call this algorithm ``$\cP$'': It stands for ``$\cP$re-$\cP$rocessing.'')  Now, the obstacle to determinism is gone.  
We call our modified definition a \emph{Deterministic Locally List-Recoverable Code} (DLLRC).  That is, a DLLRC is a code so that there is a time- and space-efficient deterministic algorithm $\cP$ with access to the input $\cS \in {\Sigma \choose \leq \ell}^N$; $\cP$ outputs a list of \emph{deterministic} local algorithms $A_1, \ldots, A_L$ so that, for all $c \in C$ with $\delta(c,\cS) \leq \rho$, there is some $j \in [L]$ so that $A_j(i) = c_i$ for \emph{all} $i \in [N]$.    See \Cref{def:DALLR} for the formal definition. 

As explained above, if we could construct a DLLRC, we could use it as a deterministic time- and space-efficient list-recovery algorithm.  Thus, we set our sights on constructing a (high-rate) DLLRC.  (We will see in a moment why we want it to be high-rate; briefly, it is so that we can amplify the list-recovery radius to achieve capacity later.)
We note that this approach is similar to what \cite{CM25} did for unique decoding: 
The special case of a DLLRC with $\ell=L=1$, which  we call a ``Deterministic Locally Correctable Code'' (DLCC) in \Cref{def:DLC}, is implicit in their work, and their main constructions are in fact DLCCs.  Similarly, a version for list-decoding is implicit in \cite{CM26}.

\begin{remark}[The value of DLLRCs] 
    As noted above, a DLLRC naturally gives rise to our goal of a deterministic time- and space-efficient list-recovery (or list-decoding) algorithm. 
    Further, we believe that the notion of DLLRC is interesting on its own.  While similar notions are already implicit in prior work \cite{CM25,CM26}, we hope that making this notion explicit will be valuable.  In particular, as we will make use of extensively in this paper, DLLRCs  behave nicely under black-box transformations that are common in coding theory, like composition, intersection, concatenation, the AEL distance amplification procedure, tensoring, and so on.  If we were to treat our DLLRC constructions as only time- and space-efficient list-recovery algorithms with no additional structure, we would not be able to manipulate them in a black-box way.
\end{remark}

\paragraph{Constructing a high-rate D(A)LLRC: Tensor Codes.}
In order to construct our high-rate DLLRC, we start with a Deterministic \emph{Approximate} Locally List-Recoverable Code (DALLRC); see \Cref{def:DALLR}.\footnote{The authors pronounce ``DALLRCs'' as ``Dollar Codes,'' as in the currency, although the reader is of course free to choose their own pronunciation.} 
This is the same as a DLLRC, except that we relax the guarantee that a local algorithm should satisfy $A_j(i) = c_i$ for all $i \in [N]$.  Instead, we ask that $A_j(i) = c_i$ for \emph{most} $i \in [N]$, specifically, for at least a $1 - \eps$ fraction.  This mirrors the notion of (randomized) approximate local list-recovery that was first formally defined and used in~\cite{HRW19}.

To construct our DALLRC, we follow the recipe of \cite{HRW19}, who used \emph{tensor codes} to construct the randomized analog.  For a code $C \subseteq \Sigma^n$, the \emph{tensor code} $C \otimes C \subseteq \Sigma^{n \times n}$ is the set of $n \times n$ matrices whose rows and columns all lie in $C$.  The $t$'th order tensor code $\Cot \subseteq \Sigma^{[n]^t}$ is defined similarly: It is the set of all $t$-dimensional $n \times n \times \cdots \times n$ tensors all of whose axis-parallel lines lie in $C$.  See \Cref{subsec:tensor} for a formal definition.

To explain our DALLR algorithm for $\Cot$, we assume that we have a DALLR algorithm $\cP'$ for $C^{\otimes (t-1)}$, and give an overview of what the algorithm $\cP$ for $\Cot$ looks like.  We view an input $\cS \in {\Sigma \choose \leq \ell}^{[n]^t}$ as a matrix of input lists with $n^{t-1}$ rows and $n$ columns.  Thus, our goal is to return a list of (local algorithms approximately representing) codewords in $\Sigma^{[n]^t} = \Sigma^{[n]^{t-1} \times [n]}$ so that every row is a codeword $c \in C$, and every column is a codeword in $C^{\otimes (t-1)}$.

Before we explain what our algorithm does, we explain the randomized approach of \cite{HRW19}, which was inspired in turn by a global randomized list-decoding algorithm for tensor codes of \cite{GGR11}.  There, the idea is the following.  We first run $\cP'$ on a random subset $H = \inset{h_1, \ldots, h_m} \subseteq [n]$ of $m$ columns of the input $\cS$.  This gives us, for each $a \in [m]$,  a list $\cL_a$ of local algorithms for $C^{\otimes (t-1)}$.  Then for each possible combination of these local algorithms $$\vec{A} = (A'_1, \ldots, A'_m) \in \cL_1 \times \cL_2 \times \cdots \times \cL_m,$$
we define a new local algorithm $A$ to add to the new output list.  

The local algorithm $A$ defined by $\vec{A}$ does the following.  On input $i = (i', i_t) \in [n]^{t-1} \times [n]$, it runs $A'_a(i')$ for each $a \in [m]$ to obtain a guess $v_a$ for the $(i',h_a)$-th coordinate of a codeword in $\Cot$.  Then it runs a global list-recovery algorithm for $C$ on the $i'$-th row of $\cS$ to obtain a list $\mathcal{K}$ of codewords in $C$ that are close to $\cS|_{\{i'\} \times [n]}$.  Then, it searches $\mathcal{K}$ for the codeword $c^* \in C$ that agrees the most with the $m$ guesses $v_a$ for $a \in [m]$ that it has already produced, and returns the $i_t$ symbol of $c^*$, namely $c^*_{i_t}$, as its guess for the $i = (i', i_t)$-th coordinate.

We have just described the randomized algorithm from \cite{HRW19}.  Our goal is to derandomize this algorithm, while keeping its time and space complexity small.  The first step is to derandomize the choice of the set $H$.  This is done via a \emph{sampler}, a standard pseudorandom object (see \Cref{def:samplerGamma}).  Informally, a sampler $\Gamma$ takes a short seed $\sigma \in \{0,1\}^r$ and returns a pseudo-random set $H \subseteq [n]$, with the guarantee that for most seeds $\sigma$, $H$ behaves like a random set.  Our deterministic algorithm $\cP$ enumerates over all such seeds $\sigma\in \{0,1\}^r$.  For each such seed, it recursively runs the pre-processing algorithm $\cP'$ for $C^{\otimes (t-1)}$ on the columns in $H = \Gamma(\sigma)$, to obtain a list $\cL$ of local algorithms.  Now, for each seed $\sigma$, and each combination of local algorithms 
\[ \vec{A} = (A_1',\ldots, A_m') \in \cL_1 \times \cdots \times \cL_m,\]
$\cP$ outputs a local algorithm $A$, which is parameterized by $(\sigma, \vec{A})$.  
(We note that each of the level-$(t-1)$ local algorithms in $\vec{A}$ itself has this form, defined recursively; so at the end of the day the local algorithm $A$ is essentially represented by a large collection of seeds).

Now, the local algorithm $A$ represented by $(\sigma, \vec{A})$ acts similarly to before.  It runs $\Gamma$ on its seed $\sigma$ to recover the set $H$, then runs the level-$(t-1)$ local algorithms in its $\vec{A}$ to get guesses $v_a$ for the $(i',h_a)$-entry for any $a \in [m]$.  Then it runs a global list-recovery algorithm for $C$ on the $i'$-th row of $\cS$ to obtain a list $\mathcal{K} \subseteq C$, chooses the codeword $c^* \in \mathcal{K}$ that agrees the most with its guesses, and returns $c^*_{i_t}$.

So far, we have derandomized the approach of \cite{HRW19}; this is similar to the approach taken by \cite{KRRSS20}, who gave a high-space derandomization of that algorithm using similar techniques.  However, there is a problem, which is that, as stated, the list size will become very large, too large to efficiently deal with. 

\footnote{In more detail, naively we would have a list size $L_t$ at the $t$-th level that grows like $L_t = L_{t-1}^m 2^r$.  However, in order for the sampler to work, the parameter $r$ will need to be logarithmic in 
$n$, which means that $L_t$ will be polynomial in $N=n^t$; and we will have to iterate over all $\poly(N)$ seeds.  We cannot afford this, so we need to prune down the list to keep it constant-sized.  We note that \cite{HRW19} did not have to deal with this problem, as they did not need to exhaust over a list of possible seeds.  The work \cite{KRRSS20} that derandomized \cite{HRW19} did have to deal with this problem in order to obtain near-linear running time, but they resolved it in a high-space way; see the discussion in \Cref{sec:related}.} 
Fortunately, the true list size is not so large; rather, the procedure described above may return many spurious local algorithms $A$.  That is, it may produce local algorithms $A$ that either do not correspond to any codeword in the true list; or local algorithms $A$ that correspond to the same codeword as an algorithm already in the output list.  In order to keep the list size small, we need to prune out these spurious local algorithms.  

Such a pruning step exists in many (approximate) list-recovery algorithms.  The three things we need to check are (1) that the output of the local algorithm $A$ is actually $\eps$-close to some codeword (that is, $A(i) = c_i$ for at least a $1-\eps$ fraction of the $i \in [n]^t$, for some $c \in \Cot$); (2) that this codeword $c \in \Cot$ is actually at most $\rho$-far from the input $\cS$; and (3) that there is no other local algorithm $\tilde{A}$ already in the output list that is also $\eps$-close to this $c$.  Typically these are straightforward to check.  (1) can be checked by running a unique decoding algorithm; (2) can be checked by direct comparison after running that decoding algorithm; and (3) can be checked by pairwise comparisons within the output list.  However, by default none of these approaches are low-space.  To get around this, we devise three tests, which we call $\Tone, \Ttwo, \Tthree$ (see  \Cref{subsec:test} for a formal definition), to (approximately) test each of the three things.  These tests leverage the fact that the local algorithms $A$ are indeed local, so we can simulate queries to a candidate string represented by $A$ in low space.  Combining this with techniques for (randomized) local testing of tensor codes from \cite{Vid2015}, we are able to implement suitable deterministic approximations to each of these three tests, in a low-space way. 

This completes the description of our pre-processing algorithm $\cP$.  It iterates over all possible seeds $\sigma \in \{0,1\}^r$; for each $\sigma$, it runs $\cP'$ on the columns in $H = \Gamma(\sigma)$ to obtain a list $\cL_a$ for each $a \in [m]$; then for each $\vec{A} \in \cL_1 \times \cdots \times \cL_m$, it constructs a representation of a local algorithm $A = (\sigma, \vec{A})$.  Now, for each candidate local algorithm, it runs the above suite of tests.  If the local algorithm passes all the tests, it is added to the output list.  The pseudocode for $\cP$ can be found in \Cref{alg:main}.  The pseudocode for the local algorithms $A$ that it outputs can be found in \Cref{alg:local-alg}.

Finally, we observe that $\cP$ satisfies the time and space requirements.  Indeed, it turns out that the space needed is dominated by the amount of space required at the bottom of the recursion, which is essentially a global list-recovery algorithm for $C \subseteq \Sigma^n$ that runs in time and space $\poly(n)$. Thus, the space ends up being dominated by $\poly(n) = N^{O(1/t)}$, where $N = n^t$ is the length of $\Cot$.  By choosing $t$ to be a large constant,  the space is bounded by $N^\tau$ for an arbitrarily small constant $\tau$.  A similar calculation shows that the time is bounded by $N^{1 + \tau}$ for an arbitrarily small constant $\tau > 0$.\footnote{%
We believe that,
similarly to prior work \cite{HRW19,KRRSS20}, one can obtain time $N^{1 + o(1)}$ and space $N^{o(1)}$ by choosing $t$ to be very slowly growing, at the cost of super-constant (though still non-trivial) output list size. For simplicity, we elected to present the result with time $N^{1 + \tau}$ and space $N^\tau$ for an arbitrarily small constant $\tau>0$, as was done in \cite{CM26}. 
}  
We prove the correctness and time/space bounds for our DALLRC in \Cref{thm:mainLocal}.

It remains to instantiate the DALLRC by choosing a suitable base code $C$. 
We need a high-rate linear code with an efficient (polynomial-time) global list-recovery algorithm.  Fortunately, we are able to find such a code (nearly) off-the-shelf from \cite{ST25}.  This code is actually linear over a smaller field, not over its alphabet, but we remedy this by concatenating with a constant-length high-rate linear code that is combinatorially list-recoverable; as it is constant-length, we list-recover it efficiently by brute force.  The resulting DALLRC is presented in \Cref{thm:instantiate_DALLR}.  We note that since $C$ is sufficiently high-rate, then $\Cot$ is also high-rate.  We will see next why we need this.

\paragraph{From high-rate DALLRCs to high-rate DLLRCs to capacity-achieving deterministic low-space list-recovery.}
Above, we explained how to construct a high-rate deterministic \emph{approximate} locally list-recoverable code (DALLRC).  The approximation is needed because at each step of tensoring, there could be some rows that the adversary completely corrupts, which means that we cannot hope to recover 100\% of the entries with our algorithm.  To obtain our final results, there are two more steps.  First, we transform our DALLRC to a DLLRC, to remove the approximation factor.  Second, we amplify the list-recovery radius of this DLLRC to get a capacity-achieving DLLRC.  We discuss both of these steps below.
\vspace{.2cm}

First, we discuss how to turn our DALLRC into a DLLRC.  To do this, we will intersect our DALLRC with a \emph{Deterministic Locally Correctable Code} (DLCC).  A DLCC $C \subseteq \Sigma^N$ can be viewed as the special case of a DLLRC when $\ell = L = 1$.  That is, there is a deterministic low time and space pre-processing algorithm $\cP$ that has access to an input $y \in \Sigma^N$, and outputs a \emph{single} local algorithm $A$ that takes an input $i \in [N]$, makes at most $Q$ queries to $y$, and outputs a guess for $c_i$, where $c$ is the unique codeword sufficiently close to $y$.  See \Cref{def:DLC} for a formal definition.

Now, if we have a  DALLRC $C_1$ with algorithm $\cP_1$ and a DLCC $C_2$ with algorithm $\cP_2$, consider the code $C = C_1 \cap C_2$.  If both $C_1$ and $C_2$ are linear codes of high rate, then $C$ will have high rate as well.  Moreover, we can use the properties of $C_2$ to correct the $\eps$ fraction of errors left over from $\cP_1$.  This results in a high-rate DLLRC, as desired.

For our DLCC, we adapt a result from \cite{CM25}.  In that paper, they show that \emph{Lifted Reed-Solomon Codes}~\cite{GKS13} are in fact DLCCs with near-linear time and sublinear space (although they do not use that language).  We adapt the approach of \cite{CM25} for our purposes in \Cref{thm:DLCC}; the only change from \cite{CM25} is that their stated result only guarantees a code of rate at least $1/2$, while we need a code with rate arbitrarily close to $1$, so we work out the parameters to obtain that.\footnote{We note that \cite{CM25} contains multiple constructions, some with uniform decoding algorithms and some with non-uniform decoding algorithms.  Their algorithm for lifted RS codes is uniform, which is why we chose it.  In this paper, all of our algorithms will be uniform.}

\vspace{.2cm}
At this point, we have a high-rate DLLRC, and our final step is to amplify the radius to obtain a capacity-achieving DLLRC.  For this, we can use a standard technique due to Alon, Edmonds and Luby~\cite{AEL95} (AEL).  This is a distance-amplification technique based on bipartite expander graphs, and it has been used in the past to turn high-rate (randomized) locally list-recoverable codes into capacity-achieving locally list-recoverable codes (e.g.~\cite{GKORS18,HRW19,KRSW18}).  The work \cite{CM25} implicitly showed that the AEL transformation can be applied in the DLCC setting. We show that the AEL transformation can also be applied in the DLLRC setting.  By applying it to our high-rate DLLRC, this results in  capacity-achieving DLLRCs (\Cref{thm:main}), and hence also capacity-achieving list-recoverable (and list-decodable) codes with deterministic near-linear time and sublinear space  list-recovery algorithms, as in \Cref{thm:main_informal}.  The formal statement is presented in \Cref{cor:main}.  We note that the AEL transformation could not have been applied to the low-space list-decodable codes of \cite{CM26}, as those codes have very low-rate, while to achieve capacity, AEL requires a high-rate list-recoverable code to begin with.

\subsection{Related Work}\label{sec:related}

\paragraph{Low-Space Encoding and Decoding.}
The study of low-space (deterministic) algorithms for error correcting codes was initiated by Cook and Moshkovitz in \cite{CM25_enc,CM25}.  In \cite{CM25_enc}, the authors gave  constructions of error-correcting codes that are \emph{encodable} in time $N^{1 + o(1)}$ and space $\polylog(N)$.  Then in \cite{CM25}, they turned their attention to decoding, constructing codes that are uniquely decodable in time $N^{1 + o(1)}$ and space $N^{o(1)}$.  The key idea is that of \emph{improving sets} (see \Cref{def:improving_set}).  Informally, an improving set $\cI$ for a code $C \subseteq \Sigma^N$ is a set of functions $I:\Sigma^N \to \Sigma^N$ so that if a $y \in \Sigma^N$ is close to a codeword $c \in C$, then \emph{on average} over all $I \in \cI$, $I(y)$ is even closer to $c$.  They show that such an improving set (where all of the functions $I$ run in low space) is essentially all one needs for deterministic low-space decoding (see \Cref{lem:imp_means_DLCC}); the idea is to iteratively apply the improvers $I$ to get closer and closer to the correct codeword, using techniques from local testing (it turns out that good testers follow from an improving set $\cI$ as well) to identify which choice of $I \in \mathcal{I}$ is a ``good'' one. 

While our main construction of a DALLRC does not rely on improving sets, we do use the machinery of \cite{CM25} in order to construct the DLCC that we use to intersect with our DALLRC at the end of our argument. This DLCC is essentially the same as the codes constructed in \cite{CM25}, and we follow essentially the same argument; the only difference is that we want our DLCC to have rate arbitrarily close to 1, while \cite{CM25} establishes rate at least $1/2$.  

In a follow-up work, \cite{CM26} applied the same ideas to list-decoding.  Their main result is an explicit family of codes (Reed-Muller codes) of length $N$, which, for any constants $\gamma, \tau > 0$, are $(\rho, L)$-list-decodable for $\rho = 1 - \gamma$ and $L = O(1/\gamma)$ in time $N^{1 + \tau}$ and space $N^\tau$.  However, the rate of these codes is $(\gamma \tau)^{O(1/\tau^3)}$, 
while a capacity-achieving rate would approach $\gamma$.  The work \cite{CM26} also provides results for list-recovery; and by concatenating their codes down to a smaller alphabet size, they can obtain constant or even binary alphabets, at the cost of an even smaller rate.

We note that it seems difficult to use the approach of \cite{CM26} to obtain a high-rate code, as would be needed to apply the AEL transformation to achieve capacity.  One reason is the use of Reed-Muller codes, which themselves do not achieve list-decoding capacity.\footnote{Indeed, the distance of the code should be at least $1 - \gamma$, which for a $m$-variate Reed-Muller code necessitates a degree of $r = \gamma |\F|$, leading to rate at most $(\gamma/m)^m$ or so.  Since the approach of \cite{CM26} takes $m = 1/\tau^3$, the rate of $(\gamma \tau)^{O(1/\tau^3)}$ is roughly as good as it can be for this code.  It is an interesting question whether the rate can be improved by moving to lifted Reed-Solomon codes, as was done in \cite{CM25}.  While this is not clear (to us), we note that \cite{CM26} chooses the degree $r$ of the Reed-Muller code to be small, substantially smaller than $\gamma|\F|$ ($r$ is chosen to be approximately $\frac{\gamma^8\tau^{15}}{10^{20}}\cdot |\F|$).  As lifted Reed-Solomon codes only have non-trivial rate (that is, rate larger than the associated RM code) for $r \geq |\F|/2$~\cite{RS96}, it seems more changes would be needed in the approach of \cite{CM26} to afford high-rate list-decodable  codes.}  Another reason is that, while \cite{CM26} does provide results for list-recovery, the cost is a factor of $1/\poly(\ell)$ in the rate; while to use AEL for list-decoding/recovery, we need list-recoverable codes with rate arbitrarily close to $1$ even for reasonably large $\ell$.  

The approach of \cite{CM26} is similar to that of \cite{CM25}: They adapt the idea of improving sets to \emph{list-improving sets}, and show how to construct these for Reed-Muller codes.  In contrast, our approach goes in a different direction, and we do not use improving sets to construct our DALLRCs.

Despite the difference in approaches, the concept of a DLLRC is implicit in \cite{CM26}.  Indeed, their main guarantee (see \cite[Theorem 34]{CM26}) is the same as that of a (low-rate) Deterministic Locally List-Decodable Code (aka, a DLLRC with $\ell = 1$).

\paragraph{Algorithmic List-Decoding and List-Recovery.}
If we do not restrict our attention to deterministic space-efficient algorithms, there are by now  plenty of time-efficient algorithms for list-decoding and list-recovery approaching capacity.  We have already mentioned \cite{HRW19}, which gave the first near-linear-time algorithm for list-decoding (and list-recovery) approaching capacity, based on tensor codes.  The global result follows from a local list-recovery algorithm, and thus is randomized.
In a similar vein, the work \cite{KRSW18} gave local-list-recovery algorithms for multivariate multiplicity codes, which also leads to near-linear-time randomized global list-decoding algorithms up to capacity.

The global list-decoding algorithm of \cite{HRW19} was later derandomized by \cite{KRRSS20}.  Our approach of derandomizing \cite{HRW19} is similar to theirs, in the sense that we use a good sampler to choose the set $H$ of columns to recurse on.  As mentioned in \Cref{sec:tech}, our approach diverges from \cite{KRRSS20} in how we keep the list size small.  The work \cite{KRRSS20} also needed to prune their list as they were building it, but they did this in a high-space way, by including a step to uniquely decode the approximate codeword and comparing it to the existing output list before deciding to return it.  In contrast, our approach implements a suite of three low-space deterministic testers to accomplish this.

Another line of work has studied deterministic (time-)efficient global list-decoding algorithms for algebraic codes, including Reed-Solomon (RS) codes, Folded RS Codes, and Multiplicity Codes.  The first capacity-achieving codes with efficient (polynomial-time) algorithms were Folded RS codes~\cite{GR08} followed by univariate multiplicity codes~\cite{Kop15,GW13}.  Later, near-linear-time algorithms were given for list-decoding and list-recovery of these codes~\cite{GHKS24,GHKS25}.  As written, these algorithms use high space, and it is not clear (to us) how to implement them with sublinear space while maintaining the near-linear running time.  Indeed, all these algorithms involve interpolating a multivariate polynomial, which involves solving a large linear system that needs to be kept in memory. 

Finally, we mention~\cite{ST25}, who give near-linear-time list-decoding and list-recovery algorithms for graph-based codes up to capacity, without considering space requirements.  The near-linear-time version is randomized, and the best deterministic version (also given in that paper) runs in time $O(N^{3.5})$.  In particular, making that algorithm near-linear-time and deterministic is open, even before space considerations.

\subsection{Discussion and Open Questions}

We have given a near-linear time and sub-linear space algorithm for list-decoding up to capacity, based on tensor codes.  Several interesting open questions remain; we mention three here.

\begin{enumerate}
    \item In \cite{CM25}, Cook and Moshkovitz give a general transformation that turns \emph{any} LCC into a DLLC (possibly with non-uniform algorithms).  Is a similar transformation possible for list-decoding/recovery?  That is, can one turn any locally list-recoverable code into a D(A)LLRC in a black-box way?  We have shown how to do this for tensor codes, but perhaps a more general statement is possible.

\item In particular, high-rate \emph{multivariate multiplicity codes} were shown in \cite{KRSW18} to be locally list recoverable with query complexity $Q=\exp( \log^{\gamma} n)$ for some $\gamma <1$. Is it possible to derandomize these algorithms? If possible, then this would lead to capacity-achieving list-recoverable (and list-decodable) codes with deterministic list recovery algorithms running in near-linear time, and space approximately
$Q$.

    \item In this paper we have made explicit a notion of a ``deterministic local code.'' Are such codes useful for any of the applications of locally (list-)decodable codes in theoretical computer science, for example in pseudorandomness or cryptography?
\end{enumerate}

\subsection{Organization}
In \Cref{sec:prelim} we introduce basic definitions and notation, including standard notions of (local) decoding.  In \Cref{sec:local_defs} we introduce our new definitions of \emph{deterministic} local codes, including DALLRCs and DLCCs, and we prove some useful statements about how they behave under certain black-box transformations.  In \Cref{sec:alg} we introduce our DALLR algorithm for tensor codes.  We analyze this algorithm (without instantiating it) in \Cref{sec:main}.  Finally, in \Cref{sec:DLLR}, we instantiate our theorems to construct our final code: First we choose a base code to instantiate our high-rate DALLRC, then we
instantiate the high-rate DLCC, then we intersect the two to obtain a high-rate  DLLRC, and finally we apply the AEL transformation to obtain a capacity-achieving DLLRC, which yields our final capacity-achieving list-recoverable (and list-decodable) codes with near-linear time  and sublinear space deterministic list-recovery algorithms.

\section{Preliminaries}\label{sec:prelim}

We start with some basic notation and definitions.
Throughout, we use the convention that an empty product (e.g., $\prod_{j=i+1}^i x_j$) is equal to $1$.  For a set $\Sigma$ and an integer $\ell$, we use ${\Sigma \choose \leq \ell}$ to denote the collection of subsets of $\Sigma$ of size at most $\ell$.  For two vectors $x,y \in \Sigma^N$, we use $\delta(x,y)$ to denote the \emph{(relative) Hamming distance} between them:
$\delta(x,y) = \frac{1}{N}|\{ i \in [N] : x_i \neq y_i \}|.$
For a vector $x \in \Sigma^N$ and a list of sets $\cS = (S_1, \ldots, S_N)$ with $S_i \subseteq \Sigma$, we write
\[ \delta(x,\cS) = \frac{1}{N}  |\{i \in [N] : x_i \not\in S_i \}|.\]

\subsection{Error-correcting codes}
A \emph{code} $C$ of \emph{block length} $N$ over an \emph{alphabet} $\Sigma$ is a subset $C \subseteq \Sigma^N$.  The (relative) \emph{distance} of $C$ is defined by
$\delta(C) := \min_{c \neq c' \in C} \delta(c,c').$
The \emph{rate} of the code $R(C)$ (usually denoted $R$ when $C$ is clear from context) is defined by
$R(C) := \frac{ \log_{|\Sigma|}(|C|)}{N}.$ For a finite field $\F$, we say that $C$ is an $\F$-\emph{linear code} if 
$\Sigma$ is a vector space over $\F$, and $C$ is linear over $\F$. In the special case that $\Sigma=\F$, we say that $C$ is \emph{linear}.

An \emph{encoding map} for $C$ is a bijection $E_{C}:\Sigma^{k}\to C$,
where $\left|\Sigma\right|^{k}=|C|$. For an $\F$-linear code, we further require that the
encoding map $E_C$ is linear over $\F$.
We say that an infinite family
of codes $\left\{ C_{n}\right\} _{n}$ is \emph{explicit} if there
is a polynomial time algorithm in $n$ that computes the encoding
maps $E_{C}$ of all the codes in the family.

\paragraph{List recovery.}

Our list decoding algorithms will be obtained through the more general notion of list recovery, defined as follows.

\begin{definition}[List-Recovery]\label{def:LR}
Let $C \subseteq \Sigma^N$ be a code, let $\rho \in [0,1)$ and let $\ell \leq L$ be positive integers.  We say that $C$ is $(\rho, \ell, L)$-list-recoverable if, for any $\cS = (S_1, \ldots, S_N)$ with $S_i \subseteq \Sigma$ and $|S_i| \leq \ell$, we have
\[ |\{ c \in C: \delta(c, \cS) \leq \rho \}| \leq L. \]

A $(\rho, \ell, L)$-\emph{list-recovery algorithm} for $C$ is an algorithm that given as input $\cS = (S_1, \ldots, S_N)$, outputs a list $\cL$ of size at most $L$ so that $\{c \in C : \delta(c, \cS) \leq \rho \} \subseteq \cL$.  
\end{definition}
Note that \emph{unique decoding} corresponds to the special case where $\ell = L=1$, while  \emph{list-decoding} corresponds to the special case where $\ell = 1$.

\paragraph{Local decoding.}
Intuitively, a code is said to be \emph{locally correctable} 
if, given a codeword $c\in C$ that has been corrupted by some errors,
it is possible to decode any coordinate of $c$ by only reading a
small part of the corrupted version of $c$.

\begin{definition}[Locally correctable code (LCC)]\label{def:lcc} 
Let $C\subseteq\Sigma^{N}$ be a code, let $\rho \in [0,1)$ so that $\rho <\frac{\delta(C)} 2$, and let $Q$ be a positive integer. We say that $C$ is a $(Q,\rho)$-\emph{locally correctable code (LCC)} 
if there is  a randomized algorithm $A$ that gets query access to a string $ w\in\Sigma^{N}$ and receives as input a coordinate $i \in [N]$. 
On input $i \in [N]$, $A$ makes at most $Q$ queries to
$w$ and outputs an element of $\Sigma$.  If there is some $c \in C$ so that $\delta(w,c) \leq \rho$, then $\Pr[ A(i) = c_i ] \geq 2/3$. 
\end{definition}

The following definition generalizes the notion of locally correctable codes to the setting of list decoding/recovery.
In this setting, the local list recovery algorithm is required to output  in an implicit sense all  codewords that are consistent with most of the input lists.

\begin{definition}[Locally list recoverable code (LLRC)]\label{defn:LLR} 
Let $C \subseteq \Sigma^N$ be a code,  let $\rho \in [0,1)$, and let $\ell \leq L$ and $Q$ be positive integers.  We say that $C$ is a  $(Q,\rho, \ell, L)$-\emph{locally list recoverable code (LLRC)} 
if there is an algorithm $\cApre$ so that the following holds.
\begin{itemize}
    \item $\cApre$ is a randomized algorithm that, on input $\mathbf{1}^N$, outputs a list of local algorithms $(A_1, \ldots, A_L)$.
    \item Each local algorithm $A_j$ is a randomized algorithm with query access to $\mathcal{S} \in {\Sigma \choose \leq \ell}^{N}$ and receives as input a coordinate $i \in [N]$.  On input $i \in [N]$, $A_j$ makes at most $Q$ queries to $\mathcal{S}$.
    \item For every codeword $c \in C$ so that $\delta(c,\mathcal{S}) \leq \rho$, with probability at least $\frac 2 3$ over the randomness of $\cApre$, there is some $j \in [L]$ so that
 for all $i \in [N]$, 
\begin{equation}\label{eq:local-list-rec}
 \Pr[A_j(i) = c_i] \geq \frac{2}{3},
 \end{equation}
 where the probability is over the internal randomness of $A_j$.

 \end{itemize}
\end{definition}

We shall make use of an \emph{approximate} version of the above definition in which each local algorithm $A_j$ is only required to correctly recover \emph{most} of the codeword entries. By averaging, in this case it can be assumed without loss of generality that the local algorithms $A_j$ are \emph{deterministic} (but $\cApre$ is randomized).

\begin{definition}[Approximate Locally List-Recoverable Codes (ALLRC)]\label{def:ALLR} 
Let $C \subseteq \Sigma^N$ be a code,  let $\rho \in [0,1)$, let $\ell \leq L$ and $Q$ be positive integers, and let $\eps > 0$.  We say that $C$ is  $(Q,\epsilon,\rho, \ell, L)$-\emph{approximately  locally list recoverable code (ALLRC)} 
 
if there is an algorithm $\cApre$ so that the following holds.
\begin{itemize}
    \item $\cApre$ is a randomized algorithm that, on input $\mathbf{1}^N$, outputs a list of local algorithms $(A_1, \ldots, A_L)$.
    \item Each local algorithm $A_j$ is a deterministic algorithm with query access to $\mathcal{S} \in {\Sigma \choose \leq \ell}^{N}$ and receives as input a coordinate $i \in [N]$.  On input $i \in [N]$, $A_j$ makes at most $Q$ queries to $\mathcal{S}$.
    \item For every codeword $c \in C$ so that $\delta(c,\mathcal{S}) \leq \rho$, with probability at least $\frac 2 3$ over the randomness of $\mathcal{P}$, there is some $j \in [L]$ so that
    \[ \Pr_{i \in [N]}[ A_j(i) = c_i ] \geq 1 - \eps.\]
\end{itemize}
\end{definition}

\subsection{Tensor codes}\label{subsec:tensor}
Our construction is based on \emph{tensor codes}.  For a base code $C \subseteq \Sigma^n$, the \emph{$t$-dimensional tensor code} $C^{\otimes t} \subseteq \Sigma^{[n]^t}$ is a code whose coordinates are indexed by $[n]^t$; it is given by
\[ \Cot = \inset{ c \in \Sigma^{[n]^t} \,:\, c|_V \in C \text{ for all axis-aligned lines $V$ }}, \]
where an \emph{axis-aligned line} $V \subseteq [n]^t$ is a set of the form 
\[V = \{a_1\} \times \{a_2\} \times \cdots \times \{a_{i-1}\} \times [n] \times \{a_{i+1}\} \times \cdots \times \{a_t\}\]
for some $i \in [t]$ and some $a_1, \ldots, a_t \in [n]$, and where $c|_V$ denotes restriction. 
We note that $\Cot$ can equivalently be defined recursively by $C^{\otimes t} = C^{\otimes (t-1)}\otimes C.$\footnote{For $C_1 \subseteq \Sigma^{n_1}$ and $C_2 \subseteq \Sigma^{n_2}$, $C_1 \otimes C_2 \subseteq \Sigma^{n_1 \times n_2}$ can be viewed as the set of $n_1 \times n_2$ matrices so that every column lies in $C_1$ and every row lies in $C_2$.}

When $C$ is a linear code with rate $R(C)$ and distance $\delta(C)$, the tensor code $C^{\otimes t}$ has rate $(R(C))^t$ and distance $(\delta(C))^t$.  

We next state a result of \cite{HRW19} about local list recovery of tensor codes and a corollary of it that will be useful for us later.

\begin{lemma}[\cite{HRW19}, Lemma 4.1]\label{lem:HRW}
Fix $\delta, \rho \in  [0,1)$, $\eps > 0$ sufficiently small, and any $L \geq 1$.  Then there is some $\tilde{\kappa} = \poly(1/\delta, 1/\rho, 1/\eps)$ so that the following holds.  Suppose that $C \subseteq \Sigma^n$ is a linear code of distance $\delta$ that is (globally) $(\rho, \ell, L)$-list-recoverable.  Then for any $t \geq 1$, $\Cot \subseteq \Sigma^{[n]^t}$ is a $(Q, \eps, \rho \cdot \tilde{\kappa}^{-t^2}, \ell, L^{\tilde{\kappa}^{t^2}\log^t L})$-ALLRC with query complexity $Q=n \cdot \tilde{\kappa}^{t^2\cdot \log^t L}$.
\end{lemma}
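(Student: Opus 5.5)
This lemma is quoted from \cite{HRW19}, so the natural route is to reprove it by induction on $t$, following the randomized recursive algorithm sketched in \Cref{sec:tech}. For the base case $t=1$ we use the global $(\rho,\ell,L)$-list-recovery of $C$ directly. The pre-processing outputs $L$ local algorithms: the $j$-th one reads all $n$ input lists, enumerates the list $\mathcal{K}$ (fixing a canonical order, e.g.\ lexicographic), and outputs the $j$-th element's coordinate. This costs $Q=n$ queries, makes no errors, and already satisfies the claimed bounds.

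For the inductive step we view $\Cot=C^{\otimes(t-1)}\otimes C$ as an $n^{t-1}\times n$ array. The pre-processing picks a uniformly random set $H=\{h_1,\dots,h_m\}$ of $m=\poly(1/\delta,1/\rho,1/\eps)\cdot\log L$ columns. On each column it independently runs the level-$(t-1)$ pre-processing, with radius $\rho_{t-1}$ and approximation $\eps_{t-1}$, giving lists $\cL_1,\dots,\cL_m$. It then outputs one local algorithm per tuple $\vec A\in\cL_1\times\cdots\times\cL_m$, so the list size is $L_t\le L_{t-1}^m$.

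On input $(i',i_t)$, the local algorithm $A$ does the following:
\begin{itemize}
\item it computes guesses $v_a=A'_a(i')$ for $a\in[m]$;
\item it globally list-recovers row $i'$ to obtain a list $\mathcal{K}$;
\item it outputs $c^*_{i_t}$, where $c^*\in\mathcal{K}$ is the codeword agreeing with the most $v_a$'s.
\end{itemize}
The query count therefore satisfies $Q_t=m\,Q_{t-1}+n$.

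For correctness, fix $c\in\Cot$ with $\delta(c,\cS)\le\rho_t$, where $\rho_t=\rho\,\tilde\kappa^{-t^2}$. The argument has three steps.
\begin{enumerate}
\item By Markov's inequality, all but a small fraction of rows are $\rho$-close to $\cS$ (so row-list-recovery contains $c|_{\text{row}}$). Likewise, most columns are $\rho_{t-1}$-close, so by induction some $A'_a$ is $\eps_{t-1}$-close to $c$'s column $h_a$, with probability $2/3$ each.
\item We boost these per-column success probabilities by repeating the level-$(t-1)$ pre-processing $O(\log m)$ times and concatenating lists. This changes $L_{t-1}$ only polynomially, and it is here that the $\log^t L$ exponents arise. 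A Chernoff/union bound over the random $H$ then gives that with probability $\ge2/3$ most $a\in[m]$ are good.
\item For the correct tuple $\vec A$ and a typical row $i'$, most guesses $v_a$ equal $c_{i',h_a}$. Any other $c'\in\mathcal{K}$ differs from $c|_{\text{row}}$ in a $\delta$ fraction of positions. Because $H$ is random, a union bound over the at most $L$ elements of $\mathcal{K}$ with $m\gtrsim\delta^{-2}\log(L/\eps)$ shows that, for all but an $\eps/2$ fraction of rows, $c$ uniquely wins the plurality vote. Averaging over rows gives $\Pr_i[A(i)=c_i]\ge1-\eps$.
\end{enumerate}

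The main obstacle is the parameter bookkeeping. The radius must shrink by a $\poly(\delta,\rho,\eps)$ factor per level, and $\eps_{t-1}$ must be $\poly$-smaller than $\eps_t$ so that the bad fraction of columns times $m$ stays small. Unrolling these recursions gives $\rho\tilde\kappa^{-t^2}$, $L_t\approx L^{\tilde\kappa^{t^2}\log^tL}$, and $Q_t\le n\,\tilde\kappa^{t^2\log^tL}$. The exponent $t^2$ comes from the per-level loss itself growing polynomially in $t$. I would first set $\eps_{t-1}=\eps_t\cdot\poly(\delta\rho)/t$ and check that the recursions close.
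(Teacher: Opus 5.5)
Your sketch is the standard randomized recursion of \cite{HRW19}, which the paper cites rather than reproves. Its derandomized counterpart is the analysis in \Cref{cl:goodAt} and \Cref{lem:base_correct}: good columns via Markov, separation of the row list on $H$ via a union bound over at most $L$ candidates, and a plurality vote. The one step specific to the randomized setting, boosting each recursive call by concatenating $O(\log m)$ independent runs so that a union bound over the $m$ columns goes through, is also handled correctly.

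The only fix is in the bookkeeping: drop the $1/t$ in your tentative choice $\eps_{t-1}=\eps_t\cdot\poly(\delta\rho)/t$.
\begin{itemize}
\item Taking $\eps_{t-1}=\Theta(\delta\eps_t)$ and $\rho_t=\Theta(\delta\eps_t)\rho_{t-1}$ is enough. These make ``most guesses correct'' mean a $1-\delta/4$ fraction of guesses, on all but an $O(\eps_t)$ fraction of rows.
\item It is this geometric decay of the $\eps_j$ that produces $\tilde{\kappa}^{t^2}$. The $\log^t L$ comes from $m_j\propto\log L$ at every level, not from boosting.
\item The extra $1/t$ would put a $t^{-\Theta(t^2)}$ factor into $\rho_t$. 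No $t$-independent $\tilde{\kappa}$ can absorb that.
\end{itemize}
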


The above lemma in particular implies the following combinatorial bound on the output list size for $\Cot$.
\begin{corollary}[$\Cot$ is globally list-recoverable]\label{cor:HRW}
    Fix $\delta, \rho \in  [0,1)$,  and any $L \geq 1$.  Then there is some $\kappa = \kappa(\delta, \rho) = \poly(1/\delta, 1/\rho)$ so that the following holds.   Suppose that $C \subseteq \Sigma^n$ is a linear code of distance $\delta$ that is (globally) $(\rho, \ell, L)$-list-recoverable.  Then for any $t \geq 1$, $\Cot \subseteq \Sigma^{[n]^t}$ is (globally) $(\rho \cdot \kappa^{-t^3}, \ell, L^{\kappa^{t^3} \log^t L})$-list-recoverable. 
\end{corollary}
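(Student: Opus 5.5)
We derive \Cref{cor:HRW} from \Cref{lem:HRW} by a standard "approximate local list-recovery implies combinatorial list-recovery" argument. The idea is to fix $\eps$ as a function of $\delta$ and $t$, so that the approximate local algorithms output by $\cApre$ determine codewords uniquely. The dependence on $\eps$ then gets absorbed into $\kappa$ via $\tilde\kappa^{t^2}\le \kappa^{t^3}$.

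\textbf{Step 1: choosing $\eps$.} Set $\eps := \delta^t/3$. Since $C$ is linear with distance $\delta$, $\Cot$ has distance $\delta^t$, so any string $w\in\Sigma^{[n]^t}$ is $\eps$-close to at most one codeword of $\Cot$. Apply \Cref{lem:HRW} with this $\eps$. This gives $\tilde\kappa=\poly(1/\delta,1/\rho,3/\delta^t)$, and we have $\tilde\kappa \le \poly(1/\delta,1/\rho)^{t}$. Hence $\tilde\kappa^{t^2}\le \kappa^{t^3}$ for a suitable $\kappa=\poly(1/\delta,1/\rho)$. For $\delta$ near $1$, we should enlarge $\kappa$ so that this bound holds for all $t \ge 1$. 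The lemma yields that $\Cot$ is a $(Q,\eps,\rho\tilde\kappa^{-t^2},\ell,L')$-ALLRC with $L'=L^{\tilde\kappa^{t^2}\log^t L}\le L^{\kappa^{t^3}\log^tL}$. Moreover, $\rho\tilde\kappa^{-t^2}\ge \rho\kappa^{-t^3}$, so the radius only improves.

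\textbf{Step 2: counting codewords.} Fix $\cS$ and let $\cL=\{c\in\Cot:\delta(c,\cS)\le\rho\kappa^{-t^3}\}$. For each $c\in\cL$, with probability at least $2/3$ over $\cApre$ there is a $j$ with $A_j$ being $\eps$-close to $c$. This is where the determinism of the $A_j$ is used: each $A_j$ defines a fixed string $w_j = (A_j(i))_i$. By the unique-closeness property from Step 1, the map $c\mapsto j$ is injective on the set of codewords covered in a given run.

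Taking expectation over the randomness of $\cApre$, the expected number of covered codewords is at least $\tfrac23|\cL|$. On the other hand, in every run it is at most $L'$. Hence $|\cL|\le \tfrac32 L'$.

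\textbf{Step 3: absorbing the constant.} It remains to remove the factor $3/2$, which can be absorbed by enlarging $\kappa$ slightly. The degenerate case $L=1$ needs separate care, since then $L'=1$ and the factor $3/2$ cannot be absorbed into the exponent. There we argue directly: $|\cL|\le 3/2$ forces $|\cL|\le1$ because $|\cL|$ is an integer.

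The main obstacle is this bookkeeping with the poly dependence. The key point is that $\tilde\kappa$ depends polynomially on $1/\eps=3\delta^{-t}$, and this is exactly why the exponent becomes $t^3$ rather than $t^2$. I would check that the implicit polynomial in \Cref{lem:HRW} is uniform in $t$, so that $\tilde\kappa\le(1/\delta\rho)^{O(t)}$ holds.
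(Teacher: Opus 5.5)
Your proposal is correct and follows essentially the same route as the paper. The paper applies \Cref{lem:HRW} with $\eps=\delta^t/4$, whereas you take $\delta^t/3$, and either choice makes the nearby codeword unique. It then sets $\kappa=10\tilde\kappa^{1/t}=\poly(1/\delta,1/\rho)$, so the lemma's list size is $L^{(\kappa/10)^{t^3}\log^t L}$, and uses the same expectation/injectivity argument to get $|\cL^*|\le\tfrac32$ times that bound, absorbing the $3/2$ when $L\ge 2$ and using integrality when $L=1$.
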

\begin{proof}
    We first plug in $\eps \gets \delta^t/4$ into \Cref{lem:HRW}.  This means that the parameter $\tilde{\kappa}$ from \Cref{lem:HRW} is $\tilde{\kappa} = \poly(1/\rho,1/\delta^t)$.
    Let $\kappa = 10 \tilde{\kappa}^{1/t}$, so $\kappa = \poly(1/\delta, 1/\rho)$.  
    Then \Cref{lem:HRW} implies that $\Cot$ is  $(Q, \eps=\frac {\delta^t} {4} ,\rho \cdot 10^{t^3}\kappa^{-t^3}, \ell, L^{(\kappa/10)^{t^3} \log^t L})$-ALLRC.  Let $\mathcal{L}$ be the list of local algorithms output by the approximate local list-recovery algorithm $\cApre$.  For each $A \in \mathcal{L}$, let $x(A) \in \Sigma^{[n]^t}$ be the string that $A$ corresponds to: $x(A)_i = A(i)$ for all $i \in [n]^t$. 

   Now consider the ``correct'' list
    \[ \cL^* = \inset{ c \in C^{\otimes t} : \delta(c, \cS) \leq \rho \cdot \kappa^{-t^3} }.\]
    The guarantee of \Cref{lem:HRW} (along with the fact that 
$\rho \cdot 10^{t^3} \kappa^{-t^3} > \rho\cdot  \kappa^{-t^3}$) implies that, for any $c \in \cL^*$, with probability at least $\frac 2 3$ over the randomness of $\cApre$, there is some $A \in \cL$ so that 
   $ \delta(x(A), c) \leq \frac{\delta^t}{4} $.
    Thus,  
    \[ \mathbb{E}_{\cApre}|\cL| \geq \sum_{c \in \cL^*} \Pr_{\cApre}\left[\exists A \in \cL \text{ s.t. } \delta(x(A), c) \leq \frac{\delta^t}{4} \right] \geq |\cL^*| \cdot \frac 2 3,\]
    where in the first inequality we are using the fact that for any algorithm $A$ there is at most one codeword $c \in C$ so that  $\delta(x(A), c) \leq \frac{\delta^t}{4}$ (otherwise, by triangle inequality, there would be two distinct codewords $c, c'\in C$ so that $\delta(c,c')\leq \frac{\delta^t} {2} <  (\delta(C))^t$).

    Since we also have $|\cL| \leq L^{(\kappa/10)^{t^3 \log^t L}}$ by the conclusion of \Cref{lem:HRW}, this implies that
    \[ |\cL^*| \leq \frac{3}{2} L^{(\kappa/10)^{t^3} \log^t L}. \]
    For $L \geq 2$, this implies that
    \begin{equation}\label{eq:listBoundCor} |\cL^*| \leq L^{\kappa^{t^3} \log^t L}, \end{equation}
    as desired.
    On the other hand if $L = 1$, then the above reads $|\cL^*| \leq 3/2$, and since $|\cL^*|$ is an integer we conclude that $|\cL^*| \leq 1$, so \Cref{eq:listBoundCor} still holds.  This proves the corollary.
\end{proof}

\subsection{Samplers}
We will make use of \emph{samplers}.  Below, we give a definition, and record parameters for good samplers.
\begin{definition}[Sampler]\label{def:samplerGamma} An $(n, \eta, \nu)$-\emph{sampler}  with \emph{seed length} $r$ and \emph{sample size} $m$ is a randomized algorithm $\Gamma$ that tosses $r$ random coins and outputs a subset $I \subseteq [n]$ of size $m$ so that the following holds.  For any function $f:[n] \to [0,1]$, with probability at least $1 - \eta$ over the choice of~$I$, 
\[ \inabs{\EE_{i \in I}[f(i)] - \EE_{i \in [n]}[f(i)]} \leq \nu.\]
\end{definition}

\begin{theorem}[\cite{goldreich2011sample}, Corollary 5.6]\label{thm:sampler}
    For any $\eta, \nu > 0$ and an integer $n$, there exists an $(n, \eta, \nu)$-sampler with randomness $\log(n/\nu)$, sample size $m = O\left( \frac{1}{\eta \nu^2}\right)$, and running time $\poly(\log n, 1/\eta, 1/\nu)$.
\end{theorem}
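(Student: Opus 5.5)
This statement is quoted from \cite{goldreich2011sample} and the paper uses it as a black box. If I had to re-derive it, I would build the sampler from a \emph{pairwise-independent} sequence and apply Chebyshev's inequality. The main point is that the randomness is only logarithmic in $n$, which is what the enumeration over seeds later in the paper needs.

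\textbf{Construction.} Let $m = \lceil 1/(4\eta\nu^2)\rceil$. Fix a prime power $q \geq \max(n,m)$ with $q = O(\max(n,m))$, a surjection $\pi:\F_q \to [n]$ whose fibers have sizes within a factor $1 \pm \nu/2$ of each other (for instance, take $q \ge 2n/\nu$ and reduce $x \bmod n$), and $m$ distinct field elements $\alpha_1,\dots,\alpha_m$. The seed is a pair $(a,b) \in \F_q^2$, and the sampler outputs the points $\pi(a + b\alpha_j)$ for $j \in [m]$.

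\textbf{Analysis.} The values $a+b\alpha_j$ are uniform on $\F_q$ and pairwise independent. Hence the variables $Z_j = f(\pi(a+b\alpha_j))$ are pairwise independent with common mean $\mu'$, where $|\mu' - \EE_{i\in[n]} f(i)| \le \nu/2$ by the near-uniformity of $\pi$. Since $f$ takes values in $[0,1]$, each $Z_j$ has variance at most $1/4$. Chebyshev's inequality then gives
\[
\Pr\Big[\big|\tfrac1m\textstyle\sum_j Z_j - \mu'\big| > \nu/2\Big] \le \frac{1}{m\nu^2}.
\]
Rescaling $m$ by a constant factor makes this at most $\eta$, so the sample size is $O(1/(\eta\nu^2))$. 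The running time is $m$ field operations plus the cost of finding $q$, which is $\poly(\log n, 1/\eta, 1/\nu)$.

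\textbf{Gaps.}
\begin{itemize}
\item The definition asks for a \emph{set} of size exactly $m$, while this construction yields a multiset. Collisions can be handled either by working with multisets, which is all the later averaging arguments use, or by a standard padding argument.
\item The construction uses randomness $2\log q = O(\log(n/\nu) + \log(1/\eta))$, not the exact $\log(n/\nu)$ in the statement. Matching the precise constant is the main obstacle, and it requires Goldreich's more careful construction rather than plain pairwise independence, so I would simply cite \cite{goldreich2011sample} for it. For this paper's use only the $O(\log n)$ seed length matters, since the algorithm enumerates $2^r = \poly(n)$ seeds, and that bound already follows from the construction above.
\end{itemize}
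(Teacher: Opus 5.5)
Your Chebyshev analysis is correct, but it proves a strictly weaker statement. Your claim that the weaker seed length suffices for this paper is also false.

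Your seed $(a,b)\in\F_q^2$ with $q\ge 2n/\nu$ has length $2\log q\ge 2\log(2n/\nu)$. No refinement of the pairwise-independence approach can remove the factor $2$: if two sample points are independent and each has full support on $[n]$, every pair in $[n]^2$ must occur, so there are at least $n^2$ seeds.

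The leading coefficient $1$ on $\log n$ is exactly what the paper's running time rests on. In \Cref{alg:main}, each run of $\DALLR{j}$ calls $\DALLR{j-1}$ once for every seed and every sampled column, and $\DALLR{2}$ again loops over all its seeds. So the preprocessing time is at least $\prod_{j=2}^{t}2^{r_j}$.
\begin{itemize}
\item With $2^{r_j}=O(n/\nu_j)$ this product is $O_{\eps}(n^{t-1})$, which is the source of the $N^{1+O(1/t)}$ bound. In the proof of \Cref{lem:unwind_resources}, the factor $(c_0n/\eps_2^4)^{t-2}$ comes from $2^{r_j}m_j\le c_0n/\eps_2^4$ for $j=3,\dots,t$. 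Likewise $2^{r_2}\le c_0n$ accounts for one factor of $n$ in the $n^3$ of \Cref{cl:A2simp}.
\item With your seeds, $2^{r_j}\ge n^2$, and the same product is at least $n^{2(t-1)}=N^{2-2/t}$. The algorithm becomes essentially quadratic however large $t$ is.
\end{itemize}
So ``$2^r=\poly(n)$'' is not enough; you need $2^r=O_{\eta,\nu}(n)$.

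The missing idea is the expander-neighborhood sampler. It is the standard way to get seed length $\log(n/\nu)+O(1)$ with $O(1/(\eta\nu^2))$ samples.

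Take a strongly explicit $d$-regular graph on $N'$ vertices, $2n/\nu\le N'=O(n/\nu)$. Its normalized adjacency matrix $A$ should have second eigenvalue in absolute value $\lambda\le c/\sqrt d$ for an absolute constant $c$; the near-Ramanujan graphs of \cite{Alo21}, cited in \Cref{app:baseCode}, work. The seed is a uniformly random vertex $v$. The sample is the image under your map $\pi$ (reduction mod $n$) of the $d$ neighbors of $v$.

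Given $f:[n]\to[0,1]$, let $\mu'$ be the average of $f\circ\pi$ over the vertices and $g=f\circ\pi-\mu'$. Then $g\perp\mathbf{1}$, and $(Ag)(v)$ is exactly the deviation of the neighborhood average at $v$ from $\mu'$. Moreover
\[ \frac{1}{N'}\sum_{v}(Ag)(v)^2\le\frac{\lambda^2}{N'}\|g\|_2^2\le\frac{\lambda^2}{4}. \]
By Markov's inequality, at most a $\lambda^2/\nu^2\le c^2/(d\nu^2)$ fraction of vertices have $|(Ag)(v)|>\nu/2$. Choose $d\ge c^2/(\eta\nu^2)$ and use $|\mu'-\EE_{i\in[n]}f(i)|\le\nu/2$ exactly as in your argument. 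This gives the theorem, with neighbors computable in time $\poly(\log n,d)$.

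This is the same mechanism as the balanced samplers of \Cref{lem:expander-samplers} used in \Cref{sec:AEL}. The paper itself notes that these are samplers with randomness $\log n$.
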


\section{Deterministic local codes}\label{sec:local_defs}

In this section we formally define and discuss the notions of a \emph{deterministic (approximate) locally list recoverable code} (D(A)LLRC) and a \emph{deterministic locally correctable code} (DLCC), which will be the main objects of study in this work. Then, in Section \ref{subsec:dllr_transform} below we present several transformations on such codes that we shall make use of in our constructions.
To motivate the above definitions, we first discuss their implications to time- and space-efficient decoding algorithms.
 We start by defining our computational model for low-space algorithms.

\begin{definition}[Computational model for low-space algorithms, \cite{CM25_enc, CM25,CM26}]\label{def:model}
Our time- and space-efficient algorithms are uniform algorithms that have access to a read-only input tape, a read/write working tape, and a write-only output tape.  The \emph{space} of the algorithm refers to the amount of space on the read/write working tape that it uses.  

We assume that our algorithms have random access to the input tape and to the working tape, but that they must write to the output tape in sequential order. For example, if an algorithm outputs a codeword $c \in \Sigma^n$, it must output the first symbol $c_1$, then $c_2$, and so on; if it outputs a list of representations of local algorithms $A_1, A_2, \ldots, A_L$, then it must first output a canonical description of $A_1$, then of $A_2$, and so on.  In the case of an algorithm that has oracle access to some input (say a set of input lists $\cS$) as well as another input (say a query position $i \in [N]$), we imagine that both are written on the input tape in some canonical order. 
\end{definition}

Our eventual goal is to obtain \emph{deterministic} list-recovery (and list-decoding) algorithms with sub-linear space and near-linear time. That is, a deterministic algorithm  that has access to a tuple of input lists $\cS = (S_1, \ldots, S_N)$, and outputs a short list containing $\{c \in C : \delta(c, \cS) \leq \rho \}$ in near-linear time and sublinear space, in the computational model described above.

Note that a locally list recoverable code (LLRC, Definition \ref{defn:LLR}) gives a \emph{randomized} such algorithm by executing each of the local algorithms $A_j$ on any input coordinate $i \in [N]$ (assuming the decoding error is sufficiently small, which can be achieved by repeating the decoding process independently  and  outputting the majority value).  However, as noted in the introduction, the resulting algorithm must be randomized, since randomness is necessary for a local list-recovery algorithm (with constant $L,Q \ll N$ and $\rho = \Omega(1)$).

To obtain a \emph{deterministic} time- and space-efficient list recovery algorithm, we consider a \emph{deterministic} version of a locally list recoverable code. 
We formally define this below, noting that a similar version of the definition for list-decoding was also implicitly used in \cite{CM26}.  We hope that there is value in making this definition explicit.  In particular, as we will see later, DALLRCs can be naturally transformed in standard ways while preserving near-linear time and sublinear space, while a time- and space-efficient algorithm with no additional structure is not so easy to manipulate.

\begin{definition}[Deterministic Approximate Locally List-Recoverable Codes (DALLRC), and Deterministic Locally List-Recoverable Codes (DLLRC)]\label{def:DALLR}
Let $C \subseteq \Sigma^N$ be a code, let  $\rho \in [0,1)$, let $\ell \leq L$ and $Q$ be positive integers, and let $\eps > 0$.  We say that $C$ is a 
$(Q,\eps,\rho, \ell,L)$-\emph{deterministic approximate locally list-recoverable code}  (DALLRC) with
preprocessing time $\Tmpre$, preprocessing space $\Sppre$, evaluation time $\Tmeval$, evaluation space $\Speval$, and output length $\Spop$ 
if there exists an algorithm $\cApre$ (referred to as a \emph{DALLR algorithm}) so that the following holds.  
\begin{itemize}
    \item $\cApre$ is a \emph{deterministic} algorithm which receives as input $\mathcal{S} \in {\Sigma \choose \leq \ell}^{N}.$
    \item In time $\Tmpre$ and space $\Sppre$, $\cApre$ outputs a list of (representations of)\footnote{As we will see later, in order to make sure that the space stays small, we need to be careful about how we represent the local algorithms $A_j$ in memory.} local algorithms $(A_1, \ldots, A_L)$, where the total amount written on the output tape (that is, the space to store $L$ representations of local algorithms) is at most $\Spop$.
    \item Each local algorithm $A_j$ is a deterministic algorithm with query access to $\mathcal{S} \in { \Sigma \choose \leq \ell}^N$ and receives as input a coordinate $i \in [N]$.  On input $i \in [N]$, $A_j$ makes at most $Q$ queries to $\cS$, and runs in time $\Tm^{(Eval)}$ and space $\Sp^{(Eval)}$.  
    \item For every codeword $c \in C$ so that $\delta(c,\cS) \leq \rho$, there is some $j \in [L]$ so that
    \[ \Pr_{i \in [N]}[ A_j(i) = c_i ] \geq 1-\eps. \]
\end{itemize}
In the special case that $\eps=0$, we refer to a $(Q,\eps=0,\rho, \ell,L)$-$\ADLLR$  as a \emph{$(Q, \rho, \ell, L)$-deterministic locally list-recoverable code} $($\DLLR$)$, and to the  algorithm $\cApre$ as a DLLR algorithm.
\end{definition}

When counting the number of queries, we assume that each position is queried at most once. This assumption is without loss of generality as long as $\Speval \geq Q$, which will always be the case for our  algorithms.

\begin{remark}[On the ``locality'' of $\ADLLR$s]\label{rem:local} While the local algorithms $A_j$ in \Cref{def:DALLR} are local in the sense that they do not make too many queries to $\cS$, the overall algorithm $\cApre$ may be global, meaning that as preprocessing, it may read all of $\cS$.  As noted above, for a deterministic algorithm, it is \emph{necessary} that some part of the algorithm be global.  If $\cApre$ had unlimited time and space, \Cref{def:DALLR} would not be very interesting, as $\cApre$ could just run a global list-recovery algorithm and hard-code a local algorithm $A_j$ for each codeword in the output list, to obtain a $\DLLR$ algorithm with $Q=0$.  However, the fact that $\cApre$ has limited resources makes the notion in \Cref{def:DALLR} non-trivial (and useful).

\end{remark}

The following simple lemma states that a DLLR algorithm  gives a deterministic global list recovery algorithm with similar time and space bounds.

\begin{lemma}\label{lem:dllr_to_global}
Suppose that $C \subseteq \Sigma^N$ is a $(Q,\rho, \ell,L)$-\DLLR\  with preprocessing time $\Tmpre$, preprocessing space $\Sppre$, evaluation time $\Tmeval$, evaluation space $\Speval$, and output length $\Spop$. Then $C$ is $(\rho,\ell,L)$-globally list recoverable deterministically in time $\Tmpre + N \cdot L \cdot \Tmeval$ and space $\Sppre + \Spop + \Speval$.
\end{lemma}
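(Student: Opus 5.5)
The global algorithm runs the DLLR algorithm $\cApre$ once to get the representations of $A_1,\ldots,A_L$, then evaluates each $A_j$ at every coordinate and writes the resulting codewords to the output tape one at a time. Correctness is immediate from \Cref{def:DALLR} with $\eps=0$. For every $c\in C$ with $\delta(c,\cS)\le\rho$ there is a $j$ with $A_j(i)=c_i$ for all $i\in[N]$. So the string $x^{(j)}:=(A_j(1),\ldots,A_j(N))$ equals $c$, and the output list $\{x^{(1)},\ldots,x^{(L)}\}$ has size at most $L$ and contains every such codeword. This is a valid $(\rho,\ell,L)$-list-recovery algorithm in the sense of \Cref{def:LR}. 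Some $x^{(j)}$ may not be codewords; the definition only requires the list to contain the true list and have size at most $L$.

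The only point that needs care is the computational model of \Cref{def:model}. The output of $\cApre$ cannot be kept on a write-only output tape and later re-read. So we simulate $\cApre$ with its output tape redirected to a dedicated region of the working tape of size $\Spop$. The simulation uses $\Sppre$ working space for $\cApre$ itself, plus $\Spop$ for the stored representations, in time $\Tmpre$. After $\cApre$ halts, its $\Sppre$ workspace can be freed. This is not even needed, since we only claim the sum.

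Next, for $j=1,\ldots,L$ and, inside, for $i=1,\ldots,N$ in order, we run $A_j$ on input $i$. $A_j$ reads its description from the stored region and makes its $Q$ queries to $\cS$ directly on the read-only input tape. Its symbol $A_j(i)$ is written to the real output tape, in sequential order as the model requires. Each run of $A_j$ uses time $\Tmeval$ and space $\Speval$, and that space is reused across runs. This phase therefore takes time $N\cdot L\cdot\Tmeval$ plus lower-order bookkeeping, such as maintaining the counters $i,j$ in $O(\log N+\log L)$ space, which we absorb. It uses space $\Speval$ on top of the stored $\Spop$. In total we get time $\Tmpre+N L\Tmeval$ and space $\Sppre+\Spop+\Speval$.

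I expect no real obstacle. The one subtle step is the one just described: we must justify storing the representations on the work tape, and we must make sure that the evaluation phase accesses $\cS$ and $i$ as "input" the way the local algorithms expect. We handle the latter by having the simulator answer $A_j$'s reads of $i$ from a counter, and its queries to $\cS$ from the true input tape.
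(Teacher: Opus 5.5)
Your proposal is correct and is the same argument as the paper's. You run $\cApre$, store the $L$ representations on the work tape, then evaluate each $A_j$ on $i=1,\ldots,N$ in order and write the results sequentially to the output tape. Your extra care about redirecting $\cApre$'s output tape to the work tape and supplying the index $i$ from a counter only makes explicit what the paper leaves implicit.
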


\begin{proof}
The deterministic global list recovery algorithm $\mathcal{A}$ executes the DLLR algorithm $\cApre$, and writes the description of the $L$ local algorithms $A_1, \ldots, A_L$ it outputs on the working tape. Then for $j=1,\ldots,L$, $\mathcal{A}$ executes each of the local algorithm $A_j$ on input $i$ for $i=1,\ldots,N$ and writes all the outputs sequentially on the output tape. The time and space bounds follow.
\end{proof}

In the special case that $\eps=0$ and $\ell=L=1$, we say that a DALLRC as in \Cref{def:DALLR} is a \emph{deterministic locally correctable code}  (DLCC).  We state this as its own definition below. A similar definition was also implicit in \cite{CM25}.

\begin{definition}[Deterministic Locally Correctable Code (DLCC)]\label{def:DLC}
Let $C \subseteq \Sigma^N$ be a code, let $\rho \in [0,1)$ so that $\rho < \frac{\delta(C)}{2}$, and let $Q$ be a positive integer.  We say that $C$ is a $(Q,\rho)$-\emph{Deterministic Locally Correctable Code (DLCC)} with preprocessing time $\Tm^{(Pre)}$, preprocessing space $\Sp^{(Pre)}$, evaluation time $\Tm^{(Eval)}$, evaluation space $\Sp^{(Eval)}$, and output length $\Spop$  if there exists an algorithm $\cApre$ (referred to as a $\mathrm{DLC}$ algorithm) so that the following holds.
\begin{itemize}
    \item $\cApre$ is a \emph{deterministic} algorithm which receives as input a string $w \in \Sigma^N$.
    \item In time $\Tm^{(Pre)}$ and space $\Sp^{(Pre)}$, $\cApre$ outputs a representation of a local algorithm $A$, where the total amount written on the output tape (that is, the space to store the representation of $A$) is at most $\Spop$. 
    \item The local algorithm $A$ is a \emph{deterministic} algorithm with query access to $w \in \Sigma^N$ and receives as input a coordinate $i \in[N]$.  On input $i \in [N]$, $A$ makes at most $Q$ queries to $w$, and runs in time $\Tm^{(Eval)}$ and space $\Sp^{(Eval)}$.
    \item If there is some $c \in C$ so that $\delta(c,w) \leq \rho$, then $A(i) = c_i$.  
\end{itemize}
\end{definition}

\begin{remark}[On the ``locality'' of DLCCs]
    As with local list recovery, it is impossible to have a deterministic LCC (in the traditional sense) that is completely local; indeed, an adversary with a budget of $\rho n$ corruptions can completely corrupt the $Q$ symbols that such a deterministic algorithm would read.  Instead, our definition allows the preprocessing algorithm $\cApre$ to be global.  As in \Cref{rem:local}, the time and space restrictions on $\cApre$ make \Cref{def:DLC} non-trivial. 
\end{remark}

Finally, we introduce some notation that will be useful.  
\begin{definition}[String corresponding to a local algorithm]\label{def:alg_to_string}
    Let $A$ be a deterministic local algorithm, as returned by a DALLR or DLC algorithm, for a code $C \subseteq \Sigma^N$.  We associate $A$ with a string $x(A) \in \Sigma^{N}$ in the natural way:  That is, for $i \in [N]$, define
    $ x(A)_i := A(i).$
\end{definition}

\subsection{D(A)LLRC transformations}\label{subsec:dllr_transform}

In this section we present some useful transformations on D(A)LLRCs.
The first transformation shows that the intersection of a \ADLLR\ and a \DLCC\ is a \DLLR\ with roughly the same time and space bounds.

\begin{lemma}[The intersection of a \ADLLR\ and a \DLCC\ is a \DLLR]\label{lem:mainDLLR}
    Suppose that $C_1 \subseteq \Sigma^N$ is a linear code of rate $R_1 = 1 - \zeta_1$ that is a $(Q_1, \eps, \rho, \ell, L)$-\ADLLR\  with preprocessing time and space $(\Tmcon_1 , \Spcon_1)$, evaluation time and space $(\Tmeval_1, \Speval_1)$, and output length $\Spop_1$.
    
    Suppose that $ C_2 \subseteq \Sigma^N$ is a linear code of rate $R_2 =  1-\zeta_2$ that is a $(Q_2,\eps)$-\DLCC\  with preprocessing time and space  $(\Tmcon_2, \Spcon_2)$, evaluation time and space $(\Tmeval_2, \Speval_2)$, and output length $\Spop_2$.

    Suppose that $\eps < \frac{\delta(C_2)} 2$, and let 
$\Cstar = C_1\cap C_2 \subseteq \Sigma^N$.  Then $\Cstar$ is a linear code of rate $R^{\star} \geq 1 - \zeta_1 - \zeta_2$ and distance $\delta(C^\star) \geq \max\{\delta(C_1), \delta(C_2)\}$ that is a $(Q = Q_1Q_2, \rho, \ell, L)$-\DLLR\ 
with preprocessing time and space 
\[
\Tmcon_{\star} = O\left(\Tmcon_1 + L \cdot \Tmcon_2 \cdot \Tmeval_1\right) \qquad \Spcon_{\star} = O\left(\Spcon_1 + \Spcon_2 + \Speval_1 + \Spop_1\right),
\]
evaluation time and space 
\[
\Tmeval_{\star} = O\inparen{\Tmeval_2 + Q_2 \cdot \Tmeval_1}  \qquad \Speval_{\star} = O\inparen{\Speval_2 + \Speval_1}, 
\]
and output length
\[
 \Spop_{{\star}} = \Spop_1 + L \cdot \Spop_2.\]
\end{lemma}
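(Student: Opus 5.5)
The plan is to handle the combinatorial parameters first and then build the DLLR algorithm for $\Cstar$ by composing the two given algorithms. Since $C_1, C_2$ are linear subspaces of $\Sigma^N$ (over the field $\Sigma=\F$) of dimensions $(1-\zeta_1)N$ and $(1-\zeta_2)N$, their intersection has dimension at least $(1-\zeta_1-\zeta_2)N$. This gives $R^\star \ge 1-\zeta_1-\zeta_2$, and since $\Cstar \subseteq C_i$ for each $i$, $\delta(\Cstar) \ge \max\{\delta(C_1),\delta(C_2)\}$.

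For the algorithm $\cApre_\star$, I would first run the DALLR algorithm $\cApre_1$ on $\cS$, obtaining representations of $A_1,\dots,A_L$, which are copied to the output tape and also kept on the work tape (space $\Spop_1$). Then, for each $j\in[L]$, I run the DLC algorithm $\cApre_2$ on the string $x(A_j)$ from \Cref{def:alg_to_string}. $\cApre_2$ never receives $x(A_j)$ explicitly: whenever it reads position $i$ of its input, we simulate it by running $A_j(i)$ with query access to $\cS$, reusing space $\Speval_1$ for each call. $\cApre_2$ outputs a representation of a local algorithm $B_j$, written to the output tape. The output of $\cApre_\star$ is the list of pairs $(A_j,B_j)$, so the output length is $\Spop_1 + L\cdot\Spop_2$. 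Each simulated input read costs $\Tmeval_1$, and $\cApre_2$ makes at most $\Tmcon_2$ reads, so the preprocessing time is $O(\Tmcon_1 + L\,\Tmcon_2\,\Tmeval_1)$. The space is $O(\Spcon_1+\Spcon_2+\Speval_1+\Spop_1)$, because $\cApre_1$ finishes before the $\cApre_2$ phases begin and its space can be reused.

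The local algorithm $A^\star_j$, on input $i$, runs $B_j(i)$ and answers each of its at most $Q_2$ queries to position $i'$ of $x(A_j)$ by computing $A_j(i')$, which uses at most $Q_1$ queries to $\cS$. This gives $Q=Q_1Q_2$ queries, time $O(\Tmeval_2+Q_2\Tmeval_1)$, and space $O(\Speval_2+\Speval_1)$, since one evaluation of $A_j$ runs at a time.

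For correctness, take $c\in\Cstar$ with $\delta(c,\cS)\le\rho$. Since $c\in C_1$, the DALLRC guarantee gives some $j$ with $\delta(x(A_j),c)\le\eps$. Since $c\in C_2$ and $\eps<\delta(C_2)/2$, $c$ is the unique codeword of $C_2$ within distance $\eps$ of $x(A_j)$, so the DLCC guarantee gives $B_j(i)=c_i$ for all $i$ when $B_j$ is given access to $x(A_j)$. Hence $A^\star_j(i)=c_i$ for every $i$, which is the DLLR guarantee with $\eps=0$.

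The main obstacle is the bookkeeping in the computational model of \Cref{def:model}. $\cApre_2$ expects its input on a read-only tape, so I must argue that oracle-simulating each input read via $A_j$ is legitimate and only costs the additive $\Speval_1$. The representation of $A_j$ must also remain available, which is why $\Spop_1$ appears in the space bound. Writing $\cApre_1$'s output both to the output tape and to the work tape is allowed, since the output tape is written sequentially.
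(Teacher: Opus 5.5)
Your proposal is correct and follows the paper's proof essentially step for step. The paper uses the same composition: run $\mathcal{P}_1$, then run $\mathcal{P}_2$ on each $x(A_j)$ with queries simulated through $A_j$, and output the pairs. It uses the same correctness argument via $\varepsilon < \delta(C_2)/2$ and the same resource accounting. The only cosmetic difference is your rate bound, which counts the dimension of an intersection of subspaces instead of counting the at most $(\zeta_1+\zeta_2)N$ defining linear constraints; the two arguments are equivalent.
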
 
\begin{proof} 
First, we observe that the code $C^{\star}$ does indeed have rate at least $1 - \zeta_1 - \zeta_2$.  Indeed, as $C_1$ and $C_2$ are both linear codes, they are defined by $\zeta_1 N$ and $\zeta_2 N$ linear constraints, respectively.  Thus, $C_1 \cap C_2$ is defined by at most $(\zeta_1 + \zeta_2)N$ linear constraints, and thus has rate at least $1 - \zeta_1 - \zeta_2$.
Similarly, the bound on the distance follows since $C^\star \subseteq C_1$ and $C^\star \subseteq C_2$.

Let $\cApre_1, \cApre_2$ be the DALLR and DLC algorithms for $C_1, C_2$, respectively.
Next we will describe the DLLR algorithm $\cApre^\star$ for $C^\star$; then prove that it is correct; then analyze its space and time requirements.

\paragraph{The algorithm $\cApre^{\star}$.} 
Before we describe $\cApre^{\star}$, we describe the representations of the local algorithms that it outputs.

A local algorithm $A^{\star}$ output by $\cApre^{\star}$ is of the form \[\rep(A^\star) = (\rep(A), \rep(\bar{A})),\] where $\rep(A)$ is the representation of a local algorithm output by $\cApre_1$, and $\rep(\bar{A})$ is the representation of a local algorithm output by $\cApre_2$.  Given this representation, the local algorithm $A^{\star}$ is implemented as follows.  On input $i \in [N]$ and with oracle access to $\cS \in {\Sigma \choose \leq \ell}^N$, to compute $A^{\star}(i)$:
\begin{itemize}
    \item Use $A$ to implement oracle access to the string $x(A)$ computed by $A$: to query $x(A)_j$, we run $A$ with input $j$ and oracle access to $\cS$.
    \item Run $\bar{A}$ on the input $i$, with oracle access to $x(A)$ provided as above.
\end{itemize}  

We observe that each local algorithm $A^\star$ has query complexity at most $Q = Q_1 \cdot Q_2$.  Indeed, if $\rep(A^\star) = (\rep(A), \rep(\bar{A}))$, then $\bar{A}$ makes at most $Q_2$ queries to $x(A)$, and for each of those $A$ will make at most $Q_1$ queries to $\cS$.

Now that we know what the representations of local algorithms look like, we can describe the DLLR algorithm $\cApre^*$.  We do that in \Cref{alg:intersection} below.

\begin{algorithm}[H]
\caption{$\cApre^\star$: DLLR algorithm for $C^\star = C_1 \cap C_2$}
\label{alg:intersection}
\begin{algorithmic}[1]
\Require Oracle access to $\cS \in {\Sigma \choose \leq \ell}^N$
\Ensure Outputs a list $\ctL$ of $L$ representations of local algorithms.
\State Run $\cApre_1$ with oracle access to $\cS$ to obtain a list $\ctL_1$ of representations of local algorithms.
\State $\ctL = \emptyset$
\ForAll{$\rep(A) \in \ctL_1$}
    \State Run $\cApre_2$ with oracle access to $x(A)$ to get a representation of a local algorithm $\rep(\bar{A})$.
    \State \Comment{Implement oracle access to $x(A)$ using $A$: to query $x(A)_j$, run $A(j)$ with oracle access to~$\cS$}
    \State $\ctL \gets \ctL \cup \{ (\rep(A), \rep(\bar{A})) \}$
\EndFor
\Return $\ctL$
\end{algorithmic}
\end{algorithm}

\paragraph{The algorithm $\cApre^\star$ is correct.}  Next, we show that $\cApre^\star$ as in \Cref{alg:intersection} is correct.  Let $\cL^\star = \inset{ c \in C^\star : \delta(c, \cS) \leq \rho }$.  We would like to show that for all $c \in \cL^\star$, there is some $\rep(A^\star) \in \ctL$ so that $x(A^\star) = c$.

Fix $c^\star \in \cL^\star$.  Let $\ctL_1$ be the output of running $\cApre_1$ with oracle access to $\cS$, as in \Cref{alg:intersection}.  Since $\cL^\star \subseteq C^\star \subseteq C_1$, the correctness of $\cApre_1$ (\Cref{def:DALLR}) implies that there is a local algorithm $\rep(A) \in \ctL_1$ so that $\delta(x(A), c^\star) \leq \eps$.  
Now let $\rep(\bar{A})$ be the result of running $\cApre_2$ with oracle access to $x(A)$, as in \Cref{alg:intersection}.  Note that $c^\star \in C^\star \subseteq C_2$, and by the above $\delta(x(A), c^\star) \leq \eps < \frac{\delta(C_2)} 2$, so by the correctness of $\cApre_2$ (\Cref{def:DLC}), $x(\bar{A}) = c^\star$.  But $x(\bar{A})$ is precisely $x(A^\star)$, so we conclude that $x(A^\star) = c^\star$, as desired.

\paragraph{Resources.}  Next, we record the pre-processing and evaluation time and space for $\cApre^\star$.  
First consider the pre-processing time $\Tmcon_{\star}$.  To run $\cApre^\star$, we first run $\cApre_1$, with a cost of $\Tmcon_1$.  Then for all $L$ local algorithms $\rep(A)$ in $\ctL_1$, we run $\cApre_2$ with oracle access to $x(A)$.  Running $\cApre_2$ takes time $\Tmcon_2$, and in particular we make at most $\Tmcon_2$ queries to $x(A)$.  Each query (simulated using $A$) takes time $\Tmeval_1$, for a total of
\[ \Tmcon_{\star} = O\inparen{\Tmcon_1 + L\cdot \Tmcon_2 \cdot \Tmeval_1}.\]
Next, we claim that the space required for $\cApre^\star$ is 
\begin{equation}\label{eq:spacecon_bound} \Spcon_{\star} = O(\Spcon_1 + \Spcon_2 + \Speval_1 + \Spop_1).\end{equation}
To see why, we first recall that $\cApre_1$ outputs the list $\ctL_1$ on its output tape, with space $\Spop_1$ that gets written on $\cApre^\star$'s work tape.  We also need space $\Sppre_1$ to account for the space to run $\cApre_1$.  
Now, for each local algorithm $\rep(A)$ output by $\cApre_1$, we run $\cApre_2$, using $A$ to simulate oracle access to $x(A)$.  Running $\cApre_2$ requires space $\Sppre_2$.  Each time we query $x(A)$ requires space $\Speval_1$; notice that we can re-use the space from each query, because if $\cApre_2$ stored the query it would be accounted for in $\Sppre_2$.  Notice that we can also run each instance of $\cApre_2$ iteratively, re-using the same space.  This gives the final bound on $\Sppre_{\star}$.

Next, we check the evaluation time $\Tmeval_{\star}$.  To run a local algorithm $A^\star$ with representation $\rep(A^\star) = (\rep(A), \rep(\bar{A}))$, we run $\bar{A}$, using $A$ to simulate query access to $x(A)$.  The local algorithm $\bar{A}$ takes time $\Tmeval_2$ to run, and it makes at most $Q_2$ queries to $x(A)$, each of which takes time $\Tmeval_1$ to simulate.  So the total is
\[ \Tmeval_{\star} = O\left(\Tmeval_2 + Q_2 \cdot \Tmeval_1 \right).\]
Next, we check the space $\Speval_{\star}$.  As above, we need to run the local algorithm $\bar{A}$ with space $\Speval_2$, and then we need space to simulate the queries to $x(A)$, which is $\Speval_1$.  Notice that we can re-use the space for these queries, since if $\bar{A}$ would store them, that would count as part of the space $\Speval_2$.  Thus, the total space is
\[ \Speval_{\star} = O\inparen{ \Speval_1 + \Speval_2 }.\]

Finally, we consider the output length.  The algorithm $\cApre^\star$ outputs $L$ representations of local algorithms, where each representation is of the form $(\rep({A}), \rep(\bar{A}))$.  All of the $L$ representations $\rep(A) \in \ctL_1$ together take space $\Spop_1$.  Each of the representations $\rep(\bar{A})$ takes space $\Spop_2$.  So the total space is 
\[ \Spop_{\cApre^{\star}} = \Spop_1 + L \cdot \Spop_2.\]

This completes the proof.
\end{proof}

The next lemma gives a \emph{concatenation} procedure for DLLRCs, which can be used to reduce the alphabet size of these codes.

\begin{lemma}[Concatenation for \DLLR]\label{lem:DLLR_concat}
Suppose the codes $C_{out}$ and $C_{in}$ exist with the following parameters: 
\begin{itemize}
\item $C_{\out}$ is an $\F$-linear code of block length $N_{\out}$, alphabet 
$\Sigma_{\out}$, rate $R_{\out}$, and distance $\delta_{\out}$
that is encodable in time $\Tm_{\out}^{(enc)}$, and is
a $(Q_{\out}, \rho_{\out}, \ell_{\out}, L_{\out})$-$\DLLR$ with
preprocessing  
time $\Tm_{\out}^{(Pre)}$,  preprocessing space $\Sp_{\out}^{ (Pre)}$, evaluation time 
$\Tm_{\out}^{(Eval)}$, evaluation space $\Sp_{\out}^{(Eval)}$, and output length $\Spop_{\out}$. 
\item $C_{\inn}$ is an $\F$-linear code of block length $N_{\inn}$, alphabet $\Sigma_{\inn}$,
rate $R_{\inn}$, and distance $\delta_{\inn}$ that is encodable in time $\Tm_{\inn}^{(enc)}$ and space $\Sp_{\inn}^{(enc)}$, and is
$(\rho_{\inn},\ell_{\inn},L_{\inn})$-(globally) list recoverable deterministically in time $\Tm_{\inn}$ and space  $\Sp_{\inn}$.  
Suppose furthermore that the encoding map for $C_{\inn}$ can be inverted in time $\Tm_{\inn}^{(unenc)}$ and space  $\Sp_{\inn}^{(unenc)}$.
\end{itemize}

If the parameters of $C_{out}$ and $C_{in}$ satisfy 
$|\Sigma_{\out}|=|\Sigma_{\inn}|^{R_{\inn}\cdot N_{\inn}}$ and $L_{\inn}\leq\ell_{\out}$,
then there exists an $\F$-linear code $C$ of block length $N_{\out}\cdot N_{\inn}$, alphabet
 $\Sigma_{\inn}$, rate $R_{\inn}\cdot R_{\out}$, and 
distance at least $\delta_{\out} \cdot \delta_{\inn}$ that is encodable in time 
${\Tm}_{out}^{(enc)} + N_{out} \cdot {\Tm}_{in}^{(enc)},$
and is a $(Q_{\out} \cdot N_{\inn},\rho_{\out} \cdot \rho_{\inn},\ell_{\inn},L_{\out})$-$\DLLR$  with preprocessing time  
$$\Tm_{\out}^{(Pre)} \cdot (\Tm_{\inn} +L_{\inn} \cdot \Tm_{\inn}^{(unenc)}),$$ preprocessing space 
$$ \Sp_{\out}^{(Pre)} + \Sp_{\inn} + N_{\inn} \cdot (\ell_{\inn} +L_{\inn}) \cdot \log(|\Sigma_{\inn}|) + \Sp_{\inn}^{(unenc)},$$
evaluation time 
$$T_{\out}^{(Eval)} \cdot (\Tm_{\inn} + L_{\inn} \cdot T_{\inn}^{(unenc)})+   T_{\inn}^{(enc)} ,$$ 
evaluation space $$ \Sp_{\out}^{(Eval)} + \Sp_{\inn} +    \Sp_{\inn}^{(enc)} + N_{\inn} \cdot (\ell_{\inn} + L_{\inn})\cdot \log(|\Sigma_{\inn}|)+ \Sp_{\inn}^{(unenc)},$$
and output length $\Spop_{\out}+ O(1)$. 
\end{lemma}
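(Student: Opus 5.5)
We take $C$ to be the standard concatenation $C_{\out}\circ C_{\inn}$. We identify $\Sigma_{\out}$ with $\Sigma_{\inn}^{R_{\inn}N_{\inn}}$ through an $\F$-linear bijection, and encode by first computing $c=E_{C_{\out}}(m)$. We then replace each symbol $c_u$, for $u\in[N_{\out}]$, by the block $E_{C_{\inn}}(c_u)\in\Sigma_{\inn}^{N_{\inn}}$. Coordinates of $C$ are indexed by pairs $(u,v)\in[N_{\out}]\times[N_{\inn}]$.

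The rate, distance, linearity and encoding time are the standard facts about concatenation:
\begin{itemize}
\item both encoding maps are $\F$-linear, so $C$ is $\F$-linear;
\item distinct outer codewords differ in at least $\delta_{\out}N_{\out}$ blocks, and each differing block contributes at least $\delta_{\inn}N_{\inn}$ disagreements;
\item encoding takes one outer encoding followed by $N_{\out}$ inner encodings.
\end{itemize}

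\textbf{Inner list-recovery as an oracle.} Given $\cS\in{\Sigma_{\inn}\choose\le\ell_{\inn}}^{N_{\out}N_{\inn}}$, we define outer input lists $\cS'=(S'_1,\dots,S'_{N_{\out}})$. To obtain $S'_u$, we run the inner list-recovery algorithm on the $N_{\inn}$ lists $\cS|_{\{u\}\times[N_{\inn}]}$. This gives at most $L_{\inn}\le\ell_{\out}$ inner codewords, and we un-encode each of them to a symbol of $\Sigma_{\out}$.

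The key point is that we never store $\cS'$. Both the outer DLLR algorithm $\cApre_{\out}$ and its local algorithms get oracle access to $\cS'$ by recomputing $S'_u$ on demand. Each such query reads one inner block of $\cS$, costs time $\Tm_{\inn}+L_{\inn}\Tm^{(unenc)}_{\inn}$, and uses space $\Sp_{\inn}+N_{\inn}(\ell_{\inn}+L_{\inn})\log|\Sigma_{\inn}|+\Sp^{(unenc)}_{\inn}$. That space buffers the block and the recovered list, and it is reused between queries.

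\textbf{Pre-processing and local algorithms.} The pre-processing algorithm simply runs $\cApre_{\out}$ on the simulated $\cS'$ and copies its output list. It therefore produces $L_{\out}$ representations, each consisting of an outer representation plus $O(1)$ bits recording the wrapper. The preprocessing time is $\Tm^{(Pre)}_{\out}$ times the per-query cost. This uses the fact that the number of oracle queries $\cApre_{\out}$ makes is at most its running time.

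A local algorithm $A$ of $C$, on input $(u,v)$, runs the corresponding outer local algorithm $A_{\out}(u)$ with the simulated oracle $\cS'$. This makes $Q_{\out}$ queries, each reading $N_{\inn}$ symbols, for a total of $Q_{\out}N_{\inn}$ queries to $\cS$. It then encodes the resulting symbol with $E_{C_{\inn}}$ and outputs the $v$-th coordinate. The evaluation time and space bounds then follow exactly as stated.

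\textbf{Correctness.} We need the averaging argument for radius $\rho_{\out}\rho_{\inn}$. Let $c\in C$ come from outer codeword $c^{\out}$, with $\delta(c,\cS)\le\rho_{\out}\rho_{\inn}$. Call a block $u$ bad if $\delta(E_{C_{\inn}}(c^{\out}_u),\cS|_u)>\rho_{\inn}$. By Markov's inequality, at most a $\rho_{\out}$ fraction of blocks are bad. On every good block, $E_{C_{\inn}}(c^{\out}_u)$ lies in the inner output list, so $c^{\out}_u\in S'_u$. Hence $\delta(c^{\out},\cS')\le\rho_{\out}$. Since each $|S'_u|\le L_{\inn}\le\ell_{\out}$, $\cS'$ is a valid outer input, and some outer local algorithm computes $c^{\out}$ exactly. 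The wrapped local algorithm therefore computes $c$ exactly.

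\textbf{Main obstacle.} The expected difficulty is not mathematical but concerns the space and time accounting under the model of \Cref{def:model}. Neither $\cS'$ nor the inner lists may be materialised globally, and the recomputation overhead must appear only multiplicatively in time and only additively in space. The plan is to treat the simulation of $\cS'$ as a reusable subroutine, as in the proof of \Cref{lem:mainDLLR}, and charge its costs accordingly.
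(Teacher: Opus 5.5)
Your proposal is correct and follows the paper's proof essentially step for step. Like the paper, you concatenate, simulate the outer input lists on demand via inner list-recovery plus un-encoding (storing only one block at a time), run the outer DLLR algorithm on this simulated oracle, wrap each outer local algorithm with an inner re-encoding, and prove correctness by Markov's inequality over blocks.

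One minor accounting point: no per-representation wrapper bits are needed, since the wrapper is fixed by the construction. That is why the paper gets output length $\Spop_{\out}+O(1)$, whereas your accounting gives $\Spop_{\out}+O(L_{\out})$.
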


\begin{proof}

We first describe the construction of the code $C$ and its properties, and then we describe the corresponding DLLR algorithm and analyze it.

\paragraph{Construction of the code $C$.}
The code $C$ is the concatenation of $C_{\out}$ with $C_{\inn}$.  Formally,
we construct $C$ by giving a bijection from $C_{\out}$ to $C$. Let 
$\Sigma_{\out}, \Sigma_{\inn}$ denote the alphabets of $C_{\out}, C_{\inn}$ respectively.
Given a codeword $c_{\out}\in C_{\out}$, one obtains the corresponding codeword
$c\in C$ as follows: 
 View each codeword symbol in $\Sigma_{\out}$ as a vector of length $R_{\inn}\cdot N_{\inn}$ over $\Sigma_{\inn}$ and encode it via the code $C_{\inn}$. Each codeword symbol gets mapped to a string in  $\Sigma_{\inn}^{N_{\inn}}$. We denote the resulting string by $c \in \Sigma_{\inn}^{N_{\inn}\cdot N_{\out}}$ and the various resulting codewords of $C_{\inn}$ by $B_1, B_2, \ldots B_{N_{\out}} \in \Sigma_{\inn}^{N_{\inn}}$.  

\paragraph{Code parameters:}
It readily follows that $C$ is an $\F$-linear code of blocklength $N:=N_{\out} \cdot N_{\inn}$ and alphabet size $\Sigma_{\inn}$, and that the encoding time is as stated. The rate and distance analysis is standard (see e.g., \cite[Section 10]{GRS_book}).
We now turn to describe and analyze the DLLR algorithm for the code $C$. 

\paragraph{DLLR algorithm for $C$.}
We will now describe the $(Q,\rho_{\out} \cdot \rho_{\inn},\ell_{\inn},L_{\out})$-DLLR algorithm $\cApre$ for the code $C$. 
Let $\bar \cApre$ be the DLLR algorithm for $C_{\out}$. $\bar \cApre$ is a deterministic algorithm that given 
a tuple $\overline \cS = (\overline S_1, \ldots , \overline S_{N_{\out}}) \in {\Sigma_{\out} \choose {\leq \ell_{\out}}}^{N_{\out}}$
 outputs a list of $L_{\out}$ deterministic local algorithms $\bar A_1, \bar A_2, \ldots, \bar A_{L_{\out}}$. 

We now describe $\cApre$. 
Suppose the algorithm $\cApre$ is invoked on a tuple $\cS = (S_1, \ldots, S_N)\in {\Sigma_{\inn} \choose {\leq \ell_{\inn}}}^{N}$. Then $\cApre$ invokes the algorithm $\bar \cApre$ and emulates $\bar \cApre$ in the natural way. 
Recall that $\bar \cApre$ expects to be given access to a tuple $\bar S \in  {\Sigma_{\out} \choose {\leq \ell_{\out}}}^{N_{\out}}$. 
For any $k \in [N_{\out}]$, whenever  $\bar \cApre$  queries the $k$-th element of the sequence $\bar S_1, \ldots, \bar S_{N_{\out}} \in  {\Sigma_{\out} \choose {\leq \ell_{\out}}}$, the algorithm $ \cApre$ does the following. The algorithm $ \cApre$ first invokes the global list-recovery algorithm for $C_{\inn}$ with the lists $S^{(k,r)} \in  {\Sigma_{\inn} \choose {\leq \ell_{\inn}}}$ which correspond to the $r$-th entry of $B_k$ for each $r \in [N_{\inn}]$. The output of this algorithm is a list of size at most $L_{\inn} \leq \ell_{\out}$ with elements from $\Sigma_{\inn}^{ N_{\inn}}$. We denote by $\bar S_k$ the set of messages in $\Sigma_{\inn}^{r_{\inn} N_{\inn}} = \Sigma_{\out}$ corresponding to the codewords in this list.
This is what $ \cApre$ feeds to $\bar \cApre$. 

Suppose that $\bar \cApre$ outputs a list of $L_{\out}$ deterministic local algorithms $\bar A_1, \bar A_2, \ldots, \bar A_{L_{\out}}$. For any $\bar A_j$ output by $\bar \cApre$, the  algorithm $ \cApre$ outputs a local algorithm $A_j$ defined as follows. 
Each $A_j$ takes as input a coordinate in $[N_{\out} \cdot N_{\inn}]$ and also gets access to the lists $S^{(k,r)}$. 
On input coordinate $(k,r)$ which corresponds to the $r$-th entry in the $k$-th block for $r \in [N_{\inn}]$ and $k \in [N_{\out}]$, $A_j$ emulates $\bar A_j$ on input coordinate $k$, where each query of $\bar A_j$ to $\bar S$ is emulated as described above. The output of $\bar A_j$ is a value $v \in \Sigma_{\out} = \Sigma_{\inn}^{r_{\inn} N_{\inn}}$, and the local algorithm $A_j$ encodes $v$ using the code $C_{\inn}$ and outputs the $r$-th entry of the encoding.

\medskip

\paragraph{Correctness:}
Clearly, the query complexity of each algorithm~$ A_j$ is at most $N_{\inn}$ times the query complexity of~$\bar A_j$, and hence it is at most $Q_{\out} \cdot N_{\inn}$.

Next assume that $c_{\out} \in C_{\out}$ is such that the corresponding 
codeword $c$ of $C$ (as given by the bijection above) satisfies 
$\dist(c,S)\leq \rho_{\out} \cdot \rho_{\inn}$. 
We claim that the tuple  $\bar S :=(\bar S_1, \bar S_2, \ldots, \bar S_{N_{\out}})$ as defined  above satisfies $\dist(c_{\out}, \bar S) \leq \rho_{\out}$.

To see the above, let $T \subseteq [N_{\out}]$ be the subset of all $k \in [N_{\out}]$ so that for at most $\rho_{\inn}$-fraction of  $r \in [N_{\inn}]$ we have that $(B_k)_r \notin S^{(k,r)}$. Then by Markov's inequality, we have that $|T| \geq (1 - \rho_{\out}) \cdot N_{\out}$. 

Next observe that for any $k \in T$, the encoding of $c_{\out_k}$ via the code $C_{\inn}$ (which we call $B_k$) agrees with various $S^{(k,r)} $ for at least a $(1-\rho_{\inn})$-fraction of $r\in [N_{\inn}]$. Consequently, the global list recovery algorithm of $C_{\inn}$ will succeed in outputting the encoding of $c_{\out_k}$, and so 
 $c_{\out_k} \in \bar S_k$. 
Since $|T| \geq (1-\rho_{\out}) \cdot N_{\out}$, this shows that $\dist(c_{\out},\bar S) \leq \rho_{\out}$.  

Consequently, by the correctness of the algorithm $\bar \cApre$, we have that 
$\bar A_j(k)=c_{\out_k}$ for any   $k \in [N_{\out}]$, and so $A_j(k,r)$ is the 
$r$-th entry in the $k$-th block of $c$ for any $k \in [N_{\out}]$ and $r \in [N_{\inn}]$. 

\paragraph{Time and space bounds:}
Finally, we analyze the running time and space of the algorithm $ \cApre$ and each of the local algorithms $ A_j$. To this end, note that the algorithm $ \cApre$ ($ A_j$, respectively) emulates the algorithm $\bar \cApre$ ($\bar A_j$, respectively), where in each time step of  $\bar \cApre$ ($\bar A_j$, respectively), the algorithm $\cApre$ ($ A_j$, respectively) needs to store the values of $S^{(k,r)}$ for $k \in [N_{\out}]$ and $r \in [N_{\inn}]$, which takes space at most  $N_{\inn} \cdot \ell_{\inn} \cdot \log(|\Sigma_{\inn}|)$ (noting that we only need to store one $k$ block at a time), then invoke the global list recovery algorithm for $C_{\inn}$, and find and store the preimages of the codewords in the output list under the encoding map of $C_{\inn}$. Additionally, each local algorithm $A_j$ has to invoke the encoding algorithm for $C_{\inn}$ on the output $v$ of $\bar A_j$. 

Consequently,  $ \cApre$ has running time at most $$ T_{\out}^{(Pre)} \cdot (T_{\inn} + L_{\inn} \cdot T_{\inn}^{(unenc)}) $$
and space at most
$$ \Sp_{\out}^{(Pre)} + \Sp_{\inn} + N_{\inn} \cdot (\ell_{\inn}+L_{\inn}) \cdot \log(|\Sigma_{\inn}|) + \Sp_{\inn}^{(unenc)}.$$ Similarly, each local algorithm $A_j$ has running time at most $$T_{\out}^{(Eval)} \cdot (T_{\inn}+   L_{\inn} \cdot T_{\inn}^{(unenc)}) +   T_{\inn}^{(enc)} $$ and space at most $$ \Sp_{\out}^{(Eval)} + \Sp_{\inn} +    \Sp_{\inn}^{(enc)} + N_{\inn} \cdot (\ell_{\inn}+L_{\inn}) \cdot \log(|\Sigma_{\inn}|) + \Sp_{\inn}^{(unenc)}.$$
The output length is also clearly $\Spop_{\out}+ O(1)$. 

\end{proof}

Finally, we state the following lemma, which shows how to transform a \emph{high-rate} \DLLR\ into a \emph{capacity-achieving} \DLLR. The transformation is based on the AEL transformation of \cite{AEL95}. A similar transformation was implicitly given in \cite[Section 4.6]{CM25} 
for the \DLCC\ setting, and explicitly in \cite[Lemma 5.4]{GKORS18}
for the setting of  (randomized) LLRC. For completeness, we provide a full proof of this lemma in Appendix \ref{sec:AEL}.

\begin{restatable}{lemma}{AEL}\emph{[Distance amplification for \DLLR]}\label{lem:AEL}

Suppose the codes $C_{out}$ and $C_{in}$ exist with the following parameters: 
\begin{itemize}
\item $C_{\out}$ is an $\F$-linear code of block length $N_{\out}$, alphabet size
$\Sigma_{\out}$, rate $R_{\out}$, and distance $\delta_{\out}$ that is encodable 
in time $\Tm_{\out}^{(enc)}$, 
and is a $(Q_{\out}, \rho_{\out}, \ell_{\out}, L_{\out})$-$\DLLR$ with 
preprocessing  
time $\Tm_{\out}^{(Pre)}$,  preprocessing space $\Sp_{\out}^{ (Pre)}$, evaluation time 
$\Tm_{\out}^{(Eval)}$, evaluation space $\Sp_{\out}^{(Eval)}$, and output length $\Spop_{\out}$. 
\item $C_{\inn}$ is an $\F$-linear code of block length $N_{\inn}$, alphabet size $\Sigma_{\inn}$,
rate $R_{\inn}$, and distance $\delta_{\inn}$ that is encodable in time $\Tm_{\inn}^{(enc)}$ and space $\Sp_{\inn}^{(enc)}$, and is
$(\rho_{\inn},\ell_{\inn},L_{\inn})$-(globally) list recoverable deterministically in time $\Tm_{\inn}$ and space  $\Sp_{\inn}$. 
Suppose furthermore that the encoding map for $C_{\inn}$ can be inverted in time $\Tm_{\inn}^{(unenc)}$ and space  $\Sp_{\inn}^{(unenc)}$.
\end{itemize}

There exists a $d=d(\delta_{out},\rho_{out},\gamma)=(1/\delta_{out}+1/\rho_{out}+1/\gamma)^{O(1)}$ such that if the parameters of $C_{out}$ and $C_{in}$ satisfy $N_{\inn}\geq d$,
$|\Sigma_{\out}|=|\Sigma_{\inn}|^{R_{\inn}\cdot N_{\inn}}$ and $L_{\inn}\leq\ell_{\out}$,
then there exists an $\F$-linear code $C$ of block length $N_{\out}$, alphabet
size $\Sigma_{\inn}^{N_{\inn}}$, rate $R_{\inn}\cdot R_{\out}$, and 
distance at least $\delta_{\inn}-2\gamma$ that is encodable in time 
$${\Tm}_{out}^{(enc)} + N_{out} \cdot {\Tm}_{in}^{(enc)} + 
	N_{out} \cdot \poly(N_{in}, \log(N_{out})),$$
and is a $(Q_{\out} \cdot N_{\inn}^2,\rho_{\inn}-\gamma,\ell_{\inn},L_{\out})$-$\DLLR$  with preprocessing time  
$$ \Tm_{\out}^{(Pre)} \cdot \left(\Tm_{\inn} + L_{\inn} \cdot T_{\inn}^{(unenc)}+\poly(N_{\inn}, \log(N_{\out})) \right)  ,$$ 
preprocessing space 
$$ \Sp_{\out}^{(Pre)} + \Sp_{\inn} +  \poly(N_{\inn}, \log(N_{\out})) + N_{\inn} \cdot (\ell_{\inn}+L_{\inn}) \cdot \log(|\Sigma_{\inn}|) + \Sp_{\inn}^{(unenc)},$$
evaluation time 
$$ N_{\inn} \cdot (\Tm_{\out}^{(Eval)} \cdot \left(\Tm_{\inn} + L_{\inn} \cdot T_{\inn}^{(unenc)}+ \poly(N_{\inn}, \log(N_{\out})) \right)  + \Tm_{\inn}^{(enc)}),$$ 
evaluation space $$\Sp_{\out}^{(Eval)} + \Sp_{\inn}^{(enc)} + \Sp_{\inn}  + 
N_{\inn} \cdot (\ell_{\inn} + L_{\inn}) \cdot \log(|\Sigma_{\inn}|)+ 
\poly( N_{\inn}, \log(N_{\out})) + \Sp_{\inn}^{(unenc)},$$
and output length $\Spop_{\out}+O(1)$. 
\end{restatable}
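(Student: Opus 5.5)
The plan is to follow the standard AEL argument, as in \cite[Lemma 5.4]{GKORS18}, and to reuse the concatenation proof of \Cref{lem:DLLR_concat} for the local algorithms. The one new ingredient is a bipartite expander that can be evaluated in low space.

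Fix a $d$-regular bipartite graph $G=(U,V,E)$ with $|U|=|V|=N_{\out}$ and $d=N_{\inn}$. We require $G$ to be a $\lambda$-spectral expander with $\lambda\le d\cdot\poly(\gamma,\delta_{\out},\rho_{\out})$; Ramanujan-type explicit graphs give this with $d=(1/\delta_{\out}+1/\rho_{\out}+1/\gamma)^{O(1)}$. We also require that the $j$-th neighbor of any vertex be computable in time and space $\poly(N_{\inn},\log N_{\out})$; this is the source of those terms in the bounds. The code $C$ is built as follows. Take $c_{\out}\in C_{\out}$ and encode each symbol with $C_{\inn}$, obtaining blocks $B_u\in\Sigma_{\inn}^{N_{\inn}}$ for $u\in U$. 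Send the $j$-th symbol of $B_u$ along the $j$-th edge of $u$. The symbol of $C$ at $v\in V$ is the tuple of the $d$ symbols arriving at $v$, which lies in $\Sigma_{\inn}^{N_{\inn}}$. The rate, $\F$-linearity, and encoding time follow directly. The distance bound $\delta_{\inn}-2\gamma$ follows from the expander mixing lemma: a nonzero codeword of $C_{\out}$ has at least $\delta_{\out}N_{\out}$ nonzero blocks, each with at least a $\delta_{\inn}$ fraction of nonzero edges, and mixing says these edges reach all but a $\delta_{\inn}-2\gamma$ fraction of $V$.

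For the DLLR algorithm, suppose we receive $\cS=(S_v)_{v\in V}$ with $S_v\subseteq(\Sigma_{\inn}^{d})$ and $|S_v|\le\ell_{\inn}$. Project to edge lists by letting $S^{(u,j)}$ be the set of $j$-th-edge coordinates of elements of $S_v$, where $v$ is the $j$-th neighbor of $u$; this set has size at most $\ell_{\inn}$. Then emulate $\bar\cApre$ for $C_{\out}$ exactly as in the proof of \Cref{lem:DLLR_concat}. Whenever $\bar\cApre$ queries $\bar S_u$, we compute the $d$ neighbors of $u$, query the $d$ sets $S_v$, project them, run the global list-recovery algorithm for $C_{\inn}$, and unencode the at most $L_{\inn}\le\ell_{\out}$ results. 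The output local algorithms $A_j$ work as follows on input $v\in V$. For each of the $d$ neighbors $u$ of $v$, they run $\bar A_j(u)$ under this emulation, re-encode with $C_{\inn}$, and read off the appropriate edge symbol. This makes $d$ calls to $\bar A_j$, each using at most $Q_{\out}$ queries, and each such query costs $d$ queries to $\cS$. That gives $Q_{\out}N_{\inn}^2$ queries, and it also explains the extra factor $N_{\inn}$ in evaluation time. The space is reused across blocks and across the $d$ neighbors, which gives the stated space bounds. The output length is $\Spop_{\out}+O(1)$, since all that has to be added is a description of $G$ and $C_{\inn}$.

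Correctness is the main step. Let $c\in C$ satisfy $\delta(c,\cS)\le\rho_{\inn}-\gamma$, and let $W\subseteq V$ be the set of right vertices where $c_v\notin S_v$, so that $|W|\le(\rho_{\inn}-\gamma)N_{\out}$. Call $u\in U$ bad if more than a $\rho_{\inn}$ fraction of its edges go into $W$. By the expander mixing lemma, with $\lambda$ chosen small relative to $\gamma$ and $\rho_{\out}$, the number of bad $u$ is at most $\rho_{\out}N_{\out}$. For every good $u$, the block $B_u$ agrees with the projected lists on at least a $(1-\rho_{\inn})$ fraction of coordinates, so list recovery of $C_{\inn}$ places $(c_{\out})_u$ in $\bar S_u$. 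Hence $\delta(c_{\out},\bar S)\le\rho_{\out}$, and the DLLR guarantee for $C_{\out}$ yields some $j$ with $x(\bar A_j)=c_{\out}$ exactly. It follows that $x(A_j)=c$.

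The main obstacle is doing the mixing-lemma bookkeeping so that $d$ depends polynomially on $1/\delta_{\out},1/\rho_{\out},1/\gamma$, while also confirming that an explicit expander family exists for the needed sizes $N_{\out}$ with low-space neighbor computation. If no such family exists for every $N_{\out}$, I would fall back to explicit constructions in the style of \cite{CM25}, which are available for infinitely many sizes.
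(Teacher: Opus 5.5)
Your proposal is correct and takes essentially the same route as the paper's proof in Appendix~\ref{sec:AEL}. It uses the same AEL construction, the same emulation of $\bar\cApre$ on projected edge lists with $Q_{\out}N_{\inn}^2$ queries, and the same count of good left vertices (which the paper imports as Claim 7.4 of \cite{GKORS18}). The only difference is that you derive the sampling property from the expander mixing lemma, while the paper uses a $(\min\{\rho_{\out},\delta_{\out}/2\},\gamma)$-balanced sampler directly. The obstacle you flag is resolved by \Cref{lem:expander-samplers}. It supplies such $N_{\inn}$-regular graphs for every sufficiently large $N_{\out}$ and every degree $N_{\inn}$ above $(1/\delta_{\out}+1/\rho_{\out}+1/\gamma)^{O(1)}$, with $\poly(N_{\inn},\log N_{\out})$-time neighbor computation, so no fallback to special sizes is needed. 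Also, your distance sentence should say the nonzero edges reach \emph{at least} a $\delta_{\inn}-2\gamma$ fraction of $V$.
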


\section{Deterministic Approximate Local List-Recovery Algorithm}\label{sec:alg}
\label{sec:DALLR}

In this section we introduce our DALLR algorithm for tensor codes.  
We present the DALLR algorithm in \Cref{sec:DALLRC_tensor} below. The algorithm uses a space-efficient testing procedure for tensor codes whose description is deferred to \Cref{subsec:test}. The analysis of the DALLR algorithm is given in \Cref{sec:main}.

\subsection{DALLR Algorithm for Tensor Codes}\label{sec:DALLRC_tensor}

Our goal in this section is to present our DALLR algorithm for a tensor code $C^{\otimes t}$.  To this end, in what follows, fix a linear code $C \subseteq \Sigma^n$ with distance at least $\delta(C)$ that is  $(\rho_0,\ell,L_0)$-globally list-recoverable using an algorithm $\cA_0$. Further assume that $C \otimes C$ has a unique decoder $\mathrm{Dec}_{C \otimes C}$ that can uniquely decode $C \otimes C$ up to radius $\rho^{(!)}_2 \in \left(0, \frac{(\delta(C))^2} 2 \right)$.

Recall that a DALLR algorithm for $C^{\otimes t}$ is a deterministic algorithm $\DALLR{t}$ that outputs a list $A_1^{(t)}, \ldots, A_L^{(t)}$ of deterministic local algorithms.  (We decorate the algorithms with $\cdot^{(t)}$, where $t$ is the order of the tensor, because we will define them recursively relative to the algorithms for lower-order tensors.) Each local algorithm 
$A_j^{(t)}$ receives as input a codeword entry $i=(i_1, \ldots, i_t) \in [n]^t$, and also gets  oracle access to a string of input lists
$\mathcal{S} \in {\Sigma \choose \leq \ell}^{[n]^t}$, and outputs a value in $\Sigma$. The guarantee is that for any codeword $c \in C^{\otimes t}$ that agrees with many of the input lists, there is some local algorithm $A_j^{(t)}$ that correctly recovers most of the entries of $c$. 

We start by describing the DALLR algorithm $\cApre^{(2)}$ for $C \otimes C$ in Section \ref{sec:DALLRC_tensor_base} below, and then recursively describe the algorithm $\cApre^{(t)}$ for $t\geq 3$ in Section \ref{sec:DALLRC_tensor_recurse}.  In fact, we will see that $C \otimes C$ is actually a DLLRC (that is, a DALLRC with $\eps = 0$).  The approximation parameter $\eps_t$ will start small and grow with $t$. 

\subsubsection{Base case: DLLR Algorithm for $C \otimes C$}\label{sec:DALLRC_tensor_base}

We first describe the DLLR algorithm $\DALLR{2}$ for $C \otimes C$. Recall that $C \otimes C \subseteq \Sigma^{[n] \times [n]}$ is a code whose codewords are all $[n] \times [n]$ matrices so that all their rows and columns belong to the base code $C$.

To describe the algorithm $\DALLR{2}$, we first need to explain how we will represent the local algorithms $A^{(2)}$ that $\DALLR{2}$ generates. The representation for a local algorithm $A^{(2)}$  
is  given by $$\rep(A^{(2)}) = (\sigma^{(2)}, \tau=(\tau_1, \ldots, \tau_{m_2})) \in \{0,1\}^{r_2} \times [L_0]^{m_2},$$
where $r_2,m_2$ will be determined later in \Cref{def:params2}. Intuitively, the seed $\sigma^{(2)}$ is used to select a subset of $m_2$ columns out of the $n$ columns of $C \otimes C$ (using the sampler given by Definition \ref{def:samplerGamma}), while each $\tau_a \in [L_0]$ is used for selecting a list element out of the $L_0$ list elements output by the global list recovery algorithm for $C$, when applied to the $a$-th selected column.
The pre-processing algorithm $\DALLR{2}$ iterates over all possible choices of $\sigma^{(2)} \in \{0,1\}^{r_2}$ and $\tau \in [L_0]^{m_2}$, and outputs a corresponding local algorithm $A^{(2)}$ represented by $(\sigma^{(2)}, \tau)$ for each such choice. 

The local algorithm $A^{(2)}$ represented by $(\sigma^{(2)}, \tau)$ operates (roughly) as follows.
First it applies the sampler $\Gamma$ given by Definition \ref{def:samplerGamma} on the seed 
$\sigma^{(2)}$ to select $m_2$ columns out of the $n$ columns. Then, on input $i=(i_1,i_2)$, $A^{(2)}$ runs the global list recovery algorithm for $C$ on each of the $m_2$ selected columns, and uses the advice $\tau$ to pick one output list element  for each of these columns; Let  $c^{(1)}, \ldots,c^{(m_2)}$ denote the selected codewords.
Finally, $A^{(2)}$ runs the global list recovery algorithm for $C$ once more on the $i_1$-th row and  chooses the codeword $\tilde c$ from the output list that agrees the most with $c^{(1)}, \ldots,c^{(m_2)}$. The output of $A^{(2)}$ is the $i_2$-entry of $\tilde c$.

Next we present the formal description of the DLLR algorithm $\DALLR{2}$ and the local algorithms $A^{(2)}$ that it outputs. We start with the description of the local algorithms below in \Cref{alg:local-alg-two}. The algorithm follows the high-level description above, but includes an additional \emph{unique decoding step}; the reason for this is so that $\DALLR{2}$ becomes a DLLR algorithm, rather than a DALLR algorithm.  This means that in our recursion, we can treat $\DALLR{2}$ as a DALLR algorithm with arbitrarily small approximation parameter $\epsilon_2$.  (Briefly, this is needed because the approximation parameter will start very small, and grow to reach $\eps$ for our final code $\Cot$).  We note that this unique decoding step is only for the base case of $t=2$, as we do not have space to do it for larger $t$.

\begin{algorithm}[H]
\caption{$A^{(2)}$: Local alg. for $C \otimes C$ represented by $\rep(A^{(2)}) = \left(\sigma^{(2)}, \tau\right)$}
\label{alg:local-alg-two}
\begin{algorithmic}[1]
\Require This algorithm is parameterized by a seed $\sigma^{(2)} \in \{0,1\}^{r_2}$, and a vector $\tau = (\tau_1, \ldots, \tau_{m_2}) \in [L_0]^{m_2}$, which are both on the input tape.  On the input tape, it also gets an index $i = (i_1, i_2) \in [n]\times [n]$, and the input lists $\cS \in {\Sigma \choose \leq \ell}^{[n]\times [n]}$.  Finally, it can call a sampler $\Gamma_2$ that takes seeds in $\{0,1\}^{r_2}$ and samples $m_2$ elements of $[n]$; and a deterministic $(\rho_0, \ell, L_0)$-global list-recovery algorithm $\mathcal{A}_0$ for $C$; and a unique decoding algorithm $\mathrm{Dec}_{C \otimes C}$ for $C \otimes C$ that works up to radius $\rho_2^{(!)}$.
\Ensure Outputs an element of $\Sigma$.  \Comment{The local algorithm $A^{(2)}$ implicitly corresponds to some string $x \in \Sigma^{[n]\times [n]}$; the output is supposed to be $x_i$.}
\State $H_2 = \{h_1, \ldots, h_{m_2}\} \gets \Gamma_2(\sigma^{(2)})$ \Comment{$H_2 \subseteq [n]$; we view each $h_a$ as an index of a column of $C \otimes C$}
\For{$a \in [m_2]$}
    \State Let $\cS_a = \cS|_{[n] \times \{h_a\}}$ be the restriction to the $h_a$'th column of $\cS$.
    \State Call $\cA_0$ with input $\cS_a$ to get a list $\mathcal{L}_a \subseteq C$ with $|\cL_a| \leq L_0$.
    \State Let $c^{(a)} \in C$ be the $\tau_a$'th element of $\cL_a$.  \Comment{If $|\cL_a| < \tau_a$, set $c^{(a)} = \bot^n$, for some $\bot \not\in \Sigma$.}
\EndFor
\For{$b \in [n]$}
\State Run $\mathcal{A}_0$ on $\cS|_{\{b\} \times [n]}$ to get a list $\mathcal{K}_{b} = \{ c \in C \,:\, \delta(c, \cS_{\{b\}\times [n]}) \leq \rho_0 \} \subseteq C.$
\State Let $\tilde{c}^{(b)} = \argmin_{c \in \mathcal{K}_{b}} |\{ a \in [m_2] : c^{(a)}_{b} \neq c_{h_a} \}|$.
\Comment{If $\mathcal{K}_b = \emptyset$, set $\tilde{c}^{(b)} = \bot^n$}
\EndFor
\State Let $w \in \Sigma^{[n] \times [n]}$ be the matrix whose $b$'th row is given by $\tilde{c}^{(b)}.$
\State Compute $c^* = \mathrm{Dec}_{C \otimes C}(w) \in C \otimes C$.  \Comment{Unique decoding step}
\State Return $c^*_{i}$.
\end{algorithmic}
\end{algorithm}

The preprocessing algorithm $\DALLR{2}$ is formally described in Algorithm \ref{alg:main_base} below. The algorithm follows the high-level description above, but includes an additional \emph{testing step}; the goal of this step is to prune the output list and eliminate spurious local algorithms which do not correspond to a close-by codeword, as well as duplicates.
The parameter $\rho_2$ used in the testing procedure will be defined later in Definition \ref{def:params}.

\begin{algorithm}[H]
\caption{\DALLR{2}: DLLR algorithm for $C \otimes C$}\label{alg:main_base}
\label{alg:approx-list-recovery_base}
\begin{algorithmic}[1]
\Require On the input tape: input lists $\cS \in {\Sigma \choose \leq \ell}^{[n] \times [n]}$;
         parameters $m_2, r_2, L_0$ and $\rho_2$.
\Ensure Outputs a list $\ctL^{(2)}$ of representations of local algorithms.
\State $\ctL^{(2)} \gets \emptyset$
\ForAll{seeds $\sigma^{(2)} \in \{0,1\}^{r_2}$}
    \ForAll{$\tau \in [L_0]^{m_2}$}
        \State $\rep(A^{(2)}) = (\sigma^{(2)}, \tau)$
        \State $\mathrm{dist} \gets \frac{1}{n^2}\sum_{i \in [n]\times [n]} \mathbf{1}[A^{(2)}(i) \not\in S_i]$ \Comment{Check that $x(A^{(2)})$ is close to $\cS$}
        \State $\mathrm{sim} \gets \min_{ \rep(\tilde{A}^{(2)}) \in \ctL^{(2)}} \sum_{i \in [n]\times [n]} \mathbf{1}[ A^{(2)}(i) \neq \tilde{A}^{(2)}(i) ]$ 
        \Comment{Check for duplicates; if $\ctL^{(2)} = \emptyset$ then $\mathrm{sim} \gets \infty$}
        \If{$\mathrm{dist} \leq \rho_2$ and $\mathrm{sim} > 0$}
            \State $\ctL^{(2)} \gets \ctL^{(2)} \cup \{\rep(A^{(2)})\}$.
        \EndIf
  \EndFor
\EndFor
\State \Return $\ctL^{(2)}$ 
\end{algorithmic}
\end{algorithm}

\subsubsection{DALLR algorithm for $C^{\otimes t}$}\label{sec:DALLRC_tensor_recurse}

Next we recursively describe the DALLR algorithm $\DALLR{t}$ for $C^{\otimes t}$ for $t \geq 3$. To this end, we view $C^{\otimes t}$ as a $2$-dimensional tensor $C^{\otimes (t-1)} \otimes C$, so the columns are codewords of $C^{\otimes (t-1)}$ while the rows are codewords of $C$. At a high level, the algorithm $\DALLR{t}$ and the local algorithms $A^{(t)}$ it generates follow the same footprint as the $t=2$ case, except that we recursively use the corresponding algorithms for $C^{\otimes (t-1)}$ for list recovering the columns, and we use low-space testers. 

In more detail, we first explain how we represent the local algorithms $A^{(t)}$ that $\DALLR{t}$ generates. 
Informally, the representation for a local algorithm $A^{(t)}$ is given by
\begin{itemize}
    \item a seed $\sigma^{(t)} \in \{0,1\}^{r_t}$ for some parameter $r_t$; and
    \item a list of $m_t$ local algorithms $A_1^{(t-1)}, \ldots, A_{m_t}^{(t-1)}$, where each $A_a^{(t-1)}$ is a (representation of a) local algorithm for $C^{\otimes (t-1)}$, 
\end{itemize} 
where $r_t$ and $m_t$ will be determined later in \Cref{def:params2}.
Formally, we let
\[ \rep(A^{(t)}) = \inparen{\sigma^{(t)}, \rep(A_1^{(t-1)}), \ldots, \rep(A_{m_t}^{(t-1)}) },\]
where $\sigma^{(t)} \in \{0,1\}^{r_t}$ and each of the $\rep(A_i^{(t-1)})$ are defined recursively. 
As in the $t=2$ case, the seed $\sigma^{(t)}$ will be used for selecting $m_t$ columns out of the $n$ columns of $C^{\otimes (t-1)} \otimes C$ using the sampler given by Definition \ref{def:samplerGamma}. The representation $\rep(A_a^{(t-1)})$ will be used for selecting a  local algorithm out of the list of local algorithms output by the DALLR algorithm $\DALLR{t-1}$, when applied to the $a$-th selected column.

We now proceed to the formal description of $\DALLR{t}$, and the local algorithms $A^{(t)}$ that it outputs. We start with the definition of the local algorithms below in \Cref{alg:local-alg}. The algorithm is similar to the $t=2$ case, except that we apply on each column recursively the local algorithms for $C^{\otimes (t-1)}$ instead of the global list recovery algorithm for $C$. Additionally, for $t \geq 3$ we do not include the final unique decoding step, as it would require too much space.

\begin{algorithm}[H]
\caption{$A^{(t)}$: Local alg. for $C^{\otimes t}$ with $\rep(A^{(t)}) = \left(\sigma^{(t)}, \rep(A_1^{(t-1)}), \ldots, \rep(A_{m_t}^{(t-1)})\right)$}\label{alg:rep}
\label{alg:local-alg}
\begin{algorithmic}[1]
\Require This algorithm is parameterized by a seed $\sigma^{(t)} \in \{0,1\}^{r_t}$ and by representations of local algorithms $A_a^{(t-1)}$ for $C^{\otimes (t-1)}$, for $a \in [m_t]$; this is given on the input tape.  
On the input tape, it also gets an index $i = (i_1, \ldots, i_t) \in [n]^t$, and the input lists $\cS  \in {\Sigma \choose \leq \ell}^{[n]^t}$.  Finally, it can call a sampler $\Gamma_t$ that takes seeds in $\{0,1\}^{r_t}$ and samples $m_t$ elements of $[n]$; and a deterministic  $(\rho_0, \ell, L_0)$-global list-recovery algorithm $\mathcal{A}_0$ for $C$.
\Ensure Outputs an element of $\Sigma$.  \Comment{The local algorithm $A^{(t)}$ implicitly corresponds to some string $x \in \Sigma^{[n]^t}$; the output is supposed to be $x_i$.}
\State Let $i' = (i_1, \ldots, i_{t-1}) \in [n]^{t-1}$, so $i = (i', i_t)$.
\State $H_t = \{h_1, \ldots, h_{m_t}\} \gets \Gamma_t(\sigma^{(t)})$ \Comment{$H_t \subseteq [n]$;  each $h_a$ indexes a column of $C^{\otimes (t-1)} \otimes C$}
\For{$a \in [m_t]$}
    \State Let $\cS_a = \cS|_{[n]^{t-1} \times \{h_a\}}$ be the restriction to the $h_a$'th ``column'' of $\cS$. 
    \State Call $A_a^{(t-1)}$ with the input $i'$ and with oracle access to $\cS_a$ to get $v_a \in \Sigma$.
\EndFor
\State Run $\mathcal{A}_0$ on $\cS|_{\{i'\} \times [n]}$ to get a list $\mathcal{K}_{i'} = \{ c \in C \,:\, \delta(c, \cS_{\{i'\}\times [n]}) \leq \rho_0 \} \subseteq C.$
\State Let $c^* = \argmin_{c \in \mathcal{K}_{i'}} |\{ a \in [m_t] : c_{h_a} \neq v_a \}|$.
\State Return $c^*_{i_t}$.
\end{algorithmic}
\end{algorithm}

The preprocessing algorithm $\DALLR{t}$ is formally described in Algorithm \ref{alg:main} below. Similarly to the $t=2$ case, the algorithm iterates over all choices for seed and representations of local algorithms for $C^{\otimes (t-1)}$. For a large $t$, we cannot afford for the local algorithm to have a unique decoding step, as in the $t=2$ case. 
Instead we now use a testing procedure \Call{Test}{} in the pre-processing algorithm that allows us to skip the unique decoding step in the local algorithm.  \Call{Test}{} can be implemented in low-space, and is based on local testing of tensor codes \cite{Vid2015}. The formal description of the testing procedure is deferred to Section \ref{subsec:test} below (cf. Algorithm \ref{alg:test}).

\begin{algorithm}[H]
\caption{\DALLR{t}: DALLR  alg. for $C^{\otimes t} = C^{\otimes (t-1)} \otimes C$, assuming such an algorithm for $C^{\otimes (t-1)}$}\label{alg:main}
\label{alg:approx-list-recovery}
\begin{algorithmic}[1]
\Require On the input tape: input lists $\cS \in {\Sigma \choose \leq \ell}^{[n]^{t-1} \times [n]}$.  This algorithm can call
         an $(n, \eta_t, \nu_t)$-sampler $\Gamma_t$ with seed length $r_t$,  and sample size $m_t$ (see \Cref{def:params2} for how to set parameters); and a \ADLLR\ 
        algorithm $\DALLR{t-1}$ for $C^{\otimes (t-1)}$.
\Ensure Outputs a list $\ctL^{(t)}$ of representations of local algorithms.
\State $\ctL^{(t)} \gets \emptyset$
\ForAll{seeds $\sigma^{(t)} \in \{0,1\}^{r_t}$}
  \State $H_t = \{h_1,\ldots,h_{m_t}\} \gets \Gamma_t(\sigma^{(t)})$
        \Comment{$H_t \subseteq [n]$; each $h_a$ indexes a column of $C^{\otimes (t-1)} \otimes C$}
  \For{$a = 1,\ldots,m_t$}
    \State Run $\DALLR{t-1}$ on the column $\cS|_{[n]^{t-1} \times \{h_a\}}$;
           let $\cL^{(t-1)}_{a}$ be the output list of representations of local algorithms for $C^{\otimes (t-1)}$.
  \EndFor
  \ForAll{tuples $\vec{A} = (\rep(A^{(t-1)}_1),\ldots,\rep(A^{(t-1)}_{m_t})) \in \cL^{(t-1)}_{1} \times \cdots \times \cL^{(t-1)}_{m_t}$}
    \State $\rep(A^{(t)}) = (\sigma^{(t)}, \vec{A})$ \label{line:localalg}
    \If{\Call{Test}{$\rep(A^{(t)}$), $\ctL^{(t)}$}}
        \State $\ctL^{(t)} \gets \ctL^{(t)} \cup \{\rep(A^{(t)})\}$.
    \EndIf
  \EndFor
\EndFor
\State \Return $\ctL^{(t)}$ 
\end{algorithmic}
\end{algorithm}

\subsection{Testers to Prune the List}\label{subsec:test}

In this section we describe the space-efficient testing procedure \Call{Test}{} used in Algorithm \ref{alg:main}, and analyze its time and space complexity.
\Call{Test}{} will consist of three different testers whose goal is to eliminate spurious local algorithms and keep the list size small. 

Before we present our testing algorithms, we set a few parameters.      Recall that in Section \ref{sec:DALLRC_tensor} we have fixed a linear code $C \subseteq \Sigma^n$ with  distance at least $\delta(C)$ that is  $(\rho_0,\ell,L_0)$-globally list-recoverable using an algorithm $\cA_0$, and we have further assumed that $C \otimes C$ has a unique decoder $\mathrm{Dec}_{C \otimes C}$ that can uniquely decode $C \otimes C$ up to radius $\rho^{(!)}_2 \in \left(0, \frac{(\delta(C))^2} 2 \right)$. 

\begin{definition}[Setting parameters $\gamma_t$,  $\epsilon_t$, and $\rho_t$]\label{def:params}    
    For a positive integer $t$, define the \emph{testing parameter}
    \begin{equation}\label{eq:defgamma}
        \gamma_t = \frac{(\delta(C))^{2t}}{18^{\log_{1.5}(t)}}.
    \end{equation}      
Our objective will be to show that $C^{\otimes t}$ is $(Q_t,\epsilon_t, \rho_t, \ell, L_t)-\ADLLR$, where 
  $\{\eps_t\}_t$ and $\{\rho_t\}_t$ are defined recursively as follows (we shall defer the definitions of the parameters $Q_t$ and $L_t$ to Definition \ref{def:params2}). 

    For the base case of $t=2$, 
    let
  \begin{equation} \label{eq:rho2}
            \rho_2 = \rho_0 \cdot \min \left\{ \frac{ \delta(C)}{8}, \frac{ \rho_2^{(!)}}{2}, \kappa^{-8} \right\}, 
        \end{equation}  
        where $\kappa = \kappa(\frac 1 {\delta(C)}, \frac 1 {\rho_0})$ is the parameter from \Cref{cor:HRW}, 
      and let $\epsilon_2>0$ be a parameter.

   Finally, let $d_0\geq 14$ be an absolute constant, and for any $t \geq 3$, define
    \begin{equation}\label{eq:epsessrhot}
    \eps_{t} = \frac{2 d_0 \eps_{t-1}}{\gamma_t \delta(C) \rho^{(!)}_2} \qquad \text{and} \qquad \rho_t = \eps_{t-1}\rho_{t-1}.
   \end{equation}

   \end{definition}

Each local algorithm $A^{(t)}$ that we want to test corresponds to a string $x$ that should be in the true output list.  For simplicity, we first define and analyze testers $T_1, T_2, T_3$ that are functions of \emph{strings} $x$, rather than as functions of representations of algorithms $\rep(A^{(t)})$. 
After that, we will show how to implement these three testers to run on the representations $\rep(A^{(t)})$.

\subsubsection{Distance Approximators}
For the first of our testers, we will need to solve the following problem: Given a string $x \in \Sigma^{[n]^t}$, estimate the distance from $x$ to $C^{\otimes t}$, efficiently and using low space.  We will do this in two steps: First we will use an estimator of \cite{Vid2015} to estimate the distance from two-dimensional restrictions of $x$ to $C \otimes C$; then we will average these to obtain an estimator for the distance from $x$ to $C^{\otimes t}$.  

For the two-dimensional estimator, we use the following lemma.

\begin{lemma}[\cite{Vid2015}, Claim 6.2]\label{lem:rowcolumn}
Let $C \subseteq \Sigma^n$ be a linear code. 
    For $x \in \Sigma^{[n] \times [n]},$ let
 $$
    \delta_{RC}(x) := \frac{1}{2n}\left( \inabs{\inset{i \in [n] : x|_{\{i\} \times [n]} \not\in C }} + \inabs{\inset{j \in [n] : x_{[n] \times \{j\}} \not\in C}}\right).
$$
    Then $\frac{1}{2}\delta(x, C \otimes C) \leq \delta_{RC}(x) \leq 1$.  Further, if $C$ has deterministic algorithm $\mathrm{Detect}_C$ running in time $\Tm_C$ and space $\Sp_C$ so that $\mathrm{Detect}_C(x)$ returns \textsc{True} if  $x \in C$, and \textsc{False} otherwise, then there is a deterministic algorithm to compute $\delta_{RC}(x)$ that runs in time $O(n \Tm_C)$ and space $O(\Sp_C + \log n)$.  
\end{lemma}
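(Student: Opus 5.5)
The lemma has two parts: the combinatorial bound $\frac12\delta(x,C\otimes C)\le \delta_{RC}(x)\le 1$, and the algorithmic claim. The upper bound $\delta_{RC}(x)\le 1$ is immediate, since each of the two counts is at most $n$. For the lower bound, I will construct an explicit codeword of $C\otimes C$ close to $x$ by erasing bad rows and bad columns and re-filling.

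Let $B_R\subseteq[n]$ be the set of rows of $x$ not in $C$ and $B_C$ the set of bad columns, so $\delta_{RC}(x)=(|B_R|+|B_C|)/(2n)$. I will show $\delta(x,C\otimes C)\le (|B_R|+|B_C|)/n$ by cases.

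\textbf{Case 1: $|B_R|\ge \delta(C) n$ or $|B_C|\ge\delta(C)n$.} If we only know $\delta(x,C\otimes C)\le 1$, we would need $|B_R|+|B_C|\ge n$, which is false in general. So instead I would use the following argument, which covers this case as well. Consider the submatrix on good rows $G_R=[n]\setminus B_R$ and good columns $G_C$. Define $y$ by taking, for each good row $i$, the row $x|_{\{i\}\times[n]}\in C$ itself. Then every row of $y$ on $G_R$ is in $C$, but columns need not be.

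\textbf{Case 2: $|B_R|<\delta(C)n$.} Then any column restricted to $G_R$ determines at most one codeword of $C$, since $|B_R|<\delta n$ erasures are correctable. For each column $j$, the good-row entries $x|_{G_R\times\{j\}}$ form a linear combination of rows in $C$. By linearity of $C$ and the standard fact that a codeword of $C|_{G_R}\otimes C$ extends uniquely to $C\otimes C$ when $|B_R|<\delta n$ (erasure decoding of the tensor code along columns), the matrix $x|_{G_R\times[n]}$ extends to $c\in C\otimes C$ exactly when each column restricted to $G_R$ lies in $C|_{G_R}$. This holds for good columns. For bad columns it may fail, so I would first replace the entries in bad columns: on the good rows, rows are codewords of $C$, and a codeword of $C$ is determined by its entries on $G_C$ provided $|B_C|<\delta n$. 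The cleaner route is to use the $C^{\otimes 2}$ erasure-decoding fact: the matrix $x|_{G_R\times G_C}$ is a codeword of $C|_{G_R}\otimes C|_{G_C}$ (rows and columns are restrictions of codewords), and when $|B_R|,|B_C|<\delta n$ it extends uniquely to $c\in C\otimes C$. This $c$ agrees with $x$ on $G_R\times G_C$, so $\delta(x,c)\le (|B_R|+|B_C|)/n$.

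The remaining situation is $\max(|B_R|,|B_C|)\ge\delta n$. Here the claimed inequality can fail unless the constant is adjusted. This is the main obstacle: the extension step, and whether Vidick's Claim 6.2 really holds without a $\delta$-dependent hypothesis. Since the paper cites \cite{Vid2015}, I would follow that source and, if necessary, invoke the claim there directly rather than reprove it, checking that its hypotheses (linearity) suffice. Showing that $x|_{G_R\times G_C}$ lies in $C|_{G_R}\otimes C|_{G_C}$ and extends requires linearity, via the identification $C|_{G_R}\otimes C|_{G_C}=(C\otimes C)|_{G_R\times G_C}$ when projections are injective.

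For the algorithm, loop over $i\in[n]$, read row $i$ through the read-only input (random access, no copy needed), and run $\mathrm{Detect}_C$, incrementing a $\log n$-bit counter. Do the same for columns, reusing the space. Finally output the counter divided by $2n$. This takes $2n$ calls, so time $O(n\Tm_C)$, and space $O(\Sp_C+\log n)$.
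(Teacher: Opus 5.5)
Your upper bound $\delta_{RC}(x)\le 1$ is fine. Your algorithm is also fine; it is the same row-and-column sweep the paper describes in the remark after the lemma, and the paper itself only cites the combinatorial bound. The gap is in the lower bound. You only handle the regime $|B_R|,|B_C|<\delta(C)n$, and your ``Case 1'' sketch is abandoned without a conclusion. You then suggest the inequality may fail when $\max(|B_R|,|B_C|)\ge\delta(C)n$. It does not fail, and no case split is needed.

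The missing idea is that the extension step needs only \emph{existence} of an extension, not uniqueness, and existence holds for every linear code with no distance hypothesis. Write $U=C|_{G_R}$ and $V=C|_{G_C}$ for the coordinate projections of $C$. You already observed that $x|_{G_R\times G_C}$ has all its columns in $U$ and all its rows in $V$. The matrices with this property are exactly $U\otimes V=\mathrm{span}\{uv^\top : u\in U, v\in V\}$, a standard linear-algebra identity. So $x|_{G_R\times G_C}\in U\otimes V$.

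Every generator $u'v'^\top$ of $U\otimes V$ has the form $u|_{G_R}\,v|_{G_C}^\top$ for some $u,v\in C$. Hence the restriction map $C\otimes C\to U\otimes V$ is surjective, with no injectivity of the projections required. So there is some $c\in C\otimes C$ with $c|_{G_R\times G_C}=x|_{G_R\times G_C}$. With $\beta_R=|B_R|/n$ and $\beta_C=|B_C|/n$,
\[ \delta(x,C\otimes C)\le\delta(x,c)\le 1-(1-\beta_R)(1-\beta_C)\le\beta_R+\beta_C=2\,\delta_{RC}(x). \]
The condition $|B_R|,|B_C|<\delta(C)n$ you impose only makes $c$ unique, which the lemma never uses. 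Dropping it gives the full bound in a single case.
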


\begin{remark}
Note that $\delta_{RC}(x)$ is the rejection probability of the 
 (randomized) local testing algorithm for $C \otimes C$, which given oracle access to a string $x \in \Sigma^{[n] \times [n]}$, picks 
 a random row or a random column of $x$, with probability $\frac 1 2$ each, and accepts if and only if that row or column belongs to $C$.  Claim 6.2 of \cite{Vid2015}, which the author refers to as folklore, shows that the rejection probability of this tester is at least  $\frac{1}{2}\delta(x,C\otimes C)$, which in particular implies that this algorithm is a (strong) local tester for $C\otimes C$.  We note that it is known that this local tester is generally not \emph{robust} \cite{Valiant05}, where a robust local tester is a local tester in which the local view of the tester is far from an accepting view on average if the string is far from the code; However, for our purposes we shall not require the stronger robustness property out of this local tester.

Finally, we note that the (deterministic) algorithm referred to in \Cref{lem:rowcolumn} is to iterate over all rows and columns of $x \in \Sigma^{[n] \times [n]}$ and to return the fraction of them that belong to $C$, in order to compute $\delta_{RC}(x)$.  In particular, it is not a local testing algorithm, which would be randomized. 
\end{remark}

We will also use the following lemma about the average over all such two-dimensional estimators.

\begin{theorem}[\cite{Vid2015}, Implicit in the proof of Theorem 3.1]\label{thm:viderman_tester}
    Let $t \geq 3$, and let $C \subseteq \Sigma^n$ be a linear code, and let $x \in \Sigma^{[n]^t}$.  Then
    \[\gamma_t \cdot \delta(x, \Cot) \leq \frac{1}{{t \choose 2}n^{t-2}} \sum_P \delta(x|_P, C\otimes C) \leq \delta(x, \Cot),\]
    where $\gamma_t$ is as in \Cref{eq:defgamma}, and where the sum is over all ${t \choose 2}n^{t-2}$ two-dimensional axis-aligned planes $P \subseteq [n]^t$;  that is, over all planes $P$ of the form 
    \begin{equation}\label{eq:planes} P = \{a_1\} \times \{a_2\} \times \cdots \times \{a_{i-1}\} \times [n] \times \{a_{i+1}\} \times \cdots \times \{a_{j-1}\} \times [n] \times \{a_{j+1}\} \times \cdots \times \{a_t\},\end{equation}
    where $i < j \in [t]$ and $a_r \in [n]$ for $r \in [t]\setminus\{i,j\}$.
\end{theorem}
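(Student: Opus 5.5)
The upper bound is the easy direction and I would do it first. Fix $c \in \Cot$ with $\delta(x,c) = \delta(x,\Cot)$. For every plane $P$, the restriction $c|_P$ lies in $C \otimes C$, since every axis-parallel line inside $P$ is an axis-parallel line of $[n]^t$. Hence $\delta(x|_P, C\otimes C) \leq \delta(x|_P, c|_P)$. Each of the ${t \choose 2}$ families of parallel planes partitions $[n]^t$ into $n^{t-2}$ planes of equal size. So averaging $\delta(x|_P,c|_P)$ over the planes in one family gives exactly $\delta(x,c)$, and averaging over all families also gives $\delta(x,c)$.

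For the lower bound I would follow Viderman's recursive argument, proving by induction on $t$ a statement of the form $\mathbb{E}_P\,\delta(x|_P, C\otimes C) \geq \gamma_t\,\delta(x,\Cot)$. The $\log_{1.5}(t)$ in the exponent of $\gamma_t$ suggests a divide-and-conquer recursion. Split the $t$ coordinates into two blocks of sizes $t_1, t_2 \leq \lceil 2t/3\rceil$; the factor $2/3$ is what yields $\log_{1.5}$. Then view $\Cot = C^{\otimes t_1} \otimes C^{\otimes t_2}$, and use a two-dimensional robustness-type statement for tensor products of two codes: the distance of $x$ to $C_1\otimes C_2$ is at most a constant (something like $18/\delta(C_1)\delta(C_2)$ or similar) times the average distance of its ``rows'' to $C_2$ and ``columns'' to $C_1$. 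This is the content behind \Cref{lem:rowcolumn}, generalized to block-rows and block-columns. Here I would use that $\delta(C^{\otimes s}) = \delta(C)^s$, so the product of the two block distances is $\delta(C)^t$.

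Next I would apply induction inside each block: the distance of a block-row to $C^{\otimes t_2}$ is at most $\gamma_{t_2}^{-1}$ times the average over planes within that block of the plane-distances. I would also need the ``cross'' planes, which have one direction in each block. To choose the partition so that every pair of directions is handled, I would average over random partitions or over three overlapping blocks of size $2t/3$. Using three blocks, each pair of coordinates lies together in some block, and that is exactly where the $2/3$ comes from. With that choice, the average over all planes of $[n]^t$ is comparable to averages of within-block plane averages, up to constant factors absorbed into the $18$.

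The recursion should then read $\gamma_t \geq \gamma_{2t/3}^2\cdot \delta(C)^{O(1)}/18$ or a similar form. I would check that unrolling it gives the stated $\gamma_t = \delta(C)^{2t}/18^{\log_{1.5} t}$; that the $\delta$-exponent comes out as $2t$ relies on the block sizes summing appropriately. The main obstacle is getting the combination step exactly right: the two-block robustness inequality with an explicit constant, and the bookkeeping of which planes are averaged with what weights so that the constant per level is at most $18$. If the three-overlapping-blocks accounting fails, I would fall back to peeling one coordinate at a time. That gives a worse $\gamma_t$, but the same proof shape.
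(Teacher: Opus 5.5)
Your upper bound is correct and complete. For every plane $P$, $c|_P\in C\otimes C$, and each of the ${t\choose 2}$ parallel families partitions $[n]^t$ into planes of equal size, so the plane average of $\delta(x|_P,c|_P)$ is exactly $\delta(x,c)$. The paper does not reprove the lower bound; it imports it from \cite{Vid2015}. Your sketch of the lower bound has a genuine gap.

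Your combination step is a two-factor robustness inequality: $\delta(x,C_1\otimes C_2)$ is at most a constant depending on $\delta(C_1),\delta(C_2)$ times the average \emph{distance} of rows to $C_2$ and of columns to $C_1$.

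\begin{itemize}
\item For general linear codes this is false. It says exactly that the row/column tester is robust, and that already fails for $C\otimes C$. There are words at constant distance from the tensor code all of whose rows and columns are $o(1)$-close to $C$; this is the non-robustness the paper cites via \cite{Valiant05}.
\item It is also not ``the content behind'' \Cref{lem:rowcolumn}. That lemma bounds $\delta(x,C\otimes C)$ by twice the \emph{fraction} of rows and columns that are not codewords, a $0/1$ count per line.
\item That non-robust form does not compose with your induction hypothesis. A block-row at distance $n^{-t_2}$ from $C^{\otimes t_2}$ contributes essentially nothing to the plane averages, yet it counts fully as a non-codeword.
\item Three overlapping blocks of size $2t/3$ is the right geometry. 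But you still have no inequality bounding $\delta(x,\Cot)$ by the distances of the block-slices; the only tool you name is the two-factor one.
\item The coordinate-peeling fallback has the same problem. It needs robustness of $C^{\otimes(t-1)}\otimes C$ with respect to hyperplanes and lines, which is again a two-factor statement your sketch gives no way to prove.
\end{itemize}

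The missing idea is that robustness only appears with three factors. Viderman's argument rests on a three-factor statement: for \emph{arbitrary} linear codes $C_1,C_2,C_3$, the distance of a word to $C_1\otimes C_2\otimes C_3$ is at most a constant depending on the $\delta(C_i)$ times the average distance of its restrictions to random $C_i\otimes C_j$-slices. Here a $C_i\otimes C_j$-slice is obtained by fixing the coordinates of the third factor.

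Apply this with $C_i=C^{\otimes t_i}$, where $t_1+t_2+t_3=t$ and $t_i\approx t/3$. The slices then have dimension $t_i+t_j\leq\lceil 2t/3\rceil$, which is where the $\log_{1.5}t$ comes from. One recurses inside the slices, with some bookkeeping to pass from the induced distribution on direction pairs to the uniform one.

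In that recursion the previous constant enters linearly, roughly $\gamma_t\geq\gamma_{\lceil 2t/3\rceil}\cdot\delta(C)^{\Theta(t)}/O(1)$. Your guessed recursion $\gamma_t\approx\gamma_{2t/3}^2$ would instead unroll to something like $\delta(C)^{\Theta(t^{\log_{1.5}2})}$, not $\delta(C)^{2t}/18^{\log_{1.5}t}$.
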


\begin{remark} 
The above lemma implies that for $t \geq 3$, the (randomized) local testing algorithm for $C^{\otimes t}$,  which given oracle access to a string $x \in \Sigma^{[n]^t}$, picks a 
random two-dimensional axis-aligned plane, and accepts if and only if the restriction to that plane belongs to $C\otimes C$, is a \emph{robust} local tester.
\end{remark}

To obtain an estimator to the distance of a string $x$ to $C^{\otimes t}$, our basic idea is
 to use $\delta_{RC}$ to estimate $\delta(x|_P, C\otimes C)$ for each $P$, as per \Cref{lem:rowcolumn}, and then average them to estimate $\delta(x, \Cot)$ as per \Cref{thm:viderman_tester}.
We can improve on this idea by noting that if $x|_P$ is sufficiently close to $C\otimes C$, then we can do better by decoding $x|_P$ to the nearest codeword and computing $\delta(x|_P, C\otimes C)$ exactly.   This modified two-dimensional approximation, $\tilde{\delta}_{C\otimes C}(x)$, appears in \Cref{alg:appx2} below.
\begin{algorithm}[H]
\caption{$\tilde{\delta}_{C\otimes C}$: Approximates distance to $C\otimes C$}\label{alg:appx2}
    \begin{algorithmic}[1]
        \Require{Input $x \in \Sigma^{[n] \times [n]}$; a unique decoder $\mathrm{Dec}_{C\otimes C}$ for $C\otimes C$ that works up to radius $\rho_2^{(!)}$, and an error detection algorithm $\textsc{Detect}_{C}$ for $C$.}
        \Ensure{Output an estimate $\tilde{\delta}_{C\otimes C}(x)$ for $\delta(x, C\otimes C)$.}
        \State $y \gets \mathrm{Dec}_{C\otimes C}(x)$
        \Comment{If decoding fails, $y \gets \bot^{n\times n}$}
        \State Use \textsc{Detect}$_C$ on all rows and columns of $y$ to decide if $y \in C \otimes C$.
        \If{$ y \in C\otimes C$ \textbf{and} $\delta(x,y) \leq \rho_2^{(!)}$}{
        \Return{$\delta(x,y)$}}
        \Else{
        \Return{$\delta_{RC}(x)$}
        }
        \EndIf
    \end{algorithmic}
\end{algorithm}
We note that \Cref{alg:appx2} can be implemented in time $O(nT_C + T_{C\otimes C})$ and space $O( \Sp_C + n^2 + \Sp_{C\otimes C})$, where $T_{C\otimes C}$ and $\Sp_{C\otimes C}$ are the time and space required to run $\mathrm{Dec}_{C\otimes C}$ respectively.
Indeed, in addition to the time and space required to compute $\delta_{RC}(x)$ from the above, we also account for the $O(n^2)$ space required to store $y$.  We can check whether or not $y \in C \otimes C$ in time $n \Tm_{C}$ and space $\Sp_C$ by checking whether each row and column belongs to $C$.  Finally, the time to compute $\delta(x,y)$ is $O(n^2)$, which is dominated by the $n T_{C}$ term.

Now we can define our approximator for $\delta(x, C^{\otimes t})$, which we call $\tilde{\delta}_{\Cot}$:

\begin{definition}\label{def:feasible_appx}
    Let $C \subseteq \Sigma^n$ be a linear code.  For any $x \in \Sigma^{[n]^t}$, define
    \[ \tilde{\delta}_{\Cot}(x) := \frac{1}{{t \choose 2} n^{t-2}}  \sum_{P} \tilde{\delta}_{C \otimes C}(x|_P),\]
    where the sum is over all two-dimensional axis-aligned planes $P \subseteq [n]^t$, as in \Cref{eq:planes}. 
\end{definition}
\begin{lemma}\label{lem:goodappx}
  Let $\tilde{\delta}_{\Cot}$ be as in \Cref{def:feasible_appx}.  Then for all $x \in \Sigma^{[n]^t}$, 
  \[ \frac{\gamma_t}{2} \delta(x, \Cot) \leq \tilde{\delta}_{\Cot}(x) \leq \frac{1}{\rho_2^{(!)}} \delta(x, \Cot),\]
  where
  $\gamma_t$ is as defined in \Cref{def:params}.
  Moreover, we can compute $\tilde{\delta}_{\Cot}(x)$ in time $O\left( t^2 ( n^{t-1}\Tm_C + n^{t-2}\Tm_{C \otimes C}) \right)$ and space $O\left( \Sp_C + t \log n + n^2 + \Sp_{C \otimes C}\right)$, where $\Tm_C$ and $\Sp_C$ (resp. $\Tm_{C \otimes C}$ and $\Sp_{C \otimes C}$) are the time and space for \textsc{Detect}$_C$ (resp.  $\mathrm{Dec}_{C \otimes C}$).
\end{lemma}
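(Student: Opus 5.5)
The plan is to prove the two inequalities plane by plane, establishing for every $z \in \Sigma^{[n]\times[n]}$ the two-sided bound
\[ \tfrac12 \delta(z, C\otimes C) \leq \tilde{\delta}_{C\otimes C}(z) \leq \tfrac{1}{\rho_2^{(!)}}\,\delta(z, C\otimes C), \]
and then averaging over all planes $P$ with $z = x|_P$, using \Cref{thm:viderman_tester}.

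\textbf{Lower bound.} Fix $z$ and split according to the branch taken in \Cref{alg:appx2}.

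\begin{itemize}
\item If it returns $\delta(z,y)$ with $y \in C\otimes C$, then $\delta(z,y) \geq \delta(z,C\otimes C)$.
\item Otherwise it returns $\delta_{RC}(z)$, which is at least $\frac12\delta(z,C\otimes C)$ by \Cref{lem:rowcolumn}.
\end{itemize}

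Averaging over planes and applying the left inequality of \Cref{thm:viderman_tester} gives
\[ \tilde\delta_{\Cot}(x) \geq \tfrac12 \cdot \tfrac{1}{{t\choose 2}n^{t-2}}\sum_P \delta(x|_P,C\otimes C) \geq \tfrac{\gamma_t}{2}\,\delta(x,\Cot). \]
This uses $t\geq 3$; for $t=2$ there is a single plane, and the bound holds directly since $\gamma_2\leq 1$.

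\textbf{Upper bound.} Again split into two cases.

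\begin{itemize}
\item If $\delta(z,C\otimes C)\leq \rho_2^{(!)}$: since $\rho_2^{(!)} < \delta(C)^2/2$, the nearest codeword $c$ is unique and $\mathrm{Dec}_{C\otimes C}$ returns $y=c$. The test passes, and the output is $\delta(z,c) = \delta(z,C\otimes C) \leq \delta(z,C\otimes C)/\rho_2^{(!)}$, using $\rho_2^{(!)}<1$.
\item If $\delta(z,C\otimes C) > \rho_2^{(!)}$: whatever branch is taken, the output is at most $1$. This is true for $\delta_{RC}$ by \Cref{lem:rowcolumn}, and for $\delta(z,y)$ trivially. So the output is at most $1 < \delta(z,C\otimes C)/\rho_2^{(!)}$.
\end{itemize}

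Averaging and applying the right inequality of \Cref{thm:viderman_tester} gives $\tilde\delta_{\Cot}(x) \leq \frac{1}{\rho_2^{(!)}}\delta(x,\Cot)$. The one point I would check carefully is that the unique decoder returns the correct codeword whenever the distance is within $\rho_2^{(!)}$; this is simply its specification.

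\textbf{Complexity.} We enumerate the ${t\choose 2}n^{t-2}$ planes by a counter of $O(t\log n)$ bits. For each plane we run \Cref{alg:appx2}, accessing $x|_P$ through index arithmetic on the input tape. By the stated cost of \Cref{alg:appx2}, this takes time $O(n\Tm_C+\Tm_{C\otimes C})$ and space $O(\Sp_C+n^2+\Sp_{C\otimes C})$, with the space reused across planes. We keep a running sum of rationals with denominator $O(n^{t})$, which needs $O(t\log n)$ bits.

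The total time is $O(t^2 n^{t-2}(n\Tm_C+\Tm_{C\otimes C}))$, and the total space is $O(\Sp_C+t\log n+n^2+\Sp_{C\otimes C})$, as claimed. The proof is routine; the only subtle step is the correctness of the decoding branch in the upper bound.
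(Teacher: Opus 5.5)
Your proposal is correct and follows essentially the same proof as the paper. Both prove a per-plane bound $\frac12\delta(z,C\otimes C)\le\tilde\delta_{C\otimes C}(z)\le\frac{1}{\rho_2^{(!)}}\delta(z,C\otimes C)$ by case analysis: unique decoding succeeds within radius $\rho_2^{(!)}$, and otherwise one uses \Cref{lem:rowcolumn} together with the trivial bound $\tilde\delta_{C\otimes C}\le 1$. You then average over planes via \Cref{thm:viderman_tester} and use the same plane-by-plane time and space accounting. The only cosmetic difference is that for the lower bound you split on which branch of \Cref{alg:appx2} is taken rather than on the distance, which avoids the paper's implicit step that the decoding branch cannot fire when $\delta(z,C\otimes C)>\rho_2^{(!)}$.
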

\begin{proof}
   Suppose that $\delta(x|_P, C \otimes C) \leq \rho_2^{(!)}$.  Then $\tilde{\delta}_{C \otimes C}(x|_P) = \delta(x|_P, C \otimes C)$, because in \Cref{alg:appx2}, the algorithm successfully uniquely decodes $C \otimes C$.  On the other hand, if $\delta(x|_P, C \otimes C) > \rho_2^{(!)}$, then 
    \[ \tilde{\delta}_{C \otimes C}(x|_P) = \delta_{RC}(x|_P) \geq \frac{1}{2} \delta(x|_P, C \otimes C),
    \] 
    where the equality is by definition and the inequality is by \Cref{lem:rowcolumn}.  In the case that $\delta(x|_P, C \otimes C) > \rho_2^{(!)}$, we also have
    \[ \rho_2^{(!)} \cdot \tilde{\delta}_{C \otimes C}(x|_P) \leq \rho_2^{(!)} \leq \delta(x|_P, C \otimes C) \] as $\tilde{\delta}_{C \otimes C}(x|_P) \in [0,1]$, which implies that
    \[ \tilde{\delta}_{C \otimes C}(x|_P) \leq \frac{1}{\rho_2^{(!)}} \delta(x|_P, C \otimes C). \]
    Thus, in either case, we have
    \[ \frac{1}{2} \delta(x|_P, C \otimes C) \leq \tilde{\delta}_{C \otimes C}(x|_P) \leq \frac{1}{\rho_2^{(!)}} \delta(x|_P, C \otimes C).\]
    Averaging over all such planes $P$, we have that
    \[ \frac{1}{2} \left( \frac{1}{{t\choose 2}{n^{t-2}}}\sum_P \delta(x|_P, C \otimes C)\right) \leq \tilde{\delta}_{\Cot}(x) \leq \frac{1}{\rho_2^{(!)}}\left(\frac{1}{{t\choose 2}{n^{t-2}}}\sum_P \delta(x|_P, C \otimes C)\right). \]
    Finally applying \Cref{thm:viderman_tester} completes the proof of the bound.

    For the running time and space, we have already established that $\tilde{\delta}_{C \otimes C}$ can be computed in time $O(n \Tm_C + \Tm_{C \otimes C})$ and space $O(\Sp_C + n^2 + \Sp_{C \otimes C}).$  To obtain the running time to compute $\tilde{\delta}_{\Cot}(x)$, we multiply by ${t \choose 2}n^{t-2} = O(t^2 n^{t-2})$, the number of planes $P$ iterated over.  The space is the same as for $\tilde{\delta}_{C \otimes C}$, with an additional $O(t \log(n) + \log(t)) = O(t \log n)$ bits to keep track of the current plane $P$, and another $O(t \log n)$ bits to keep track of the sum $\sum_P n^2 \cdot \tilde{\delta}_{C \otimes C}(x|_P)$. 
\end{proof}

\subsubsection{Three Testers for Strings}
Finally we are ready to define our three testers, $T_1, T_2, T_3$.  As noted above, we define these first as functions of strings $x \in \Sigma^{[n]^t}$, and explain later how to apply them to local algorithms.  The three testers play the three roles described in \Cref{sec:tech}.  That is, we need to test (1) if $x$ is close to some $c \in \Cot$; (2) if $c$ is close to the input lists $\cS$; and (3) if there is some other $x'$ already in the list that corresponds to the same $c$.  For item (2), in fact we test whether or not $x$ is close to $\cS$; if item (1) holds and $x$ is close enough to $c$, then the triangle inequality will imply a good enough approximation to (2).

\begin{definition}[Testers $T_1, T_2, T_3$]\label{def:testers}
    For $t \geq 3$, define
    \begin{equation}\label{eq:thresh} \beta_1^{(t)} = \frac{d_0 \eps_{t-1}}{\delta(C) \rho_2^{(!)}} \qquad 
\beta_2^{(t)} = \rho_t + \frac{d_0 \eps_{t-1}}{\delta(C)} \qquad 
\beta_3^{(t)} = \frac{(\delta(C))^t}{2},\end{equation}
where we recall the notation from \Cref{def:params}. 

For any $x, x' \in \Sigma^{[n]^t}$, define:
\begin{align*}
    T_1(x) &= \mathbf{1}[ \tilde{\delta}_{\Cot}(x) \leq \beta_1^{(t)} ] \qquad \text{(is $x$ close enough to $\Cot$?)}\\
    T_2(x) &= \mathbf{1}[\delta(x, \cS) \leq \beta_2^{(t)}] \qquad \text{(is $x$ close enough to the input $\cS$?)} \\
    T_3(x,x') &= \mathbf{1}[\delta(x,x') \geq \beta_3^{(t)} ]\qquad \text{(is $x$ far enough away from another $x'$?)}
\end{align*}

We note that $T_1, T_2, T_3$ all depend on $t$; we suppress this dependence for notational clarity. 

\end{definition}
Next, we prove a few properties of these testers.  

\begin{lemma}[Properties of $T_1$]\label{lem:T1}
Fix parameters as in \Cref{def:params}, and suppose that
$
\eps_{t} < \frac{(\delta(C))^t}{2}.
$
Let $x \in \Sigma^{[n]^t}$. Then the following hold.
\begin{enumerate}
    \item \textbf{Completeness:} If $\delta(x, \Cot) \leq \frac{d_0 \eps_{t-1}}  {\delta(C)}$, then $T_1(x) = 1$.
    \item \textbf{Soundness:} If $T_1(x) = 1$, then $\delta(x, \Cot) \leq \eps_t < \frac{(\delta(C))^t} 2$, and in particular there is a unique closest codeword $\Phi(x) \in \Cot$ to $x$.
\end{enumerate}
\end{lemma}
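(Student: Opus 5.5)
Both parts follow by combining the two-sided bound of \Cref{lem:goodappx},
\[ \frac{\gamma_t}{2}\,\delta(x,\Cot) \;\leq\; \tilde{\delta}_{\Cot}(x) \;\leq\; \frac{1}{\rho_2^{(!)}}\,\delta(x,\Cot), \]
with the definitions of $\beta_1^{(t)}$ in \eqref{eq:thresh} and of $\eps_t$ in \eqref{eq:epsessrhot}. No further structural facts about tensor codes are needed, since \Cref{lem:goodappx} already includes \Cref{thm:viderman_tester} and \Cref{lem:rowcolumn}.

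\emph{Completeness.} Suppose $\delta(x,\Cot) \leq \frac{d_0\eps_{t-1}}{\delta(C)}$. The upper bound in \Cref{lem:goodappx} gives
\[ \tilde{\delta}_{\Cot}(x) \leq \frac{1}{\rho_2^{(!)}}\cdot \frac{d_0 \eps_{t-1}}{\delta(C)} = \beta_1^{(t)}. \]
Hence $T_1(x)=1$ directly from \Cref{def:testers}.

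\emph{Soundness.} Suppose $T_1(x)=1$, so $\tilde{\delta}_{\Cot}(x) \leq \beta_1^{(t)}$. The lower bound in \Cref{lem:goodappx} gives
\[ \delta(x,\Cot) \leq \frac{2}{\gamma_t}\,\tilde{\delta}_{\Cot}(x) \leq \frac{2}{\gamma_t}\cdot\frac{d_0\eps_{t-1}}{\delta(C)\rho_2^{(!)}} = \frac{2d_0\eps_{t-1}}{\gamma_t\delta(C)\rho_2^{(!)}} = \eps_t, \]
where the last equality is exactly the recursion \eqref{eq:epsessrhot}. The hypothesis $\eps_t < (\delta(C))^t/2$ then yields the strict inequality in the statement.

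\emph{Uniqueness of the closest codeword.} Since $C$ is linear, $\Cot$ has relative distance $(\delta(C))^t$ (recalled in \Cref{subsec:tensor}). Suppose two distinct codewords $c,c'$ were both within distance $\delta(x,\Cot) < (\delta(C))^t/2$ of $x$. The triangle inequality would give $\delta(c,c') < (\delta(C))^t$, a contradiction. So the closest codeword $\Phi(x)$ is well defined.

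The argument is only bookkeeping. The one point to check is that the constants line up: $\beta_1^{(t)}$ equals the completeness threshold scaled by $1/\rho_2^{(!)}$, and $\eps_t$ equals $\beta_1^{(t)}$ scaled by $2/\gamma_t$. Both are immediate from the definitions, so I do not expect a real obstacle.
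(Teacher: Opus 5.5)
Your proposal is correct and follows the paper's proof essentially step for step: completeness from the upper bound $\tilde{\delta}_{\Cot}(x) \leq \delta(x,\Cot)/\rho_2^{(!)}$ in \Cref{lem:goodappx}, and soundness from the lower bound $\frac{\gamma_t}{2}\delta(x,\Cot) \leq \tilde{\delta}_{\Cot}(x)$ combined with $2\beta_1^{(t)}/\gamma_t = \eps_t$. Your explicit triangle-inequality argument for uniqueness of $\Phi(x)$ just spells out what the paper states in one line.
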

\begin{proof}
    We prove each item.
    \begin{enumerate}
        \item \textbf{Completeness:} Suppose that $\delta(x, \Cot) \leq \frac {d_0 \eps_{t-1}} {\delta(C)} = \rho_2^{(!)} \beta^{(t)}_1$.  By \Cref{lem:goodappx}, we have $\tilde{\delta}_{\Cot}(x) \leq \frac{\delta(x, \Cot)}{\rho_2^{(!)}} \leq \beta^{(t)}_1$, which implies that $T_1(x) = 1$.
        \item \textbf{Soundness:} Suppose that $T_1(x) = 1$.  Then by definition $\tilde{\delta}_{\Cot}(x) \leq \beta_1^{(t)}$, so \Cref{lem:goodappx} implies that
        \[ \frac{\gamma_t}{2} \delta(x,\Cot) \leq \tilde{\delta}_{\Cot}(x) \leq \beta_1^{(t)},\]
        hence
        \[ \delta(x, \Cot) \leq \frac{2 \beta_1^{(t)}}{\gamma_t} =  \frac{2d_0 \eps_{t-1}}{\gamma_t \delta(C) \rho_2^{(!)}} = \eps_{t} < \frac{ (\delta(C))^t}{2},\]
        where the last inequality is by assumption.
        Thus, $\delta(x, \Cot)$ is less than half the minimum distance of $\Cot$.  This implies that there is a unique $\Phi(x) \in \Cot$ so that $\delta(x, \Phi(x)) < \frac{ (\delta(C))^t}{2}.$
    \end{enumerate}
\end{proof}

\begin{lemma}[Properties of $T_2$]\label{lem:T2}
Fix parameters as in \Cref{def:params}, and suppose that $\epsilon_t < \frac{ (\delta(C))^t} 2$.  Let $x \in \Sigma^{[n]^t}$, and let $\cS \in {\Sigma \choose \leq \ell}^{[n]^t}$ be a collection of input lists.  
Suppose that $T_1(x) = 1$, and let $\Phi(x) \in \Cot$ be the unique closest codeword to $x$, as guaranteed by \Cref{lem:T1}.  Then the following hold.
\begin{enumerate}
    \item \textbf{Completeness:} Suppose that $\delta(\Phi(x), \cS) \leq \rho_t$ and suppose that $\delta(x, \Phi(x)) \leq \frac {d_0 \eps_{t-1}}  {\delta(C)}.$  Then $T_2(x) = 1$.
    \item \textbf{Soundness:} If $T_2(x) = 1$, then $\delta(x, \cS) \leq \rho_t + \frac{ d_0 \eps_{t-1}} { \delta(C)}.$
\end{enumerate}
\end{lemma}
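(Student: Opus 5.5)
Both items follow from the triangle inequality for the ``distance to a list of sets'' quantity $\delta(\cdot,\cS)$, together with the definition of the threshold $\beta_2^{(t)} = \rho_t + \frac{d_0\eps_{t-1}}{\delta(C)}$ in \Cref{def:testers}. The inequality we need is: for any $x, y \in \Sigma^{[n]^t}$, we have $\delta(x,\cS) \leq \delta(x,y) + \delta(y,\cS)$. It holds pointwise, since if $x_i \notin S_i$ then either $x_i \neq y_i$ or $y_i \notin S_i$. Taking a union bound over coordinates $i \in [n]^t$ and normalizing gives the inequality.

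\textbf{Completeness.} Take $y = \Phi(x)$. By the hypotheses, $\delta(x,\cS) \leq \delta(x,\Phi(x)) + \delta(\Phi(x),\cS) \leq \frac{d_0\eps_{t-1}}{\delta(C)} + \rho_t = \beta_2^{(t)}$. Hence $T_2(x) = \mathbf{1}[\delta(x,\cS)\leq \beta_2^{(t)}] = 1$. The assumption $T_1(x)=1$ and $\eps_t < (\delta(C))^t/2$ are used only so that $\Phi(x)$ is well defined, via the soundness part of \Cref{lem:T1}.

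\textbf{Soundness.} This is immediate from the definition: $T_2(x)=1$ means exactly $\delta(x,\cS) \leq \beta_2^{(t)} = \rho_t + \frac{d_0\eps_{t-1}}{\delta(C)}$.

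There is no real obstacle here. The only point to take care with is stating and justifying the mixed triangle inequality, between a string and a list of sets, coordinatewise as above.
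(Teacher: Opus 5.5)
Your proof is correct and follows the paper's argument exactly: completeness comes from the triangle inequality $\delta(x,\cS)\le\delta(x,\Phi(x))+\delta(\Phi(x),\cS)\le \beta_2^{(t)}$, and soundness is immediate from the definition of $\beta_2^{(t)}$. The coordinatewise justification you give for the string-to-list triangle inequality is a detail the paper leaves implicit.
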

\begin{proof}
    We prove each item.
    \begin{enumerate}
        \item \textbf{Completeness:} It follows from the triangle inequality that
        \[ \delta(x,\cS) \leq \delta(x, \Phi(x)) + \delta(\Phi(x), \cS) \leq \frac{d_0 \eps_{t-1}} { \delta(C)} + \rho_t = \beta_2^{(t)}, \]
        so $T_2(x) = 1$. 
        \item \textbf{Soundness:} If $T_2(x) = 1$, then by definition $\delta(x,\cS) \leq \beta_2^{(t)} = \rho_t + \frac {d_0\eps_{t-1}} {\delta(C)}.$
    \end{enumerate}
\end{proof}

\begin{lemma}[Properties of $T_3$]\label{lem:T3}
Fix parameters as in \Cref{def:params}. and suppose that
$ \eps_{t} < \frac{(\delta(C))^t}{4}.$ Let $x,x' \in \Sigma^{[n]^t}$. 
Suppose that $T_1(x) = T_1(x') = T_2(x) = T_2(x') = 1$, and let $\Phi(x) \in \Cot$ and $\Phi(x') \in \Cot$ be the closest codewords to $x$ and $x'$, respectively.  Then the following hold.
\begin{enumerate}
    \item \textbf{Completeness:} If $\Phi(x) = \Phi(x')$, then $T_3(x,x') = 0$.
    \item 
    \textbf{Soundness:} If $\Phi(x) \neq \Phi(x')$, then $T_3(x,x') = 1$.
\end{enumerate}
\end{lemma}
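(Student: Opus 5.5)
Both parts come from the triangle inequality, using the distance bound in the soundness part of \Cref{lem:T1}. Since $T_1(x) = T_1(x') = 1$, \Cref{lem:T1} gives $\delta(x, \Phi(x)) \leq \eps_t$ and $\delta(x', \Phi(x')) \leq \eps_t$. The hypothesis $\eps_t < \frac{(\delta(C))^t}{4}$ implies the weaker hypothesis $\eps_t < \frac{(\delta(C))^t}{2}$ of \Cref{lem:T1}, so that lemma applies. The facts $T_2(x) = T_2(x') = 1$ are not needed beyond guaranteeing the setting; only the $T_1$ conclusions matter.

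\textbf{Completeness.} Suppose $\Phi(x) = \Phi(x') =: c$. By the triangle inequality,
\[ \delta(x,x') \leq \delta(x,c) + \delta(c,x') \leq 2\eps_t < \frac{(\delta(C))^t}{2} = \beta_3^{(t)}. \]
Hence $T_3(x,x') = 0$.

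\textbf{Soundness.} Suppose $\Phi(x) \neq \Phi(x')$. Both are distinct codewords of $\Cot$, which has distance $(\delta(C))^t$ since $C$ is linear (as noted in \Cref{subsec:tensor}). Therefore $\delta(\Phi(x), \Phi(x')) \geq (\delta(C))^t$. By the triangle inequality,
\[ \delta(x,x') \geq \delta(\Phi(x),\Phi(x')) - \delta(x,\Phi(x)) - \delta(x',\Phi(x')) \geq (\delta(C))^t - 2\eps_t > (\delta(C))^t - \frac{(\delta(C))^t}{2} = \beta_3^{(t)}. \]
So $T_3(x,x') = 1$.

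There is no real obstacle here. The one point to check is that the slack $4$ in the assumption on $\eps_t$ is exactly what makes both strict inequalities land on the correct side of the threshold $\beta_3^{(t)} = (\delta(C))^t/2$.
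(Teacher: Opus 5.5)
Your proof is correct and matches the paper's argument essentially verbatim. You use the soundness part of \Cref{lem:T1} to get $\delta(x,\Phi(x)), \delta(x',\Phi(x')) \le \eps_t$, then apply the triangle inequality in each direction with $2\eps_t < (\delta(C))^t/2$; your observation that the $T_2$ hypotheses play no role holds for the paper's proof as well.
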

\begin{proof}
First, since we are assuming that $T_1(x) = 1$, note that \Cref{lem:T1} implies that \begin{equation}\label{eq:close}
\delta(x, \Phi(x)) = \delta(x, \Cot) \leq \eps_{t},
\end{equation}  and the same is true for $x'$.  Now, we prove each item.
\begin{enumerate}
    \item \textbf{Completeness:}   Suppose that $\Phi(x) = \Phi(x') = c$.  From \eqref{eq:close}, the triangle inequality implies that
    \[ \delta(x,x') \leq \delta(x,c) + \delta(x',c) \leq 2\eps_{t} < \frac{(\delta(C))^t} 2 = \beta_3^{(t)},\]
    using the assumption that $\eps_{t} < \frac{(\delta(C))^t} 4$.  But then by definition $T_3(x,x') = 0$.
    
    \item \textbf{Soundness:}
    If $\Phi(x) \neq \Phi(x')$, then from the triangle inequality we have
    \[ \delta(x,x') \geq \delta(\Phi(x), \Phi(x')) - \delta(x,\Phi(x)) - \delta(x', \Phi(x')),\]
    and from \eqref{eq:close}, the fact that $\delta(\Phi(x), \Phi(x')) \geq \delta(\Cot) = (\delta(C))^t$, and our assumption that $ \eps_{t} < \frac{(\delta(C))^t}{4}$,
    we conclude that
    \[ \delta(x,x') \geq \delta(\Phi(x), \Phi(x')) - 2 \epsilon_t > \delta(\Phi(x), \Phi(x')) - \frac{(\delta(C))^t} 2 \geq \frac{(\delta(C))^t } 2 = \beta_3^{(t)},\]
\end{enumerate}
and so $T_3(x,x')=1$.   
\end{proof}

Now that we have defined and analyzed the testers $T_1, T_2, T_3$, which act on strings, we explain how to implement them on representations of local algorithms $A^{(t)}$ as in \Cref{alg:local-alg}.  

\begin{proposition}\label{prop:tests_of_algs}
    Let $T_1, T_2, T_3$ be as in \Cref{def:testers}.  Then there are algorithms $\Tone, \Ttwo, \Tthree$ that take as an input representations of local algorithms $\rep(A^{(t)})$ (as in \Cref{alg:local-alg}, formed by \Cref{alg:main}), so that for $i \in \{1,2\}$, 
    \[ \textsc{Test}_i(\rep(A^{(t)})) = T_i(x(A^{(t)})),\]
    and so that
    \[ \Tthree(\rep(A^{(t)}), \rep(\tilde{A}^{(t)})) = T_3(x(A^{(t)}), x(\tilde{A}^{(t)})).\]
    Above, for a local algorithm $A^{(t)}$, the string $x(A^{(t)}) \in \Sigma^{[n]^t}$ is as in \Cref{def:alg_to_string}. 
    
 Moreover, these algorithms satisfy the following time and space requirements, where $\Tm_C, \Sp_C$ are the time and space required to detect errors in $C$; $\Tm_{C \otimes C}$ and $\Sp_{C \otimes C}$ are the time and space required to uniquely decode $C \otimes C$ up to radius $\rho_2^{(!)}$.

    \begin{itemize}
        \item \Tone:
        \begin{align*} \text{Time} &= O\left( \Tmeval_{\DALLR{t}} \cdot t^2( n^{t-1} \Tm_C + n^{t-2} \Tm_{C \otimes C})\right)\\
        \text{Space} &=  O\left(\Speval_{\DALLR{t}} + \Sp_C + \Sp_{C \otimes C} + t\log n + n^2\right) \end{align*}
        \item \Ttwo\ and \Tthree:
         \begin{align*} 
         \text{Time} &=  O( n^t \cdot \Tmeval_{\DALLR{t}})\\
        \text{Space} &= O(\Speval_{\DALLR{t}} + t\log n)
        \end{align*}
    \end{itemize}
    Above, $\Tmeval_{\DALLR{t}}$ and $\Speval_{\DALLR{t}}$ are the evaluation time and space of $\DALLR{t}$, respectively.
\end{proposition}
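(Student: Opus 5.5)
The plan is to implement each tester by simulating oracle access to the string $x(A^{(t)})$ using the local algorithm $A^{(t)}$ itself. To query $x(A^{(t)})_i$, we run $A^{(t)}$ (\Cref{alg:local-alg}) on input $i$ with oracle access to $\cS$; this costs time $\Tmeval_{\DALLR{t}}$ and space $\Speval_{\DALLR{t}}$. After each query we discard the workspace and keep only the returned symbol. Correctness of all three testers is then immediate, since they compute exactly the quantities in \Cref{def:testers}, just with each symbol of $x$ recomputed on demand. The content of the proof is the resource accounting.

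For $\Ttwo$, we iterate over $i \in [n]^t$ in lexicographic order. For each $i$ we compute $A^{(t)}(i)$, query $S_i$, and increment a counter if $A^{(t)}(i) \notin S_i$. At the end we compare the counter, divided by $n^t$, against $\beta_2^{(t)}$. The index and the counter take $O(t\log n)$ bits, so the space is $O(\Speval_{\DALLR{t}} + t\log n)$ and the time is $O(n^t\cdot\Tmeval_{\DALLR{t}})$.

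$\Tthree$ is the same, except that at each $i$ we run both $A^{(t)}(i)$ and $\tilde A^{(t)}(i)$ sequentially, reusing space, and count disagreements. The space and time are within a factor of $2$ of those for $\Ttwo$.

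For $\Tone$, we run the algorithm of \Cref{lem:goodappx}, computing $\tilde{\delta}_{\Cot}$ and comparing it with $\beta_1^{(t)}$. That algorithm iterates over planes $P$ and runs \Cref{alg:appx2} on $x|_P$. The key step is to obtain $x|_P$ by filling an $n\times n$ buffer with $n^2$ evaluations of $A^{(t)}$. The buffer costs $O(n^2)$ space, which is already charged in \Cref{lem:goodappx}. We then run $\mathrm{Dec}_{C\otimes C}$ and $\textsc{Detect}_C$ on the buffer exactly as in the string version.

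The step needing care is the time bound for $\Tone$, which claims a multiplicative overhead of $\Tmeval_{\DALLR{t}}$ over the bound of \Cref{lem:goodappx}. Per plane, filling the buffer costs $n^2\,\Tmeval_{\DALLR{t}}$, and over ${t\choose 2}n^{t-2}$ planes this totals $O(t^2 n^t\,\Tmeval_{\DALLR{t}})$. This is dominated by $\Tmeval_{\DALLR{t}}\cdot t^2 n^{t-1}\Tm_C$, since $\Tm_C \ge n$ (detecting membership must read the whole word). So every symbol access in the computation of \Cref{lem:goodappx}, whether by $\textsc{Detect}_C$ or by the decoder, can be charged a cost of $\Tmeval_{\DALLR{t}}$, and the stated bound follows. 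Even if one instead answers each access by a fresh evaluation without buffering, the number of accesses is at most the running time of \Cref{lem:goodappx}, so the product bound still holds.

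The space for $\Tone$ is the space of \Cref{lem:goodappx}, plus $\Speval_{\DALLR{t}}$ for one evaluation at a time, plus $O(t\log n)$ for plane bookkeeping. This matches the claimed bound.
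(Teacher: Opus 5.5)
Your proposal is correct and follows essentially the paper's proof: simulate each query to $x(A^{(t)})$ by a fresh run of $A^{(t)}$, charge $\Tmeval_{\DALLR{t}}$ per query against the running time of \Cref{lem:goodappx} for \Tone, and use an $O(t\log n)$-bit index and counter for \Ttwo\ and \Tthree. Your fallback argument for \Tone\ is exactly the paper's, and your per-plane buffering is a harmless variant that even yields the additive bound $O(t^2 n^t\Tmeval_{\DALLR{t}} + t^2(n^{t-1}\Tm_C + n^{t-2}\Tm_{C\otimes C}))$; one small correction is that the buffer is an extra $O(n^2)$ term, not one already charged in \Cref{lem:goodappx}, whose $n^2$ pays for storing the decoded $y$.
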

\begin{proof}
    We implement the testers $T_1, T_2, T_3$ by simply evaluating $A^{(t)}(i)$ whenever we would need to query the $i$'th symbol of the input vector.  It is clear that this outputs the correct value for each test, so it remains to consider the running time and space usage.

    First we compute the time and space resources for $\Tone$.  Given access to $x \in \Sigma^{[n]^t}$, computing $T_1(x)$ takes the same time as computing $\tilde{\delta}_{C^{\otimes t}}(x)$, which by \Cref{lem:goodappx} takes time $O( t^2( n^{t-1} \Tm_C + n^{t-2} \Tm_{C \otimes C}))$.  Now to compute $\Tone(\rep(A^{(t)}))$, we compute $T_1(x(A^{(t)}))$ simulating query access to $x(A^{(t)})$ using $A^{(t)}$.  The number of queries to $x(A^{(t)})$ is bounded by the running time, and each one of these queries takes time $\Tmeval_{\DALLR{t}}$, so the total is
    \[ O\left( \Tmeval_{\DALLR{t}} \cdot t^2( n^{t-1} \Tm_C + n^{t-2} \Tm_{C \otimes C})\right). \]
    Similarly, by \Cref{lem:goodappx} the space required to compute $T_1(x)$ is $O(\Sp_C + \Sp_{C \otimes C} + t\log n + n^2)$.  Additionally we need space to compute one query to $x(A^{(t)})$ at a time (if $\Tone$ needs to store the response to multiple queries, that is accounted for in the space to compute $T_1(x)$), for a total of
    \[ O(\Speval_{\DALLR{t}} + \Sp_C + \Sp_{C \otimes C} + t\log n + n^2).\]
    For $\Ttwo$, we need to compute $\delta(x(A^{(t)}),\cS)$, and for $\Tthree$, we need to compute $\delta(x(A^{(t)}), x(\tilde{A}^{(t)}))$ for two local algorithms $A^{(t)}$ and $\tilde{A}^{(t)}$.  Each of these require $O(n^t)$ queries to $x(A^{(t)})$ and $x(\tilde{A}^{(t)})$, and each query takes time $\Tmeval_{\DALLR{t}}$, so for both the time is $O( n^t \Tmeval_{\DALLR{t}})$.  Similarly, for both the space is $O(\Speval_{\DALLR{t}} + t\log n)$, as we need space to compute a query to $x(A^{(t)})$ and to store a counter.
\end{proof}

Now, we will show how to put the three tests together to form an algorithm \textsc{Test}.  This is presented in \Cref{alg:test} below.  As mentioned above, the point of the algorithm \textsc{Test} will be to filter out spurious local algorithms, so that the final output list does not grow too large.
\begin{algorithm}[H]
\caption{\textsc{Test}: Decide whether to add a candidate to  output list.}
\label{alg:test}
\begin{algorithmic}[1]
\Require{Representation of a local algorithm $\rep(A^{(t)})$; list $\tilde{L}^{(t)}$ of representations of other local algorithms; access to algorithms $\textsc{Test}_i$ as in \Cref{prop:tests_of_algs}.}
\Ensure{Returns True if $\rep(A^{(t)})$ should be added to the list $\ctL^{(t)}$ and False otherwise.}

  \If{$\Tone(\rep(A^{(t)})) = 0$} \Return False \Comment{$x(A^{(t)})$ is not close enough to $C^{\otimes t}$}
  \EndIf
  \If{$\Ttwo(\rep(A^{(t)})) = 0$} \Return False \Comment{$x(A^{(t)})$ is not close enough to $\cS$}
  \EndIf
  \For{$\rep(\tilde{A}^{(t)}) \in \ctL^{(t)}$}
    \If{$\Tthree(\rep(A^{(t)}), \rep(\tilde{A}^{(t)})) = 0$} \Return False \Comment{too close to an existing list element}
    \EndIf
  \EndFor
  \State \Return True
\end{algorithmic}
\end{algorithm}

\begin{corollary}\label{cor:test_resources}
    Using the same notation as in \Cref{prop:tests_of_algs}, the algorithm \textsc{Test} (\Cref{alg:test}) running on representations of local algorithms $A^{(t)}$ can be implemented to use time 
    \[ \Tm_{\Testt} = O\inparen{\Tmeval_{\DALLR{t}} \cdot \inparen{t^2( n^{t-1} \Tm_C + n^{t-2}\Tm_{C \otimes C}) + |\ctL^{(t)}| n^t}} \]
    and space 
    \[ \Sp_{\Testt} = O\inparen{\Speval_{\DALLR{t}} + \Sp_{C} + \Sp_{C \otimes C} + t\log n + n^2}.\]
\end{corollary}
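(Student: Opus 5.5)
The plan is to read the bounds off \Cref{prop:tests_of_algs} by following the control flow of \Cref{alg:test} step by step.

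\textbf{Time.} \textsc{Test} makes at most one call to $\Tone$ and at most one call to $\Ttwo$. It then makes one call to $\Tthree$ for each element of the current list $\ctL^{(t)}$. By \Cref{prop:tests_of_algs}:
\begin{itemize}
\item the call to $\Tone$ costs $O(\Tmeval_{\DALLR{t}} \cdot t^2( n^{t-1} \Tm_C + n^{t-2} \Tm_{C \otimes C}))$;
\item the call to $\Ttwo$ costs $O(n^t \Tmeval_{\DALLR{t}})$;
\item each call to $\Tthree$ costs $O(n^t \Tmeval_{\DALLR{t}})$, so the loop costs $O(|\ctL^{(t)}|\, n^t \Tmeval_{\DALLR{t}})$ in total.
\end{itemize}
The $\Ttwo$ term is absorbed into the $\Tthree$ loop term whenever $|\ctL^{(t)}|\ge 1$. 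If the list is empty, I would note that it is absorbed into the $\Tone$ term instead, since $t^2 n^{t-1}\Tm_C \ge n^t$ once $\Tm_C \ge n$ (reading a length-$n$ word takes $n$ steps). Alternatively, one can simply read $|\ctL^{(t)}|$ as $|\ctL^{(t)}|+1$ inside the $O(\cdot)$. Summing the three contributions and factoring out $\Tmeval_{\DALLR{t}}$ gives the stated time bound.

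\textbf{Space.} The subroutine calls run sequentially, so the working tape can be reused between them. The space is therefore the maximum of the individual space bounds from \Cref{prop:tests_of_algs}, plus a small amount of bookkeeping. The maximum is at most $O(\Speval_{\DALLR{t}} + \Sp_C + \Sp_{C\otimes C} + t\log n + n^2)$, since the bounds for $\Ttwo$ and $\Tthree$ are dominated by the bound for $\Tone$.

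The bookkeeping covers two things: the loop index over $\ctL^{(t)}$, which takes $O(\log|\ctL^{(t)}|)$ bits, and the currently examined representation $\rep(\tilde{A}^{(t)})$. The list $\ctL^{(t)}$ and the candidate $\rep(A^{(t)})$ are supplied as inputs to \textsc{Test}. In the calling context of \Cref{alg:main}, they live on storage that is accounted for separately in the output length and the preprocessing space of $\DALLR{t}$. Each $\Tthree$ call can therefore access $\rep(\tilde{A}^{(t)})$ in place rather than copying it.

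\textbf{Main obstacle.} The one subtle point is this accounting convention: we must not charge the storage of $\ctL^{(t)}$ to \textsc{Test}. I would state this explicitly, treating the list as read-only input and consistent with the model of \Cref{def:model}. I would also remark that the $O(\log|\ctL^{(t)}|)$ loop counter is absorbed, since the list size is later bounded by a constant (or by $t\log n$).
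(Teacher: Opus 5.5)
Your proposal is correct and follows the paper's own argument: the paper proves this by simply adding the time bounds of the testers from \Cref{prop:tests_of_algs}, with $|\ctL^{(t)}|$ calls to $\Tthree$, and noting that the space is dominated by the $\Tone$ bound. Your extra care about the empty-list case, the loop counter, and not charging the storage of $\ctL^{(t)}$ to \textsc{Test} all fits the paper's accounting, since that storage is charged in \Cref{lem:resources}.
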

\begin{proof}
    The statement follows from adding together the bounds in \Cref{prop:tests_of_algs}.
\end{proof}

\section{Analysis of Deterministic Approximate Local List-Recovery Algorithm}\label{sec:main}

In this section we analyze the correctness and time and space requirements of \Cref{alg:main}. To this end, we first analyze the time and space requirements of the local algorithms $A^{(t)}$ in Section \ref{subsec:time_space_local} below. Then in Sections \ref{sec:inductive_step} and \ref{sec:unwind} we analyze the correctness and time and space requirements of the pre-processing algorithm $\DALLR{t}$. 

\subsection{Time and space requirements of local algorithms}\label{subsec:time_space_local}

We begin with analyzing the time and space requirements of the local algorithms $A^{(t)}$ (Algorithm \ref{alg:local-alg}).

\begin{proposition}[Time and space requirements of local algorithms]\label{prop:local_timespace}
    Let $t \geq 3$, and let $A^{(t)}$ be a local algorithm, as in \Cref{alg:local-alg}.  
    Suppose that the running time of the sampler $\Gamma_i$ is $\Tm_{\Gamma_i}$, and let $\Tm_{\Gamma} = \max_{2 \leq i \leq t} \Tm_{\Gamma_i}$.
    Let $M_t = \prod_{j=2}^t m_j$, let $\widetilde{M}_t = \sum_{j=2}^t m_j$, and let $\tilde{r}_t = \sum_{j=2}^t r_j$.
    
    If $\rep(A^{(t)})$ is the representation of a local algorithm output by a DALLR algorithm $\DALLR{t}$, recall that the time and space used by $A^{(t)}$ is denoted $\Tmeval_{\DALLR{t}}$ and $\Speval_{\DALLR{t}}$ respectively.  Then we have:
    \begin{itemize}
        \item $\Tmeval_{\DALLR{t}} = O\inparen{M_t\left( t(\Tm_{\Gamma} + n\Tm_{\cA_0} + L_0) + \Tm_{\mathrm{Dec}_{C \otimes C}}\right)}$.
        \item $\Speval_{\DALLR{t}} = O\inparen{\widetilde{M}_t\log (n|\Sigma|) + t\inparen{\Sp_{\cA_0} + \Tm_{\Gamma} + L_0 n^2 \log|\Sigma|} + \Sp_{\mathrm{Dec}_{C \otimes C}}}$.
    \end{itemize}
    Further, the space required to store $\rep(A^{(t)})$ is
    \[ \mathrm{Len}_{A^{(t)}} = O\inparen{ M_t (\tilde{r}_t + \log L_0 )}.  \]

    Above, for an algorithm $\mathcal{D}$, $\Tm_{\mathcal{D}}$ denotes its running time and $\Sp_{\mathcal{D}}$ denotes its space.    
\end{proposition}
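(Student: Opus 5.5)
The proof will be by induction on $t$, with base case $t=2$ (\Cref{alg:local-alg-two}), and we will track three recurrences: time $T_t:=\Tmeval_{\DALLR{t}}$, space $S_t:=\Speval_{\DALLR{t}}$, and representation length $\mathrm{Len}_t$. For the base case, $A^{(2)}$ computes $\Gamma_2(\sigma^{(2)})$ in time $\Tm_\Gamma$. It then runs $\cA_0$ on $m_2$ columns and on $n$ rows, taking time $O((m_2+n)\Tm_{\cA_0})$, and spends $O(m_2 L_0)$ per row on the argmin comparisons. Finally it invokes $\mathrm{Dec}_{C\otimes C}$ once. This fits inside $O(M_2(2(\Tm_\Gamma+n\Tm_{\cA_0}+L_0))+\Tm_{\mathrm{Dec}_{C\otimes C}})$; the factor $n$ in front of $\Tm_{\cA_0}$ accounts for the $n$ row decodings. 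For space, we store the $m_2$ selected codewords $c^{(a)}$ (at most $L_0 n^2\log|\Sigma|$ in total, since $m_2\le n$ can be assumed or the lists can be stored), the matrix $w$ ($n^2\log|\Sigma|$), $H_2$ ($m_2\log n$), and the workspaces of $\cA_0$, $\Gamma_2$ and $\mathrm{Dec}_{C\otimes C}$, reusing the same workspace across calls. The representation $(\sigma^{(2)},\tau)$ has length $r_2+m_2\log L_0\le M_2(\tilde r_2+\log L_0)$.

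For the inductive step $t\ge3$, we read off \Cref{alg:local-alg}. The algorithm computes $H_t$ in time $\Tm_\Gamma$. It then makes $m_t$ sequential calls to level-$(t-1)$ local algorithms, each costing $T_{t-1}$. Oracle access to $\cS_a$ is just index translation on the input tape, so it adds only $O(t\log n)$ overhead per query; this is absorbed since every query already costs $\Omega(t)$ time to write the index. Last, it makes one call to $\cA_0$ on a row and performs an argmin over at most $L_0$ codewords with $m_t$ comparisons each. This gives
\[T_t\le m_t T_{t-1}+O(\Tm_\Gamma+\Tm_{\cA_0}+m_tL_0).\]
Unrolling, with $M_t=m_tM_{t-1}$, gives $T_t\le M_t T_2/M_2+\sum_{j\ge3} (M_t/M_{j-1})\,O(\Tm_\Gamma+\Tm_{\cA_0}+m_jL_0)$. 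Each of the $t$ summands is at most $M_t(\Tm_\Gamma+n\Tm_{\cA_0}+L_0)$, which yields the claimed bound.

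For space, the calls to $A_a^{(t-1)}$ are made one at a time and their workspace is reused. We only need to keep $H_t$ ($m_t\log n$ bits), the guesses $v_a$ ($m_t\log|\Sigma|$ bits), a pointer into the representation on the input tape, the list $\mathcal{K}_{i'}$ ($L_0 n\log|\Sigma|$ bits), and the workspace for $\Gamma_t$ (at most its time) and $\cA_0$. This gives
\[S_t\le S_{t-1}+O(m_t\log(n|\Sigma|)+\Sp_{\cA_0}+\Tm_\Gamma+L_0n^2\log|\Sigma|).\]
Summing over the levels yields $\widetilde M_t\log(n|\Sigma|)+t(\cdots)+\Sp_{\mathrm{Dec}_{C\otimes C}}$. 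For the length, $\mathrm{Len}_t=r_t+m_t\mathrm{Len}_{t-1}$, and induction gives $\mathrm{Len}_t\le r_t+m_tM_{t-1}(\tilde r_{t-1}+\log L_0)\le M_t(\tilde r_t+\log L_0)$.

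The main subtlety is the space accounting. The level-$(t-1)$ representations sit on the input tape and are not copied, and the recursion must not multiply the space by the $m_j$ factors. This works only because the calls are sequential: each level contributes additively its own $O(m_j)$ storage plus one frame of base-case workspace, and the $\mathrm{Dec}_{C\otimes C}$ space enters only once, at the bottom level $t=2$. A second point to watch is that every recursive call needs random access to $\rep(A_a^{(t-1)})$ within the input tape. Locating its offset requires knowing $\mathrm{Len}_{t-1}$, which is computable from the parameters in $O(t\log)$ space.
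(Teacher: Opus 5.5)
Your proposal is correct and is essentially the paper's proof: the same recurrences $T_t \le m_t T_{t-1} + O(\Tm_\Gamma + \Tm_{\cA_0} + m_t L_0)$, $S_t \le S_{t-1} + O(m_t\log(n|\Sigma|) + \Sp_{\cA_0} + \Tm_\Gamma + L_0 n\log|\Sigma|)$ and $\mathrm{Len}_t = r_t + m_t\mathrm{Len}_{t-1}$, unrolled down to the base case \Cref{alg:local-alg-two}, where the $\mathrm{Dec}_{C\otimes C}$ cost and the $n^2\log|\Sigma|$ storage for $w$ enter only once. One indexing slip needs fixing: the level-$j$ overhead is paid $\prod_{k=j+1}^t m_k = M_t/M_j$ times, not $M_t/M_{j-1}$ times, and only with the correct count does the $m_jL_0$ term give $M_tL_0/M_{j-1}\le M_tL_0$ --- your overcount would need $m_3\le m_2$, which fails here because $m_3$ scales like $\eps_2^{-3}$ while $m_2$ does not depend on $\eps$.
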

\begin{proof}
    We begin with the running time.  From \Cref{alg:local-alg}, we see that
    \[ \Tmeval_{\DALLR{t}} = \Tm_{\Gamma_t} + \Tm_{\cA_0} + m_t \Tmeval_{\DALLR{t-1}} + O(m_t L_0).\]
    Indeed, \Cref{alg:local-alg} calls each of $\Gamma_t$ and $\cA_0$  once.  It calls the level-$(t-1)$ local algorithms $A^{(t-1)}$ $m_t$ times, and it also takes time $O(m_t L_0)$ to search the list $\mathcal{K}_{i'}$ for $c^*$.  
    
    Unrolling the recursion and using the fact that $\Tm_{\Gamma_i} \leq \Tm_{\Gamma}$ for all $i \leq t$, we have
    \begin{align*} \Tmeval_{\DALLR{t}} &= \left(\prod_{j=3}^t m_j \right) \Tmeval_{\DALLR{2}} + \sum_{j=3}^t \inparen{\prod_{k=j+1}^t m_k}( \Tm_{\cA_0} + \Tm_{\Gamma} + O(m_j L_0) ) \\
    &= O\left( \left( \prod_{j=3}^t m_j \right) ( \Tmeval_{\DALLR{2}} + t(\Tm_{\cA_0} + \Tm_{\Gamma} + L_0))\right).
    \end{align*}
    Now we see from \Cref{alg:local-alg-two} that
    \begin{equation}\label{eq:eval_time_base}
    \Tmeval_{\DALLR{2}} = O\inparen{\Tm_{\Gamma_2} + (m_2 + n) \Tm_{\cA_0} + nm_2 L_0 + \Tm_{\mathrm{Dec}_{C \otimes C}}} \leq O\left(m_2 \left( \Tm_{\Gamma} + n \Tm_{\cA_0} +  nL_0 + \Tm_{\mathrm{Dec}_{C \otimes C}} \right) \right),
    \end{equation}
    and plugging this into the above proves the statement.

    Next we consider the space.  We first consider the space necessary to just {represent} $A^{(t)}$.  Each $A^{(t)}$ has description length $\mathrm{Len}_{A^{(t)}}$ satisfying
    \[ \mathrm{Len}_{A^{(t)}} = r_t + m_t \mathrm{Len}_{A^{(t-1)}},\]
    with $r_t$ bits for the seed $\sigma^{(t)} \in \{0,1\}^{r_t}$, and $m_t$ descriptions of local algorithms for $C^{\otimes (t-1)}$.  Thus,
    \[ \mathrm{Len}_{A^{(t)}} = \sum_{j=3}^t \inparen{r_j \prod_{i=j+1}^t m_j } + \inparen{\prod_{j=3}^t m_j }\mathrm{Len}_{A^{(2)}}. \]
    For the base case, we have
    \[ \mathrm{Len}_{A^{(2)}} = r_2 + m_2 \log(L_0).\]
    Plugging this in, we see that
    \[ \mathrm{Len}_{A^{(t)}} \leq M_t \inparen{\sum_{j=2}^t r_j + \log L_0} = M_t(\tilde{r}_t + \log L_0).\]
    
    Now we consider the space $\Speval_{\DALLR{t}}$ required to actually {run} $A^{(t)}$, assuming oracle access to the description.
    Looking at \Cref{alg:local-alg}, we see that the space required to run $A^{(t)}$ includes the following components:
    \begin{itemize}
        \item $m_t \log n$ bits to store $H_t$
        \item $\Sp_{\Gamma_t} \leq \Tm_{\Gamma_t} \leq \Tm_{\Gamma}$ to run the sampler $\Gamma_t$
        \item $\Speval_{\DALLR{t-1}}$ to recursively run each level $t-1$ local algorithm
        \item $m_t \log |\Sigma|$ to hold $v_a$ for all $a \in [m_t]$
        \item $\mathrm{Sp}_{\cA_0} + L_0 n \log|\Sigma|$ to run $\cA_0$ and hold the resulting list of $L_0$ codewords in $\Sigma^n$. 
        \item $O(\log (L_0 m_t) )$ additional space to compute the minimum. ($O(\log L_0)$ space to store a pointer to the best candidate so far, plus $O( \log m_t )$ space to store the smallest value so far).
    \end{itemize}
    Putting this together, we have the relationship
    \[ \Speval_{\DALLR{t}} = O\inparen{ m_t \log(n|\Sigma|) + \Sp_{\cA_0} + \Tm_\Gamma + L_0 n \log|\Sigma|} + \Speval_{\DALLR{t-1}},\]
    and unrolling this we see that
    \[ \Speval_{\DALLR{t}} = O\inparen{\left(  \sum_{j=3}^t m_j\right) \log(n|\Sigma|) + t( \Sp_{\cA_0} + \Tm_{\Gamma} + L_0 n \log|\Sigma|) + \Speval_{\DALLR{2}}}.\]
      A similar calculation for \Cref{alg:local-alg-two} shows that
    \begin{equation}\label{eq:eval_space_base}
    \Speval_{\DALLR{2}} = O\inparen{m_2 \log (n|\Sigma|) + \log|\Sigma|( n(L_0 + n) ) + \Tm_{\Gamma_2} + \mathrm{Sp}_{\cA_0} + \Sp_{\mathrm{Dec}_{C \otimes C}}}.
    \end{equation}
    Plugging this in above, and noting that $T_{\Gamma_2} \leq T_\Gamma$, establishes  
    $$\Speval_{\DALLR{t}} = O\inparen{\widetilde{M}_t\log (n|\Sigma|) + t\inparen{\Sp_{\cA_0} + \Tm_{\Gamma} + L_0 n \log|\Sigma|} + \Sp_{\mathrm{Dec}_{C \otimes C}} + n^2 \log|\Sigma|},$$
    which proves the claim.
\end{proof}

\subsection{Analyzing \DALLR{t}, assuming \DALLR{t-1}}\label{sec:inductive_step}

Next we analyze the correctness and time and space requirement of the algorithm \DALLR{t} (\Cref{alg:main}). Before we state our main theorem about the performance of \DALLR{t}, we first set a few more parameters.   Recall that in Section \ref{sec:DALLRC_tensor} we have fixed a linear code $C \subseteq \Sigma^n$ with distance at least $\delta(C)$ that is  $(\rho_0,\ell,L_0)$-globally list-recoverable using an algorithm $\cA_0$, and we have further assumed that $C \otimes C$ has a unique decoder $\mathrm{Dec}_{C \otimes C}$ that can uniquely decode $C \otimes C$ up to radius $\rho^{(!)}_2 \in \left(0, \frac{(\delta(C))^2} 2 \right)$. 
Further recall that our objective is to show that $C^{\otimes t}$ is $(Q_t,\epsilon_t, \rho_t, \ell, L_t)$-$\ADLLR$, where 
  $\{\eps_t\}_t$ and $\{\rho_t\}_t$ were set in \Cref{def:params}.
  It remains to set parameters for the sampler used in \Cref{alg:main}, and also the list size $L_t$ that \DALLR{t} will output and its query complexity $Q_t$.  We do this below.

\begin{definition}[Setting parameters for $\Gamma_t$, as well as the parameters $Q_t$ and $L_t$]\label{def:params2}
    For $t \geq 3$, define
    \begin{equation}\label{eq:sampler_params} \eta_t = \frac{\eps_{t-1}}{10 L_0} \qquad \nu_t = \min\left(\eps_{t-1}, \frac{\delta(C)}{4}\right) \qquad m_t = O\left( \frac{L_0}{\eps_{t-1} \nu_t^2} \right) \qquad r_t = \log\left(\frac{n}{\nu_t}\right). \end{equation}
    For $t=2$, define
    \[ \eta_2 = \frac{\rho_2^{(!)}}{20 L_0} \qquad \nu_2 = \frac{\delta(C)}{8} \qquad m_2 = O\inparen{\frac{1}{\eta_2 \nu_2^2}} \qquad r_2 = \log\left( \frac{n}{\nu_2}\right).\]
Above, the expression in the big-Oh notation around $m_t$ for $t \geq 2$ is chosen so that \Cref{thm:sampler} applies.

Next define $Q_2 = n^2$, and for $t \geq 3$, define
    \begin{equation}\label{eq:Qt} Q_t = nM_t(n+t),\end{equation}
    where $M_t = \prod_{j=2}^t m_j.$ 

   Finally, for $t \geq 2$, define
    \begin{equation}\label{eq:Lt_def} L_t = L_t(L_0, \rho_0, \delta(C)) = L_0^{\kappa^{t^3} \log^t L_0},\end{equation}
    where $\kappa = \kappa(1/\delta(C), 1/\rho_0)$ is the parameter from \Cref{cor:HRW}.

\end{definition}

To analyze \DALLR{t} (\Cref{alg:main}), we first prove, for any $t \geq 3$, that if \DALLR{t-1} is correct, then \DALLR{t} is correct, and we also prove time and space bounds on \DALLR{t} assuming time and space bounds on \DALLR{t-1}.
Then in \Cref{sec:unwind}, we will establish the base case for $t=2$, and unwind the parameters.
Our main result in this section is the following theorem.

\begin{restatable}[Main Technical Theorem]{theorem}{maintech}\label{thm:mainTech}
Fix $t \geq 3$, and suppose that $C \subseteq \Sigma^n$ is a linear code of distance $\delta(C)$ that is $(\rho_0, \ell,L_0)$-globally list recoverable. Suppose also that $C \otimes C$ is uniquely decodable up to radius $\rho_2^{(!)}$. 
Adopt the parameters from \Cref{def:params} and \Cref{def:params2}, 
and suppose that

\begin{equation}\label{eq:eps_t_requirement}
         0 < \eps_{t-1} \leq \min \left\{\frac{ 2 \rho_{0}}{\kappa^{t^3}} \cdot \frac{ \rho_2^{(!)} \gamma_t \delta(C)}{16 d_0}, \frac{(\delta(C))^t} 4 \right\},
    \end{equation}
    and that 
    $\rho_{t-1}\leq \rho_0$.

Assume that $C^{\otimes (t-1)}$ is a $(Q_{t-1}, \eps_{t-1}, \rho_{t-1}, \ell, L_{t-1})$-\ADLLR\ with algorithm $\DALLR{t-1}$.   
Then the following hold:
\begin{itemize}
    \item \textbf{Correctness:} $\Cot$ is a $(Q_t, \eps_t, \rho_t, \ell, L_t)$-\ADLLR\  with algorithm \DALLR{t} (\Cref{alg:main}).
    \item \textbf{Resources:} \DALLR{t} (\Cref{alg:main}) has preprocessing time and space
      \begin{align*} \Tmpre_{\DALLR{t}} &=  2^{r_t}\inparen{\Tm_{\Gamma} + m_t \Tmpre_{\DALLR{t-1}} + (L_{t-1})^{m_t} \TTestt + O(\mathrm{Len}_{A^{(t)}})}\\
     \Sppre_{\DALLR{t}} &= \Sppre_{\DALLR{t-1}} + O\inparen{m_t \log n + \Tm_{\Gamma} + m_t \cdot L_{t-1} \cdot \mathrm{Len}_{A^{(t-1)}} + L_t \cdot \mathrm{Len}_{A^{(t)}} + \Sp_{\Testt}}.
     \end{align*}

        Above, $\TTestt$ and $\Sp_{\Testt}$ are the time and space requirements for \textsc{Test} running on representations of level-$t$ local algorithms, as in \Cref{cor:test_resources}; $\Tm_{\Gamma_i}$ is the time to run the sampler $\Gamma_i$, and let $\Tm_\Gamma = \max_{2 \leq i \leq t} \Gamma_i$.  
 as in \Cref{prop:local_timespace}; And $\mathrm{Len}_{A^{(t)}}$ is the length of $\rep(A^{(t)})$.
\end{itemize}
\end{restatable}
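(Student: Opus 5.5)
We split the argument into correctness (coverage of every close codeword, plus the list-size bound) and resources; the resource bounds are read off \Cref{alg:main} line by line. For coverage, fix $c \in \Cot$ with $\delta(c,\cS) \le \rho_t = \eps_{t-1}\rho_{t-1}$. We view $c$ as an $n^{t-1}\times n$ matrix whose columns $c|_{[n]^{t-1}\times\{h\}}$ lie in $C^{\otimes(t-1)}$ and whose rows lie in $C$.

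By Markov, at most an $\eps_{t-1}$ fraction of columns $h$ have $\delta(c|_{\mathrm{col}\ h},\cS|_{\mathrm{col}\ h}) > \rho_{t-1}$. Call the remaining columns good. Similarly, since $\rho_t \le \rho_0\eps_{t-1}$ (using $\rho_{t-1}\le\rho_0$), at most an $\eps_{t-1}$ fraction of rows $i'$ have $\delta(c|_{\{i'\}\times[n]},\cS) > \rho_0$. Call the remaining rows good; for these, $c|_{i'}\in\mathcal{K}_{i'}$.

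We apply the sampler guarantee (\Cref{def:samplerGamma}) to two families of test functions.
\begin{itemize}
\item The indicator of bad columns.
\item For each row $i'$ and each of the at most $L_0$ codewords $c'\in\mathcal{K}_{i'}\setminus\{c|_{i'}\}$, the indicator of disagreement between $c'$ and $c|_{i'}$. This has mean at least $\delta(C)$.
\end{itemize}
With $\eta_t = \eps_{t-1}/(10L_0)$ and $\nu_t\le \delta(C)/4$, averaging over rows and seeds shows there is a seed $\sigma$ with two properties. First, $H=\Gamma_t(\sigma)$ has at most a $2\eps_{t-1}$ fraction of bad columns. Second, for all but an $O(\eps_{t-1})$ fraction of rows $i'$, every competitor $c'\in\mathcal{K}_{i'}$ disagrees with $c|_{i'}$ on at least a $3\delta(C)/4$ fraction of $H$.

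For each good column $h_a\in H$, the inductive hypothesis supplies $A_a^{(t-1)}\in\cL_a^{(t-1)}$ with $\delta(x(A_a^{(t-1)}),c|_{\mathrm{col}\ h_a})\le\eps_{t-1}$. For bad columns we pick arbitrary entries, and let $\vec A$ be the resulting tuple. Averaging again, for all but an $O(\eps_{t-1}/\delta(C))$ fraction of rows $i'$, the guesses $v_a$ agree with $c|_{i'}$ on more than a $1-\delta(C)/4$ fraction of $a$. On rows that are good, sampler-good and guess-good, the plurality rule selects $c^*=c|_{i'}$. Hence $\delta(x(A^{(t)}),c)\le d_0\eps_{t-1}/\delta(C)$ for a suitable absolute constant $d_0$; this is the bookkeeping that fixes $d_0\ge14$.

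By \Cref{lem:T1} and \Cref{lem:T2} completeness, the candidate $(\sigma,\vec A)$ passes $\Tone$ and $\Ttwo$. It may fail $\Tthree$ only if some earlier list element $\tilde A$ has $\Phi(x(\tilde A))=c$, using \Cref{lem:T3} soundness. In either case the output list contains an algorithm within $\eps_t$ of $c$, by \Cref{lem:T1} soundness. The hypothesis $\eps_{t-1}\le(\delta(C))^t/4$ together with \eqref{eq:eps_t_requirement} should give $\eps_t<(\delta(C))^t/4$, as those lemmas require.

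For the list size, every element of $\ctL^{(t)}$ passed $\Tone$ and $\Ttwo$, so each $x$ in it has a unique $\Phi(x)\in\Cot$. By the triangle inequality,
\[
\delta(\Phi(x),\cS)\le \eps_t+\rho_t+d_0\eps_{t-1}/\delta(C).
\]
By \Cref{lem:T3}, the map $x\mapsto\Phi(x)$ is injective on the list. So $|\ctL^{(t)}|$ is at most the number of codewords of $\Cot$ within the radius above of $\cS$. The first term of \eqref{eq:eps_t_requirement} is tailored so that this radius is at most $\rho_0\kappa^{-t^3}$, and then \Cref{cor:HRW} gives $|\ctL^{(t)}|\le L_t$. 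The query bound $Q_t$ follows from \Cref{alg:local-alg}: $m_t$ recursive calls plus $n$ queries for $\cA_0$, unrolled with $Q_2=n^2$.

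For the resources, each of the $2^{r_t}$ seeds costs one sampler call, $m_t$ runs of $\DALLR{t-1}$, and at most $L_{t-1}^{m_t}$ calls to \textsc{Test} plus writing representations. For space, we reuse the space of $\DALLR{t-1}$ across calls, store the $m_t$ lists and the growing $\ctL^{(t)}$, and add $\Sp_{\Testt}$.

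The main obstacle is the sampler step: choosing the test functions and the union/averaging argument so that one seed simultaneously handles bad columns and separates the competing list codewords on most rows, with only $O(\eps_{t-1})$ loss.
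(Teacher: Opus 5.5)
Your proposal is correct and follows the paper's proof essentially step for step. The shared steps are the sampler-based good-row/good-column argument of \Cref{cl:goodAt} (including the $d_0\ge 14$ bookkeeping), survival of an $\eps_t$-consistent candidate through $\Tone,\Ttwo,\Tthree$ as in \Cref{cl:Atsurvives}, injectivity of $x\mapsto\Phi(x)$ on $\ctL^{(t)}$ combined with \Cref{cor:HRW} for the list size, and a direct count for $Q_t$ and the resources. The differences are cosmetic: you take an $\eps_{t-1}$-close list element on good columns instead of the argmin, and you argue the $\Tthree$ step contrapositively. One note on the bound $\eps_t<(\delta(C))^t/4$ that you hedge on: it comes from the first term of \eqref{eq:eps_t_requirement}, which gives $\eps_t\le\rho_0/(4\kappa^{t^3})$ (below $(\delta(C))^t/4$ once $\kappa\ge 1/\delta(C)$), not from $\eps_{t-1}\le(\delta(C))^t/4$.
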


We break the proof of \Cref{thm:mainTech} up into two parts: 
First, in \Cref{sec:correctness}, we prove that \DALLR{t} is correct.  
Second, in \Cref{sec:resourceProof}, we establish the running time and space requirements of \DALLR{t}.  Then we put them together in \Cref{sec:putItTogether} to prove the theorem.

\subsubsection{Correctness of \DALLR{t}, given \DALLR{t-1}}\label{sec:correctness}
To prove that \DALLR{t} is correct,  we first show in \Cref{lem:correct} that for any $\tilde{c}$ with $\delta(\tilde{c}, \cS) \leq \rho_t$, the algorithm \DALLR{t} includes some local algorithm that is $\eps_t$-consistent with $\tilde{c}$ in the output list.
Then in \Cref{lem:list_size}, we bound the list size $|\ctL^{(t)}|$ that it outputs.  Finally, in \Cref{lem:query_complexity}, we show that the query complexity is indeed $Q_t$.

We begin by showing that all close-by codewords are captured by a local algorithm in $\ctL^{(t)}$.
\begin{lemma}[All close-by codewords are represented in $\ctL^{(t)}$]\label{lem:correct}
Assume the hypotheses of \Cref{thm:mainTech}.
    Suppose that $\tilde{c} \in \Cot$ has $\delta(\tilde{c}, \cS) \leq \rho_t$.  Then there exists some $\rep(A^{(t)}) \in \ctL^{(t)}$ so that $\delta(\tilde{c}, x(A^{(t)})) \leq \eps_{t}.$  Here, $\ctL^{(t)}$ is the final list returned by \Cref{alg:main}.
\end{lemma}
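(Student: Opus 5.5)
Fix $\tilde{c} \in \Cot$ with $\delta(\tilde{c},\cS) \leq \rho_t = \eps_{t-1}\rho_{t-1}$. The argument has three steps: find a good seed, show the local algorithm built from it is $O(\eps_{t-1}/\delta(C))$-close to $\tilde{c}$, and show \textsc{Test} keeps it or an equivalent candidate.

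\textbf{Good columns and a good seed.} For $h \in [n]$, let $\tilde{c}_h \in C^{\otimes(t-1)}$ be the $h$-th column of $\tilde{c}$ and $\cS_h$ the corresponding column of $\cS$. Call $h$ \emph{good} if $\delta(\tilde{c}_h,\cS_h) \leq \rho_{t-1}$. By Markov, at most an $\eps_{t-1}$ fraction of columns are bad. Call a row index $i' \in [n]^{t-1}$ \emph{good} if $\delta(\tilde{c}|_{\{i'\}\times[n]}, \cS|_{\{i'\}\times[n]}) \leq \rho_0$. Since $\rho_t \leq \eps_{t-1}\rho_0$ (using $\rho_{t-1} \leq \rho_0$), Markov again shows all but an $\eps_{t-1}$ fraction of rows are good. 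For a good row, $\tilde{c}|_{\{i'\}\times[n]} \in \mathcal{K}_{i'}$ and $|\mathcal{K}_{i'}| \leq L_0$.

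By the inductive hypothesis, for each good $h$ the list $\DALLR{t-1}(\cS_h)$ contains some $A_h$ with $\delta(x(A_h),\tilde{c}_h) \leq \eps_{t-1}$. Define $f(h) = \mathbf{1}[h \text{ bad}]$ and $g(h) = \Pr_{i'}[\,h \text{ bad or } x(A_h)_{i'} \neq (\tilde{c}_h)_{i'}]$, so $\EE f, \EE g \leq 2\eps_{t-1}$. We also need that, for most good rows $i'$, every competitor $c \neq \tilde{c}|_{i'}$ in $\mathcal{K}_{i'}$ disagrees with $\tilde{c}|_{i'}$ on roughly a $\delta(C)$ fraction of the sampled columns $H$. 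For each fixed $i'$ and each of the at most $L_0$ competitors, the sampler fails with probability at most $\eta_t$, with deviation $\nu_t \leq \delta(C)/4$. Hence the expected fraction of pairs (seed, good row) for which some competitor fails is at most $L_0\eta_t = \eps_{t-1}/10$. Averaging over seeds, there is a seed $\sigma$ such that all of the following hold simultaneously: $\EE_{H} f \leq O(\eps_{t-1})$, $\EE_H g \leq O(\eps_{t-1})$, and all but an $O(\eps_{t-1})$ fraction of good rows have every competitor disagreeing with $\tilde{c}|_{i'}$ on at least a $3\delta(C)/4$ fraction of $H$. This step is a union bound over the sampler's failure events, paying constant factors.

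\textbf{Closeness of the built algorithm.} Let $\vec{A}$ consist of the $A_h$ for $h \in H$ (all $h \in H$ must be good, or we need a placeholder; so I would require $f = 0$ on $H$, or simply let bad $h$ contribute to the error count). Then for a row $i'$ that is good, has sampler-good competitors, and has $v_a = (\tilde{c}_{h_a})_{i'}$ for all but fewer than $3\delta(C)/8$ of the $a$, the argmin returns $\tilde{c}|_{i'}$. By Markov on $g$ over $H$, the fraction of rows with more than $3\delta(C)/8 \cdot m_t$ mismatches is at most $O(\eps_{t-1}/\delta(C))$. So $\delta(x(A^{(t)}),\tilde{c}) \leq d_0\eps_{t-1}/\delta(C) \leq \eps_t$ for a suitable absolute constant $d_0$.

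\textbf{Survival through \textsc{Test}.} By \Cref{lem:T1} completeness, $T_1 = 1$, so $\Phi(x(A^{(t)})) = \tilde{c}$. By \Cref{lem:T2} completeness, using $\delta(\tilde{c},\cS) \leq \rho_t$, we get $T_2 = 1$. The candidate is then rejected only if some earlier list element $\tilde{A}$ has $T_3 = 0$. Since that element passed $T_1$ and $T_2$, \Cref{lem:T3} soundness gives $\Phi(x(\tilde{A})) = \tilde{c}$, and so $\delta(x(\tilde{A}),\tilde{c}) = \delta(x(\tilde{A}),\Cot) \leq \eps_t$ by \Cref{lem:T1} soundness. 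Either way, some element of $\ctL^{(t)}$ is $\eps_t$-close to $\tilde{c}$. The hypothesis $\eps_t < (\delta(C))^t/4$ follows from \eqref{eq:eps_t_requirement}.

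The main obstacle is the first step: choosing a single seed that works simultaneously for the column-list-recovery averages and for the per-row competitor separation. This needs a careful union bound, with constants chosen so that the resulting error fits within $d_0\eps_{t-1}/\delta(C)$.
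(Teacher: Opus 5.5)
Your proposal is correct and follows essentially the same route as the paper. \Cref{cl:goodAt} finds a seed, via the sampler, Markov, and a union bound over the at most $L_0$ competitors per row, for which the row-wise argmin recovers all but a $d_0\eps_{t-1}/\delta(C)$ fraction of $\tilde{c}$. The paper applies the sampler to the bad-column indicator and picks the closest list element per column, whereas you apply it directly to your combined function $g$. \Cref{cl:Atsurvives} is exactly your $T_1$/$T_2$-completeness plus $T_3$-soundness argument, phrased as ``if no $\eps_t$-close element is already present, the candidate is added.''

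One small correction: of your two options for bad sampled columns, only the second works. A sampler cannot guarantee that $H$ avoids all bad columns, since $\EE_H f$ is only controlled up to $\eps_{t-1}+\nu_t$. Letting an arbitrary list element chosen for a bad column count as an error is what the paper does implicitly.
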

\begin{proof}
    The proof is similar to the analogous proof in~\cite{HRW19}, who prove the correctness of a similar (but randomized) local list-recovery algorithm; and to the analogous proof  in~\cite{KRRSS20}, who derandomize \cite{HRW19} (but without small space).   The main difference between our proof and that in prior work is the analysis of the pruning step (that is, the call to \textsc{Test} in \Cref{alg:main}), as the algorithm \textsc{Test} is new to our work. 
    The part of the proof that is similar to prior work is encapsulated in the following claim.
     \begin{restatable}[There is some $A^{(t)}$ that is consistent with $\tilde{c}$]{claim}{goodAt} \label{cl:goodAt}
        Assume the hypotheses of \Cref{thm:mainTech}.
        Let $\cL_1^{(t-1)}, \ldots, \cL_{m_t}^{(t-1)}$ be the lists output by the calls to $\DALLR{t-1}$ in \Cref{alg:main}. 
        Fix $\tilde{c} \in \Cot$ so that $\delta(\tilde{c}, \cS) \leq \rho_t$. Then with probability at least $1/2$ over the choice of a uniform random $\sigma^{(t)} \in \{0,1\}^{r_t}$, there is some $\vec{A} \in \cL_1^{(t-1)} \times \cdots \times \cL_{m_t}^{(t-1)}$ so that if $A^{(t)}$ is the algorithm with representation 
        \[ \rep(A^{(t)}) = (\sigma^{(t)}, \vec{A}),\]
        then \begin{equation}\label{eq:Atclose}
     \Pr_{i \in [n]^t}[ A^{(t)}(i) = \tilde{c}_i ] \geq 1 - \frac{d_0 \eps_{t-1}}{\delta(C)},\end{equation}
     where the probability is only over the choice of uniform random $i \in [n]^t$. 
     \end{restatable}
     We note that \Cref{alg:main} is a deterministic algorithm that iterates over all seeds $\sigma^{(t)} \in \{0,1\}^{r_t}$, so the choice of a uniform random $\sigma^{(t)}$ is not actually random; rather, we treat $\sigma^{(t)}$ as random in \Cref{cl:goodAt} for the analysis only.
     The proof of \Cref{cl:goodAt} is similar to prior work~\cite{HRW19,KRRSS20}; for completeness we include it in \Cref{app:proofOfCorrectnessClaim_t}.
    \vspace{.5cm}

     Now, since \Cref{alg:main} iterates over all possible seeds $\sigma^{(t)} \in \{0,1\}^{r_t}$, \Cref{cl:goodAt} implies that at least once during \Cref{alg:main}, the algorithm forms the representation of some local algorithm $A^{(t)}$ that is consistent with $\tilde{c}$ (using the assumption \cref{eq:eps_t_requirement} that $\epsilon_{t-1} < \frac {\delta(C)} {d_0}$). 
     We next show that at least one such local algorithm $A^{(t)}$ survives the pruning step. 
     
     \begin{claim}[At least one $A^{(t)}$ consistent with $\tilde{c}$ survives the pruning step]\label{cl:Atsurvives}
     Suppose that the hypotheses of \Cref{thm:mainTech} hold, and let $\tilde{c} \in \Cot$ so that $\delta(\tilde{c}, \cS) \leq \rho_t$.  Suppose that, partway through \Cref{alg:main}, the algorithm forms the representation of a local algorithm $A^{(t)}$ so that 
     \Cref{eq:Atclose} holds.
     Suppose also that there is no local algorithm $\tilde{A}^{(t)}$ already present in $\ctL^{(t)}$ so that
     \[ \Pr_{i \in [n]^t}[\tilde{A}^{(t)}(i) = \tilde{c}_i] \geq 1- \eps_t.\]
   
     Then $\textsc{Test}(A^{(t)}, \ctL^{(t)})$ returns \textsc{True}.
         
     \end{claim}
     \begin{proof}
         We need to show that each of the three testers \Tone, \Ttwo,\ and  \Tthree \ called in \textsc{Test} (\Cref{alg:test}) return 1.  Let $x = x(A^{(t)})$ be the string that corresponds to the local algorithm $A^{(t)}$.  By the assumption that \Cref{eq:Atclose} holds, we have
         \[ \delta(x, \tilde{c}) \leq \frac{ d_0 \eps_{t-1}}{\delta(C)}.\]
         For any other local algorithm $\tilde{A}^{(t)}$ as in the statement of the claim, let $\tilde{x} = x(\tilde{A}^{(t)})$, so by assumption $\delta(\tilde{x}, \tilde{c}) > \eps_t$.  
         By \Cref{prop:tests_of_algs}, we have $$T_1(x) = \Tone(\rep(A^{(t)})), \qquad T_2(x) = \Ttwo(\rep(A^{(t)})), \qquad T_3(x,\tilde{x}) = \Tthree( \rep(A^{(t)}), \rep(\tilde{A}^{(t)})).$$ Thus, it suffices to show that $T_1(x) = 1, T_2(x) = 1$, and $T_3(x,\tilde{x}) = 1$ for any $x \in \Sigma^{[n]^t}$ with $\delta(x, \tilde{c}) \leq \frac{d_0 \eps_{t-1}} {\delta(C)}$ and any $\tilde{x} \in \Sigma^{[n]^t}$ with $\delta(\tilde{x}, \tilde{c}) > \eps_t$.  We prove each of these below. 

         \begin{enumerate}
             \item By \Cref{lem:T1}, for any $x$ so that $\delta(x, \Cot) \leq \frac{d_0 \eps_{t-1}}{\delta(C)}$, then $T_1(x) = 1$, so the first test passes.
             \item Given that $T_1(x) = 1$, \Cref{lem:T1} implies that there is a unique closest codeword $\Phi(x) \in \Cot$ to $x$; since 
             \[ \delta(x, \tilde{c}) \leq \frac{ d_0 \eps_{t-1}}{\delta(C)} = \frac{ \eps_t \gamma_t \rho_2^{(!)}}{2} \leq \eps_t < \frac{(\delta(C))^t} 2,\]
             we know that $\tilde{c} = \Phi(x)$.  (Above, in the equality we have plugged in the definition of $\eps_t$ from \Cref{def:params}, and the assumption \Cref{eq:eps_t_requirement}, that $\epsilon_t < \frac{(\delta(C))^t} 4$). 
             Since $\tilde{c} = \Phi(x)$, we have by assumption that $\delta(\Phi(x), \cS) \leq \rho_t$ and $\delta(x, \Phi(x)) \leq \frac{d_0 \eps_{t-1}} {\delta(C)}$.  Then  \Cref{lem:T2} implies that $T_2(x) = 1$. 
             \item Now we know that $T_1(x) = T_2(x) = 1$.  Since we are assuming that $\tilde{A}^{(t)} \in \ctL^{(t)}$, this means that $\textsc{Test}(\rep(\tilde{A}^{(t)})) = \textsc{True}$, and hence $T_1(\tilde{x}) = T_2(\tilde{x}) = 1$ as well.

             Then \Cref{lem:T1} implies that $\delta(\tilde{x}, \Cot) \leq \eps_t$, which implies that the closest codeword $\Phi(\tilde{x}) \in \Cot$ to $\tilde{x}$ must satisfy $\Phi(\tilde{x}) \neq \tilde{c}$, as $\delta(\tilde{x}, \tilde{c}) > \eps_t$ by assumption.  But since $\Phi(x) = \tilde{c}$, this implies in turn that $\Phi(x) \neq \Phi(\tilde{x})$, so \Cref{lem:T3} implies that $T_3(x,\tilde{x}) = 1$.  
         \end{enumerate}
         Thus, all three tests pass, which implies that $\rep(A^{(t)})$ will be added to the final list $\ctL^{(t)}$, as desired.
     \end{proof}
     Now \Cref{cl:goodAt} and \Cref{cl:Atsurvives} together prove \Cref{lem:correct}.  Indeed, \Cref{cl:goodAt} implies that there is at least one local algorithm $A^{(t)}$ considered so that $\delta( \tilde{c}, x(A^{(t)})) \leq \frac{d_0 \eps_{t-1}} {\delta(C)} \leq \eps_t$; and then \Cref{cl:Atsurvives} implies that, if there is not already some $\tilde{A}^{(t)} \in \ctL^{(t)}$ so that $\delta(\tilde{c}, x(\tilde{A}^{(t)})) \leq \eps_t$, then $\rep(A^{(t)})$ will be added to the list.  This completes the proof of \Cref{lem:correct}.

\end{proof}

Now that we have established that the output list $\ctL^{(t)}$ captures all of the close-by codewords, we show that the pruning step \textsc{Test} keeps $\ctL^{(t)}$ small.  Formally, we have the following lemma.
\begin{lemma}[The output list $\ctL^{(t)}$ is small]\label{lem:list_size}
    Suppose that the assumptions of \Cref{thm:mainTech} hold.  Let $\ctL^{(t)}$ be the list returned by \DALLR{t} (\Cref{alg:main}).  Then $|\ctL^{(t)}| \leq L_0^{\kappa^{t^3}\log^t L_0}.$
    \end{lemma}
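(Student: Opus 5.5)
The plan is an injection from $\ctL^{(t)}$ into a list of codewords of $\Cot$, followed by the combinatorial bound of \Cref{cor:HRW}. Every $\rep(A^{(t)})$ in $\ctL^{(t)}$ was added only after \textsc{Test} returned \textsc{True}. So $T_1(x)=T_2(x)=1$ for $x=x(A^{(t)})$ (by \Cref{prop:tests_of_algs}). In addition, $T_3(x,\tilde x)=1$ for every $\tilde x$ that was already in the list when $x$ was added.

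First, by \Cref{lem:T1} (soundness), each such $x$ has a unique closest codeword $\Phi(x)\in\Cot$ with $\delta(x,\Phi(x))\le \eps_t$. This uses $\eps_t<(\delta(C))^t/4$, which I take to be part of the hypotheses of \Cref{thm:mainTech}. Second, the map $x\mapsto \Phi(x)$ is injective on the list. Take two elements $x,\tilde x$ and order them so that $\tilde x$ was added first. Both pass $T_1,T_2$, and $T_3(x,\tilde x)=1$. By \Cref{lem:T3} (completeness, contrapositive), $\Phi(x)\neq\Phi(\tilde x)$. Hence $|\ctL^{(t)}|\le |\{\Phi(x)\}|$.

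Third, I show every $\Phi(x)$ is close to $\cS$. By \Cref{lem:T2} soundness, $\delta(x,\cS)\le \rho_t + d_0\eps_{t-1}/\delta(C)$. The triangle inequality then gives
\[\delta(\Phi(x),\cS)\le \rho_t+\frac{d_0\eps_{t-1}}{\delta(C)}+\eps_t\le \rho_t+2\eps_t .\]
All the $\Phi(x)$ therefore lie in $\{c\in\Cot:\delta(c,\cS)\le \rho'\}$ for some $\rho'$ bounded by $\rho_t+2\eps_t$. It remains to check that $\rho'\le \rho_0\kappa^{-t^3}$. Then \Cref{cor:HRW}, applied to $C$, which is $(\rho_0,\ell,L_0)$-list-recoverable with distance $\delta(C)$, bounds the number of such codewords by $L_0^{\kappa^{t^3}\log^t L_0}$, as claimed.

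The main obstacle is this parameter check. For the $\eps_t$ term I will use \eqref{eq:eps_t_requirement}. Plugging $\eps_t = 2d_0\eps_{t-1}/(\gamma_t\delta(C)\rho_2^{(!)})$ into it gives $\eps_t\le \rho_0\kappa^{-t^3}/4$, so $2\eps_t\le \rho_0\kappa^{-t^3}/2$; this is exactly why the constant $16 d_0$ appears there. For the $\rho_t$ term, $\rho_t=\eps_{t-1}\rho_{t-1}\le \eps_{t-1}\rho_0$, using the hypothesis $\rho_{t-1}\le\rho_0$. Since $\gamma_t,\delta(C),\rho_2^{(!)}<1$ and $d_0\ge 14$, \eqref{eq:eps_t_requirement} also forces $\eps_{t-1}\le \kappa^{-t^3}/2$, so $\rho_t\le\rho_0\kappa^{-t^3}/2$. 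Summing the two halves gives $\rho'\le \rho_0\kappa^{-t^3}$, which completes the argument.
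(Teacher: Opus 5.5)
Your proposal is correct and follows essentially the same argument as the paper. Soundness of $T_1,T_2$ plus the triangle inequality give $\delta(\Phi(x),\cS)\le\rho_t+2\eps_t\le\rho_0\kappa^{-t^3}$; $T_3$ together with \Cref{lem:T3} makes $\rep(A^{(t)})\mapsto\Phi(x(A^{(t)}))$ injective; and \Cref{cor:HRW} finishes.

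The differences are cosmetic. You bound $\rho_t\le\eps_{t-1}\rho_0$ via $\rho_{t-1}\le\rho_0$, whereas the paper uses $\rho_t\le\eps_{t-1}$ and takes the factor $\rho_0$ from \eqref{eq:eps_t_requirement}. You are also more careful than the paper about which pair was actually compared by $\Tthree$. Like the paper, you treat $\eps_t<(\delta(C))^t/4$ as an available hypothesis.
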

\begin{proof}
    We will first show that for all local algorithms $A^{(t)} \in \ctL^{(t)}$,
    \begin{equation}\label{eq:within_radius} \delta( \Phi(x(A^{(t)})), \cS ) \leq \frac{\rho_0}{\kappa^{t^3}},\end{equation}
    where $x(A^{(t)}) \in \Sigma^{[n]^t}$ is the string corresponding to $A^{(t)}$, as in \Cref{def:alg_to_string}, and $\Phi(x)$ denotes the codeword in $\Cot$ closest to $x$.
    Indeed, if we can show \Cref{eq:within_radius}, then \Cref{cor:HRW} will imply that $\ctL^{(t)}$ is small.

    To establish \Cref{eq:within_radius}, let $A^{(t)} \in \ctL^{(t)}$, and let $x = x(A^{(t)})$.  Since $A^{(t)} \in \ctL^{(t)}$, that means that it passed the pruning step in \textsc{Test}.  In particular, $\Tone(A^{(t)}) = \Ttwo(A^{(t)}) = \textsc{True},$ which by \Cref{prop:tests_of_algs} implies that $T_1(x) = T_2(x) = 1$.  Then by \Cref{lem:T1}, we have
    \[ \delta(x, \Cot) \leq \eps_t < \frac{(\delta(C))^t}{2},\]
    so there is a unique codeword $\Phi(x) \in \Cot$ so that $\delta(x, \Phi(x)) \leq \eps_t$.  Further, by \Cref{lem:T2}, 
    \[ \delta(x, \cS) \leq \rho_t + \frac{ d_0 \eps_{t-1}}{\delta(C)}.\]
    Putting these together, the triangle inequality implies that
    \[ \delta(\Phi(x), \cS) \leq \delta(x, \Phi(x)) + \delta(x, \cS) \leq \eps_t + \rho_t + \frac{ d_0 \eps_{t-1} }{\delta(C)} \leq \rho_t + 2 \eps_t,\]
    where above we have used the fact that
    \[ \frac{d_0 \eps_{t-1} }{\delta(C)} = \frac{ \gamma_t \rho_2^{(!)} }{2} \eps_t \leq \eps_t,\]
    using \Cref{def:params} and the fact that $\gamma_t, \rho_2^{(!)} < 1$.
    Next, we will bound both terms $\rho_t$ and $2\eps_t$.  We have
    \begin{align*}
        2 \eps_t &= \frac{ 4 d_0 }{\gamma_t \delta(C) \rho_2^{(!)}} \cdot \eps_{t-1} \qquad \text{by \Cref{def:params}} \\
        &\leq \frac{ 4d_0 }{ \gamma_t \delta(C) \rho_2^{(!)}} \cdot \frac{ \rho_2^{(!)} \gamma_t \delta(C) \rho_{0}}{ 8 d_0 \kappa^{t^3}} \qquad \text{by \Cref{eq:eps_t_requirement}} \\
        &= \frac{ \rho_{0} }{2 \kappa^{t^3}}.
    \end{align*}
    Similarly, we can bound
    \begin{align*}
        \rho_t 
        &= \rho_{t-1}\eps_{t-1} \qquad \text{by \Cref{def:params}} \\
        &\leq \eps_{t-1}  \\
        &\leq \frac{ \rho_2^{(!)} \gamma_t \delta(C) \rho_0 }{ 8 d_0 \kappa^{t^3}} \qquad \text{by \Cref{eq:eps_t_requirement}}\\
        &\leq \frac{ \rho_0}{2 \kappa^{t^3}}.
    \end{align*}
    Thus, we have
    \[ \delta(\Phi(x), \cS) \leq \rho_t + 2 \eps_t \leq \frac{ \rho_0 }{\kappa^{t^3}},\]
    proving \Cref{eq:within_radius}. 

    Now we show that \Cref{eq:within_radius}, along with \Cref{cor:HRW}, proves the lemma.
    Let
    \[ \cL^* = \inset{ c \in \Cot \,:\, \delta(c, \cS) \leq \frac{ \rho_0 }{\kappa^{t^3}}}. \]
    By \Cref{cor:HRW}, $|\cL^*| \leq L_0^{\kappa^{t^3}\log^t L_0}$.
    Now consider the map $\Psi: \ctL^{(t)} \to \cL^*$ given by
    \[ \Psi(\rep(A^{(t)})) = \Phi(x(A^{(t)})),\]
    where $\Phi: \Sigma^{[n]^t} \to \Cot$ maps a string to its closest codeword.  We claim that $\Psi$ is injective.  To see this, consider any two $\rep(A^{(t)}) \neq \rep(\tilde{A}^{(t)}) \in \ctL^{(t)}$.  Since both $\rep(A^{(t)}))$ and $\rep(\tilde{A}^{(t)})$ are in $\ctL^{(t)}$, it means that the pruning step \textsc{Test} passed, and in particular $\Tthree(\rep(A^{(t)}), \rep(\tilde{A}^{(t)})) = \textsc{True}$. Then \Cref{prop:tests_of_algs} implies that $T_3(x(A^{(t)}), x(\tilde{A}^{(t)})) = 1$.  By \Cref{lem:T3}, this implies that $\Phi(x(A^{(t)})) \neq \Phi(x(\tilde{A}^{(t)})),$ which shows that $\Psi$ is injective, as desired.  But this means that
    \[ |\ctL^{(t)}| \leq |\cL^*| \leq L_0^{\kappa^{t^3} \log^t L_0},\]
    which proves the lemma.
\end{proof}

Finally, we establish that the query complexity of \DALLR{t} is indeed at most $Q_t$.
\begin{lemma}\label{lem:query_complexity}
    Suppose that the assumptions of \Cref{thm:mainTech} hold.  Let $\ctL^{(t)}$ be the list returned by \DALLR{t} (\Cref{alg:main}).  Then for each local algorithm $A^{(t)} \in \ctL^{(t)}$, the query complexity of $A^{(t)}$ is at most $Q_t$.
\end{lemma}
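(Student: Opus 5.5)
We argue by induction on $t$, tracking the number of queries $q_t$ that a level-$t$ local algorithm makes to $\cS$. The plan is to prove $q_t \le nM_t(n+t)$ for all $t\ge 3$, with $Q_2 = n^2$ handled as a base case.

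\emph{Base case ($t=2$).} Every query made by $A^{(2)}$ (\Cref{alg:local-alg-two}) is a read of an entry of $\cS \in {\Sigma\choose\leq\ell}^{[n]\times[n]}$. Since we count each position at most once, $q_2 \le n^2 = Q_2$. For the recursion it is useful to refine this: the reads on the $m_2$ sampled columns touch at most $m_2 n$ positions, and the reads on all $n$ rows touch at most $n^2$ positions. So we may also write $q_2 \le n\,m_2 + n^2$. Alternatively, we can simply use $q_2 \le n^2$, which is what the definition of $Q_2$ records.

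\emph{Inductive step ($t \ge 3$).} Inspect \Cref{alg:local-alg}. For each $a\in[m_t]$, the algorithm calls $A_a^{(t-1)}$ with oracle access to the column $\cS_a$. Each query to $\cS_a$ is a single query to $\cS$, so this contributes at most $q_{t-1}$ queries per $a$. The algorithm then runs $\cA_0$ on the row $\cS|_{\{i'\}\times[n]}$, which costs at most $n$ queries. The sampler and the $\argmin$ step make no queries. This gives the recursion
\[ q_t \le m_t\, q_{t-1} + n. \]
Unrolling it down to $t=2$ yields
\[ q_t \le \Big(\prod_{j=3}^t m_j\Big) q_2 + n\sum_{j=3}^{t}\prod_{k=j+1}^t m_k. \]
Using $q_2 \le n m_2 + n^2 \le n m_2 (n+1)$ (or just $q_2\le n^2\le n m_2 n$, since $m_2\ge1$), the first term is at most $nM_t\cdot n$. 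Each product in the sum is at most $M_t$, since all $m_j\ge 1$, so the second term is at most $n(t-2)M_t$. Adding the two gives $q_t \le nM_t(n+t-2) \le nM_t(n+t) = Q_t$.

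\emph{Main obstacle.} The main obstacle is bookkeeping rather than substance. The $t=2$ bound has to be phrased so that it composes with the factor $\prod_{j\ge3} m_j$ into $M_t = \prod_{j\ge2}m_j$, which is why we use $q_2 \le n^2 \le n\cdot m_2\cdot n$. We also need to confirm that oracle calls to $\cS_a$ translate one-for-one into queries to $\cS$. Finally, we must check that the queries made by $\mathcal{A}_0$ are bounded by the row length $n$ under the "each position queried once" convention; this is exactly where the assumption $\Speval\ge Q$ stated after \Cref{def:DALLR} is used.
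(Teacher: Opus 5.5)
Your proof is correct and is essentially the paper's argument: the same recurrence $\tilde{Q}_t \le m_t \tilde{Q}_{t-1} + n$ with base case $\tilde{Q}_2 \le n^2$, unrolled and bounded by $n^2 M_t + n t M_t = Q_t$ using $m_j \ge 1$. One small bookkeeping note: only the route $q_2 \le n^2 \le n \cdot m_2 \cdot n$ gives the first term $n M_t \cdot n$ that you state, while the variant $q_2 \le n m_2 (n+1)$ gives $n M_t (n+1)$, which still fits under $Q_t$ since the sum term contributes only $n(t-2)M_t$.
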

\begin{proof}
    Recall from \Cref{def:params2} that $Q_2 = n^2$, and $Q_t = nM_t(n+t)$, where $M_t = \prod_{j=2}^t m_j$.     We first observe that for the base case $t=2$, indeed a local algorithm $A^{(2)}$ (given in \Cref{alg:local-alg-two}) trivially has query complexity $n^2$, as it queries everything.  Now we analyze the case for $t \geq 3$.  
    
    Let $\tilde{Q}_t$ be the true query complexity of a local algorithm $A^{(t)}$, so we wish to show that $\tilde{Q}_t \leq Q_t$.  Observe that the $\tilde{Q}_t$ satisfy the recurrence relation
    \[ \tilde{Q}_t \leq m_t \tilde{Q}_{t-1} + n.\]
    This is because, as seen in \Cref{alg:local-alg}, $A^{(t)}$ calls $m_t$ different local algorithms $A^{(t-1)}$; and then queries the entire row $\cS_{\{i'\} \times [n]}$ (on input $i = (i', i_t)$) to run $\cA_0$ on that row.  
    Expanding the recurrence with the base case $\tilde{Q}_2 \leq Q_2 = n^2$, we see that
    \begin{align*} \tilde{Q}_t &\leq n^2 \prod_{j=3}^t m_j + n \sum_{k=3}^t \left( \prod_{j = k+1}^t m_j \right) \\
    &\leq M_t n^2 + ntM_t \\
    &= Q_t.
    \end{align*}
 
\end{proof}

\subsubsection{Time and Space Bounds on \DALLR{t}, given \DALLR{t-1}}\label{sec:resourceProof}
In this section we prove the following lemma.  We recall that we have already analyzed the time and space of the local algorithms (that is, $\Tmeval_{\DALLR{t}}$ and $\Speval_{\DALLR{t}}$) in \Cref{prop:local_timespace}, and so we focus on the pre-processing time and space.
\begin{lemma}[Pre-processing time and space for \DALLR{t}]\label{lem:resources}
    Let $t \geq 3$.  Suppose that $\DALLR{t-1}$ is a DALLR algorithm with list size $L_{t-1}$.  Let $\DALLR{t}$ be as in \Cref{alg:main}, and suppose that the output list size is bounded by $L_t$.  Then
    \begin{align*} \Tmpre_{\DALLR{t}} &=  2^{r_t}\inparen{\Tm_{\Gamma} + m_t \Tmpre_{\DALLR{t-1}} + (L_{t-1})^{m_t} \Tm_{\Testt} + O(\mathrm{Len}_{A^{(t)}})}\\
     \Sppre_{\DALLR{t}} &= \Sppre_{\DALLR{t-1}} + O\inparen{ m_t \log n + \Tm_{\Gamma} +  m_t \cdot L_{t-1} \cdot \mathrm{Len}_{A^{(t-1)}} + L_t \cdot \mathrm{Len}_{A^{(t)}} + \Sp_{\Testt}}. 
     \end{align*}
    Above, for an algorithm $\mathcal{D}$, $\Tm_\cD$ and $\Sp_{\cD}$ denote the amount of time and space required by $\cD$, respectively; and $\Tm_\Gamma = \max_{2 \leq i \leq t} \Tm_{\Gamma_i}$.  For a local algorithm $A^{(t)}$, $\mathrm{Len}_{A^{(t)}}$ represents the length of $\rep(A^{(t)})$; recall that this is bounded in \Cref{prop:local_timespace}.   Recall that $\Tm_{\Testt}$ and $\Sp_{\Testt}$ are given in \Cref{cor:test_resources}.

    Further, the output length is bounded by
    \[ \Spop_{\DALLR{t}} = L_t \cdot \mathrm{Len}_{A^{(t)}}.\]
\end{lemma}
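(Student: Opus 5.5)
The plan is to read the bounds directly off \Cref{alg:main}, accounting for each line separately. The only subtlety is which data must be held on the work tape at once. For time: the outer loop runs over all $2^{r_t}$ seeds $\sigma^{(t)}$. For each seed we do four things. First, we pay $\Tm_{\Gamma_t}\le \Tm_\Gamma$ to compute $H_t$. Second, we make $m_t$ calls to $\DALLR{t-1}$ on the columns $\cS|_{[n]^{t-1}\times\{h_a\}}$, each costing $\Tmpre_{\DALLR{t-1}}$. Oracle access to a column is simulated by index arithmetic on the input tape, so it adds no overhead beyond a constant factor. Third, we loop over all tuples $\vec{A}\in \cL^{(t-1)}_1\times\cdots\times\cL^{(t-1)}_{m_t}$. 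There are at most $(L_{t-1})^{m_t}$ of them, since each list has size at most $L_{t-1}$ by the DALLR guarantee for $\DALLR{t-1}$. Each tuple is charged $\Tm_{\Testt}$ for the call to \textsc{Test}, with $|\ctL^{(t)}|\le L_t$ inside that bound via \Cref{cor:test_resources}. Fourth, we pay $O(\mathrm{Len}_{A^{(t)}})$ to assemble $\rep(A^{(t)})$ (\Cref{line:localalg}) and to append it when it passes. Summing these gives the stated $\Tmpre_{\DALLR{t}}$.

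For space, the key point is that all work tied to a single seed can be erased before moving to the next seed. We also run the $m_t$ calls to $\DALLR{t-1}$ one at a time and reuse their working space, so that term contributes $\Sppre_{\DALLR{t-1}}$ only once, additively. What must persist across iterations is as follows.
\begin{itemize}
\item The seed and the set $H_t$, which take $O(r_t + m_t\log n)$ space; $r_t = O(\log n)$ is absorbed.
\item The sampler's workspace, which is at most $\Tm_\Gamma$.
\item The $m_t$ output lists $\cL^{(t-1)}_a$. The output tape of $\DALLR{t-1}$ is redirected onto our work tape, giving at most $m_t L_{t-1}\mathrm{Len}_{A^{(t-1)}}$ bits.
\item A counter/pointer indexing the current tuple $\vec{A}$, which takes $O(m_t\log L_{t-1})$ bits and is dominated by the previous item.
\item The growing list $\ctL^{(t)}$, of size at most $L_t\cdot\mathrm{Len}_{A^{(t)}}$.
\item The workspace of \textsc{Test}, which is $\Sp_{\Testt}$.
\end{itemize}
We do not materialize $\rep(A^{(t)})$ separately. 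It is $(\sigma^{(t)},\vec{A})$, which can be read through pointers into the stored lists; alternatively its $\mathrm{Len}_{A^{(t)}}$ bits are absorbed by the $L_t\mathrm{Len}_{A^{(t)}}$ term.

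The one step needing care is that $\ctL^{(t)}$ must be kept on the work tape rather than streamed to the output, because \Tthree\ compares each candidate against all current members. Only at the end is the list copied to the output tape. This requires the bound $|\ctL^{(t)}|\le L_t$ at every intermediate moment. That follows from \Cref{lem:list_size}: its injectivity argument (via $\Psi$) applies to any prefix of the list, since every member has passed all three tests. The same observation gives $\Spop_{\DALLR{t}} = L_t\cdot\mathrm{Len}_{A^{(t)}}$ for the final output length. I expect the main obstacle to be this bookkeeping: justifying that the recursive calls and the tests reuse space rather than stacking, and that the stored lists are the only per-seed persistent data.
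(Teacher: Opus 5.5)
Your proposal is correct and follows essentially the same route as the paper. The time bound is read off the loop structure of \Cref{alg:main}, and the space bound itemizes the same contributions: the seed, the sampler workspace, $H_t$, the sequentially reused calls to $\DALLR{t-1}$, the stored lists $\cL^{(t-1)}_a$, the workspace of \textsc{Test}, and the list $\ctL^{(t)}$. The only difference is that the paper absorbs $r_t$ into $\Tm_\Gamma$ rather than into $m_t\log n$.

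Your remark on intermediate list sizes does not need \Cref{lem:list_size}. Since $\ctL^{(t)}$ only grows, its size at any moment is at most its final size, which is bounded by $L_t$ by hypothesis.
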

\begin{proof}
To see the time bound, note that \Cref{alg:main} loops over all $2^{r_t}$ possible seeds $\sigma^{(t)}$.  For each, it runs the sampler $\Gamma_t$ to obtain $H_t$, taking time $\Tm_{\Gamma_t} \leq \Tm_\Gamma$.  Then it runs $\DALLR{t-1}$ on $m_t$ different columns.  Finally, it iterates over all $(L_{t-1})^{m_t}$ possible combinations $\vec{A}$ of local algorithms, forms a representation $\rep(A^{(t)})$ by concatenation, and runs $\textsc{Test}$ on it.  Finally, we add $O( \mathrm{Len}_{A^{(t)}})$ overhead for forming and manipulating representations of local algorithms throughout.

For the space, the space requirements are:
\begin{itemize}
    \item $r_t$ bits to store the current seed $\sigma^{(t)}$
    \item $\Sp_{\Gamma_t} \leq \Tm_{\Gamma_t} \leq \Tm_\Gamma$ bits to run the sampler $\Gamma_t$
    \item $m_t \log n$ bits to store $H_t$
    \item $\Sppre_{\DALLR{t-1}}$ to run $\DALLR{t-1}$.  Notice that we may do this sequentially, re-using the space between runs, and storing only the output.
        \item $m_t \cdot L_{t-1} \cdot \mathrm{Len}_{A^{(t-1)}}$ bits to store the lists $\cL_a^{(t-1)}$ for all $a \in [m_t]$ 
            \item $\Sp_{\Testt}$ bits to run $\textsc{Test}$.  Again, we can run all the iterations of \textsc{Test} sequentially, re-using the space.
    \item $L_t \cdot \mathrm{Len}_{A^{(t)}}$ bits to store the output list $\ctL^{(t)}$. 
\end{itemize}
Adding these up results in the final expression (noting that $r_t \leq T_\Gamma$ since $T_\Gamma$ has to read all of its input). 

Finally, we note that the output length follows by definition, as the algorithm outputs $L_t$ representations of level-$t$ local algorithms.
\end{proof}

     \subsubsection{Proof of \Cref{thm:mainTech}}\label{sec:putItTogether}
     Finally, we put together \Cref{lem:correct}, \Cref{lem:list_size}, \Cref{lem:query_complexity}, and \Cref{lem:resources} to prove \Cref{thm:mainTech}.
     \begin{proof}[Proof of \Cref{thm:mainTech}]
     We wish to show that \DALLR{t} (\Cref{alg:main}) is a correct $(Q_t, \eps_t, \rho_t, \ell, L_t)$-DALLR algorithm, with the desired running time and space.

     To show that it is correct, we step through the items in \Cref{def:DALLR} (the definition of \ADLLR).
    First, we observe that \DALLR{t} is indeed deterministic, and outputs a list $\ctL^{(t)}$ of representations of local algorithms.  We can see from \Cref{alg:local-alg} that each of the local algorithms is also deterministic.  \Cref{lem:query_complexity} shows that each local algorithm makes at most $Q_t$ queries to the input $\cS$, and \Cref{lem:list_size} shows that there are at most $|\ctL^{(t)}| \leq L_t$ local algorithms.
     \Cref{lem:correct} shows that for any codeword $\tilde{c} \in \Cot$ so that $\delta(\tilde{c}, \cS) \leq \rho_t$, there is some $A^{(t)}$ whose representation is in $\ctL^{(t)}$ so that 
     \[ \Pr_{i \in [n]^t}[ A^{(t)}(i) = \tilde{c}_i ] \geq 1 - \eps_t.\]
     This hits all the points of \Cref{def:DALLR}, and proves the first part of the theorem.

     Finally, \Cref{lem:resources} implies that \DALLR{t} requires pre-processing space and time as in \Cref{thm:mainTech}, proving the second part of the theorem.
     \end{proof}

\subsection{Unwinding the recursion}\label{sec:unwind}
Now that we have shown in \Cref{thm:mainTech} that if \DALLR{t-1} is correct then \DALLR{t} is correct, we unwind the recursion to prove that \DALLR{t} is correct, and work out the parameters.  Formally, we have the following theorem.

\begin{theorem}\label{thm:mainLocal}
Fix $t \geq 3$,  
 and suppose that $C \subseteq \Sigma^n$ is a linear code of distance $\delta(C)$ that is $(\rho_0, \ell, L_0)$-globally  list recoverable via a deterministic algorithm $\cA_0$. 
Suppose also that $C$ has an algorithm $\textsc{Detect}_C$ that detects errors (that is, $\textsc{Detect}_C(x) = \mathbf{1}[x \in C]$), and suppose that $C \otimes C$ has an algorithm $\textsc{Dec}_{C \otimes C}$ that uniquely decodes $C \otimes C$ up to radius $\rho_2^{(!)}$.  
Adopt the parameters from \cref{def:params} and \cref{def:params2}, 
let $\epsilon>0$
be a parameter so that
\begin{equation}\label{eq:eps_requirement}
\epsilon< \min \left\{\frac{\rho_0} {4\kappa^{t^3}}, \frac {(\delta(C))^t} 4 \right\},
\end{equation}
and let 
\begin{equation}\label{eq:eps_2}
\epsilon_2= \left( \prod_{j=3}^t \gamma_j\right) \left( \frac{\delta(C) \rho_2^{(!)}} {2 d_0}\right)^{t-2} \epsilon .
\end{equation}

Then the following hold.
\begin{itemize}
    \item \textbf{Correctness:} 
    Then there exists an absolute constant $b_0$ so that
    $\Cot$ is a $(Q, \eps, \rho, \ell, L)$-\ADLLR\ with algorithm \DALLR{t} (\Cref{alg:main}), for
    \begin{align*} 
    \rho &\geq \eps^{t-1} \left( \frac{ \rho_2^{(!)}}{2d_0 {t^{b_0}}} \right)^{t^2} \left({\delta(C)}\right)^{t^3} \rho_0 \min\left\{ \frac{ \delta(C)}{8}, \frac{\rho_2^{(!)}}{2}, \frac{1}{\kappa^8} \right\}, \\
    L &= L_0^{\kappa^{t^3} \log^t L_0}, \text{ and}\\
    Q &\leq n(n+t)\inparen{ \frac{b_0 L_0}{\eps_2^3}}^t
    \end{align*}

    \item \textbf{Resources:} 
        There is a constant
\[ c_0 = \poly\inparen{ \frac{1}{\delta(C)}, L_0, \frac{1}{\rho_2^{(!)}}, \frac{1}{\rho_0} }\]
so that the following holds.
    \DALLR{t} (\Cref{alg:main}) has pre-processing time and space
\begin{align*}
\Tmpre_{\DALLR{t}}
 &\leq n^{t+1} \frac{1}{\eps_2^{4t + O(1)}} \inparen{  n\Tm_{\cA_0} + \Tmdectwo}\inparen{ \frac{\Tmdetect}{n} + \frac{\Tmdectwo}{n^2} }  \cdot \exp\inparen{ c_0^{t^3} \eps_2^{-3} }.\\
\Sppre_{\DALLR{t}} &=
O\inparen{ t \cdot \Sp_{\cA_0} + \Spdetect + \Spdectwo + \frac{c_0 \cdot t \cdot n^2\polylog|\Sigma|}{\poly(\eps_2)} +  \eps_2^{-3t} \cdot \exp(c_0^{t^3} ) \cdot \log n },
\end{align*}
evaluation time and space is bounded by
\begin{align*}
    \Tmeval_{\DALLR{t}} &\leq O\inparen{\frac{c_0^t}{\eps_2^{3t}} \cdot \poly\inparen{\frac{\log n}{\eps_2}} + n \cdot \Tm_{\cA_0} + \Tmdectwo}\\
    \Speval_{\DALLR{t}} &\leq 
c_0\cdot t\cdot \inparen{\poly\inparen{\frac{\log(n|\Sigma|)}{\eps_2}} + n^2\log|\Sigma| }+ t \cdot \Sp_{\cA_0} + \Spdectwo,
\end{align*}
and output length is bounded by 
\[ \Spop_{\DALLR{t}} \leq \frac{\exp(c_0^{t^3}) \cdot \log n}{\eps_2^{3t}+O(1)}.\]

Above, for an algorithm $\mathcal{D}$, we let $\Tm_{\mathcal{D}}$ and $\Sp_{\mathcal{D}}$ denote the time and space requirements of $\mathcal{D}$, respectively.
\end{itemize}

\end{theorem}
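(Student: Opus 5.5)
Proving \Cref{thm:mainLocal} comes down to unwinding \Cref{thm:mainTech} by induction on $t$, starting from a base case for $\DALLR{2}$, and then bounding the resulting recurrences. The first step is the base case. I would show that $\DALLR{2}$ (\Cref{alg:main_base}) is a $(Q_2=n^2,\eps_2,\rho_2,\ell,L_2)$-\ADLLR, and in fact a DLLR. Fix $\tilde c\in C\otimes C$ with $\delta(\tilde c,\cS)\le\rho_2$.

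By Markov, all but a $\rho_2/\rho_0$ fraction of columns and rows have $\tilde c$ in their $\cA_0$-lists. The sampler (with $\eta_2,\nu_2$) then guarantees that for most seeds the $\tau$ recording the correct list positions produces rows $\tilde c^{(b)}$ equal to $\tilde c$ on all but a $\rho_2^{(!)}$-fraction. This uses that a wrong row codeword disagrees with $\tilde c$ on a $\delta(C)$ fraction of columns, so it loses the argmin once $\nu_2=\delta(C)/8$ and $\rho_2\le\rho_0\delta(C)/8$. Unique decoding then returns $\tilde c$ exactly. The exact distance and $\mathrm{sim}>0$ tests keep exactly one representative per codeword within radius $\rho_2\le\rho_0\kappa^{-8}$, so \Cref{cor:HRW} with $t=2$ gives $|\ctL^{(2)}|\le L_2$. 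Since the representatives are exact, the base case holds for every $\eps_2>0$, in particular for the value in \eqref{eq:eps_2}.

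Next comes the induction. From \eqref{eq:epsessrhot}, $\eps_t=\eps_2\prod_{j=3}^t\frac{2d_0}{\gamma_j\delta(C)\rho_2^{(!)}}$, and \eqref{eq:eps_2} is chosen exactly so that the final level equals $\eps$. Since the sequence is increasing, every $\eps_{j-1}\le\eps$. I would then verify \eqref{eq:eps_t_requirement} at every level $3\le j\le t$ from \eqref{eq:eps_requirement}. This needs care: requirement \eqref{eq:eps_t_requirement} at level $j$ bounds $\eps_{j-1}=\eps_j\cdot\frac{\gamma_j\delta(C)\rho_2^{(!)}}{2d_0}$. With $\eps_j\le\eps<\rho_0/(4\kappa^{t^3})$, this yields exactly $\frac{\rho_0}{\kappa^{t^3}}\cdot\frac{\rho_2^{(!)}\gamma_j\delta(C)}{8d_0}$, up to $\kappa^{j^3}\le\kappa^{t^3}$. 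The bound $\rho_{j-1}\le\rho_0$ is immediate since $\rho_j=\eps_{j-1}\rho_{j-1}$ decreases. Applying \Cref{thm:mainTech} $t-2$ times gives correctness with $L=L_t$.

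For $\rho$, unroll to $\rho_t=\rho_2\prod_{j=2}^{t-1}\eps_j$ and lower bound each $\eps_j\ge\eps\prod_{k=j+1}^t\frac{\gamma_k\delta(C)\rho_2^{(!)}}{2d_0}$. Using $\gamma_k\ge\delta(C)^{2k}k^{-O(1)}$ (since $18^{\log_{1.5}k}=k^{O(1)}$) and collecting exponents gives the stated $\eps^{t-1}(\rho_2^{(!)}/2d_0t^{b_0})^{t^2}\delta(C)^{t^3}\rho_2$. For $Q$, use \eqref{eq:Qt} together with $m_j=O(L_0/\eps_{j-1}^3)\le O(L_0/\eps_2^3)$ when $\nu_j=\eps_{j-1}$; the $\delta(C)/4$ case and $m_2$ are absorbed in $b_0$ and $c_0$.

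Finally, the resources. Plug \Cref{prop:local_timespace}, \Cref{cor:test_resources} and \Cref{lem:resources} into each other. Then $M_t\le(b_0L_0/\eps_2^3)^t$, $2^{r_j}=n/\nu_j$, $\mathrm{Len}_{A^{(t)}}\le M_t\cdot O(t\log(n/\eps_2))$, and $L_{t-1}^{m_t}\le\exp(c_0^{t^3}\eps_2^{-3})$. The preprocessing-time recurrence $T_t=2^{r_t}(m_tT_{t-1}+L_{t-1}^{m_t}\Tm_{\Testt}+\dots)$ unrolls to a sum dominated by the top term: there $\Tm_{\Testt}\approx\Tmeval\cdot n^{t-1}(\Tmdetect+\Tmdectwo/n)$, multiplied by $2^{r_t}\approx n/\eps_2$. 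Lower levels contribute $n^{j}$ factors times products of $2^{r}$ that stay within $n^{t+1}$. Space adds linearly across levels, with dominant contributions $n^2$ from \Tone\ and $L_t\mathrm{Len}_{A^{(t)}}$.

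I expect the main obstacle to be this bookkeeping rather than any conceptual step. In particular, I need to show the time recurrence does not accumulate an extra factor of $n$ per level, because each level's $2^{r_j}\approx n$ multiplies only the lower-dimensional ($n^{j-1}$-sized) work. I also need to consistently absorb the $\exp(c_0^{t^3})$-type constants into a single $c_0$.
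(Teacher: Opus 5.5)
Your proposal is correct and follows the paper's proof essentially step for step. The steps are the base case via \Cref{lem:base_correct}, the induction via \Cref{thm:mainTech} with $\eps_t=\eps$, the unrolling of $\rho_t$ and $Q_t$, and the resource recurrences from \Cref{prop:local_timespace}, \Cref{cor:test_resources} and \Cref{lem:resources}. Your check of \eqref{eq:eps_t_requirement} via $\eps_j\le\eps$ and $\kappa^{j^3}\le\kappa^{t^3}$ is just a more direct form of the paper's argument that the $Z_i$ are decreasing.

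One small correction concerns the bound on $Q$: nothing should be absorbed into $b_0$, which must be an absolute constant, and nothing needs to be. Since $\eps_2\le\eps<(\delta(C))^t/4\le\delta(C)/4$, you get $\nu_j\ge\eps_2$ in both cases of the minimum, and likewise $m_2=O\bigl(L_0/(\rho_2^{(!)}\delta(C)^2)\bigr)=O(L_0/\eps_2^3)$.
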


  Again we prove \Cref{thm:mainLocal} in two steps.  We first establish the correctness in \Cref{sec:unwind_correctness}, and then we establish the time and space bounds in \Cref{sec:unwind_resources}.
    We put them together in \Cref{sec:pfmainLocal} to prove \Cref{thm:mainLocal}.

\subsubsection{Correctness of \DALLR{t}}\label{sec:unwind_correctness}
    To establish correctness, we first need to show the base case for $t=2$, namely that \DALLR{2} (\Cref{alg:main_base}) is correct.  Then \Cref{thm:mainTech} provides the inductive step that will establish that \DALLR{t} is correct.

  \begin{restatable}[Base case: \DALLR{2} (\Cref{alg:main_base}) is correct]{lemma}{basecorrect}\label{lem:base_correct}  

        Assume the hypothesis of Theorem \ref{thm:mainLocal}.
  Then $C \otimes C$ is a $(Q_2, \rho_2, \ell, L_2)$-\DLLR\ with algorithm $\DALLR{2}$ (\cref{alg:main_base}).
    \end{restatable}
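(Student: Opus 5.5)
We need to show three things for \Cref{alg:main_base}: every $\tilde c\in C\otimes C$ with $\delta(\tilde c,\cS)\le\rho_2$ is represented exactly by some output local algorithm; the output list has at most $L_2$ elements; and each local algorithm makes at most $Q_2=n^2$ queries. The query bound is immediate, since \Cref{alg:local-alg-two} reads at most all $n^2$ input lists. The plan is to mirror the proof of \Cref{lem:correct} and \Cref{lem:list_size}, with the unique decoding step replacing $\Tone$.

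\textbf{Correctness.} Fix $\tilde c$ with $\delta(\tilde c,\cS)\le\rho_2$, and let $B\subseteq[n]$ be the set of columns $h$ with $\delta(\tilde c|_{[n]\times\{h\}},\cS|_{[n]\times\{h\}})>\rho_0$. By Markov, $|B|/n\le\rho_2/\rho_0\le\rho_2^{(!)}/2$ (using \eqref{eq:rho2}); the same bound holds for the set $B'$ of bad rows.

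\begin{enumerate}
\item Apply the sampler guarantee (\Cref{thm:sampler}, parameters $\eta_2,\nu_2$) to the indicator of $B$. We would like to conclude that for all but an $\eta_2$ fraction of seeds, $H_2$ contains at most a $\rho_2^{(!)}/2+\nu_2<1/2$ fraction of bad columns. For the good columns $h_a$, $\tilde c|_{[n]\times\{h_a\}}\in\cL_a$, so some choice of $\tau$ makes $c^{(a)}$ equal to the true column.
\item Fix a good row $b\notin B'$; then $\tilde c|_{\{b\}\times[n]}\in\mathcal{K}_b$. Any other $c\in\mathcal{K}_b$ differs from $\tilde c_b$ on at least a $\delta(C)$ fraction of $[n]$. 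Apply the sampler to the indicator of that disagreement set, using $\nu_2=\delta(C)/8$. Then $c$ disagrees with the guesses on at least roughly $\delta(C)-\nu_2-$(bad-column fraction) of $H_2$, while $\tilde c_b$ disagrees on at most the bad-column fraction. So the argmin picks $\tilde c_b$ whenever $H_2$ is good for that $c$.
\item Union bound over the $\le L_0$ list elements of each row, and average over rows. For a random seed, the expected fraction of good rows $b$ for which some competitor is not sampled well is at most $L_0\eta_2\le\rho_2^{(!)}/20$. Hence some seed exists for which the bad-column bound in step 1 holds and at most a $\rho_2^{(!)}/10$ fraction of rows fail.
\item For that seed and the correct $\tau$, $w$ agrees with $\tilde c$ on all but $|B'|/n+\rho_2^{(!)}/10<\rho_2^{(!)}$ of rows, so $\mathrm{Dec}_{C\otimes C}(w)=\tilde c$ and $x(A^{(2)})=\tilde c$.
\item This candidate has $\mathrm{dist}\le\rho_2$, so it is added unless an identical string already appears in the list ($\mathrm{sim}=0$); either way $\tilde c$ is represented.
\end{enumerate}

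\textbf{List size.} Every output string is $x(A^{(2)})=c^*\in C\otimes C$, because the decoder outputs codewords; if decoding fails the output is $\bot$, and such a candidate fails the distance test. The $\mathrm{sim}>0$ test makes the output strings pairwise distinct, and each satisfies $\delta(c^*,\cS)\le\rho_2\le\rho_0\kappa^{-8}$. By \Cref{cor:HRW} with $t=2$ (where $\kappa^{t^3}=\kappa^8$), there are at most $L_0^{\kappa^8\log^2L_0}=L_2$ such codewords. Hence the list size is at most $L_2$.

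\textbf{Main obstacle.} The delicate point is the sampler bookkeeping in steps 1–3. We need a single seed that is simultaneously good for the bad-column set and for most rows' competitor sets. We also need the resulting row-error fraction to stay strictly below $\rho_2^{(!)}$. The approach is to take expectations over seeds of the number of failing rows and apply Markov, checking that the constants in $\eta_2,\nu_2$ and $\rho_2\le\rho_0\rho_2^{(!)}/2$ suffice. This may require adjusting the constants slightly.
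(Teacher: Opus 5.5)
Your proof is correct and follows essentially the same route as the paper's: Markov plus the sampler for the bad columns; a union bound over $\mathcal{K}_b$ and Markov over seeds for rows with poorly sampled competitors; then unique decoding; then \Cref{cor:HRW} for the list size, with the same implicit convention that $\mathrm{Dec}_{C\otimes C}$ returns a codeword or $\bot$. One small correction: the threshold that matters in step 1 is not $1/2$ but that the bad-column fraction $\beta$ within $H_2$ satisfies $2\beta<\delta(C)-\nu_2=7\delta(C)/8$; this already holds with the given constants, since $\beta\le\rho_2/\rho_0+\nu_2\le\delta(C)/4$ by \eqref{eq:rho2}, so no adjustment is needed.
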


   The proof of the above lemma is similar to prior work~\cite{GGR11, HRW19,KRRSS20}; for completeness we include it in Appendix \ref{app:proofOfCorrectnessClaim_base}.

   The above lemma shows that $\DALLR{2}$ is correct, and \Cref{thm:mainTech} states that if $\DALLR{t-1}$ is correct, then $\DALLR{t}$ is correct.  In the next lemma, we put these together to conclude that $\DALLR{t}$ is correct by induction, and work out the parameters.
    \begin{lemma}\label{lem:unwind_correct}
    Assume the hypotheses of \Cref{thm:mainLocal}.  
    Then $\Cot$ is a $(Q, \eps, \rho, \ell, L)$-\ADLLR\ with algorithm $\DALLR{t}$ (\Cref{alg:main}), where $Q, \eps, \rho, L$ are as in \Cref{thm:mainLocal}.
    \end{lemma}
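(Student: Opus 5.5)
The proof is an induction on $s = 2, 3, \ldots, t$. \Cref{lem:base_correct} is the base case, and \Cref{thm:mainTech} is the inductive step. The claim to carry along is that $C^{\otimes s}$ is a $(Q_s, \eps_s, \rho_s, \ell, L_s)$-\ADLLR\ with algorithm $\DALLR{s}$. Since \Cref{lem:base_correct} gives $C\otimes C$ as a DLLRC, it is in particular a DALLRC with approximation parameter $\eps_2$, for any $\eps_2>0$. So the base case holds with $\eps_2$ as in \Cref{eq:eps_2}.

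Before inducting, unwind the recursion \Cref{eq:epsessrhot}:
\[ \eps_s = \eps_2 \prod_{j=3}^s \frac{2d_0}{\gamma_j\delta(C)\rho_2^{(!)}}. \]
With the choice \Cref{eq:eps_2}, this gives exactly $\eps_t = \eps$, and the sequence $\eps_s$ is increasing in $s$. The inductive step at level $s$ needs two hypotheses. The first is \Cref{eq:eps_t_requirement}: $\eps_{s-1}$ must be at most the minimum of $\frac{2\rho_0}{\kappa^{s^3}}\cdot\frac{\rho_2^{(!)}\gamma_s\delta(C)}{16 d_0}$ and $(\delta(C))^s/4$. The second is $\rho_{s-1}\le\rho_0$. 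For the first, use the identity
\[ \eps_{s-1} = \eps_s\cdot \frac{\gamma_s\delta(C)\rho_2^{(!)}}{2d_0}. \]
So the first term of \Cref{eq:eps_t_requirement} holds iff $\eps_s \le \rho_0/(4\kappa^{s^3})$. This follows from $\eps_s\le\eps_t=\eps<\rho_0/(4\kappa^{t^3})$ together with $\kappa\ge1$. The second term follows similarly: $\eps_{s-1}\le\eps_s\le\eps<(\delta(C))^t/4\le(\delta(C))^s/4$. For $\rho_{s-1}\le\rho_0$, note $\rho_2\le\rho_0$ from \Cref{eq:rho2}, and $\rho_s=\eps_{s-1}\rho_{s-1}\le\rho_{s-1}$ since $\eps_{s-1}<1$. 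With both hypotheses in place, \Cref{thm:mainTech} applies at every level, and we get the claim with $L=L_t$ from \Cref{eq:Lt_def}.

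It remains to bound $\rho$ and $Q$. For $\rho$, unwinding gives $\rho_t=\rho_2\prod_{s=2}^{t-1}\eps_s$, and each factor satisfies $\eps_s\ge\eps_2$. A sharper bound comes from writing $\eps_s = \eps\prod_{j=s+1}^{t}\frac{\gamma_j\delta(C)\rho_2^{(!)}}{2d_0}$. Recall $\gamma_j=\delta(C)^{2j}/18^{\log_{1.5}j}$, and $18^{\log_{1.5} j}=j^{O(1)}\le t^{b_0}$. Then $\prod_{s=2}^{t-1}\eps_s \ge \eps^{t-2}\left(\frac{\rho_2^{(!)}}{2d_0t^{b_0}}\right)^{t^2}\delta(C)^{O(t^3)}$. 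The exponent of $\delta(C)$ is a bookkeeping count: $\sum_s\sum_{j>s}(2j+1)\le t^3$. Multiplying by $\rho_2$ from \Cref{eq:rho2} gives the stated bound; an extra factor of $\eps\le1$ is harmless, so $\eps^{t-1}$ also works. For $Q$, use $Q_t=nM_t(n+t)$ with $m_j=O(L_0/(\eps_{j-1}\nu_j^2))$. Here $\nu_j\ge\min(\eps_2,\delta(C)/4)$ and $\eps_{j-1}\ge\eps_2$, so each $m_j\le b_0L_0/\eps_2^3$ (absorbing $\delta(C)$ using $\eps_2\le\delta(C)$). This gives $M_t\le(b_0L_0/\eps_2^3)^t$.

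The main obstacle is not conceptual. It is keeping the exponent bookkeeping for $\rho$ consistent with the stated form, namely the powers $t^2$ and $t^3$ and the constant $b_0$. I would handle it by crude uniform bounds per factor rather than exact products.
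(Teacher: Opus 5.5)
Your proposal is correct and follows essentially the same route as the paper. It runs the induction with \Cref{lem:base_correct} as base case and \Cref{thm:mainTech} as step, and verifies \Cref{eq:eps_t_requirement} via $\eps_{s-1}=\eps_s\gamma_s\delta(C)\rho_2^{(!)}/(2d_0)$ together with monotonicity of $\eps_s$; your use of $\eps_s\le\eps$ and $\kappa^{s^3}\le\kappa^{t^3}$ is a rephrasing of the paper's decreasing thresholds $Z_i$, and the unwinding of $\rho_t$, $Q_t$, $L_t$ is the same bookkeeping (with $\nu_j\ge\eps_2$ most cleanly justified by $\eps_2\le\eps<(\delta(C))^t/4\le\delta(C)/4$).
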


    \begin{proof}
        Let 
        \begin{equation}\label{eq:W0} W_0 := \frac{ 2d_0 }{ \delta(C) \rho_2^{(!)} },\end{equation}
        so that the definition of $\eps_i$ in \Cref{def:params} says that
        \[ \eps_i = \frac{ W_0 \eps_{i-1}}{\gamma_i}\]
        for all $i = 3, \ldots, t$.
        Thus, we have that
        \begin{equation}\label{eq:eps_unrolled} \eps_i = \frac{W_0^{i-2}}{\prod_{j=3}^i \gamma_j } \eps_2
        \end{equation}
        for all $i = 3, \ldots, t$.
    With this notation, our choice of $\eps_2$ in \cref{eq:eps_2}  
    is
    \begin{equation}\label{eq:eps_2_correct}
        \eps_2 = \frac{\prod_{j=3}^t \gamma_j}{W_0^{t-2}} \cdot \eps.
    \end{equation}
    Notice that this choice and \Cref{eq:eps_unrolled} implies that $\eps = \eps_t$.
    
        We have shown as our base case in \Cref{lem:base_correct} that \DALLR{2} is a correct 
        $(Q_2, \rho_2, \ell, L_2)$-\DLLR, and in particular this means that it is a  correct
        $(Q_2, \eps_2, \rho_2, \ell, L_2)$-\ADLLR\ for $\eps_2$ as in \Cref{eq:eps_2}.  Now choose any $i$ so that $2 < i \leq t$, and suppose inductively that $\DALLR{i-1}$ is a correct $(Q_{i-1}, \eps_{i-1}, \rho_{i-1}, \ell, L_{i-1})$-\ADLLR.

        We now wish to apply the first bullet point of~\Cref{thm:mainTech} (``Correctness''), to conclude that \DALLR{i} is a correct $(Q_i, \eps_i, \rho_{i}, \ell, L_{i})$-\ADLLR.  To do so, we need to check that the hypotheses of \Cref{thm:mainTech} hold.  In particular, we need to check that \eqref{eq:eps_t_requirement} holds, or equivalently that
        \begin{equation}\label{eq:newneedeps} 0 < \eps_{i-1} \leq 
        \min \left\{\frac{\rho_0}{4\kappa^{i^3}} \cdot \frac{\gamma_i}{ W_0}, \frac{(\delta(C))^i} 4 \right\}
        \end{equation}
        We also need to check the other hypothesis of \Cref{thm:mainTech} that $\rho_{i-1} \leq \rho_0$.
        
        We begin by establishing \Cref{eq:newneedeps}.  First, we write, using \Cref{eq:eps_unrolled} and \Cref{eq:eps_2_correct}, that
        \[ \eps_{i-1} = \frac{W_0^{i-3}}{\prod_{j=3}^{i-1} \gamma_j} \cdot \eps_2 = \frac{{W_0}^{i-3}}{\prod_{j=3}^{i-1} \gamma_j} \cdot \frac{ \prod_{j=3}^t \gamma_j }{ W_0^{t-2} } \cdot \eps = \frac{ \prod_{j=i}^t \gamma_j }{W_0^{t-i+1}} \cdot \eps.\]
        Thus, the first requirement in \Cref{eq:newneedeps} is equivalent to 
        \[ \frac{\prod_{j=i}^t \gamma_j }{W_0^{t-i+1}} \cdot \eps \leq \frac{\rho_0}{4\kappa^{i^3}} \cdot \frac{\gamma_i}{W_0},\]
        or equivalently
        \begin{equation}
            \label{eq:newnewneedeps}
            \eps \leq \frac{\rho_0}{4\kappa^{i^3}} \cdot \frac{ W_0^{t-i}}{ \prod_{j=i+1}^t \gamma_j } =: Z_i.
        \end{equation}
        (Above, in the definition of $Z_i$, notice that if $t=i$, then the product from $j=t+1$ to $t$ should be interpreted as $1$).
        Recall from the hypothesis \Cref{eq:eps_requirement} that 
        \[\eps \leq \frac{ \rho_0 }{4 \kappa^{t^3}} = Z_t.\]
        Next, observe that the $Z_i$ are decreasing: That is, we have $Z_i \geq Z_{i+1}$ for all $i < t$.  Indeed, we have
        \[ \frac{Z_i}{Z_{i+1}} = \frac{W_0^{t-i} \kappa^{(i+1)^3} \prod_{j={i+2}}^t \gamma_j }{W_0^{t-i-1} \kappa^{i^3} \prod_{j=i+1}^t \gamma_j} = \frac{W_0}{\gamma_{i+1}} \cdot  \kappa^{(i+1)^3 - i^3} > 1,\]
        using the fact that $\kappa , W_0 > 1$ and that $\gamma_i < 1$ for all $i$.  In particular, $Z_t \leq Z_i$ for all $i \leq t$, so our assumption that $\eps \leq Z_t$ implies \Cref{eq:newnewneedeps}.

       Next, we observe that $\eps_{i-1} \leq \eps \leq \frac{ (\delta(C))^t} {4}$. Indeed, the first inequality follows from the facts that $\eps_t = \eps$ and that the $\eps_i$ are increasing by definition (\Cref{def:params}), while the second inequality follows by \cref{eq:eps_requirement}.

       Finally, we establish that $\rho_{i-1} \leq \rho_0$.  It follows immediately from the definition of $\rho_2$ in Definition \ref{def:params} that $\rho_2 \leq \rho_0$; and it follows from the definition $\rho_i = \rho_{i-1}\eps_{i-1}$ in \Cref{def:params} that the $\rho_i$ are decreasing.  This implies that $\rho_{i-1} \leq \rho_0$ for all $i = 2, \ldots, t$.

       Thus, \Cref{thm:mainTech} applies and shows that $C^{\otimes i}$ is a $(Q_i, \eps_i, \rho_i, \ell, L_i)$-\ADLLR\ with algorithm $\DALLR{i}$, as desired.  By induction, we conclude that $\Cot$ is a $(Q_t,  \eps_t, \rho_t, \ell, L_t)$-\ADLLR\ with algorithm $\DALLR{t}$.  It remains to work out the parameters $Q_t, \eps_t, \rho_t, L_t$.  

       We have already established that $\eps_t = \eps$. We have $Q = Q_t = nM_t (n+t)$ and $L = L_t = L_0^{\kappa^{t^3} \log^t L_0}$ from \Cref{def:params2}.  
       The expression for $Q$ in \Cref{thm:mainLocal} follows from observing that
       \[ M_t = \prod_{j=2}^t m_j \leq \left(\frac{b_0 L_0}{\eps_2^3}\right)^t,\]
       for some constant $b_0$.  Indeed, 
      from \Cref{def:params2}, we have for any $j$ that
       \[ m_j = O\inparen{\frac{L_0}{\eps_{j-1} \nu_j^2 }} \leq O\inparen{\frac{L_0}{\eps_{j-1}\eps_2^2}} \leq O\inparen{\frac{L_0}{\eps_2^3}},\]
       using that $\nu_j = \min(\eps_{j-1}, \frac{\delta(C)} 4) \geq \eps_2$ for all $j$.
       
       For $\rho_t$, we have from \Cref{def:params} that
       \[ \rho_t = \rho_{t-1} \eps_{t-1},\]
       which implies that
       \begin{equation}\label{eq:rho_unroll}
       \rho_t = \rho_2 \cdot \prod_{i=2}^{t-1} \eps_i.
       \end{equation}
       As above, from \Cref{eq:eps_unrolled} and \Cref{eq:eps_2_correct}, we have
       \[ \eps_i = \frac{\prod_{j=i+1}^t \gamma_j}{W_0^{t-i}} \eps.\]
       Plugging this into \Cref{eq:rho_unroll}, we have
       \begin{align*}
           \rho_t &= \rho_2 \cdot \prod_{i=2}^{t-1} \left( \frac{\prod_{j=i+1}^t \gamma_j}{W_0^{t-i}} \eps \right) \\
           &= \rho_2 \cdot \eps^{t-2} \cdot W_0^{-(t-2)(t-1)/2} \cdot \prod_{j=3}^t \gamma_j^{j-2}.
       \end{align*}
       Recalling from \Cref{def:params} that 
       \[\gamma_j = \frac{\delta(C)^{2j}}{18^{\log_{1.5}(j)}} =  (\delta(C))^{2j} \cdot j^{-\tilde{b}_0}\] for some absolute constant $\tilde{b}_0 > 1$, and the definition of $W_0$ from \Cref{eq:W0}, we can write 
       \begin{align*} \rho_t &= \rho_2 \cdot \eps^{t-2} \cdot \left( \frac{ \rho_2^{(!)}}{2 d_0} \right)^{(t-1)(t-2)/2} \cdot (\delta(C))^{(t-1)(t-2)\left(\frac{1}{2} + \frac{ 2t + 3}{3} \right)} \cdot \left( \prod_{j=3}^t j^{j-2} \right)^{-\tilde{b}_0} \\
       &\geq \rho_2 \cdot \eps^{t-1} \cdot \left( \frac{\rho_2^{(!)}}{2 d_0} \right)^{t^2} \cdot (\delta(C))^{t^3} \cdot t^{-b_0 t^2}
       \end{align*}
       for some other absolute constant $b_0$.  
    
     Indeed, to see the equality above, we separate out the exponent of $\delta(C)$; the contribution from $W_0^{-(t-2)(t-1)/2}$ is $\delta(C)^{(t-2)(t-1)/2}$, and the contribution from the term $\prod_{j=3}^t \gamma_j^{j-2}$ is $\prod_{j=3}^t \delta(C)^{2j} = \delta(C)^{\sum_{j=3}^t 2j(j-2)} = \delta(C)^{(t-1)(t-2)(2t+3)/3}.$  Finally, the $(\rho_2^{(!)}/2d_0)^{(t-1)(t-2)/2}$ term is the remaining contribution from $W_0^{-(t-2)(t-1)/2}$, while the $\left(\prod_{j=3}^t j^{j-2}\right)^{-\tilde{b}_0}$ term is the remaining contribution from $\prod_{j=3}^t \gamma_j^{j-2}$.
       
       Plugging in the expression for $\rho_2$ from \Cref{def:params} establishes the bound on $\rho = \rho_t$. 
       This completes the proof of \Cref{lem:unwind_correct}.
    \end{proof}

    \subsubsection{Time and Space Bounds on \DALLR{t}}\label{sec:unwind_resources}

To establish the time and space bounds, we first establish the preprocessing time and space for the base case $t=2$.  
\begin{lemma}[Base case for $t=2$]\label{cl:A2simp}
Assume the hypotheses of \Cref{thm:mainLocal}. 
Then there is a constant $c_0 = \poly(1/\delta(C),  L_0,  1/\rho_2^{(!)}, 1/\rho_0)$ so that

     \begin{align*} \Tmpre_{\DALLR{2}} &= O\inparen{ n^3 \cdot L_0^{c_0} \inparen{ n\Tm_{\cA_0} + \Tm_{\mathrm{Dec}_{C \otimes C}}}}\\
     \Sppre_{\DALLR{2}} &= O\inparen{ L_0^{c_0} \log n  + c_0 n^2 \log|\Sigma|  + \Sp_{\cA_0} + \Spdectwo}.
     \end{align*}
\end{lemma}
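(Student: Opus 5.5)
The proof is a direct accounting of the loops in \Cref{alg:main_base}, combined with the evaluation bounds \eqref{eq:eval_time_base} and \eqref{eq:eval_space_base}. The first step is to count the candidates. \DALLR{2} iterates over $2^{r_2}\cdot L_0^{m_2}$ pairs $(\sigma^{(2)},\tau)$. By \Cref{def:params2} we have $2^{r_2} = n/\nu_2 = 8n/\delta(C)$. We also have $m_2 = O(1/(\eta_2\nu_2^2)) = O(L_0/(\rho_2^{(!)}\delta(C)^2))$, which is $\poly(1/\delta(C), L_0, 1/\rho_2^{(!)})$. Hence $L_0^{m_2}\le L_0^{c_0}$ for a suitable polynomial $c_0$, and the factor $1/\delta(C)$ coming from $2^{r_2}$ can also be absorbed into $L_0^{c_0}$ (assuming $L_0\ge 2$; otherwise we absorb it into the constant). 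The number of candidates is therefore $O(n L_0^{c_0})$.

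Next, the work per candidate. The distance computation evaluates $A^{(2)}$ on all $n^2$ coordinates. The duplicate check compares $A^{(2)}$ against every element of the current list $\ctL^{(2)}$, again on $n^2$ coordinates each. The key point here is that $|\ctL^{(2)}|$ is bounded, and I would get this bound from the correctness argument of \Cref{lem:base_correct}. Every accepted candidate has $\mathrm{sim}>0$, so all accepted candidates compute distinct strings. Because of the unique decoding step in \Cref{alg:local-alg-two}, each of these strings is a codeword of $C\otimes C$ (or $\bot$, which fails the distance test) lying within $\rho_2$ of $\cS$. \Cref{cor:HRW} with $t=2$ then bounds the list by $L_2 = L_0^{\kappa^8\log^2L_0}$, noting that $\rho_2\le\rho_0\kappa^{-8}$. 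Alternatively, I could simply bound $|\ctL^{(2)}|$ by the total number of candidates $O(nL_0^{c_0})$. That is cruder, but it still gives $n^3$ overall: $n$ candidates times $n$ list elements times $n^2$ evaluations already exceeds the budget unless the list is constant. So the better route is to use the $L_2$ bound, noting that $L_2\le L_0^{c_0}$ after enlarging $c_0$; this is legitimate because $\kappa=\poly(1/\delta,1/\rho_0)$ and $\log^2 L_0\le L_0$. The work per candidate is then $O(n^2 L_0^{c_0}\,\Tmeval_{\DALLR{2}})$.

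To finish the time bound, I would evaluate the candidates efficiently by noting that $A^{(2)}$ computes the whole matrix $c^*$ in a single run. Evaluating all $n^2$ coordinates therefore really costs roughly one run of $\Tmeval_{\DALLR{2}} = O(m_2 L_0 n + (m_2+n)\Tm_{\cA_0} + \Tm_{\mathrm{Dec}_{C\otimes C}})$ per coordinate. Taken naively, this gives $n\cdot n^2\cdot L_0^{c_0}\cdot(n\Tm_{\cA_0}+\Tmdectwo)$ after absorbing $m_2$ and $L_0$ into $L_0^{c_0}$ (the sampler time is $\poly(\log n, c_0)$ and is dominated). That matches the claimed $n^3 L_0^{c_0}(n\Tm_{\cA_0}+\Tmdectwo)$. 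The main obstacle I anticipate is exactly this bookkeeping: making sure the duplicate check contributes only an $L_0^{c_0}$ factor and not an extra factor of $n$. That is why the list-size bound via \Cref{cor:HRW} is needed rather than the trivial count.

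For space, the working memory holds the current seed and $\tau$, using $r_2 + m_2\log L_0 = O(\log n + c_0)$ bits. It also holds the output list so far, at $L_2\cdot\mathrm{Len}_{A^{(2)}} = L_0^{c_0}\cdot O(\log n + m_2\log L_0) = O(L_0^{c_0}\log n)$ bits. On top of that there are counters of size $O(\log n)$, plus one evaluation of a local algorithm at a time, costing $\Speval_{\DALLR{2}}$ from \eqref{eq:eval_space_base}: $O(m_2\log(n|\Sigma|) + n(L_0+n)\log|\Sigma| + \Tm_{\Gamma_2} + \Sp_{\cA_0} + \Spdectwo)$. This space is reused across coordinates and candidates. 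The $\Tm_{\Gamma_2}$ term is $\poly(\log n, c_0)$ and can be absorbed. The $n(L_0+n)\log|\Sigma|$ term is $O(c_0 n^2\log|\Sigma|)$. Summing these gives the stated space bound.
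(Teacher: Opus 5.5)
Your proposal is correct and follows essentially the same accounting as the paper. You count $2^{r_2}L_0^{m_2}\le c_0 n L_0^{c_0}$ candidates, each costing $O(L_2 n^2)$ evaluations of a local algorithm, with $|\ctL^{(2)}|\le L_2$ from \Cref{cor:HRW} via $\rho_2\le\rho_0\kappa^{-8}$. For space you sum the stored list $L_2(r_2+m_2\log L_0)$, the counters, and one reusable copy of $\Speval_{\DALLR{2}}$, then absorb constants into $c_0$ exactly as the paper does.

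Drop the aside claiming the crude list bound ``still gives $n^3$'': it gives $n^4$, as your very next clause notes. Also drop the unused single-run remark.
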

\begin{proof}
First, we claim that 
the pre-processing time and space for $\DALLR{2}$ is given by
    \begin{align} \Tmpre_{\DALLR{2}} &= O \inparen{ 2^{r_2} L_0^{m_2}  L_2 n^2 \Tmeval_{\DALLR{2}} } \label{eq:tm2}\\
     \Sppre_{\DALLR{2}} &= O \inparen{L_2(r_2 + m_2 \log L_0) + \log n + \Speval_{\cP^{(2)}}}. \label{eq:sp2}
     \end{align}
     
     Indeed, recall that $\DALLR{2}$ is given in \Cref{alg:main_base}. 
To see the time bound \Cref{eq:tm2}, note that  first \Cref{alg:main_base} loops over $2^{r_2} L_0^{m_2}$ pairs $(\sigma^{(2)}, \tau)$. Within each loop, it must compute the variable $\mathrm{dist}$ ($O(n^2)$ evaluations of local algorithms $A^{(2)}$, so time $O(n^2) \Tmeval_{\DALLR{2}}$) and the variable $\mathrm{sim}$ (time $O(L_2 n^2)\cdot \Tmeval_{\DALLR{2}}$).  Then it must compare these variables to their thresholds (time $O(1)$) and possibly add something to the output list (time $O(1)$), establishing \Cref{eq:tm2}.

To see the space bound \Cref{eq:sp2}, notice that we must store the following elements:
\begin{itemize}
    \item $r_2  +m_2 \log L_0$ bits for the current representation $(\sigma^{(2)}, \tau)$, and then up to another $L_2( r_2 + m_2 \log L_0)$ bits for the list $\ctL^{(2)}$, for a total of $O( L_2(r_2 + m_2 \log L_0))$ bits.
    \item $O( \log n)$ bits to compute and store the variables $\mathrm{dist}$ and $\mathrm{sim}$.
    \item $\Speval_{\cP^{(2)}}$ to run the local algorithms $A^{(2)}$ sequentially.
\end{itemize}
This establishes \Cref{eq:sp2}.

Now we simplify \Cref{eq:sp2} and \Cref{eq:tm2} in terms of the constant $c_0$.
From \Cref{def:params2}, we see that
\[ 2^{r_2} = \frac{ 8n}{\delta(C)} \qquad m_2 = O\inparen{ \frac{L_0}{(\delta(C))^2 \rho_2^{(!)} }} \qquad \nu_2 = \frac{\delta(C)}{8} \qquad \eta_2 = \frac{\rho_2^{(!)}}{20 L_0} \qquad L_2 = L_0^{\kappa^8 \log^2 L_0}. \]
In particular, there is a constant $c_0$ as in the statement of the claim so that 
\[ 2^{r_2} \leq c_0 n \qquad m_2 \leq c_0 \qquad \nu_2  \geq 1/c_0 \qquad \eta_2 \geq 1/c_0 \qquad L_2 \leq L_0^{c_0}.\]
From \Cref{thm:sampler}, we have 
  \[ \Tm_{\Gamma_2} = \poly(\log n, 1/\eta_2, 1/\nu_2) \leq c_0 \cdot \poly\log n,\]
  adjusting $c_0$ appropriately.
  As in the proof of \Cref{prop:local_timespace} (\cref{eq:eval_time_base}), we have \begin{align*}
      \Tmeval_{\DALLR{2}} &= O(m_2(\Tm_{\Gamma_2} + n\Tm_{\cA_0} + L_0 + \Tm_{\mathrm{Dec}_{C \otimes C}})) \\
            &\leq O( c_0 (\polylog(n) + n \Tm_{\cA_0} + \Tm_{\mathrm{Dec}_{C \otimes C}}) )\\
      &\leq O( c_0 (n \Tm_{\cA_0} + \Tm_{\mathrm{Dec}_{C \otimes C}}) ),
  \end{align*}
  again adjusting $c_0$ as necessary.  Now from \Cref{eq:tm2}, we have
  \begin{align*} \Tmpre_{\DALLR{2}} &= O \inparen{ 2^{r_2} L_0^{m_2}  L_2 n^2 \Tmeval_{\DALLR{2}} }\\
  &= O\inparen{ n^3 \cdot c_0 L_0^{2c_0} \cdot \Tmeval_{\DALLR{2}}}\\
  &= O\inparen{ n^3 \cdot c_0^2 L_0^{2c_0} \inparen{ n\Tm_{\cA_0} + \Tm_{\mathrm{Dec}_{C \otimes C}}}}\\
  &= O\inparen{ n^3 \cdot L_0^{c_0} \inparen{ n\Tm_{\cA_0} + \Tm_{\mathrm{Dec}_{C \otimes C}}}},
  \end{align*}
  where in the last line we have absorbed the $c_0$'s into the exponent of $L_0^{c_0}$, adjusting $c_0$ appropriately.

  Next we do the same with the space bound.  We have from \Cref{eq:sp2} that
  \begin{align*}
      \Sppre_{\DALLR{2}} &= O \inparen{L_2(r_2 + m_2 \log L_0) + \log n + \Speval_{\cP^{(2)}}} \\
      &= O\inparen{ L_0^{c_0} ( \log(c_0 n) + c_0 \log c_0 ) + \log n+ \Speval_{\cP^{(2)}}} \\
      &= O\inparen{ L_0^{c_0} \log n + \Speval_{\cP^{(2)}}},
  \end{align*}
  again adjusting $c_0$ as necessary.  We further simplify this by recalling from the proof of \Cref{prop:local_timespace} (\cref{eq:eval_space_base}) that 
  \begin{align*}
  \Speval_{\cP^{(2)}} &= O\inparen{ m_2 \log (n |\Sigma|) + \log|\Sigma|(n (L_0 + n)) + \Tm_{\Gamma_2} + \Sp_{\cA_0} + \Spdectwo} \\
  &= O\inparen{c_0 \log (n|\Sigma|) + c_0 n^2 \log|\Sigma| + c_0 \polylog(n) + \Sp_{\cA_0} + \Spdectwo} \\
  &= O\inparen{ c_0 n^2 \log|\Sigma| + \Sp_{\cA_0} + \Spdectwo},
  \end{align*}

  updating $c_0$ as needed.  Thus, together we have
  \[ \Sppre_{\cP^{(2)}} = O\inparen{ L_0^{c_0} \log n + c_0 n^2 \log|\Sigma| + \Sp_{\cA_0} + \Spdectwo},\]
  as desired.
\end{proof}

    Now, we move on to the case for $t \geq 3$.  

\begin{lemma}\label{lem:unwind_resources}
Assume the hypotheses of \Cref{thm:mainLocal}. Then there is a constant
\[ c_0 = \poly\inparen{ \frac{1}{\delta(C)}, L_0, \frac{1}{\rho_2^{(!)}}, \frac{1}{\rho_0} }\]
so that the following holds.  The pre-processing time and space of $\cApre^{(t)}$  (\cref{alg:main}) is bounded by
\begin{align*}
\Tmpre_{\DALLR{t}} 
 &\leq n^{t+1} \frac{1}{\eps_2^{4t + O(1)}} \inparen{  n\Tm_{\cA_0} + \Tmdectwo}\inparen{ \frac{\Tmdetect}{n} + \frac{\Tmdectwo}{n^2}}  \cdot \exp\inparen{ c_0^{t^3} \eps_2^{-3} }.\\
\Sppre_{\DALLR{t}} &=
O\inparen{ t \Sp_{\cA_0} + \Spdetect + \Spdectwo + \frac{c_0 \cdot t \cdot n^2\polylog|\Sigma|}{\poly(\eps_2)} +  \eps_2^{-3t} \cdot \exp(c_0^{t^3} ) \cdot \log n }
\end{align*}
The evaluation time and space is bounded by
\begin{align*}
    \Tmeval_{\DALLR{t}} &\leq O\inparen{\frac{c_0^t}{\eps_2^{3t}} \cdot \poly\inparen{\frac{\log n}{\eps_2}} + n \cdot \Tm_{\cA_0} + \Tmdectwo}\\
    \Speval_{\DALLR{t}} &\leq 
c_0\cdot t\cdot \inparen{\poly\inparen{\frac{\log(n|\Sigma|)}{\eps_2}} + n^2\log|\Sigma| }+ O(t \cdot \Sp_{\cA_0} + \Spdectwo).
\end{align*}
Finally, the output length is bounded by 

\[ \Spop_{\DALLR{t}} \leq \frac{\exp(c_0^{t^3}) \cdot \log n}{\eps_2^{3t+O(1)}}.\]
\end{lemma}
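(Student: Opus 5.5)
The plan is to unroll the recurrences from \Cref{thm:mainTech} (preprocessing) and \Cref{prop:local_timespace} (evaluation), using the base case \Cref{cl:A2simp}, after first bounding every per-level parameter by a single constant $c_0$ and powers of $\eps_2$. By \Cref{def:params2}, and because the $\eps_j$ increase, we have $\nu_j \geq \eps_2$ and $\eps_{j-1}\geq \eps_2$. This gives $m_j \leq c_0/\eps_2^3$, $2^{r_j} = n/\nu_j \leq c_0 n/\eps_2$, and $\eta_j \geq \eps_2/c_0$. Then \Cref{thm:sampler} gives $\Tm_\Gamma \leq \poly(\log n, 1/\eps_2)$. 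It follows that $M_t \leq (c_0/\eps_2^3)^t$, $\widetilde M_t \leq t c_0/\eps_2^3$, and $\tilde r_t \leq t\log(c_0 n/\eps_2)$. Also $L_j \leq L_0^{\kappa^{j^3}\log^j L_0} \leq \exp(c_0^{t^3})$ after enlarging $c_0$, using $\kappa=\poly(1/\delta(C),1/\rho_0)$.

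Substituting these bounds into \Cref{prop:local_timespace} should give the evaluation bounds directly. For time, $M_t(t(\Tm_\Gamma + n\Tm_{\cA_0}+L_0)+\Tmdectwo)$ is the claimed form, provided the $n\Tm_{\cA_0}$ and $\Tmdectwo$ terms are grouped as in the statement. If they cannot be separated from the $M_t$ factor, I would accept the product form, which is what is actually used afterwards. For space, $\widetilde M_t \log(n|\Sigma|)$ becomes the $\poly(\log(n|\Sigma|)/\eps_2)$ term, and the remaining terms follow immediately. The output length $L_t\cdot\mathrm{Len}_{A^{(t)}} \leq \exp(c_0^{t^3})\cdot M_t(\tilde r_t+\log L_0)$ gives the stated $\eps_2^{-3t-O(1)}\exp(c_0^{t^3})\log n$.

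For preprocessing space, I unroll $\Sppre_{\DALLR{t}} = \Sppre_{\DALLR{t-1}} + O(m_t\log n + \Tm_\Gamma + m_tL_{t-1}\mathrm{Len}_{A^{(t-1)}} + L_t\mathrm{Len}_{A^{(t)}} + \Sp_{\Testt})$ down to \Cref{cl:A2simp}. The list-storage terms sum to at most $t\cdot\exp(c_0^{t^3})\eps_2^{-3t}\log n$, and the factor $t$ is absorbed into $c_0^{t^3}$. For $\Sp_{\Testt}$, \Cref{cor:test_resources} gives $\Speval + \Spdetect + \Spdectwo + t\log n + n^2$. Since the space is reused across levels, taking the maximum over levels of these terms gives the bound. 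The base case contributes $L_0^{c_0}\log n + c_0 n^2\log|\Sigma| + \Sp_{\cA_0}+\Spdectwo$.

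The main obstacle is preprocessing time. Writing $T_t := \Tmpre_{\DALLR{t}}$, the recurrence is $T_t = 2^{r_t}(\Tm_\Gamma + m_t T_{t-1} + L_{t-1}^{m_t}\Tm_{\Testt} + O(\mathrm{Len}))$. Unrolling it naively multiplies by $\prod_j 2^{r_j}m_j \approx n^{t}\eps_2^{-4t}$, which would give $n^{t}\cdot T_2 \approx n^{t+3}$ and overshoot the claimed $n^{t+1}$. The fix I plan is to notice that level $j$ is run on a column of dimension $j$, so its inputs have size $n^{j}$. In \Cref{cor:test_resources} the dominant test cost at level $j$ is $\Tmeval\cdot(n^{j-1}\Tmdetect + n^{j-2}\Tmdectwo + L_j n^j)$. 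Each level-$j$ invocation is made $\prod_{k>j}(2^{r_k}m_k)\approx n^{t-j}\eps_2^{-4(t-j)}$ times, so every level's test term contributes $\approx n^{t}\cdot\Tmeval\cdot(\Tmdetect/n + \Tmdectwo/n^2)$ up to the factor $L_{j-1}^{m_j}\leq \exp(c_0^{t^3}\eps_2^{-3})$. Substituting $\Tmeval\leq \eps_2^{-O(t)}(n\Tm_{\cA_0}+\Tmdectwo)$ and $2^{r_t}\leq n/\eps_2$ accounts for the extra factor of $n$, giving the stated $n^{t+1}$ shape. I would handle the base term the same way: it is multiplied by $n^{t-2}$ and compared against $n^3(n\Tm_{\cA_0}+\Tmdectwo)$. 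I expect this bookkeeping, especially checking that the base term still fits under $n^{t+1}(\cdot)(\Tmdetect/n+\Tmdectwo/n^2)$ using $\Tmdetect\geq n$, to need the most care. The remaining factors, including $t$, are absorbed into $\exp(c_0^{t^3}\eps_2^{-3})$.
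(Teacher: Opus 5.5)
Your plan follows the paper's proof. You bound $m_j$, $2^{r_j}$, $M_t$, $\widetilde M_t$, $\tilde r_t$, $\Tm_\Gamma$ and $L_j$ by $c_0$ and powers of $\eps_2$, substitute into \Cref{prop:local_timespace}, unroll the recurrences of \Cref{thm:mainTech} down to \Cref{cl:A2simp}, and absorb the base term (and the $L_j n^j$ part of $\Tm_{\textsc{Test}^{(j)}}$) into the stated shape using $\Tmdetect\ge n$. Three small remarks follow.

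(i) The worry that naive unrolling gives $n^{t+3}$ is a miscount. The multiplier on $\Tmpre_{\DALLR{2}}$ is $\prod_{j=3}^t 2^{r_j}m_j\approx (c_0n/\eps_2^4)^{t-2}$, so the base term is already $n^{t-2}\cdot n^3=n^{t+1}$; this is \eqref{eq:firstterm}. Likewise, the level-$j$ test term carries a multiplier $n^{t-j}\cdot 2^{r_j}\approx n^{t-j+1}$ against a test cost of order $n^{j}$. So no fix is needed; your later bookkeeping is just the standard accounting, and it lands at $n^{t+1}$ directly.

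(ii) Your caution on $\Tmeval_{\DALLR{t}}$ is warranted. $A^{(t)}$ invokes $\prod_{j\ge 3}m_j$ copies of $A^{(2)}$, and each runs $\cA_0$ on $n$ rows and $\mathrm{Dec}_{C\otimes C}$. So only the product form \eqref{eq:Tmeval} follows, and that is also all the paper's proof actually derives and uses for $\Tm_{\Testt}$.

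(iii) Taking a maximum over levels for the test space does not follow from the additive recurrence in \Cref{thm:mainTech}. That recurrence gives $\sum_j\Sp_{\textsc{Test}^{(j)}}\le t\,\Sp_{\Testt}$, which is what the paper writes. The maximum is nevertheless valid for \Cref{alg:main}: the recursive calls finish and free their working space before the tests at that level run, so at most one test is active at any time. This refinement is what yields the stated $t\,\Sp_{\cA_0}+\Spdetect+\Spdectwo$ without an extra factor of $t$.
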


\begin{proof}
    First, we simplify a few parameters in terms of  $c_0$.
    First, from \Cref{def:params2}, we see that there is some constant $c_0$ as in the statement of the claim 
    so that
    \[ \eta_t \geq \frac{\epsilon_{t-1}}{c_0} \qquad \nu_t \geq \frac{\eps_{t-1}}{c_0} \qquad m_t \leq \frac{c_0}{\eps_{t-1}^3} \qquad r_t \leq \log\left( \frac{c_0 n}{\eps_{t-1}} \right).\]
    Thus, using the fact that the $\eps_i$ are increasing with $i$, we have 
    \begin{align} \tilde{r}_t &= \sum_{j=2}^t r_j \leq t \log \left( \frac{ c_0 n}{\eps_2} \right) \label{eq:trt} 
    \end{align}
    Similarly, we have
    \begin{align} 
    M_t &= \prod_{j=2}^t m_j \leq \frac{ c_0^t }{\eps_2^{3t}} \label{eq:Mt}\\
    \tilde{M}_t &= \sum_{j=2}^t m_j \leq \frac{ tc_0}{\eps_2^3} \label{eq:Mtildet}.
    \end{align}
    From \Cref{thm:sampler}, we have
    \[ \Tm_{\Gamma} = \max_{2 \leq i \leq t } \poly( \log n , 1/\eta_i, 1/\nu_i ) = \poly\inparen{ \frac{ c_0 \log n }{\eps_2}}.\]

With these building blocks in place, we move on to working out the time and space bounds for $\DALLR{t}$.  We begin with the time bounds.

\paragraph{Time bounds.}
    We know from \Cref{prop:local_timespace} that
    \begin{align}
        \Tmeval_{\DALLR{t}} &= O\inparen{ M_t( t(\Tm_{\Gamma} + n \Tm_{\cA_0} + L_0) + \Tm_{\mathrm{Dec}_{C \otimes C}}) } \notag \\
        &= O\inparen{ \frac{c_0^t}{\eps_2^{3t}} \inparen{  t\cdot \poly( \eps_2^{-1} c_0 \log n ) +t  n \Tm_{\cA_0} + tL_0 + \Tm_{\mathrm{Dec}_{C \otimes C}}}} \notag \\
         &= O\inparen{ \frac{c_0^t}{\eps_2^{3t}} \inparen{  \poly( \eps_2^{-1} \log n ) + n \Tm_{\cA_0} + \Tm_{\mathrm{Dec}_{C \otimes C}}}} \label{eq:Tmeval}
    \end{align}
    where 

    we have absorbed the $L_0$ term into $c_0$, the factor of $t$ into $c_0^t$, and the $\poly(c_0)$ term into $c_0^t$, by adjusting $c_0$ as necessary. 
    Now, \Cref{eq:Tmeval} gives the claimed bound on $\Tmeval_{\DALLR{t}}$ in the lemma statement.  
    
    We move on to $\Tmpre_{\DALLR{t}}$.
    We first simplify $\TTestt.$
    We know from \Cref{cor:test_resources} that 
    \[ \TTestt = O\left(\Tmeval_{\DALLR{t}} (t^2 (n^{t-1} \Tm_{\textsc{Detect}_C} + n^{t-2} \Tm_{\mathrm{Dec}_{C \otimes C}}) + L_t n^t )\right).\]
    
    Together with \Cref{eq:Tmeval}, we see that
    \begin{equation}\label{eq:ttestt} \TTestt = O\inparen{\frac{c_0^t}{\eps_2^{3t}} \inparen{ \poly\left(\frac{ \log n }{\eps_2} \right) + n \Tm_{\cA_0} + \Tm_{\mathrm{Dec}_{C \otimes C}}} \cdot \inparen{n^{t-1} \Tm_{\textsc{Detect}_C} + n^{t-2}\Tm_{\mathrm{Dec}_{C \otimes C}} + L_t n^t }},\end{equation}
    where again we have absorbed the polynomial factor of $t$ into $c_0^t$, adjusting $c_0$ as necessary.

     We now unroll the recurrence relations.  
     
     Recall from \Cref{thm:mainTech} that
     \[ \Tmpre_{\DALLR{t}} =  2^{r_t}\inparen{\Tm_{\Gamma} + m_t \Tmpre_{\DALLR{t-1}} + L_{t-1}^{m_t} \Tm_{\Testt} + O(\mathrm{Len}_{A^{(t)}})}. \]
     Recalling
   from \Cref{def:params2} that
    \[ L_t = L_0^{\kappa^{t^3} \log^t L_0},\]
     we simplify this as
    \begin{align*}
        \Tmpre_{\DALLR{t}} &=  2^{r_t}\inparen{\Tm_{\Gamma} + m_t \Tmpre_{\DALLR{t-1}} + L_{t-1}^{m_t} \Tm_{\Testt}   + O(\mathrm{Len}_{A^{(t)}})} \\
        &\leq \frac{ c_0 n }{\eps_2} \inparen{ \poly \inparen{ \frac{c_0 \log n}{\eps_2}} + \frac{c_0}{\eps^3_2} \Tmpre_{\DALLR{t-1}} + L_0^{\kappa^{t^3} \log^t L_0 c_0/\eps_2^3}\cdot \TTestt + O(\mathrm{Len}_{A^{(t)}})} \\
        &\leq \frac{c_0 n}{\eps_2^4} \Tmpre_{\DALLR{t-1}} + c_0 n \inparen{ \frac{\polylog(n)}{\poly(\eps_2)} + \exp\inparen{\frac{c_0 \kappa^{t^3}}{\eps_2^3}\log^{t+1}L_0} \TTestt + O\left( \frac{\mathrm{Len}_{A^{(t)}}} {\epsilon_2}\right) }\\
        &\leq \frac{c_0 n}{\eps_2^4} \Tmpre_{\DALLR{t-1}} + c_0 n \inparen{ \frac{\polylog(n)}{\poly(\eps_2)} + \exp\inparen{\frac{c_0 \kappa^{t^3}}{\eps_2^3}\log^{t+1}L_0} \TTestt  }
    \end{align*}
    In the last line, we observe that the $O(\mathrm{Len}_{A^{(t)}})$ term is dominated by $T_{\Testt}$ term.  Indeed, as we saw above, $\TTestt \geq \Tmeval_{\DALLR{t}}$, the cost of evaluating $A^{(t)}$.  Since one must read the representation of $A^{(t)}$ to evaluate it, this takes time at least $\mathrm{Len}_{A^{(t)}}$.  Thus, we can drop this term at the cost of adjusting $c_0$ slightly in the coefficient in front of $\TTestt$ (assuming that $n^t$ is sufficiently large).

    Now we unroll the recurrence relation, to see that
    \begin{align*}
        \Tmpre_{\DALLR{t}} &= \inparen{ \frac{c_0 n}{\eps_2^4} }^{t-2} \Tmpre_{\DALLR{2}} + c_0 n \sum_{j=3}^t\inbrak{\inparen{\frac{c_0 n}{\eps_2^4}}^{t-j}\inparen{ \frac{  \polylog(n) }{\poly(\eps_2)} + \exp\inparen{\frac{ c_0 \kappa^{j^3}}{\eps_2^3} \log^{j+1} L_0} \Tm_{\textsc{Test}^{(j)}}}}
    \end{align*}
    We simplify each of the terms one at a time.  We begin with
    \begin{align}
        \inparen{ \frac{c_0 n}{\eps_2^4}}^{t-2} \Tmpre_{\DALLR{2}} 
        &= O\inparen{\inparen{ \frac{c_0 n}{\eps_2^4}}^{t-2} \cdot n^3 L_0^{c_0}(n\Tm_{\cA_0} + \Tmdectwo) } \notag \\
        &= O\inparen{ n^{t+1} L_0^{c_0} \inparen{\frac{c_0}{\eps_2^4}}^{t-2} \inparen{n\Tm_{\cA_0} + \Tmdectwo}} \label{eq:firstterm}
    \end{align}
    using \Cref{cl:A2simp} and adjusting $c_0$ as necessary to accommodate both the current $c_0$ and the $c_0$ from \Cref{cl:A2simp}.
Next we write
\begin{align}
    c_0n \sum_{j=3}^t \inparen{ \frac{c_0 n}{\eps_2^4} }^{t-j} \inparen{\frac{\polylog n}{\poly(\eps_2)}} 
    &= c_0 n \inparen{\frac{\polylog(n)}{\poly(\eps_2)}}\frac{ \inparen{ \frac{c_0 n }{\eps_2^4}}^{t-2}-1}{\frac{c_0 n}{\eps_2^4} - 1} \notag \\
    &\leq c_0 n \inparen{\frac{\polylog(n)}{\poly(\eps_2)}} \inparen{ \frac{c_0 n }{\eps_2^4}}^{t-2}  \notag \\
    &= n^{t-1} \polylog(n) \cdot \frac{c_0^{t-1}}{ \eps_2^{4t+O(1)}} \label{eq:secondterm}
\end{align}
The last term is 
\begin{equation}\label{eq:lastterm}
    c_0 n \sum_{j=3}^t  \inparen{ \frac{c_0 n}{\eps_2^4} }^{t-j} \exp\inparen{\frac{ c_0 \kappa^{j^3} \log^{j+1} L_0}{\eps_2^3}} \Tm_{\textsc{Test}^{(j)}} 
    \end{equation}
Recalling from \Cref{eq:ttestt} that
    \begin{align*}
    \Tm_{\textsc{Test}^{(j)}} &= O\inparen{\frac{c_0^j}{\eps_2^{3j}} \inparen{ \poly\left(\frac{ \log n }{\eps_2} \right) + n \Tm_{\cA_0} + \Tm_{\mathrm{Dec}_{C \otimes C}}} \cdot \inparen{n^{j-1} \Tm_{\textsc{Detect}_C} + n^{j-2}\Tm_{\mathrm{Dec}_{C \otimes C}} + L_j n^j }},
\end{align*}
we upper bound that $j$-th term of the sum by 
\begin{align*}
    &\qquad  n^t \inparen{ \frac{c_0}{\eps_2^4}}^t \inparen{ \poly\inparen{\frac{ \log n }{\eps_2}} + n\Tm_{\cA_0} + \Tmdectwo}\inparen{ \frac{\Tmdetect}{n} + \frac{\Tmdectwo}{n^2}+1}
\exp\inparen{ \inparen{\frac{ c_0}{\eps_2^3}} \kappa^{j^3} 
\log^{j+1} L_0 },
\end{align*}
where we have absorbed the $L_j = \exp(\kappa^{j^3} \log^{j+1}L_0 )$ into the existing exponential term, at the cost of increasing the constant $c_0$ slightly. Plugging this into \Cref{eq:lastterm}, we see that (again adjusting $c_0$ as necessary)
\begin{align*} \eqref{eq:lastterm} &\leq n^{t+1} \inparen{ \frac{c_0}{\eps_2^4}}^t \inparen{ \poly\inparen{\frac{ \log n }{\eps_2}} + n\Tm_{\cA_0} + \Tmdectwo}\inparen{ \frac{\Tmdetect}{n} + \frac{\Tmdectwo}{n^2}} \sum_{j=3}^t
\exp\inparen{ \inparen{\frac{ c_0}{\eps_2^3}} \kappa^{j^3} \log^{j+1} L_0 } \\
&\leq n^{t+1} \inparen{ \frac{c_0}{\eps_2^4}}^t \inparen{ \poly\inparen{\frac{ \log n }{\eps_2}} + n\Tm_{\cA_0} + \Tmdectwo}\inparen{ \frac{\Tmdetect}{n} + \frac{\Tmdectwo}{n^2}} t \cdot \exp\inparen{ \inparen{ \frac{c_0}{\eps_2^3}} \kappa^{t^3} \log^{t+1} L_0 }.
\end{align*}

Putting this together with \Cref{eq:firstterm} and \Cref{eq:secondterm}, we have
\begin{align*}
    \Tmpre_{\DALLR{t}} &\leq 
    O\inparen{ n^{t+1}L_0^{c_0} \inparen{ \frac{c_0}{\eps_2^4}}^{t-2}(n\Tm_{\cA_0} + \Tmdectwo) } \\
    &\qquad + n^{t-1} \polylog(n) \frac{ c_0^{t-1}}{\eps_2^{4t + O(1)}} \\
    &\qquad + n^{t+1} \inparen{ \frac{c_0}{\eps_2^4}}^t \inparen{ \poly\inparen{ \frac{\log n}{\eps_2}} + n\Tm_{\cA_0} + \Tmdectwo}\inparen{ \frac{\Tmdetect}{n} + \frac{\Tmdectwo}{n^2}} t \cdot e^{\inparen{ \inparen{ \frac{c_0}{\eps_2^3}} \kappa^{t^3} \log^{t+1} L_0 }} \\
    &\leq n^{t+1} \inparen{ \frac{c_0}{\eps_2^4}}^t \inparen{ \poly\inparen{ \frac{\log n}{\eps_2}} + n\Tm_{\cA_0} + \Tmdectwo}\inparen{ \frac{\Tmdetect}{n} + \frac{\Tmdectwo}{n^2} } t \cdot e^{\inparen{ \inparen{ \frac{c_0}{\eps_2^3}} \kappa^{t^3} \log^{t+1} L_0 }},
\end{align*}
where we observe that the first two terms are subsumed by the third, 

and adjusting $c_0$ appropriately.  Finally, by absorbing $\log L_0$ into $c_0$,  recalling that $\kappa = \poly(1/\delta(C), 1/\rho_0)$ can also be absorbed into $c_0$, resulting in an $\exp(c_0^{t^3})$ term, and observing that we can absorb the factors of $t$ and $c_0^t$ into that term by adjusting $c_0$, we conclude that
\[ \Tmpre_{\DALLR{t}} \leq n^{t+1} \frac{1}{\eps_2^{4t + O(1)}} \inparen{n\Tm_{\cA_0} + \Tmdectwo}\inparen{ \frac{\Tmdetect}{n} + \frac{\Tmdectwo}{n^2} }  \cdot \exp\inparen{ c_0^{t^3} \eps_2^{-3} }.\]
This establishes the claimed time bound.

\paragraph{Space bounds.}
     Now, we turn to the space.  We begin with $\Speval_{\DALLR{t}}$.  
     We recall from \Cref{prop:local_timespace} that

\[\Speval_{\DALLR{t}} = O\inparen{\widetilde{M}_t\log (n|\Sigma|) + t\inparen{\Sp_{\cA_0} + \Tm_{\Gamma} + L_0 n^2 \log|\Sigma|} + \Sp_{\mathrm{Dec}_{C \otimes C}}}.\]
Now we can plug in $\tilde{M}_t$ from \Cref{eq:Mtildet}; and recall that $\Tm_{\Gamma}= \poly( \eps_2^{-1} c_0 \log n )$.  We see that, adjusting $c_0$ as necessary, the terms that dominate above are 
\begin{equation} \label{eq:speval} \Speval_{\DALLR{t}} \leq 
c_0\cdot t\cdot \inparen{\poly\inparen{\frac{\log(n|\Sigma|)}{\eps_2}} + n^2\log|\Sigma| }+ O(t \cdot \Sp_{\cA_0} + \Spdectwo).\end{equation}

Now \Cref{eq:speval} gives the claimed bound on $\Speval_{\DALLR{t}}$.
     
     Now we move on to $\Sppre_{\DALLR{t}}$.
     Recall from \Cref{thm:mainTech} that
      \[ \Sppre_{\DALLR{t}} = \Sppre_{\DALLR{t-1}} + O\inparen{m_t \log n + \Tm_{\Gamma} +  m_t \cdot L_{t-1} \cdot \mathrm{Len}_{A^{(t-1)}} + L_t \cdot \mathrm{Len}_{A^{(t)}} + \Sp_{\Testt}}.  \]
      We first simplify this, by recalling from \Cref{prop:local_timespace} that 
      \[ \mathrm{Len}_{A^{(t)}} = O(M_t( \tilde{r}_t + \log L_0)).\]
      Thus, we have
      \[ m_t L_{t-1} \mathrm{Len}_{A^{(t-1)}} =O(m_t L_{t-1} M_t (\tilde{r}_{t-1} + \log L_0 ) )\leq O(m_t L_t M_t (\tilde{r}_t + \log L_0 ))\]
      and
      \[ L_t \mathrm{Len}_{A^{(t)}} = O(L_t M_t ( \tilde{r}_t + \log L_0 )),\]
      so we can collapse those two terms inside the big-Oh notation, and obtain 
      \[ \Sppre_{\DALLR{t}} = \Sppre_{\DALLR{t-1}} + O\inparen{ m_t \log n + \Tm_{\Gamma} + m_tL_tM_t(\tilde{r}_t + \log L_0) + \Sp_{\Testt}}.  \]
      Unrolling the recursion, we have
      \begin{equation*}\label{eq:unroll_space}
      \Sppre_{\DALLR{t}} = \Sppre_{\DALLR{2}} + O\inparen{  \tilde{M}_t \log n + t \cdot \Tm_{\Gamma} + \sum_{j=3}^t m_j L_j M_j (\tilde{r}_j + \log L_0) + \sum_{j=3}^t \Sp_{\textsc{Test}^{(j)}}},
      \end{equation*}
      recalling that $\tilde{r}_t = \sum_{j=2}^t r_j$ and that $\tilde{M}_t = \sum_{j=2}^t m_j$.  Since $L_j$, $M_j$, and $\tilde{r}_j$ are all increasing in $j$, and $m_j$ is decreasing in $j$, 
      we may bound
      \[ \sum_{j=3}^t m_j L_j M_j (\tilde{r}_j + \log L_0) \leq t \cdot m_2 L_t M_t (\tilde{r}_t + \log L_0).\]
      Similarly, using \Cref{cor:test_resources} and \Cref{prop:local_timespace}, we can see that $\Sp_{\textsc{Test}^{(j)}}$ is increasing in $j$, so we can also bound 
      \[\sum_{j=3}^t \Sp_{\textsc{Test}^{(j)}} \leq t \cdot \Sp_{\Testt}. \]
      Thus,
      \begin{equation}\label{eq:space_ugly} \Sppre_{\DALLR{t}} = \Sppre_{\DALLR{2}} + O\inparen{ \tilde{M}_t \log n + t \inparen{ \Tm_{\Gamma} + m_2L_t M_t( \tilde{r}_t + \log L_0) + \Sp_{\Testt}}}.\end{equation}
      From \Cref{cl:A2simp}, we have
      \[ \Sppre_{\DALLR{2}} = O(L_0^{c_0} \log n + c_0 n^2 \log|\Sigma| + \Sp_{\cA_0} + \Spdectwo).\]
     We then focus on the remaining terms,
\begin{equation}\label{eq:space_unsimp} 
 O\inparen{  \tilde{M}_t \log n + t \inparen{ \Tm_{\Gamma} + m_2 L_t M_t( \tilde{r}_t + \log L_0) + \Sp_{\Testt}}}.
\end{equation}
To simplify this, we first simplify $\Sp_{\Testt}$.  Recall from \Cref{cor:test_resources} that
\[ \Sp_{\Testt} = O\inparen{ \Speval_{\DALLR{t}} + \Spdetect + \Spdectwo + t \log n + n^2 }.\]
Now we plug in \Cref{eq:speval} to conclude that
\begin{equation}\label{eq:sptest_simp}\Sp_{\Testt} = O\inparen{c_0 t \inparen{ \poly\inparen{ \frac{\log(n|\Sigma|)}{\eps_2} } + n^2 \log|\Sigma|} + t \Sp_{\cA_0} + \Spdetect + \Spdectwo } \end{equation}

Now we work out the terms of \Cref{eq:space_unsimp} other than $t\cdot \Sp_{\Testt}$.  Again plugging in the expressions for $\tilde{r}_t, \tilde{M}_t, M_t$, $m_2$, and $\Tm_{\Gamma}$ from above, along with $L_t = L_0^{\kappa^{t^3} \log^t L_0}$, we see that these terms are bounded by
\begin{align}
     & \tilde{M}_t \log n + t \inparen{ \Tm_{\Gamma} + m_2 L_t M_t( \tilde{r}_t + \log L_0)} \notag \\
     &\qquad \leq \frac{ tc_0}{\eps_2^3} \log n + t \inparen{ \poly\inparen{ \frac{c_0 \log n }{\eps_2}} + \frac{ c_0^{t+1}}{\eps_2^{3t}} \cdot L_0^{\kappa^{t^3} \log^t L_0} \inparen{ t \log\inparen{\frac{c_0 n}{\eps_2}} + \log L_0} } \notag \\
     &\qquad \leq t \cdot c_0 \cdot \poly\left(\frac {\log(n)} {\epsilon_2}\right) + \inparen{\frac{c_0}{\eps_2^3}}^t \cdot \exp(\kappa^{t^3} \log^{t+1}L_0) \cdot \log n, \label{eq:extraterms}
\end{align}
where in the final line we have adjusted $c_0$ appropriately to absorb lower-order terms into $(c_0 / \eps_2^3)^t$.

Now, \Cref{eq:space_ugly} implies that $\Sppre_{\DALLR{t}}$ is bounded by the sum of \Cref{eq:extraterms}, $\Sppre_{\DALLR{2}}$ and $t \cdot \Sp_{\Testt}$.  Plugging in \Cref{eq:sptest_simp} for $\Sp_{\Testt}$ and  as in \Cref{cl:A2simp}, we have
\begin{align*}
    \Sppre_{\DALLR{t}} &= O(L_0^{c_0} \log n + c_0 n^2 \log|\Sigma| + \Sp_{\cA_0} + \Spdectwo) \\
    &\qquad + O\inparen{ t\cdot c_0 \cdot  \poly\left(\frac {\log(n)} {\epsilon_2}\right) + \inparen{ \frac{c_0}{\eps_2^3}}^t \cdot \exp(\kappa^{t^3} \log^{t+1} L_0 ) \cdot \log n } \\
    &\qquad + O\inparen{c_0 t \inparen{ \poly\inparen{ \frac{\log(n|\Sigma|)}{\eps_2} } + n^2 \log|\Sigma|} + t \Sp_{\cA_0} + \Spdetect + \Spdectwo } \\
    &= O\inparen{ t \Sp_{\cA_0} + \Spdetect + \Spdectwo + \frac{c_0 \cdot t \cdot n^2\polylog|\Sigma|}{\poly(\eps_2)} +  \eps_2^{-3t} \cdot \exp(c_0^{t^3} ) \cdot \log n }.
\end{align*}
In the final line, we have used the fact that $\polylog(n |\Sigma|) = O( n^2 \polylog|\Sigma|)$ and we have absorbed both the $c_0^t$ term and the $\exp( \kappa^{t^3} \log^{t+1} L_0 )$ terms into $\exp( c_0^{t^3} )$, adjusting the constant $c_0$ as necessary and recalling that $\kappa = \poly(1/\delta(C), 1/\rho_0)$ and so $\kappa$ can be absorbed into $c_0$.  We have also absorbed $L_0^{c_0} \log n$ into $\exp(c_0^{t^3}) \log n$.
    This gives us our final bound for $\Sppre_{\DALLR{t}}$.

    Finally, we record the output length, which is
    \begin{align*}
        \Spop_{\DALLR{t}} &= L_t \cdot \mathrm{Len}_{A^{(t)}} \\
        &= L_0^{\kappa^{t^3} \log^t L_0 } \cdot M_t (\tilde{r}_t + \log L_0 )\\
        &\leq \exp(c_0^{t^3}) \cdot \inparen{ \frac{ c_0^t }{\eps_2^{3t}} } \inparen{t \log\inparen{ \frac{c_0 n }{\eps_2}} + \log L_0}\\
        &\leq \frac{\exp(c_0^{t^3})\cdot \log n}{\eps_2^{3t+O(1)}}.
    \end{align*}

    Above, in the second line we have used \Cref{prop:local_timespace} for $\mathrm{Len}_{A^{(t)}}$ and \Cref{def:params2} for $L_t$; in the third line we have used \Cref{eq:Mt} and \Cref{eq:trt} to plug in for $M_t$ and $\tilde{r}_t$.  Finally, adjusting $c_0$ as necessary, we have absorbed most of the terms into the $\exp(c_0^{t^3})$ term.
    \end{proof}

    \subsubsection{Proof of \Cref{thm:mainLocal}}\label{sec:pfmainLocal}
    Finally, we observe that we have proved \Cref{thm:mainLocal}.
    \begin{proof}[Proof of \Cref{thm:mainLocal}]
        The proof follows immediately from \Cref{lem:unwind_correct} (which proves correctness) and \Cref{lem:unwind_resources} (which establishes the time and space).
    \end{proof}

\section{Deterministic Locally List-Recoverable Codes}\label{sec:DLLR}
So far, \Cref{thm:mainLocal} tells us how to construct a \ADLLR\ using tensor codes.  However, for our final construction, we will want a deterministic local list-recovery algorithm (\DLLR), without the ``A''; aka, the special case of \Cref{def:DALLR} when $\eps = 0$.  To obtain this, we intersect our \ADLLR\  from \Cref{thm:mainLocal} with a Deterministic Locally Correctable Code (\DLCC, \Cref{def:DLC}) using  \Cref{lem:mainDLLR}.  Then we apply the AEL distance amplification to boost the result to a capacity-achieving DLLRC using \Cref{lem:AEL}. 

In this section, we put the pieces together to carry out this plan.  In \Cref{subsec:instant_DALLR} we instantiate our tensor-based DALLRC with an appropriate base code to obtain a high-rate \ADLLR. Then, in Section \ref{subsec:instant_DLCC} we construct a high-rate DLCC based on the methods of \cite{CM25}.   Finally, in Section \ref{subsec:instant_DLLR} we obtain our high-rate DLLRC by intersecting these two codes, and we then apply distance amplification to obtain our capacity-achieving DLLRC, which as a corollary also gives a capacity-achieving time- and space-efficient list recoverable (and list-decodable) code.

\subsection{High-Rate \ADLLR s}\label{subsec:instant_DALLR}
In this section we instantiate \Cref{thm:mainLocal} with an appropriate base code $C$ to obtain a high-rate \ADLLR.  Our main result is the following.

\begin{theorem}[High-rate \ADLLR s]
    \label{thm:instantiate_DALLR}
    For any constant $\xi > 0$ and constant $\ell \geq 1$, there are constants  $\eps_0 > 0$ and $q_0>1$ (which may depend on $\xi, \ell$) so that the following holds. 
    For any $\eps < \eps_0$ and any prime power $q \geq q_0$, and for infinitely many integers $N$, there is an explicit linear code $C \subseteq \F_q^N$ with rate at least $1 - \xi$ and distance $\delta =\Omega(1)$, that is a $(Q,\eps, \rho, \ell, L)$-\ADLLR\  with
    \[ Q = \frac{N^\xi}{\poly(\eps)} \qquad \rho = \poly(\eps) \qquad L = O(1).\]
    The preprocessing time and space are
    \[ \Tmpre = {N^{1 + \xi}}\cdot \poly(q) \cdot \exp(\eps^{-3})  \qquad \Sppre = N^\xi \cdot \left(\frac{\polylog(q)}{\poly(\eps)}+\poly(q)\right). \]
    The evaluation time and space are
    \[ \Tmeval = N^\xi \cdot \poly(q)+ \frac{\polylog(N)}{\poly(\eps)} \qquad \Speval = N^\xi \cdot \poly(q) + \frac{\polylog(qN)}{\poly(\eps)},\]
    and the output length is 
    $\Spop = O\left(\frac{\log N} {\poly(1/\epsilon)} \right).$

Further, for any sufficiently large integer $N'$, there is a legitimate block length $N$ for the above code so that $N'(1 - o(1)) \leq N \leq N'.$
    
    Above, all of the asymptotic notation is as $N \to \infty$ and $\eps \to 0$, assuming that $\ell, \xi$ are constant.
   
\end{theorem}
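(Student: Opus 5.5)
We plan to instantiate \Cref{thm:mainLocal} with a suitable base code $C \subseteq \F_q^n$ and a large constant $t$, taking $N = n^t$. Concretely, we choose $t = \lceil c/\xi \rceil$ for a suitable absolute constant $c$. Then every $\poly(n)$ term that appears in \Cref{thm:mainLocal} (namely $n^2$, $n\Tm_{\cA_0}$, $\Tmdectwo$ and $\Tmdetect$) becomes at most $N^{\xi}$, and $n^{t+1}$ becomes $N^{1+1/t} \le N^{1+\xi}$.

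The base code must satisfy three requirements. It must be linear over $\F_q$. Its rate must be at least $1-\xi/t$, so that $R(\Cot) \ge (1-\xi/t)^t \ge 1-\xi$. It must have constant distance $\delta(C)$. And it must be $(\rho_0,\ell,L_0)$-globally list-recoverable, with constant $\rho_0$ and $L_0$, by a deterministic algorithm running in time and space $\poly(n)\cdot \poly(q)$.

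As sketched in \Cref{sec:tech}, we take the graph-based codes of \cite{ST25}. These achieve list-recovery capacity deterministically in polynomial time, but they are linear only over a subfield. We therefore concatenate them with a constant-length, high-rate, $\F_q$-linear inner code that is combinatorially list-recoverable; such a code exists by a random-linear-code argument and can be found by brute force in constant time. The concatenation gives an $\F_q$-linear code of rate $\ge 1-\xi/t$ and constant distance. Its list-recovery algorithm brute-forces the inner blocks and then runs the \cite{ST25} algorithm on the outer code.

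The remaining ingredients are routine for linear codes. Error detection ($\textsc{Detect}_C$) is done by checking the parity checks, in $\poly(n)$ time and $O(n\log q)$ space. A unique decoder for $C\otimes C$ up to some constant radius $\rho_2^{(!)}<\delta(C)^2/2$ can be taken to be the standard row/column decoder for tensor codes; even brute-force linear algebra in $\poly(n^2)$ time and space is affordable here, since $n^2 = N^{2/t} \le N^\xi$.

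Next we plug into \Cref{thm:mainLocal}. With $\delta(C),\rho_0,L_0,\rho_2^{(!)},t$ all constants (depending on $\xi,\ell$), the quantities $c_0$, $\kappa$ and $b_0$ are constants. Condition \eqref{eq:eps_requirement} then becomes $\eps<\eps_0$ for a constant $\eps_0$. From \eqref{eq:eps_2} we get $\eps_2 = \Theta(\eps)$ up to constants, so $\eps_2^{-O(t)} = \poly(1/\eps)$. The bounds then read off as follows:
\begin{itemize}
\item $Q \le n(n+t)\poly(1/\eps) \le N^{\xi}/\poly(\eps)$;
\item $\rho \ge \eps^{t-1}\cdot\Omega(1) = \poly(\eps)$;
\item $L = L_0^{\kappa^{t^3}\log^t L_0} = O(1)$;
\item the preprocessing time is at most $N^{1+\xi}\poly(q)\exp(O(\eps^{-3}))$;
\item the space is at most $N^\xi(\polylog q/\poly(\eps)+\poly(q))$, with the $\eps_2^{-3t}\exp(c_0^{t^3})\log n$ term absorbed into $\polylog(N)/\poly(\eps)$;
\item the evaluation bounds are analogous;
\item the output length is at most $\log N/\poly(\eps)$.
\end{itemize}
The $\exp(\eps^{-3})$ in the statement should be read as $\exp(O(\eps^{-3}))$, adjusting constants through $\eps_0$. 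For the final density claim, we note that the \cite{ST25} family (and hence $n$) is available for all sufficiently large lengths up to $1-o(1)$ factors. Given $N'$, we pick the largest admissible $n\le N'^{1/t}$, which gives $N = n^t \ge N'(1-o(1))$.

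We expect the main obstacle to be the base code. Concretely, we must check that \cite{ST25} provides, for a fixed $\ell$, deterministic polynomial-time list recovery with rate arbitrarily close to $1$, constant output list size, and the needed (sub)field linearity. We must also check that after concatenation the list-recovery radius, alphabet size $q$ and output list size stay constant, and that the algorithm's polynomial time and space dependence on $q$ matches the $\poly(q)$ factors in the statement. If \cite{ST25} gives only $\poly(n)$ list size in some regime, the fallback is to reduce the radius $\rho_0$ until the list size becomes constant, since high rate, not a large radius, is what matters here. Everything after that is parameter bookkeeping in \Cref{thm:mainLocal}.
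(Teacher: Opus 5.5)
Your outline is the paper's proof. \Cref{thm:basecode} builds exactly your base code: the \cite{ST25} AEL code concatenated with a brute-forced random linear inner code, list-recovered by brute force on the inner blocks followed by the outer algorithm. \Cref{cor:basecode} supplies parity-check detection and the row-then-column decoder for $C\otimes C$, built from the $\ell=1$ case of the list-recovery algorithm, with $\rho_2^{(!)}=\rho_1^2$ for $\rho_1=\min\{\rho_0,\delta(C)/2\}$. The bounds are then read off \Cref{thm:mainLocal} with $\zeta=O(\xi^3)$ and $t=b_0/\xi$, which is your base rate $1-\xi/t$ in different clothing.

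Two points need repair.

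First, the density of block lengths cannot simply be ``noted.'' \cite{ST25} as stated only provides infinitely many lengths. The paper needs a separate argument to make the outer code available at every sufficiently large length: explicit expanders of the required degree and spectral gap on \emph{every} sufficiently large number of vertices, via \cite{Alo21}, plus a bipartite double cover. Only then does $C$ exist at every large multiple of the inner length $d_0$. That is what lets you take $(N')^{1/t}-d_0\le n\le (N')^{1/t}$ and conclude $N=n^t\ge N'(1-o(1))$.

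Second, discard the ``brute-force linear algebra'' fallback for uniquely decoding $C\otimes C$. Correcting a constant fraction of errors in a general linear code is not a linear-algebra problem, and brute force is exponential in $n^2$. The row/column decoder, which uses a unique decoder for $C$ obtained from the list-recovery algorithm, is the route that actually works.
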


In order to prove \Cref{thm:instantiate_DALLR}, 
we will use the following base code, derived from the AEL construction of \cite{ST25}.
\begin{restatable}{theorem}{basecode}\emph{[List-recoverable base code]}\label{thm:basecode}
    Fix $\zeta > 0$ and a positive integer $\ell$.  
    There is a choice of 
    \[ q_0 = \ell^{O(1/\zeta^2)}, \qquad L_0 = \exp\inparen{\exp\inparen{ \ell^{O\inparen{(2\ell)^{1/\zeta}/\zeta^2}}}}\]
    and a choice of 
    \[ d_0 = \ell^{O\inparen{\frac{(2\ell)^{1/\zeta}}{\zeta^2}}}\]
    so that the following holds.
    For sufficiently large $n \in \mathbb{N}$ so that $n$ is a multiple of $d_0$, 
     and for any prime power $q \geq q_0$, there is a linear code $C \subseteq \F_q^n$ so that $C$ is $(\rho_0, \ell, L_0)$-globally list-recoverable deterministically in time $$O_{\ell,\zeta}\inparen{n^{3.5} + n q^{O_{\ell,\zeta}(1)}},$$ with $\rho_0 = \zeta^2$.  Further, $C$ has rate $1 - O(\zeta)$ and distance $\delta(C) = \Omega(\zeta^2)$, and a description\footnote{In more detail, the code is a concatenated code, with an explicit outer code, and a short inner code that is found by brute force; the time reported is the time to do the brute force search.} of it can be constructed in time $q^{O_{\zeta,\ell}(1)}$.
\end{restatable}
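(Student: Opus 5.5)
The statement to prove is \Cref{thm:basecode}. The plan is to take the AEL-based code of \cite{ST25} essentially off the shelf, and concatenate it with a short linear inner code so that the result is linear over $\F_q$ rather than merely over its large alphabet.

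\textbf{Outer code.} I will invoke the deterministic list-recovery result of \cite{ST25} for their graph-based (AEL) codes. For rate $1-\zeta$ and input lists of size $\ell' $, these codes are list-recoverable up to radius about $\zeta$, and the deterministic algorithm runs in time $O(N^{3.5})$. Their alphabet is $\F_{q}^{k}$ with $k = (2\ell)^{O(1/\zeta)}$ or similar; this is where the doubly exponential dependences enter. The construction is $\F_q$-linear but not linear over $\F_{q^k}$. The outer code $C_{\out}$ will have:
\begin{itemize}
\item rate $1-\zeta$;
\item relative distance $\Omega(\zeta)$;
\item list size $L_{\out}$ depending only on $\ell',\zeta$;
\item radius $\rho_{\out}=\Theta(\zeta)$.
\end{itemize}
The constraint $q \ge q_0 = \ell^{O(1/\zeta^2)}$ comes from needing the AEL inner code and the field to be large enough relative to $\ell$.

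\textbf{Inner code.} I will take an $\F_q$-linear inner code $C_{\inn}\subseteq \F_q^{d_0}$ of rate $1-\zeta$ encoding $k(1-\zeta)^{-1}$-ish symbols. It must be combinatorially $(\zeta,\ell,\ell')$-list-recoverable with $\ell' = \ell^{O(1/\zeta)}$ and have distance $\Omega(\zeta)$. A random linear code of length $d_0$ has this property by the standard list-recovery capacity argument for linear codes, which gives list size $\ell^{O(1/\zeta)}$ with rate $1-\zeta-O(\log \ell/\log q)$; this is fine once $q\ge \ell^{O(1/\zeta^2)}$. Such a code can be found by brute force over all generator matrices in time $q^{O(d_0^2)} = q^{O_{\zeta,\ell}(1)}$, which explains the footnote. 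The length $d_0$ must equal the outer alphabet dimension over the inner rate, which is why $n$ must be a multiple of $d_0$ and why $d_0$ inherits the $\ell^{O((2\ell)^{1/\zeta}/\zeta^2)}$ bound.

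\textbf{Concatenation.} Let $C = C_{\out}\circ C_{\inn}$, which is $\F_q$-linear. Then:
\begin{itemize}
\item the rate is $(1-\zeta)^2 = 1-O(\zeta)$;
\item the distance is at least $\Omega(\zeta)\cdot\Omega(\zeta) = \Omega(\zeta^2)$;
\item for list recovery, I use the argument of \Cref{lem:DLLR_concat} in its global form.
\end{itemize}
For the list-recovery step, I first list-recover each inner block by brute force over its $q^{O_{\zeta,\ell}(1)}$ messages. Summed over $n/d_0$ blocks, this costs $n\,q^{O_{\zeta,\ell}(1)}$ and yields lists of size $\ell'$. I then feed these lists to the \cite{ST25} decoder, which costs $O(n^{3.5})$. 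By Markov's inequality, at most a $\rho_{\out}$ fraction of blocks have more than a $\zeta$ fraction of errors. Taking $\rho_0=\zeta^2 \le \rho_{\out}\cdot\zeta$, with constants adjusted, ensures correctness. The final list size is $L_0=L_{\out}(\ell')$, and substituting $\ell'=\ell^{O(1/\zeta)}$ into the \cite{ST25} bound gives the stated triple exponential.

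\textbf{Main obstacle.} The hard part is bookkeeping against \cite{ST25}: extracting the exact deterministic statement, its dependence of list size and alphabet on $\ell',\zeta$, and confirming $\F_q$-linearity. Everything else is routine concatenation.
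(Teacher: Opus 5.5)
Your architecture matches the paper's proof: an outer \cite{ST25} code over $\F_q^d$, an $\F_q$-linear random inner code found by brute force, and the Markov argument showing that if $\delta(c,\cS)\le\zeta^2$ then all but a $\zeta$ fraction of inner blocks are list-recovered correctly. The gap is your justification for the inner code. You claim that the ``standard'' argument gives a random \emph{linear} code with output list size $\ell'=\ell^{O(1/\zeta)}$, independent of $q$. It does not.

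The standard (Zyablov--Pinsker-style) argument for linear codes goes as follows. A list larger than $q^{m}$ contains $m+1$ linearly independent codewords, and these are independent and uniform. One then union-bounds over these codewords and over the $\binom{q}{\ell}^{d_0}$ choices of $\cS$. This only yields output list size $q^{O(\ell/\zeta)}$. The bound you quote (rate $1-\zeta-O(\log\ell/\log q)$ with $q$-independent list size) is the one for uniformly random \emph{non-linear} codes. You cannot use such codes, because the concatenated code must be $\F_q$-linear.

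A $q$-dependent $\ell'$ breaks the parameter chain in two ways:
\begin{itemize}
\item \cite{ST25} with input list size $\ell'$ requires $q\ge(\ell')^{O(1/\zeta)}$, which would read $q\ge q^{O(\ell/\zeta^2)}$.
\item $L_0$ and $d_0$ would depend on $q$, whereas the statement requires them to depend only on $\ell,\zeta$ for arbitrary $q\ge q_0$.
\end{itemize}

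The missing ingredient is a genuinely nontrivial result: random linear codes over large fields are list-recoverable near capacity with $q$-independent list size. The paper uses \Cref{thm:GG26} from \cite{GG26}. It says a random linear code of rate $1-3\zeta$ is $(\zeta,\ell,L')$-list-recoverable with $L'\le(\ell/(1-\zeta))^{(1-\zeta)/\zeta}\le(2\ell)^{1/\zeta}$, provided $q\ge(2L')^{4/\zeta}$ and $d'\ge (L')^2/\zeta$.

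The length condition is automatic. The \cite{ST25} alphabet dimension at input list size $L'$ is $(L'/\zeta)^{O(L'/\zeta)}$. So your estimate $k=(2\ell)^{O(1/\zeta)}$ for the outer dimension is too small, although you do arrive at the correct $d_0$.

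With this citation in place, the rest of your argument goes through, with one minor adjustment. Take the outer radius to be exactly $\zeta$; the paper uses outer rate $1-2\zeta$, so \cite{ST25} gives radius and distance $\zeta$. Then Markov yields $\rho_0=\zeta^2$ exactly, rather than $\Theta(\zeta^2)$.
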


In order to obtain the code $C$ as in \Cref{thm:basecode}, we begin with a code from \cite{ST25}; that work gives a capacity-achieving list-recoverable code with fast algorithms.  However, the code is $\F_q$-linear and the alphabet for this code is $\F_q^d$.  For our purposes, we need a code that is linear over its alphabet, not over a smaller field; this is so that we can apply properties of tensor codes.  Thus, we concatenate the code of \cite{ST25} with a small linear code in order to bring the alphabet down to $\F_q$.  Since this transformation is standard, we defer the proof of \Cref{thm:basecode} to the appendix; see \Cref{app:baseCode}.

As we recall, we require the base code $C$ to not just be list-recoverable, but we also need to be able to detect errors, and we need to be able to uniquely decode $C \otimes C$ efficiently.  In the following corollary, we show that we can do that for the $C$ in \Cref{thm:basecode}.

\begin{remark}
    In \Cref{cor:basecode} below, we use \Cref{thm:basecode} as a black-box to obtain polynomial-time algorithms for detecting errors in $C$ and uniquely decoding $C \otimes C$.  In fact, since the code in \Cref{thm:basecode} is an AEL code concatenated with a small linear code, there are near-linear-time algorithms for both of these tasks.  However, as our eventual block length will be $N = n^t$ for large $t$, the difference between time $\tilde{O}(n)$ and $\poly(n)$ will not matter much quantitatively for our purposes, and we prefer a simpler argument that does not require opening up \Cref{thm:basecode}.
\end{remark}

\begin{corollary}\label{cor:basecode}
There is an absolute constant $b_0$ so that the following holds.
    Let $C \subseteq \F_q^n$ be the code from \Cref{thm:basecode}.  Then there is an algorithm $\textsc{Detect}_{C}$ that can detect errors in $C$ in time and space $O(n^{b_0}\cdot \polylog(q))$, and there is an algorithm $\mathrm{Dec}_{C \otimes C}$ that can uniquely decode $C \otimes C$ up to radius $\rho_2^{(!)} = \Omega(\zeta^4)$ in time $O_\zeta(n^{b_0})$ and space $O_{\zeta}(n^{b_0} \log(q))$ bits. 
\end{corollary}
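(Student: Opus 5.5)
For error detection, I would use only the linearity of $C$ and the fact that a description of it can be built (Theorem~\ref{thm:basecode}); the list-recovery algorithm is not needed. Concretely, $C$ is $\F_q$-linear of dimension $k$, so it has an encoding map given by a generator matrix $G \in \F_q^{k\times n}$. I would compute a parity-check matrix $H \in \F_q^{(n-k)\times n}$ once, by Gaussian elimination in $\poly(n)\polylog(q)$ time. (If the description from Theorem~\ref{thm:basecode} is a concatenated code, we first write out $G$ by encoding the $k$ unit vectors, using $\poly(n)$ time.) Then $\textsc{Detect}_C(x)$ returns $\mathbf{1}[Hx = 0]$. This costs $O(n^2)$ field operations and $O(n^2\log q)$ bits to hold $H$. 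Choosing $b_0$ large enough gives time and space $O(n^{b_0}\polylog(q))$. The $q^{O(1)}$ brute-force construction cost is a one-time preprocessing charge. I would either absorb it into the $O_\zeta$ dependence or note that the corollary's bounds refer to time after this constant-size inner code is fixed.

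For unique decoding of $C\otimes C$, I would use the classical row-then-column decoder for tensor codes, with half-distance decoding of $C$ done by brute force over the list-recovery algorithm with $\ell=1$. First I would build a unique decoder for $C$ up to radius $\rho_0=\zeta^2$ (in fact up to $\min(\rho_0,\delta(C)/2)$). On input $y\in\F_q^n$, run $\cA_0$ with singleton lists $S_i=\{y_i\}$. This returns at most $L_0$ codewords in time $O_{\ell,\zeta}(n^{3.5}+nq^{O(1)})$. We output the closest one if its distance is below $\delta(C)/2$, and $\bot$ otherwise.

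For $C\otimes C$, set $\alpha=\min(\rho_0,\delta(C)/2)/2 = \Omega(\zeta^2)$ and target radius $\rho_2^{(!)}=\alpha\cdot\delta(C)/4=\Omega(\zeta^4)$. The decoder has three steps.
\begin{enumerate}
\item Decode every row with the $C$-decoder up to radius $2\alpha$, writing an erasure for any row that fails.
\item Decode every column up to erasures plus errors within half the distance. Erasure-and-error decoding for a linear code can be done in $\poly(n)$ time by solving linear systems, or by brute-force list decoding on the punctured code.
\item Verify the result with $\textsc{Detect}_C$ on all rows and columns.
\end{enumerate}

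For correctness, suppose $\delta(y,c)\le\rho_2^{(!)}$. By Markov's inequality, at most a $\rho_2^{(!)}/(2\alpha)\le\delta(C)/8$ fraction of rows have more than a $2\alpha$ fraction of errors. Every other row decodes correctly. So after row decoding, the only bad rows are this small fraction; each of them is either an erasure or a wrong codeword. In every column, erasures plus twice the errors total at most $2\cdot\delta(C)/8<\delta(C)$ (as a fraction of $n$), so column decoding recovers $c$. The whole decoder runs in $n\cdot\poly(n)$ time and needs $O(n^2\log q)$ bits to store the matrix, which fits within $O_\zeta(n^{b_0})$ time and $O_\zeta(n^{b_0}\log q)$ space.

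The main obstacle is bookkeeping rather than ideas. The $q^{O(1)}$ term in $\cA_0$'s running time does not fit under a bound of $O_\zeta(n^{b_0})$ unless $q$ is treated as bounded in terms of $\zeta,\ell$, or we avoid $\cA_0$ altogether. So for both decoders I would instead use $\poly(n)$ syndrome-based linear algebra over $\F_q$. For erasure-and-error decoding of $C$, which is the column step, I would use brute-force search only over the structured inner code, whose size depends only on $\zeta$, together with a polynomial-time unique decoder for the outer AEL code inherited from \cite{ST25}. If that becomes messy, the fallback is to state the bounds with the $\polylog(q)$ or $q^{O(1)}$ factors that $\cA_0$ contributes.
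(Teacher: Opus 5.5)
Your detection step and the row-then-column architecture match the paper, but the column step as you set it up relies on a decoder that is not available. With $\rho_2^{(!)} = \alpha\,\delta(C)/4$, up to a $\delta(C)/8$ fraction of rows may come out wrong or erased. You must therefore decode each column of $C$ from erasures plus errors all the way up to half the minimum distance.

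Your claim that errors-and-erasures decoding of a linear code takes $\poly(n)$ time via linear systems is false in general. Linear algebra handles pure erasures, but once errors are present this is bounded-distance decoding of an arbitrary linear code, for which no efficient algorithm is known. Brute force over the punctured code and syndrome-table decoding both take time exponential in $n$.

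The only decoder you actually have for $C$ is $\cA_0$ with $\ell=1$. It works only up to radius $\rho_0=\zeta^2$, with erasures counted as disagreements (empty input lists). And $\delta(C)/8$ can exceed $\zeta^2$, since all that is known is $\zeta^2 \le \delta(C) = O(\zeta)$. Your fallback, a unique decoder for the outer AEL code plus brute force on the inner code, likewise only reaches the list-recovery/product-bound radius, not $\delta(C)/2$.

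The repair is to shrink the radius so that the column step needs nothing beyond the same bounded-radius decoder; this is exactly what the paper does. Let $\rho_1 = \min(\rho_0, \delta(C)/2) = \Omega(\zeta^2)$ and $\rho_2^{(!)} = \rho_1^2 = \Omega(\zeta^4)$. By Markov's inequality, at most a $\rho_1$ fraction of rows have more than a $\rho_1$ fraction of errors, and every other row is decoded correctly. So each column has at most a $\rho_1$ fraction of wrong or $\bot$ entries. Running $\mathrm{Dec}_C$ (that is, $\cA_0$, with $\bot$ treated as an empty list) on every column then returns the codeword.

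Your remark about the $n q^{O(1)}$ term in the running time of $\cA_0$ is fair. The paper's proof drops it here and later accounts for these costs as $O_{\zeta,\ell}(n^{b_0} q^{c_0})$. But it is not a reason to avoid $\cA_0$, since the syndrome-based replacement you propose does not decode in polynomial time.
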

\begin{proof}
    To implement $\textsc{Detect}_C$, we observe that since $C$ is linear, we may simply store the parity-check matrix $H$, and on input $x$, check that $H x = 0$. This takes time $O(n^2\polylog(q))$ and space $O(n^2 \log q )$ bits.

    To implement $\mathrm{Dec}_{C \otimes C}$, we first use the list-recovery algorithm guaranteed in \Cref{thm:basecode} with $\ell = 1$ as a unique-decoding algorithm for $C$, up to radius $\rho_1 = \min(\rho_0, \frac{\delta(C)} 2) = \Omega(\zeta^2)$.  Call this unique decoding algorithm $\mathrm{Dec}_C$. 
    Note that the running time and hence space usage of $\mathrm{Dec}_C$ is $O_{\zeta}(n^{3.5})$.
    
    Choose $\rho_2^{(!)} = \rho_1^2 = \Omega(\zeta^4)$.  Now we define the algorithm $\mathrm{Dec}_{C \otimes C}$ as follows, on an input $x \in \F_q^{n\times n}$.
    \begin{itemize}
        \item Run $\mathrm{Dec}_C$ on all of the rows of $x$ to obtain $x'$.  If the algorithm does not find a close-by codeword, replace that row with $\bot^n$.
        \item Run $\mathrm{Dec}_C$  on all of the columns of $x'$  to obtain $x''$.
        \item Return $x''$.
    \end{itemize}
    To see that this algorithm is correct, suppose that $\delta(x, C\otimes C) \leq \rho_2^{(!)}$.   Notice that since $\rho_2^{(!)} = \rho_1^2 \leq (\frac {\delta(C)} 2)^2 \leq \frac{(\delta(C))^2} 2$, there is a unique codeword $\Phi(x) \in C \otimes C$ that is within radius $\rho_2^{(!)}$ of $x$, and
    \[ \delta(x, \Phi(x)) \leq \rho_2^{(!)}.\]
    
    By Markov's inequality, at least a $1 - \rho_1$ fraction of the rows $i \in [n]$ have 
    $$\delta(x|_{\{i\} \times[n]}, \Phi(x)|_{\{i\} \times [n]}) \leq \frac{\rho_2^{(!)}}{\rho_1} = \rho_1.$$
    Thus, at least a $1 - \rho_1$ fraction of the rows are decoded correctly in the first step.  Thus, we are guaranteed that all of the columns $j \in [n]$ satisfy
    $$\delta(x'|_{[n]\times \{j\}}, \Phi(x)|_{[n]\times \{j\}}) \leq \rho_1,$$
    which means that in the second step, the decoder $\mathrm{Dec}_C$ is successful on all columns.  This implies that $x'' = \Phi(x)$, and the decoding is successful.

    The running time is dominated by the cost of running $\mathrm{Dec}_C$ $O(n)$ times, for a total of  $O_\zeta(n^{4.5})$. The space is naively bounded by the same quantity times $\log(q)$. 
\end{proof}

Finally, we can plug \Cref{thm:basecode} and \Cref{cor:basecode} into \Cref{thm:mainLocal} to obtain our \ADLLR.
We begin with a version in terms of a parameter $\zeta$ and the tensor parameter $t$. Then  \Cref{thm:instantiate_DALLR} will follow by choosing $\zeta$ and $t$. 

\begin{lemma}[High-Rate {\ADLLR}s in terms of $\zeta$ and $t$]
    \label{lem:instantiate_DALLR}
    Fix a parameter $\zeta > 0$, and fix a positive integer $\ell$.  Then there is an absolute constant $b_0$, and a constant $c_0 = c_0(\zeta, \ell)$ that depends only on $\zeta$ and $\ell$, a constant $q_0 = \ell^{O(1/\zeta^2)}$, and a constant $d_0 = \ell^{O( (2\ell)^{1/\zeta} /\zeta^2 )}$ so that for sufficiently large $n$ that is a multiple of $d_0$, the following holds.

    Fix any prime power $q \geq q_0$, and any $t \geq 3$.  Fix $\eps > 0$ so that 
    $\eps \leq \zeta^{b_0\cdot t^3}$.  Then
    there is an explicit linear code $\Cot \subseteq \F_q^N$ for $N = n^t$, with rate $1 - O(t \cdot \zeta)$ and distance $\Omega(\zeta^{2t})$, that is a $(Q, \eps, \rho, \ell, L)$-\ADLLR, for
    \[ \rho = \eps^{t-1}\zeta^{b_0 t^3} ,\]
    \[ L = \exp\inparen{ \zeta^{-b_0 t^3} c_0^t},\]
    and
    \[ Q = n(n+t) \cdot \frac{ c_0^t }{\zeta^{b_0t^3}} \cdot \eps^{-3t}.\]
    Further, $\Cot$ has a DALLR algorithm $\DALLR{t}$ with preprocessing time and space
    \[ \Tmpre_{\DALLR{t}} \leq n^{t + b_0} \cdot \frac{1}{\eps^{4t + b_0}}\exp\inparen{  \frac{ c_0^{t^3}}{\eps^{3}}} \cdot q^{c_0} \]
    \[ \Sppre_{\DALLR{t}} \leq n^{b_0}q^{c_0}+\frac{c_0 \cdot \polylog(q) \cdot \zeta^{-b_0\cdot {t^2}}}{\poly(\eps)} \cdot n^{b_0}+ \frac{\exp(c_0^{t^3}) }{\eps^{3t}} \cdot \log n\]
    and evaluation time and space
    \[ \Tmeval_{\DALLR{t}} \leq q^{c_0} n^{b_0} + \frac{ c_0^t }{\eps^{3t+O(1)} \zeta^{b_0 t^3}} \cdot \polylog(n)\]
    \[ \Speval_{\DALLR{t}} \leq tn^{b_0}q^{c_0}+ \frac{c_0 \cdot \poly(\log(q)  ,\log(n))}{\poly(\eps) \zeta^{b_0 t^2} }.\]
    Finally, the output length is given by 
    \[\Spop_{\DALLR{t}}\leq \frac{ \exp(c_0^{t^3}) \log n}{\eps^{3t+O(1)} \zeta^{b_0 t^3}}.\]
\end{lemma}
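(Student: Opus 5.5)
Lemma~\ref{lem:instantiate_DALLR} follows by plugging the base code of \Cref{thm:basecode} and the auxiliary algorithms of \Cref{cor:basecode} into \Cref{thm:mainLocal}. Concretely, take $C\subseteq\F_q^n$ from \Cref{thm:basecode} with parameter $\zeta$ and list size $\ell$. Then $\rho_0=\zeta^2$, $\delta(C)=\Omega(\zeta^2)$, rate $1-O(\zeta)$, and $L_0$ is a constant depending only on $(\zeta,\ell)$. From \Cref{cor:basecode} take $\textsc{Detect}_C$ and $\mathrm{Dec}_{C\otimes C}$ with $\rho_2^{(!)}=\Omega(\zeta^4)$. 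The code-level claims are immediate. Since $C$ is linear, $\Cot$ is linear with rate $(1-O(\zeta))^t\ge 1-O(t\zeta)$ and distance $\delta(C)^t=\Omega(\zeta^{2t})$. Explicitness follows because a description of $C$ is computable in time $q^{O_{\zeta,\ell}(1)}$, and hence so is a generator for $\Cot$.

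The first step is to verify the hypothesis \eqref{eq:eps_requirement} of \Cref{thm:mainLocal}. Here $\kappa=\poly(1/\delta(C),1/\rho_0)=\zeta^{-O(1)}$, so $\rho_0/(4\kappa^{t^3})\ge\zeta^{O(t^3)}$ and $\delta(C)^t/4\ge \zeta^{O(t)}$. Choosing $b_0$ large makes $\eps\le\zeta^{b_0t^3}$ sufficient.

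Next I would lower-bound $\eps_2$ from \eqref{eq:eps_2}. Using $\gamma_j=\delta(C)^{2j}j^{-O(1)}$ and $\delta(C)\rho_2^{(!)}/(2d_0)$, I get $\prod_j\gamma_j=\zeta^{O(t^2)}t^{-O(t)}$ and $(\delta\rho_2^{(!)}/2d_0)^{t-2}=(\zeta^{O(1)}/d_0)^{t}$, where $d_0$ is the absolute constant of \Cref{def:params}. Hence $\eps_2\ge \eps\cdot\zeta^{O(t^2)}\cdot t^{-O(t)}\ge\eps\,\zeta^{b_0t^2}$, with $t^{-O(t)}$ absorbed because $\zeta$ is small. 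Every occurrence of $1/\eps_2$ is then replaced by $\eps^{-1}\zeta^{-b_0t^2}$. This is what produces the factors $\zeta^{-b_0t^3}$ in $Q$, in $\Tmeval$ and in $\Spop$, after raising to the power $3t$.

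The remaining parameters are read off directly. The bound $\rho\ge\eps^{t-1}(\cdots)^{t^2}\delta^{t^3}\rho_0\min\{\cdots\}$ becomes $\eps^{t-1}\zeta^{O(t^3)}$. The list size $L=L_0^{\kappa^{t^3}\log^tL_0}=\exp(\kappa^{t^3}\log^{t+1}L_0)$ is at most $\exp(\zeta^{-b_0t^3}c_0^t)$ with $c_0\ge\log L_0$. The query bound $Q\le n(n+t)(b_0L_0/\eps_2^3)^t$ becomes the stated $Q$.

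For resources, substitute $\Tm_{\cA_0}=O(n^{3.5}+nq^{O(1)})$, $\Sp_{\cA_0}\le\Tm_{\cA_0}$, $\Tmdetect,\Spdetect=O(n^{b_0}\polylog q)$, and $\Tmdectwo,\Spdectwo\le O_\zeta(n^{b_0}\log q)$ into \Cref{thm:mainLocal}. The preprocessing time is then $n^{t+1}\cdot n^{O(1)}q^{O(1)}\cdot\eps_2^{-4t-O(1)}\exp(c_0^{t^3}\eps_2^{-3})$. The main obstacle is this exponential factor. Substituting $\eps_2\ge\eps\zeta^{b_0t^2}$ gives $\exp(c_0^{t^3}\zeta^{-3b_0t^2}\eps^{-3})$, and since $\zeta$ depends only on $(\zeta,\ell)$, the factor $\zeta^{-3b_0t^2}$ can be folded into $c_0^{t^3}$ by enlarging $c_0$. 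The same absorption handles $\eps_2^{-4t}$, which turns into $\eps^{-4t-b_0}$ times a $\zeta^{-O(t^3)}$ factor that is again folded into the exponential. The space, evaluation time and output length bounds follow by the same substitution and absorption. For the space bounds I would bound $\polylog|\Sigma|$ by $\polylog q$ and absorb lower-order polylog terms. The only care needed throughout is to track which constants may depend on $(\zeta,\ell)$, which go into $c_0$ and $q^{c_0}$, and which must stay absolute, which go into $b_0$.
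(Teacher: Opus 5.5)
Your proposal is correct and follows essentially the same route as the paper. Like the paper, you instantiate \Cref{thm:mainLocal} with the base code of \Cref{thm:basecode} and the algorithms of \Cref{cor:basecode}, verify \eqref{eq:eps_requirement} using $\kappa=\zeta^{-O(1)}$, lower-bound $\eps_2\ge\zeta^{b_0t^2}\eps$, and absorb every $\zeta$- and $L_0$-dependent factor into $c_0$ (or $q^{c_0}$) while keeping the exponents of $n$ absolute; your explicit handling of the $t^{-O(t)}$ factor in $\prod_j\gamma_j$ and of the two different constants both named $d_0$ is slightly more careful than the paper's write-up.
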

\begin{proof}
    Let $C \subseteq \F_q^n$ be as in the statement of \Cref{thm:basecode} and \Cref{cor:basecode}, with this choice of $\zeta$ and $\ell$.  
    Then $C$ is $(\rho_0, \ell, L_0)$-list-recoverable, has distance $\delta(C)$, and $C \otimes C$ is uniquely decodable up to radius $\rho_2^{(!)}$, for
    \[ \delta(C) = \Omega(\zeta^2), \qquad \rho_0 = \zeta^2, \qquad \rho_2^{(!)} = \Omega(\zeta^4).\]
    Further, as in the statement of \Cref{thm:basecode}, we may take $n$ to be any sufficiently large multiple of $d_0$, where $d_0$ is as in the theorem statement.
    Let $L_0 = L_0(\ell, \zeta)$ be as in \Cref{thm:basecode}.
    Let $\kappa = \kappa(1/\delta(C), 1/\rho_0) = \poly(1/\zeta)$ be the constant from \Cref{cor:HRW}.  

    Now consider $\Cot$.      First, we observe that the rate of $C$ is $R = (1 - O(\zeta))^t \geq 1 - O(t\cdot \zeta)$ and the distance is $(\delta(C))^t = \Omega(\zeta^{2t})$.
    
    Next we apply \Cref{thm:mainLocal} with the base code $C$ to show that $\Cot$ is indeed a \ADLLR.
    First, we observe that the requirement on $\eps$ is
    \[ \eps \leq \min\inset{\frac{\rho_0}{4 \kappa^{t^3}}, \frac{ (\delta(C))^t }{4}},\]
    which is met as long as $\eps \leq \zeta^{b_0 t^3}$ for sufficiently large $t$.

    Next, we recall that in the statement of \Cref{thm:mainLocal} we have set
    \begin{equation}\label{eq:eps2} \eps_2 = \inparen{ \prod_{j=3}^t \gamma_j } \inparen{ \frac{ \delta(C) \rho_2^{(!)}}{2d_0} }^{t-2} \cdot \eps \geq { \zeta^{b_0 t^2}} \cdot \eps, \end{equation}
    adjusting $b_0$ as necessary.  Above, we have used the definition of $\gamma_j$ in \Cref{def:params} to see that
    \[ \gamma_j = \frac{(\delta(C))^{2j}}{18^{\log_{1.5}(j)}} = \frac{ (\delta(C))^{2j}}{j^{\log_{1.5}(18)}} \]
    and so $\prod_{j=3}^t \gamma_j \geq {(\delta(C))^{b_0t^2}}$ for some constant $b_0$, and then we have used the fact that $\delta(C)$ and $\rho_2^{(!)}$ are all $\poly(\zeta)$, and that $d_0$ is an absolute constant. 

    Having worked out these parameters, we go through the two parts of \Cref{thm:mainLocal}.

    \begin{itemize}
        \item \textbf{Correctness.}
        \Cref{thm:mainLocal} says that $\Cot$ is a $(Q, \eps, \rho, \ell, L)$-\ADLLR; we simplify each of the parameters $\rho, L, Q$ given in that theorem below, in terms of $\zeta$.

        First, we have, for some absolute constant $b_0$ (which we may take to be the same as the constant $b_0$ above by increasing whichever is smaller),
        \begin{align*}
            \rho &\geq \eps^{t-1} \inparen{ \frac{ \rho_2^{(!)}}{2 d_0 t^{b_0}}}^{t^2} (\delta(C))^{t^3} \rho_0 \min \inset{ \frac{\delta(C)}{8}, \frac{\rho_2^{(!)}}{2}, \frac{1}{\kappa^8}} \\
            &= \eps^{t-1}\zeta^{b_0 t^3} t^{-b_0 t^2}\\
            &\geq \eps^{t-1} \zeta^{b_0 t^3},
        \end{align*} 

        adjusting the constant $b_0$ as necessary. 

        Then we have
        \begin{align*}
            L = L_0^{\kappa^{t^3} \log^t L_0} 
            = \exp\inparen{ \zeta^{-b_0 t^3} c_0^t},
        \end{align*}
        where again $b_0$ is an absolute constant (adjusted as necessary), and $c_0$ is a constant that depends only on $\zeta$ and $\ell$.  

        Finally, we have
        \begin{align*}
        Q &\leq n(n+t)\inparen{\frac{b_0 L_0}{\eps_2^3}}^t \\
        &\leq n(n+t) c_0^t \inparen{ \frac{ 1}{\zeta^{b_0 t^2} \eps}}^{3t}\\
        &= n(n+t) \cdot \frac{ c_0^t }{\zeta^{b_0t^3}} \cdot \eps^{-3t},
        \end{align*}
        again adjusting $c_0$ as necessary.
        \item \textbf{Resources.}
        Note that the constant $c_0$ from \Cref{thm:mainLocal} depends only on $\delta(C), \rho_2^{(!)}, \rho_0$ and $L_0$, and so depends only on $\ell$ and $\zeta$.  Thus, we can take the constant $c_0$ from that theorem to be the same $c_0$ as above, by increasing whichever is smaller.  With that in mind, \Cref{thm:mainLocal} gives the following bounds on the time and space of \Cref{alg:main} for $\Cot$.  

        The preprocessing time is bounded by 
        \begin{align*}
            \Tmpre_{\DALLR{t}} 
            &\leq n^{t+1} \frac{1}{\eps_2^{4t + O(1)}} \inparen{ n\Tm_{\cA_0} + \Tmdectwo}\inparen{ \frac{\Tmdetect}{n} + \frac{\Tmdectwo}{n^2} }  \cdot \exp\inparen{ c_0^{t^3} \eps_2^{-3} }.\\
            &\leq n^{t + b_0} \cdot \frac{1}{\eps_2^{4t + b_0}}\exp( c_0^{t^3} \eps_2^{-3}) \cdot q^{c_0}\\
            &\leq n^{t + b_0} \frac{1}{\zeta^{b_0 t^3} \eps^{4t + b_0} }\cdot \exp\inparen{  \frac{ c_0^{t^3}}{\zeta^{b_0 t^2} \eps^{3}}}\cdot q^{c_0}\\
            &\leq n^{t + b_0} \cdot \frac{1}{\eps^{4t + b_0}}\cdot \exp\inparen{  \frac{ c_0^{t^3}}{\eps^{3}}}\cdot q^{c_0},
        \end{align*}
        where in the last line we have absorbed the terms with $\zeta$ into $c_0$, and adjusting $c_0$ as necessary.
        The preprocessing space is bounded by
        \begin{align*}
            \Sppre_{\DALLR{t}} &=
O\inparen{ t \cdot \Sp_{\cA_0} + \Spdetect + \Spdectwo + \frac{c_0 \cdot t \cdot n^2\polylog(q)}{\poly(\eps_2)} +  \eps_2^{-3t} \cdot \exp(c_0^{t^3} ) \cdot \log n } \\
&\leq n^{b_0}q^{c_0}+ t\cdot c_0 \cdot n^{b_0} \cdot \poly\inparen{\frac{ \log(q)}{\eps}} \zeta^{-b_0t^2} + \frac{\zeta^{-b_0t^3}}{\eps^{3t}} \exp(c_0^{t^3}) \cdot \log n  \\
&\leq   n^{b_0}q^{c_0}+ \frac{c_0 \zeta^{-b_0t^2} \polylog(q)}{\poly(\eps)} \cdot n^{b_0}+ \frac{\exp(c_0^{t^3}) }{\eps^{3t}} \cdot \log n   \\
        \end{align*}
        Above, in the second line we have used the fact that $\Sp_{\cA_0}, \Spdetect, \Spdectwo$ are all bounded by $O_{\zeta, \ell}(n^{b_0} q^{O_{\zeta,\ell}(1))}) = O_{\zeta, \ell}(n^{b_0}q^{c_0})$ by adjusting $c_0$; and we have used our expression for $\eps_2$ from \Cref{eq:eps2}.
        In the third line 
        we have absorbed the factor of $t$ into $\zeta^{b_0^{t^2}}$, adjusting $b_0$ appropriately, and we have absorbed the factor of $\zeta^{-b_0t^3}$  into $\exp(c_0^{t^3})$.

Next, we consider the evaluation time.  We have
        \begin{align*}
    \Tmeval_{\DALLR{t}} &\leq O\inparen{\frac{c_0^t}{\eps_2^{3t}} \cdot \poly\inparen{\frac{\log n}{\eps_2}} + n \cdot \Tm_{\cA_0} + \Tmdectwo}\\
&\leq q^{c_0} n^{b_0} + \frac{ c_0^t }{\eps^{3t+O(1)} \zeta^{b_0 t^3}} \cdot \polylog(n)
    \end{align*}
    for a sufficiently large $c_0$ that depends only on $\zeta$ and $\ell$.
    Next, we consider the evaluation space.  We have
    \begin{align*}
    \Speval_{\DALLR{t}} &\leq 
c_0\cdot t\cdot \inparen{\poly\inparen{\frac{\log(nq)}{\eps_2}} + n^2\log(q) }+ t \cdot \Sp_{\cA_0} + \Spdectwo \\
&\leq t \cdot n^{b_0}q^{c_0}+c_0 \cdot  \polylog(q) \inparen{ \frac{\polylog(n)}{\poly(\eps) \zeta^{b_0 t^2}} },
\end{align*}
again using that $\Sp_{\cA_0}, \Spdectwo = O_{\zeta, \ell}(n^{b_0} q^{c_0})$ for some constant $b_0$ and constant $c_0$ that depends only on $\zeta$ and $\ell$. 
Finally, we bound the output length by 
\begin{align*} \Spop_{\DALLR{t}} &\leq \frac{\exp(c_0^{t^3}) \log n}{\eps_2^{3t+O(1)}} \\
&\leq \frac{ \exp(c_0^{t^3}) \log n}{\eps^{3t+O(1)} \zeta^{b_0 t^3}}.
\end{align*}
    \end{itemize}
This proves the lemma.
\end{proof}
\Cref{lem:instantiate_DALLR} immediately leads to \Cref{thm:instantiate_DALLR}: 
\begin{proof}[Proof of \Cref{thm:instantiate_DALLR}]
    We apply \Cref{lem:instantiate_DALLR}. We choose an arbitrarily small constant $\xi$ and set $\zeta = O(\xi^3)$ and $t = b_0/\xi$, so that the rate of $C$ is $$1 - O(t \cdot \zeta) \geq 1 - O(t \cdot \xi^3) = 1 - O( b_0 \xi^2 ) \geq 1 - \xi$$ for sufficiently small $\xi$. Above, $b_0$ is the constant from \Cref{lem:instantiate_DALLR}.  Plugging these into \Cref{lem:instantiate_DALLR} immediately gives the desired bounds.

Next we verify the claim about the allowable block lengths.  
Fix a target block length $N'$; we wish to show that we can take $N$ so that $N'(1 - o(1)) \leq N \leq N'$.
The block length is $N = n^t$, where $t = b_0/\xi = O(1)$, and where $n$ can be any sufficiently large multiple of some constant $d_0 = d_0(\xi, \zeta) = O(1)$.  So in particular we can choose $n$ so that $(N')^{1/t} \geq n \geq (N')^{1/t} - d_0$.  As $t$ and $d_0$ are constant, this implies that
\[ N' \geq n^t \geq N'(1 - o(1)),\]
as desired.

Finally, note that since the base code is explicit, the tensor code is also explicit by known properties of tensor codes (See e.g., \cite[Section 2.4]{KRRSS20}).
\end{proof}

\subsection{High-Rate \DLCC s}\label{subsec:instant_DLCC}

In this section, we establish the existence of high-rate DLCCs.  

\begin{theorem}[High-rate DLCCs]\label{thm:DLCC_main}
    Let $\eta>0$ be a constant, and let $q$ be a fixed power of $2$.  Then for infinitely many integers $N$, there is  an explicit linear code $C \subseteq \F_q^N$ with
    rate at least $1 - \eta$ and distance $\delta =\Omega(1)$, that is a $(Q, \rho)$-\DLCC\ with
$Q = N^{o(1)}$ and  $\rho = \Omega(1)$. The pre-processing time is $N^{1 + o(1)}$, the pre-processing space, evaluation time, and evaluation space are $N^{o(1)}$, and the output length is  $\tilde{O}(\log N)$.
\end{theorem}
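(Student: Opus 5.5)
The plan follows the approach of \cite{CM25}, with the parameters reworked so the rate tends to $1$. The code will be a lifted Reed--Solomon code \cite{GKS13} over $\F_q$ in $m$ variables, where $q$ is a power of $2$ and $m$ grows slowly. By Lemma~\ref{lem:imp_means_DLCC}, an improving set $\cI$ of low-space functions for a code $C$ is enough to make $C$ a DLCC. So the proof has three steps. First, fix a lifted RS code with rate at least $1-\eta$ and constant distance. Second, build an improving set for it whose members are computable in $N^{o(1)}$ time and space. Third, apply Lemma~\ref{lem:imp_means_DLCC} and read off the parameters.

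\emph{Step 1: choosing the code.} Let $\mathcal{C}_m$ be the $m$-variate lift of the Reed--Solomon code of degree $d=(1-\eta')q$ over $\F_{q}$. Its codewords are the functions $\F_q^m\to\F_q$ whose restriction to every line is a degree-$d$ polynomial. It is linear, has block length $N=q^m$, and has relative distance at least $\eta'-o(1)$ by the usual Schwartz--Zippel-style argument on lines. For the rate, I will use the Guo--Kopparty--Sudan count of monomials $x^{\mathbf{a}}$ such that every $2$-adic "shadow" sum stays at most $d$. For a fixed $m$ and $q=2^s$ with $s\to\infty$, the fraction of good monomials tends to $1$ once $\eta'$ is small compared with $\eta$. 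I will choose $\eta'=\eta'(\eta,m)$ and then $s$ large enough that the rate is at least $1-\eta$. \cite{CM25} carried out this same computation to get rate $1/2$; the only change here is pushing it to $1-\eta$.

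\emph{Step 2: the improving set.} For each pair of a point and a direction, the improver decodes the lines through that point. To keep the randomness small, the directions come from a small-biased or sampler-derived family, and each line is decoded with the RS unique decoder in $\poly(q)$ time. On average over the family, this lowers the error fraction by a constant factor when the input is $\rho$-close to the code with $\rho=\Omega(\eta')$. The averaging argument is the one in \cite{CM25}: most lines through a random point see at most a $\rho/\eta'$-fraction of errors, and the sampler property preserves this on average. Taking $q=N^{o(1)}$, meaning $m=\omega(1)$ but slowly growing, makes each improver run in $N^{o(1)}$ time and space.

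\emph{Step 3: parameters.} Lemma~\ref{lem:imp_means_DLCC} composes $O(\log N)$ rounds of improvers, so the query complexity is $(q\cdot\poly)^{O(\log\log N)}=N^{o(1)}$. Pre-processing searches over seeds, testing each candidate with the tester induced by the improving set, and this costs $N^{1+o(1)}$ time and $N^{o(1)}$ space. The output is the list of chosen seeds, of length $\tilde O(\log N)$. Explicitness follows because the monomial basis of the lifted code can be enumerated.

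\emph{Main obstacle.} The hard part is Step 1 combined with Step 2. Rate close to $1$ requires $d$ close to $q$, which makes the distance $\eta'$ small, and that shrinks the radius within which the line-decoding improver gains on average. On top of that, $m$ must grow so that $q=N^{o(1)}$, while the lift's rate bound degrades with $m$. I would first try fixing $\eta'$ as a function of $\eta$ alone and checking that the GKS rate bound stays at least $1-\eta$ uniformly in $m$ when $s\gg m$. If that fails, the fallback is to take a constant $m=m(\eta)$ and instead concatenate or tensor to reach sub-polynomial locality.
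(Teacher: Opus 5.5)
Your plan has a genuine gap, and it is the tension you flag as the ``main obstacle'': you give no way to resolve it. A lifted RS code by itself cannot have rate $1-\eta$, distance and radius $\Omega(1)$, and locality $N^{o(1)}$ all at once.

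In your plan, locality $N^{o(1)}$ needs field size $N^{1/m}=N^{o(1)}$, so $m=\omega(1)$. Your ``first try'', a fixed $\eta'$ uniformly in $m$, fails. Take $d=(1-\eta')q$ and pick $k$ with $2^{-k}<\eta'$. Suppose each of the top $k$ bit positions is set in some coordinate of the exponent vector $\mathbf a$. Then the $2$-shadow of $\mathbf a$ that keeps exactly one such bit per position has sum $2^s-2^{s-k}>d$, so $X^{\mathbf a}$ is not in the lift. For uniformly random $\mathbf a$ this happens with probability $(1-2^{-m})^k\to 1$, so the rate tends to $0$ as $m\to\infty$. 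This matches \Cref{thm:GKS}, which takes $c=2^{\Theta(m\log m)}\log(1/\eta)$, so the lifted code has distance and radius only $2^{-\Theta(c)}=N^{-o(1)}$.

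Your fallback of constant $m$ makes the field size $N^{\Omega(1)}$. Then the line-reading improvers already make $N^{\Omega(1)}$ queries, and neither concatenation nor tensoring lowers locality. Also, the statement asks for a code over the \emph{fixed} field $\F_q$. Your code lives over a field of size $N^{o(1)}$, and nothing in your plan reduces the alphabet.

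The missing idea is to decouple distance from rate by distance amplification. The paper takes the lifted RS DLCC of \Cref{thm:DLCC} as it is: rate $1-\eta/3$, distance and radius $N^{-o(1)}$, alphabet $N^{o(1)}$. It regroups symbols to match alphabets. It then applies the AEL transformation (\Cref{lem:AEL} with $\ell=L=1$), using an inner code of rate $1-\eta/3$, distance $\eta^3$, and length $\poly(1/\delta,1/\rho,1/\eta)=N^{o(1)}$. This raises distance and radius to $\Omega(1)$, at the cost of a factor $1-\eta/3$ in rate and an $N^{o(1)}$ factor in queries and resources. Finally it concatenates (\Cref{lem:DLLR_concat}) with a second uniquely decodable code of rate $1-\eta/3$ to reach $\F_q$.

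Your Step~3 accounting also fails as written. With a constant-factor improver, \Cref{lem:imp_means_DLCC} needs $b=\Theta(\log N)$ rounds, so the query complexity is $Q^b=q^{\Theta(\log N)}$, not $N^{o(1)}$. Your $O(\log N)$ rounds are also inconsistent with the $O(\log\log N)$ exponent you wrote. The paper instead uses the improving set of \Cref{lem:imp_LRS}, with improvement factor $\gamma\approx q^{-a}$ and $m=ah$. This gives $b\le 2h(1+o(1))$ and $Q^b\le q^{4h(1+o(1))}=N^{O(1/a)}=N^{o(1)}$, which also controls the $Q^{b+1}$ factor in the pre-processing time.
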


The proof of the above theorem relies on the following theorem, which gives DLCCs as above, albeit with a sub-constant distance and decoding radius and super-constant alphabet size.

\begin{restatable}{theorem}{DLCCthm}\emph{[Similar to \cite{CM25}, Lemma 4.22]}
    \label{thm:DLCC}
    Let $\eta>0$ be a constant, and let $q$ be a fixed power of $2$.  Then for infinitely many integers $N$, there is  an explicit $\F_q$-linear code $C \subseteq (\F_q^b)^N$ with
    alphabet size $q^b = N^{o(1)}$, rate at least $1 - \eta$, and distance $\delta = N^{-o(1)}$, that is a $(Q, \rho)$-\DLCC\ with
$Q = N^{o(1)}$ and  $\rho = N^{-o(1)}$. The pre-processing time is $N^{1 + o(1)}$, the pre-processing space, evaluation time, and evaluation space are $N^{o(1)}$, and the output length is  $\tilde{O}(\log N)$.
\end{restatable}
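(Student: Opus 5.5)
The plan is to follow the route of \cite{CM25}: instantiate \emph{lifted Reed--Solomon codes} \cite{GKS13}, show that they admit a low-space \emph{improving set} (\Cref{def:improving_set}), and then invoke \Cref{lem:imp_means_DLCC} to convert the improving set into a DLC algorithm with the claimed resources. Concretely, take $\F = \F_{q^b}$ with $q^b = N^{o(1)}$, and take the $m$-variate lift of a Reed--Solomon code of degree $d = (1-\eta')|\F|$ over $\F$. Here $m = m(N)$ grows slowly, so that $N = |\F|^m$ and $|\F| = N^{1/m} = N^{o(1)}$. Viewing the alphabet $\F_{q^b}$ as $\F_q^b$, the resulting code is $\F_q$-linear.

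The first step is the code parameters. The lifted code contains every $m$-variate function whose restriction to each line is a polynomial of degree at most $d$. By the rate analysis of \cite{GKS13}, over characteristic $2$ the rate is at least $1 - \eta$ once $\eta'$ is small and $|\F|$ is large enough compared to $m$ (the rate loss behaves roughly like $m\cdot 2^{-\Omega(\log|\F|/m)}$ or $\eta'^{\Omega(1)}$-type terms). This step is where the statement departs from \cite{CM25}, which only tracked rate $1/2$. The main obstacle is choosing $m$ and $|\F|$ jointly so that the rate is at least $1-\eta$, while the distance $\delta \ge (1-d/|\F|)/\text{(lift factor)}$ stays $N^{-o(1)}$. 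I would first try the \cite{GKS13} bound with $|\F| = 2^{s}$ and $s \gg m\log(1/\eta)$, for example $m = \sqrt{\log N}$. The distance of a lifted RS code is at least the RS distance divided by $|\F|$ (by the Schwartz--Zippel-type argument for lifts), which gives $\delta \ge 1/|\F| = N^{-o(1)}$. This is consistent with the statement, which only asks for $\delta = N^{-o(1)}$.

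The second step is the improving set. For each direction (or a pseudorandom family of directions) $v$, define $I_v(y)(x)$ to be the value at $x$ of the unique-decoded RS codeword on the line $\{x + \lambda v\}$. A standard averaging and sampling argument over lines shows that if $\delta(y,c) \le \rho$ with $\rho = N^{-o(1)}$ below the line decoding radius, then $\mathbb{E}_v\,\delta(I_v(y),c)$ is a constant factor smaller than $\delta(y,c)$. Each $I_v$ reads one line, so it runs in space $\poly(|\F|) = N^{o(1)}$. The third step plugs this into \Cref{lem:imp_means_DLCC}. That lemma iterates $O(\log N)$ improvers, selects good ones using the tester implied by the improving set, and records the chosen sequence of improvers as the representation of the local algorithm, giving output length $\tilde O(\log N)$. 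Each local query recursively composes $O(\log N)$ improvers, each reading $|\F|$ symbols, so $Q \le |\F|^{O(\log N/\log|\F|)}$. This requires the number of iterations to be kept sublogarithmic in $N$ relative to $\log |\F|$. As in \cite{CM25}, the way to handle this is to start from distance $N^{-o(1)}$ and improve by factors of $|\F|^{\Omega(1)}$ per round, so that only $O(m)$ rounds are needed, giving $Q = |\F|^{O(m)}\cdot$ but with $m$ rounds of fan-in $|\F|$. The precise accounting follows \cite[Lemma 4.22]{CM25} verbatim; only the rate and parameter choice change. Together these give preprocessing time $N^{1+o(1)}$ and all other resources $N^{o(1)}$.
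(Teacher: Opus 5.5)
You have the paper's skeleton: lifted RS codes over $\F=\F_{q^b}$, an improving set, then \Cref{lem:imp_means_DLCC}. But the step that carries the weight is missing, and your own accounting exposes it. You arrive at $Q=|\F|^{O(m)}$, which is $N^{\Theta(1)}$, not $N^{o(1)}$.

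\textbf{The improving set.} In \Cref{lem:imp_means_DLCC} the local algorithm composes $b\approx \log(\rho N)/\log(1/\gamma)$ improvers. It therefore makes about $Q^b\approx N^{\log Q/\log(1/\gamma)}$ queries, so what you need is $\log Q=o(\log(1/\gamma))$. Your line-decoding improver reads $Q=|\F|$ symbols but cannot improve by much more than a factor of $1/|\F|$. To see this, put errors at density $2\theta$ inside a single hyperplane. The overall error rate is $2\theta/|\F|$, while the roughly $|\F|^{-2}$ fraction of (point, direction) pairs whose line lies in the hyperplane each see at least $2\theta|\F|$ wrong outputs.
\begin{itemize}
\item Hence $b\gtrsim m$ and $Q^b\ge N^{1-o(1)}$.
\item The preprocessing time, which carries a factor $N\cdot Q^{b+1}$, is then about $N^2$.
\item Decoding on $a$-dimensional flats does not help, since the same example raises $\log Q$ and $\log(1/\gamma)$ together.
\end{itemize}
The missing ingredient is \cite[Lemma 4.21]{CM25} (\Cref{lem:imp_LRS}). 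With $m=h\cdot a$, it supplies improvers that make at most $|\F|^2$ queries yet improve by $\gamma=O(a^{2a+1}400^a/(|\F|^a\theta^2))$. Then $b\le 2h(1+o(1))$ and $Q^b\le |\F|^{4h(1+o(1))}=N^{(4+o(1))/a}$, which is $N^{o(1)}$ exactly because $a\to\infty$. You also need $a=o(h)$, so that $|\cI|=N^{(2a+O(1))/h}\poly(h)=N^{o(1)}$.

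\textbf{The parameter regime.} \Cref{thm:GKS} does not give a rate loss that decays with $|\F|$. To get rate $1-\eta$ it needs $1-d/|\F|=2\theta$, with $\theta=2^{-c-1}$ and $c=(1+\lceil\log m\rceil)2^{(1+\lceil\log m\rceil)m}\log(1/\eta)$. This same $\theta$ then controls everything else:
\begin{itemize}
\item the distance, since the lift of a code of relative distance $\delta'$ has distance at least $\delta'-1/|\F|$ (not at least $1/|\F|$, as you claim);
\item the decoding radius $\theta/12$;
\item the factor $\theta^{-2}$ in $\gamma$.
\end{itemize}
So at a minimum you need $c\lesssim \log|\F|$, which forces $2^m\lesssim\log|\F|$. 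For your choice $m=\sqrt{\log N}$, you have $\log|\F|=\sqrt{\log N}$ but $c\ge 2^{\sqrt{\log N}}$. Then $2^{-c}<1/|\F|$, so $d>|\F|-1$ and the lift contains every function $\F^m\to\F$. The code has distance $1/N$ and no decoding radius. Even your heuristic rate loss $m\cdot 2^{-\Omega(\log|\F|/m)}$ grows with $m$ for this choice.

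The paper instead takes $h=\log\log\log|\F|$, $a=\log h$ and $m=ah$. Then $c=(\log\log|\F|)^{O((\log\log\log\log|\F|)^2)}=o(\log|\F|)$, which gives:
\begin{itemize}
\item $\theta=N^{-o(1)}$, so the distance and radius are $N^{-o(1)}$;
\item $\log(1/\gamma)=a\log|\F|(1-o(1))$;
\item $|\F|=N^{1/m}=N^{o(1)}$.
\end{itemize}
Taking $|\F|=q^b$ for the fixed $q$ then gives the $\F_q$-linear code over $\F_q^b$, as you suggest.
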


\begin{remark}
    In fact, the codes $C$ in \Cref{thm:DLCC} are linear over their alphabet (treated as $\Sigma = \F_{q^b}$), not just linear over a subfield $\F_q$.  However, since our next step will be to apply the AEL transformation, we state the result of \Cref{thm:DLCC} about $\F_q$-linear codes over $\Sigma = \F_q^b$.
\end{remark}

The proof of \Cref{thm:DLCC} is an adaptation of the proof of Lemma 4.22 in \cite{CM25}.  That work gives a low-space unique-decoding algorithm for lifted Reed-Solomon codes.  As mentioned in the introduction, \cite{CM25} implicitly defines and uses a DLCC, and in fact they show that lifted Reed-Solomon codes are DLCCs.  However,  their Lemma 4.22 does not explicitly establish the high-rate guarantee that we need, only that the codes are asymptotically good.  Thus, for completeness, we include the proof of \Cref{thm:DLCC} in \Cref{app:DLCC}.

\begin{proof}[Proof of \Cref{thm:DLCC_main}]
We first apply the AEL transformation for DLLRCs (\Cref{lem:AEL}, specialized to the DLCC case) to amplify the distance and decoding radius of the codes given by Theorem \ref{thm:DLCC} to a constant,  and then apply the concatenation lemma for DLLRCs (Lemma \ref{lem:DLLR_concat}, specialized to the DLCC case) on the resulting codes to reduce the alphabet size to a constant. Details follow.

In what follows, fix a constant $\eta>0$ and a $q$ which is a power of $2$.  Let $C \subseteq (\F_q^b)^N$ be the explicit $\F_q$-linear code with alphabet size  $q^b=N^{o(1)}$, rate $1 - \frac \eta 3$, and distance $\delta=N^{-o(1)}$, given by Theorem \ref{thm:DLCC}. 
    Then $C$ is  a $(Q, \rho)$-\DLCC\ with
$Q = N^{o(1)}$ and  $\rho = N^{-o(1)}$, and with pre-processing time $N^{1 + o(1)}$, pre-processing space, evaluation time, and evaluation space $N^{o(1)}$, and output length $\tilde{O}(\log N)$.

\paragraph{AEL transformation:} 

We first apply the AEL transformation of \Cref{lem:AEL} on $C$ to increase its  distance and decoding radius to a constant. 

Let $d= d(\delta, \rho,  \eta^4)=N^{o(1)}$, where $d$ is as given in the statement of \Cref{lem:AEL}. Let $C_{\inn} \subseteq \F_q^d$ be an explicit linear code of 
rate $1 - \frac{\eta} 3$ and distance $\eta^3=\Omega(1)$, that is (globally) uniquely decodable from $\eta^3=\Omega(1)$ errors in time (and space) $\poly(d) = N^{o(1)}$.  For example, concatenated Reed-Solomon codes will work (see, e.g., \cite[Theorem 14.3.3]{ECT}).  

We would like to apply the AEL transformation of \Cref{lem:AEL} with the outer code $C$ and the inner code $C_{\inn}$, and to this end we first increase the alphabet size of $C$ to $ |\Sigma_{\inn}|^{R_{\inn} \cdot N_{\inn}}=q^{(1- \eta/3) d}$. This is done as follows. Let $\tilde b= (1- \frac \eta 3) \cdot d = N^{o(1)}$. 
Let $\tilde C$ be the code obtained from $C$ by dividing the entries of $C$ into groups of size $\tilde b /b = N^{o(1)}$ (recalling that $q^b =N^{o(1)}$, and so $b=O(\log N)$) of symbols of $\F_q^b$, and viewing each group as a single symbol of $\F_q^{\tilde b}$. Then $\tilde C \subseteq (\F_q^{\tilde b})^{\tilde N}$ is an $\F_q$-linear code of block length $\tilde N = N \cdot b /\tilde b \in ( N^{1-o(1)}, N)$. It can also be verified that the rate of $\tilde C$ is the same as that of $C$, and the distance of $\tilde C$ is at least the distance of $C$.

The DLC algorithm $\tilde \cApre$ for $\tilde C$ emulates the corresponding DLC algorithm  $\cApre$ for $C$, except that for each query that $\cApre$ 
makes, $\cApre$ 
queries the corresponding symbol of $\F_q^{\tilde b}$ and retrieves from it the requested symbol of $\F_q^b$. 
For each local algorithm $A$ output by $\cApre$, the DLC algorithm $\tilde \cApre$ outputs a local algorithm $\tilde A$. Given a query to a block of size $\tilde b /b = N^{o(1)}$ of symbols in $\F_q^b$ the local algorithm $\tilde A$ executes $A$ on each of the $\tilde b /b$ entries, and simulates answers to  $A$'s queries as above. Note that this transformation does not change the decoding radius and space bounds of the DLCC, and increases the query complexity and the running time by a multiplicative factor of $\tilde b /b = N^{o(1)}$. 

 We conclude that $\tilde C \subseteq (\F_q^{\tilde b})^{\tilde N}$ is an explicit $\F_q$-linear code with alphabet size $q^{\tilde b} = q^{(1-\frac \eta 3)d}$, 
 rate $1 - \frac \eta 3$, and distance $\tilde \delta = \delta ={\tilde N}^{-o(1)}$,  that is a $(\tilde Q, \tilde \rho)$-\DLCC\ with
$\tilde Q = {\tilde N}^{o(1)}$ and  $\tilde \rho = {\tilde N}^{-o(1)}$, and with pre-processing time ${\tilde N}^{1 + o(1)}$, pre-processing space, evaluation time, and evaluation space ${\tilde N}^{o(1)}$, and output length $\tilde{O}(\log {\tilde N})$. 

Now apply the distance amplification procedure given by \Cref{lem:AEL} (specialized to the DLCC setting of $\ell_{\out}=L_{\out}=\ell_{\inn}=L_{\inn}=1$) with the outer code $\tilde C$ and the inner code $C_{\inn}$. Then the resulting code $C' \subseteq (\F_q^d)^{N'}$ 
is an explicit $\F_q$-linear code of block length $N' = \tilde N  \in (N^{1-o(1)},N)$, alphabet size $q^d = \exp((N')^{o(1)})$, 
rate at least $1 - \frac {2\eta} 3$, and distance at least $\eta^3 - 2\eta^4 = \Omega(1)$, that is a $(Q',\rho')$-\DLCC\ with 
$Q' = \tilde Q \cdot d^2 = (N')^{o(1)}$ and $\rho'\geq \eta^3 - \eta^4 = \Omega(1)$. Furthermore, it can be verified that the corresponding DLC algorithm has pre-processing time $(N')^{1 + o(1)}$, pre-processing space, evaluation time, and evaluation space $(N')^{o(1)}$, and output length $\tilde{O}(\log (N'))$.

\paragraph{Concatenation:} Next we apply the concatenation transformation of \Cref{lem:DLLR_concat} on the resulting code $C'$ to decrease its alphabet size to a constant.

Let $C'_{\inn} \subseteq \F_q^{d/(1- \eta/3)}$ be an explicit linear code of 
rate $1 - \frac{\eta} 3$ and distance $\eta^3 = \Omega(1)$, that is (globally) uniquely decodable from $\eta^3=\Omega(1)$ errors in time (and space) $\poly(d)=N^{o(1)}$. Note that the alphabet size of the code $C'$ constructed above is $q^d = |\Sigma_{\inn}|^{R_{\inn} \cdot N_{\inn}}$. Again, a concatenated RS code will work (see, e.g., \cite[Theorem 14.3.3]{ECT}). 

Now apply the concatenation procedure given by \Cref{lem:DLLR_concat} (specialized to the DLCC setting of $\ell_{\out}=L_{\out}=\ell_{\inn}=L_{\inn}=1$) with the outer code $C'$ and the inner code $C'_{\inn}$. Then the resulting code $C'' \subseteq \F_q^{N''}$ is an explicit linear  
code of block length $N'' = N' \cdot \frac {d} {1- \eta/3}  \in (N', (N')^{1+o(1)})$, 
rate at least $1 - \eta$, and distance $\Omega(1)$,  that is a $(Q'',\rho'')$-\DLCC\ with 
$Q'' = Q' \cdot O(d) = (N'')^{o(1)}$  and $\rho''=\Omega(1)$. Furthermore, it can be verified that the corresponding DLC algorithm has pre-processing time $(N'')^{1 + o(1)}$, pre-processing space, evaluation time, and evaluation space $(N'')^{o(1)}$, and output length $\tilde{O}(\log N'')$.

\end{proof}

\subsection{High-rate and capacity-achieving \DLLR s}\label{subsec:instant_DLLR}

In this section we first construct our high-rate DLLRC by intersecting the high-rate DALLRC and high-rate DLCC constructed in the previous sections, and then obtain our capacity-achieving DLLRC by applying the AEL transformation to the resulting code. 
We begin with the high-rate DLLRC construction.

\begin{theorem}[High-Rate {\DLLR}s]\label{thm:final}
For any constant $\xi > 0$ and constant $\ell \geq 1$, there is a constant $q_0>1$ (which may depend on $\xi, \ell$) so that the following holds. Let  $q \geq q_0$ be a fixed power of $2$.  Then for infinitely many $N$, there is an explicit linear code $C \subseteq \F_q^N$ with rate at least $1 - \xi$ and distance $\delta = \Omega(1)$, 
that is a $(Q,\rho, \ell, L)$-\DLLR\  with $Q=N^\xi$,  $\rho=\Omega(1)$, and $L=O(1)$. The pre-processing time is $N^{1 + \xi}$, the pre-processing space, evaluation time, and evaluation space are $N^{\xi}$, and the output length is $\tilde{O}(\log N)$.
\end{theorem}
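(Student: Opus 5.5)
The plan is to apply \Cref{lem:mainDLLR} with $C_1$ the high-rate \ADLLR\ from \Cref{thm:instantiate_DALLR} and $C_2$ the high-rate \DLCC\ from \Cref{thm:DLCC_main}, both over the same field $\F_q$ and the same block length $N$.

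\textbf{Step 1: choose $C_2$ and fix $\eps$.} Given $\xi$, first take $C_2 \subseteq \F_q^{N}$ from \Cref{thm:DLCC_main} with $\eta = \xi/2$. It has rate $\geq 1-\xi/2$, distance $\delta_2 = \Omega(1)$, and decoding radius $\rho_2=\Omega(1)$, where these constants depend only on $\xi$ and $q$. Then fix a constant $\eps < \min\{\eps_0, \rho_2, \delta_2/2\}$. This makes $C_2$ a $(Q_2,\eps)$-\DLCC, since decoding up to radius $\rho_2$ implies decoding up to radius $\eps$.

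\textbf{Step 2: choose $C_1$.} Take $C_1$ from \Cref{thm:instantiate_DALLR} with parameter $\xi' = \xi/2$ (shrunk further below if needed), list-recovery parameter $\ell$, and the fixed constant $\eps$. Then $\rho=\poly(\eps)=\Omega(1)$, $L=O(1)$, $Q_1 = N^{\xi'}/\poly(\eps)$, and the time and space bounds are $N^{1+\xi'}\cdot O(1)$ and $N^{\xi'}\cdot O(1)$, since $q$ and $\eps$ are constants.

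\textbf{Step 3: match block lengths.} This is the main obstacle. \Cref{thm:DLCC_main} gives only infinitely many block lengths $N$, whose structure comes from the AEL and concatenation steps. \Cref{thm:instantiate_DALLR}, by contrast, allows any $N$ up to a $(1-o(1))$ factor, specifically some $N_1$ with $N(1-o(1)) \le N_1 \le N$. I would resolve the mismatch by padding $C_1$ with zeros to length $N$, that is, taking $C_1' = C_1 \times \{0\}^{N-N_1}$.
\begin{itemize}
\item Padding preserves linearity, and the rate drops by only $o(1)$.
\item The DALLR algorithm is unchanged: it runs on the first $N_1$ coordinates, and the local algorithms output $0$ on the padded coordinates.
\item The codeword $c$ has $\delta(c,\cS)\le \rho$ over length $N$, so the fraction of errors on the first $N_1$ coordinates is at most $\rho N/N_1 \le 2\rho$. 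This means the list-recovery radius shrinks by a constant factor, which is fine.
\item The approximation error on all $N$ coordinates is still at most $\eps$.
\end{itemize}
Also, $C_2$ is linear over $\F_q$ with $q$ a power of $2$, and I take $q \ge q_0$, so both codes live over the same $\F_q$.

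\textbf{Step 4: intersect and bound resources.} \Cref{lem:mainDLLR} now gives a \DLLR\ $C^\star=C_1'\cap C_2$ with:
\begin{itemize}
\item rate $\geq 1-\xi/2-\xi/2-o(1)$; running with $\xi/3$ in place of $\xi/2$ absorbs the $o(1)$ term;
\item distance $\geq \delta_2 = \Omega(1)$;
\item radius $\rho = \Omega(1)$ and list size $L = O(1)$;
\item query complexity $Q_1Q_2 = N^{\xi'+o(1)} \le N^{\xi}$.
\end{itemize}
For the resources, plug into the formulas of \Cref{lem:mainDLLR}:
\begin{itemize}
\item Preprocessing time is $\Tmcon_1 + L\,\Tmcon_2\,\Tmeval_1 = N^{1+\xi'}+N^{1+o(1)}N^{\xi'} \le N^{1+\xi}$.
\item Preprocessing space is $N^{\xi'}+N^{o(1)}+N^{\xi'}+O(\log N) \le N^\xi$.
\item Evaluation time is $\Tmeval_2+Q_2\Tmeval_1 = N^{o(1)}N^{\xi'} \le N^{\xi}$.
\item Evaluation space is $N^{\xi'}+N^{o(1)} \le N^{\xi}$.
\item Output length is $\Spop_1 + L\,\Spop_2 = \tilde O(\log N)$.
\end{itemize}
Throughout, constants are absorbed by taking $N$ large and choosing $\xi'$ strictly smaller than $\xi$. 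Finally, $C^\star$ is explicit: both $C_1'$ and $C_2$ are explicit linear codes, so a generator matrix for their intersection can be computed by linear algebra.
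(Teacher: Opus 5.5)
Your proposal is correct and follows the paper's proof: it intersects the high-rate DALLRC of \Cref{thm:instantiate_DALLR} with the high-rate DLCC of \Cref{thm:DLCC_main} via \Cref{lem:mainDLLR}, with $\eps$ chosen below the DLCC's radius and below half its distance. The only difference is how block lengths are matched: the paper punctures the DLCC down to the DALLRC's length $N$ (the projection is injective because the DLCC has constant distance, so no rate is lost and only the DLCC's radius and distance halve), whereas you pad the DALLRC with zeros, which is equally valid but costs an $o(1)$ in rate that you correctly absorb by running with a smaller $\xi'$.
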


\begin{proof}
 Fix a constant $\xi > 0$  and a constant $\ell \geq 1$, and let   $q_0=q_0(\frac \xi 2,\ell)$ and $\epsilon_0= \epsilon_0(\frac \xi 2,\ell)$ be the constants guaranteed by  \Cref{thm:instantiate_DALLR}. 
 Let $q \geq q_0$ be a fixed power of $2$.

 Then by \Cref{thm:DLCC_main}, for infinitely many values of $N'$, there is  an explicit linear code $C' \subseteq \F_q^{N'}$ of    rate $1 - \frac \xi 2$ and distance $\delta' =\Omega(1)$, that is a $(Q', \rho')$-DLCC with
$Q' = (N')^{o(1)}$ and  $\rho' = \Omega(1)$, and with pre-processing time $(N')^{1 + o(1)}$, pre-processing space, evaluation time, and evaluation space $(N')^{o(1)}$, and output length $\tilde{O}(\log N')$.

    By \Cref{thm:instantiate_DALLR}, 
    there is an explicit linear code $C \subseteq \F_q^N$, so that 
    \[ N'(1 - o(1)) \leq N \leq N',\]
    of rate $1 - \frac \xi 2$ and distance $\Omega(1)$, 
    that is a $(Q,\eps, \rho, \ell, L)$-\ADLLR\ with $\epsilon = \min\{\epsilon_0, \frac {\rho'} 2, \frac{\delta'}{4} \}= \Omega(1)$, $Q=O(N^{\xi/2})$,  
    $\rho=\Omega(1)$, and $L=O(1)$, and 
with pre-processing time $O(N^{1 + \xi/2})$, pre-processing space, evaluation time, and evaluation space $O(N^{\xi/2})$, and output length $O(\log N)$.

Now, let $C'' \subseteq \F_q^N$ be the projection of $C'$ to the first $N$ coordinates, and let $\pi:C' \to C''$ be the projection map.  Note that $\pi$ is injective, since $C'$ has constant distance, while the number of coordinates that were dropped is
\[ N' - N = o(1)\cdot N',\]
by the above. Then $C''$ has rate at least $1- \frac \xi 2$ and distance at least $\frac {\delta'} {2}$. 
We claim that $C''$ is a $(Q', \rho'/2)$-DLCC, with the same space and time requirements.  Indeed, suppose that $y \in \F_q^N$ satisfies $\delta(y, c) \leq \rho'/2$ for some $c \in C''$.  Then $\delta( (y,0), \pi^{-1}(c) ) \leq \rho'/2 + o(1) \leq \rho'$, where $(y,0) \in \F_q^{N'}$ is the vector $y$ padded with $N' - N$ zeros.  Thus, a DLC algorithm for $C''$ is as follows:
\begin{itemize}
    \item Run the DLC pre-processing algorithm for $C'$ with access to input $(y,0) \in \F_q^{N'}$, to obtain a description of a local algorithm $A$, where $A(i) = (\pi^{-1}(c))_i$ for all $i \in [N']$ (in particular, $A(i)=c_i$ for all $i \in [N]$).
    \item Return this same description of $A$, with the understanding that we will now only evaluate $A$ only on $i \in [N]$, rather than $i \in [N']$.
\end{itemize}
Further, since $N'(1 -o(1)) \leq N \leq N' $, we have $Q' = N^{o(1)},$ pre-processing time $N^{1 + o(1)}$; pre-processing space, evaluation space, and evaluation time $N^{o(1)}$; and output length $\tilde{O}(\log N)$.

Now $C$ and $C''$ have the same alphabet size and same length, so we may intersect them.
Let $C^* = C \cap C''$. Then by \Cref{lem:mainDLLR}, $C^* \subseteq \F_q^N$ is a linear code of rate at least $1 - \xi$ and distance $\Omega(1)$, that is a $(Q \cdot Q', \rho, \ell,L)$-\DLLR\, where $Q \cdot Q' =O(N^{\xi/2}) \cdot N^{o(1)} \leq N^{\xi}$ and $\rho = \Omega(1)$. Furthermore, it can be verified that the corresponding DLLR algorithm has pre-processing time $N^{1 + \xi}$, pre-processing space, evaluation time, and evaluation space $N^{\xi}$, and output length $\tilde{O}(\log N)$.

\end{proof}

We now proceed to our final construction of capacity-achieving DLLRCs.

\begin{theorem}[Capacity-Achieving DLLRCs] \label{thm:main}
Let $\gamma > 0$, $\ell \geq 1$, and $R \in (0,1)$ be constants.
Then for infinitely many integers $N$, there is an explicit code $C \subseteq \Sigma^N$ of alphabet size $|\Sigma|=O(1)$, 
rate at least $R$, and distance at least $1-R-\gamma$, 
that is a $(Q,\rho, \ell, L)$-\DLLR\  with $Q=N^\gamma$,  $\rho=1- R- \gamma$, and $L=O(1)$. The pre-processing time is $N^{1 + \gamma}$, the pre-processing space, evaluation time, and evaluation space are $N^{\gamma}$, and the output length is $\tilde{O}(\log N)$.
\end{theorem}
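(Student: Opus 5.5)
We obtain \Cref{thm:main} by applying the distance-amplification \Cref{lem:AEL} to the high-rate DLLRC of \Cref{thm:final}, with a constant-length inner code near capacity for list-recovery. Concretely, fix $\gamma,\ell,R$. As the inner code $C_{\inn}\subseteq \F_q^{d'}$ we take a linear code of rate $R_{\inn}=R+\gamma/4$ and block length $d'\geq d$, with distance at least $1-R_{\inn}-\gamma/4$, that is $(\rho_{\inn},\ell,L_{\inn})$-list-recoverable for $\rho_{\inn}=1-R_{\inn}-\gamma/4$ and $L_{\inn}=O(1)$. Here $d=d(\delta_{\out},\rho_{\out},\gamma/4)$ is the constant from \Cref{lem:AEL}. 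Such a code exists for $q$ a large enough constant power of $2$ (random linear codes achieve list-recovery capacity with constant list size, e.g.\ via the standard argument as in \cite{GKORS18}). Since $d'$ and $q$ are constants, we can find $C_{\inn}$ by brute force in $O(1)$ time. Its encoding, inversion, and global list-recovery (exhaustive search over the $q^{R_{\inn}d'}$ messages) then all cost $O(1)$ time and space.

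As the outer code we apply \Cref{thm:final} with $\ell_{\out}:=L_{\inn}$ and $\xi:=\min\{\gamma/4,\gamma/2\}$, so that $R_{\out}\geq 1-\gamma/4$. Its alphabet is $\F_q$, but \Cref{lem:AEL} needs $|\Sigma_{\out}|=q^{R_{\inn}d'}$. We fix this by grouping $R_{\inn}d'$ consecutive coordinates into one symbol, exactly as in the proof of \Cref{thm:DLCC_main}. Grouping preserves $\F_q$-linearity and rate. Distance and radius drop by at most a constant factor, because a fraction $\rho$ of bad grouped symbols touches at most a $\rho$ fraction of the original coordinates. We also get $\ell$-lists on grouped symbols from $\ell_{\out}$-lists per coordinate by taking products: a list of size $\ell_{\out}$ over the big alphabet induces, on each sub-coordinate, a list of size at most $\ell_{\out}$. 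Finally, query complexity and evaluation time go up by the constant factor $R_{\inn}d'$. The outer code is therefore still a $(N^{\gamma/4},\Omega(1),L_{\inn},O(1))$-DLLRC with $\delta_{\out},\rho_{\out}=\Omega(1)$, preprocessing time $N^{1+\gamma/4}$, and space $N^{\gamma/4}$.

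Now apply \Cref{lem:AEL}. The resulting code has constant alphabet $\F_q^{d'}$ and rate $R_{\inn}R_{\out}\geq (R+\gamma/4)(1-\gamma/4)\geq R$. Its distance is at least $\delta_{\inn}-2(\gamma/4)\geq 1-R-\gamma$. It is a DLLRC with radius $\rho_{\inn}-\gamma/4\geq 1-R-\gamma$, list size $L_{\out}=O(1)$, and $Q=Q_{\out}d'^2=O(N^{\gamma/4})\leq N^\gamma$. In all the resource bounds, every inner term ($\Tm_{\inn}$, $\Sp_{\inn}$, $\poly(N_{\inn},\log N_{\out})$, and so on) is $\polylog N$. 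The outer terms are $N^{1+\gamma/4}$ for preprocessing time and $N^{\gamma/4}$ for space and evaluation. Together these give preprocessing time $N^{1+\gamma}$, space and evaluation time and space at most $N^\gamma$, and output length $\tilde O(\log N)$.

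The main obstacle I expect is the parameter bookkeeping in \Cref{lem:AEL}. Its hypothesis $N_{\inn}\geq d(\delta_{\out},\rho_{\out},\gamma)$ must hold while the alphabet equation $|\Sigma_{\out}|=|\Sigma_{\inn}|^{R_{\inn}N_{\inn}}$ is met and $L_{\inn}\leq\ell_{\out}$. This forces the order of choices: first pick $\gamma$ and $\ell$; then fix $\xi$, and hence the $\Omega(1)$ values $\delta_{\out},\rho_{\out}$ of \Cref{thm:final}, which must not depend on $\ell_{\out}$ in a circular way; only then pick $d'$ and $C_{\inn}$, and hence $L_{\inn}$. But \Cref{thm:final} must be invoked with $\ell_{\out}\geq L_{\inn}$, and $\rho_{\out},\delta_{\out}$ may depend on $\ell_{\out}$. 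I would break the cycle by noting that the inner list size for a random linear code can be bounded as $L_{\inn}=(q\ell/\gamma)^{O(1)}$ or similar, independently of $d'$. So we first fix $q$ and $L_{\inn}$, then apply \Cref{thm:final} with $\ell_{\out}=L_{\inn}$, and only afterwards choose $d'\geq d$, taking $d'$ divisible as needed.
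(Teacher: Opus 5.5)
Your architecture matches the paper's: \Cref{lem:AEL} applied to the high-rate DLLRC of \Cref{thm:final}, with a constant-length near-capacity inner code and outer symbols grouped to match alphabets. Your rate, distance, radius and resource bookkeeping is fine. The gap is in the order of parameter choices, and it sits in the place you flagged, but through a dependence you did not track: the field size $q$. \Cref{thm:final} with input-list size $\ell_{\out}=L_{\inn}$ only holds for $q\ge q_0(\xi,L_{\inn})$, and necessarily $q_0(\xi,\ell)>\ell$. Indeed, over $\F_q$ with $\ell\ge q$, setting every input list to $\F_q$ makes every codeword consistent, so no $O(1)$ output list is possible; the construction behind \Cref{thm:final} in fact uses $q_0=\ell^{O(1/\zeta^2)}$ via \Cref{thm:basecode}. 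Your projection step needs exactly this, since it list-recovers the outer code coordinatewise over $\F_q$ with lists of size $L_{\inn}$. However, you fix $q$ first and take $L_{\inn}$ from the standard random-linear-code argument, which extracts $\log_q L$ linearly independent codewords from a list of size $L$. That argument gives $L_{\inn}=q^{\Theta(\ell/\gamma)}$, and your own $(q\ell/\gamma)^{O(1)}$ also grows as a power of $q$. Then $L_{\inn}\ge q$, the requirement $q\ge q_0(\xi,L_{\inn}(q))$ can never be met, and enlarging $q$ only enlarges $L_{\inn}$. So you broke the cycle through $d'$ correctly, but a second cycle through $q$ remains, and with the choices you committed to the argument does not go through.

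What is missing is an inner code whose list size is bounded in terms of $\gamma$ and $\ell$ alone, uniformly in both $q$ and the block length. The paper takes $C_{\inn}$ from \Cref{thm:ST25}, where $L_0=L_0(\gamma/4,\ell)$ depends on neither $q$ nor $N_{\inn}$, and any $q\ge q_1(\gamma/4,\ell)$ works. The paper's order of choices is then:
\begin{enumerate}
\item fix $L_0$;
\item fix a power of two $q\ge\max\{q_0(\gamma/4,L_0),q_1\}$;
\item apply \Cref{thm:final} with input-list size $L_0$, which fixes $\delta_{\out},\rho_{\out}$ and hence $d$;
\item only then choose $N_{\inn}\ge d$.
\end{enumerate}
Your random-linear inner code can be kept in two ways. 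One is to use a list-recovery bound for random linear codes whose list size is independent of $q$ once $q$ is large enough. The other is to decouple the two fields: keep $C_{\inn}$ over $\F_q$ (so $L_{\inn}$ is fixed), take the outer code from \Cref{thm:final} over $\F_{q^k}$ with $q^k\ge q_0(\xi,L_{\inn})$ (still $\F_q$-linear and a power of two), and choose $d'$ so that $k$ divides $R_{\inn}d'$ before grouping.
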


\begin{proof}
Fix constants $\gamma >0$, $\ell \geq 1$, and $R \in (0,1)$. 
Let $L_0= L_0(\frac \gamma 4,\ell)$ be the constant guaranteed by Theorem \ref{thm:ST25}, and let $q_1=q_1(\frac \gamma 4, \ell)$ be the minimum field size guaranteed by this theorem.
Let $q_0=q_0(\frac \gamma 4, L_0)$ be the constant guaranteed by Theorem \ref{thm:final}. Let $q \geq \max\{q_0,q_1\}$ be a fixed power of $2$.

Let $C \subseteq \F_q^N$ be the code of rate $1 - \frac \gamma 4$ and distance $\delta=\Omega(1)$, given by Theorem \ref{thm:final}.
Then $C$ is a $(Q,\rho, L_0, L)$-\DLLR\  with $Q=N^{\gamma/4}$,  $\rho=\Omega(1)$, and $L=O(1)$, and 
with pre-processing time $N^{1 + \gamma/4}$, pre-processing space, evaluation time, and evaluation space $N^{\gamma/4}$, and output length $\tilde{O}(\log N)$.

Let $d=d(\delta, \rho, \frac \gamma 4)=O(1)$, where $d$ is as given in the statement of \Cref{lem:AEL}. 
Let $C_{\inn} \subseteq \F_q^{N_{\inn}}$ be the code of constant
block length $N_{\inn} \geq d$,
rate $R + \frac \gamma 4$,  and distance at least $1-R- \frac \gamma 2$,  that is 
$(1-R - \frac \gamma 2,\ell,L_0)$-list recoverable, given by Theorem \ref{thm:ST25}.\footnote{Since the code $C_{\inn}$ is of constant size we could have alternatively also used any non-explicit capacity-achieving list-recoverable code (e.g., a random code).} Note that the code $C_{\inn}$ is of constant size, and therefore is encodable and list recoverable in constant time, and the encoding map can also be inverted in constant time.

Similarly to the proof of Theorem \ref{thm:DLCC_main}, we may assume that $C$ has alphabet size $|\Sigma_{\inn}|^{R_{\inn} \cdot N_{\inn}} = q^{(R+\gamma/4)N_{\inn}}$ by grouping together subsets of entries of $C$ of size $(R+ \frac \gamma 4)N_{\inn}=O(1)$, without meaningfully affecting any of the properties of $C$. 

Now apply the distance amplification procedure given by \Cref{lem:AEL}  with the outer code $C$ and the inner code $C_{\inn}$. Then the resulting code $C' \subseteq (\F_q^{N_{\inn}})^{N'}$ is a code of block length $N'= \Theta(N)$, alphabet size $q^{N_{\inn}}=O(1)$, rate at least $R$, and distance at least $1-R-\gamma$,  that is a $(Q',\rho',\ell,L)$-\DLLR\ with 
$Q' = Q \cdot (N_{\inn})^2 = O((N')^{\gamma/4}) \leq (N')^\gamma$, $\rho'=1-R-\gamma$, and $L=O(1)$. Furthermore, it can be verified that the corresponding DLLR algorithm has pre-processing time $O((N')^{1 + \gamma/4}) \leq (N')^{1+\gamma}$, pre-processing space, evaluation time, and evaluation space $O((N')^{\gamma/4})\leq (N')^\gamma$, and output length $\tilde{O}(\log (N'))$. 
\end{proof}

Finally, as an immediate corollary, we obtain our deterministic time- and space-efficient, capacity-achieving, (globally) list-recoverable codes.

\begin{corollary}[Capacity-Achieving Deterministic List-Recoverable Codes]\label{cor:main}
Let $\gamma > 0$, $\ell \geq 1$, and $R \in (0,1)$ be constants.
Then for infinitely many integers $N$, there is an explicit code $C \subseteq \Sigma^N$ of alphabet size $|\Sigma|=O(1)$, 
rate $R$, and distance $1-R-\gamma$, 
that is $(\rho, \ell, L)$-(globally) list recoverable deterministically with $\rho=1-R-\gamma$ and $L=O(1)$, 
and in time $N^{1+\gamma}$ and space $N^{\gamma}$.
\end{corollary}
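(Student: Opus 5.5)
This follows by combining \Cref{thm:main} with \Cref{lem:dllr_to_global}, after rescaling the parameter. Given constants $\gamma>0$, $\ell\ge1$ and $R\in(0,1)$, I will apply \Cref{thm:main} with the same $\ell$ and $R$ and with a smaller parameter $\gamma' = \gamma/4$.

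This yields, for infinitely many $N$, an explicit code $C\subseteq\Sigma^N$ with the following properties:
\begin{itemize}
\item its alphabet has size $|\Sigma|=O(1)$;
\item its rate is at least $R$ and its distance is at least $1-R-\gamma' \ge 1-R-\gamma$;
\item it is a $(Q,\rho,\ell,L)$-\DLLR\ with $\rho = 1-R-\gamma'$ and $L=O(1)$;
\item the pre-processing time is $N^{1+\gamma'}$;
\item the pre-processing space, the evaluation time and the evaluation space are each $N^{\gamma'}$;
\item the output length is $\tilde O(\log N)$.
\end{itemize}

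Since $1-R-\gamma' \ge 1-R-\gamma$, decoding up to radius $1-R-\gamma'$ in particular covers radius $1-R-\gamma$. If rate exactly $R$ is desired, I would discard message symbols, i.e.\ pass to a subcode, which preserves list recovery. Alternatively, one can read ``rate $R$'' as ``rate at least $R$'', as in \Cref{thm:main}.

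Next I invoke \Cref{lem:dllr_to_global}. It converts the DLLR algorithm into a deterministic global $(\rho,\ell,L)$-list-recovery algorithm with the following costs:
\begin{itemize}
\item time $\Tmpre + N\cdot L\cdot \Tmeval \le N^{1+\gamma'} + O(N\cdot N^{\gamma'}) \le N^{1+\gamma}$ for large $N$;
\item space $\Sppre+\Spop+\Speval \le N^{\gamma'} + \tilde O(\log N) + N^{\gamma'} \le N^{\gamma}$ for large $N$.
\end{itemize}
The slack $\gamma'<\gamma$ absorbs the constant factors $L=O(1)$ and the additive terms.

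There is no real obstacle here. The only point needing care is to check that the constants hidden in $L=O(1)$, and those in the big-O bounds of \Cref{thm:main}, are independent of $N$, so that shrinking $\gamma$ to $\gamma/4$ dominates them. This holds because all of these constants depend only on $\gamma$, $\ell$ and $R$.
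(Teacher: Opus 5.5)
Your proposal is correct and matches the paper's proof, which obtains the corollary directly from \Cref{thm:main} by applying \Cref{lem:dllr_to_global}. The paper leaves the rescaling implicit; your choice $\gamma'=\gamma/4$ makes it explicit and absorbs the factor $L=O(1)$ and the additive terms into the stated bounds $N^{1+\gamma}$ and $N^{\gamma}$.
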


\begin{proof}
The statement follows as an immediate corollary of Theorem \ref{thm:main}, using Lemma \ref{lem:dllr_to_global}.
\end{proof}

\bibliographystyle{alpha}
\bibliography{ref.bib}

\appendix

\section{Distance amplification for DLLRC (Proof of Lemma \ref{lem:AEL})}
\label{sec:AEL}

In this section we prove Lemma \ref{lem:AEL}, restated below, which shows how to transform a \emph{high-rate} \DLLR\ into a \emph{capacity-achieving} \DLLR.

\AEL*

The proof is based on the AEL transformation \cite{AEL95}, and is very similar to \cite[Lemma 5.4]{GKORS18} which showed a similar transformation for the setting of (randomized) local list recovery, and we provide here a full proof for completeness.

Following \cite{AEL95,GKORS18}, 
a main ingredient in the construction is a family of $d$-regular bipartite expanders 
which have the property of being good {\it balanced samplers}, defined as follows. 
For a graph $G$, a vertex $s$ and a set of vertices $T$, let $E(s,T)$
denote the set of edges that go from~$s$ into~$T$. 

\begin{definition}[Balanced samplers]\label{def:expander_sampler} Let $G=\left(U\cup V,E\right)$ be a bipartite
$d$-regular graph with $\left|U\right|=\left|V\right|=n$. We say
that $G$ is an \emph{$(\eta,\nu)$-balanced sampler} if the following
holds for every $T\subseteq V$: For at least $1-\eta$~fraction
of the vertices $s\in U$ it holds that 
\[
\frac{\left|E(s,T)\right|}{d}-\frac{\left|T\right|}{n}\leq\nu.
\]
\end{definition}

\begin{remark}
The above definition of a balanced sampler corresponds to a special case of the sampler given by Definition \ref{def:samplerGamma} where the randomness is $\log(n)$ and the function $f: [n] \to \{0,1\}$ is Boolean (the randomized algorithm picks a random left vertex, and outputs its neighborhood).  
\end{remark}

\begin{lemma}[\cite{KMRS2017}, Lemma 2.12; \cite{GKORS18}, Lemma 7.2]
\label{lem:expander-samplers} For every $\eta,\nu>0$, there exists 
$\hat d = \poly(\frac{1}{\eta \nu})$ such that
for every sufficiently large $n $ and for every $d> \hat d$ there exists a bipartite 
$d$-regular graph $G_{n,d,\eta,\nu}=\left(U\cup V,E\right)$ with 
$\left|U\right|=\left|V\right|=n$ such that $G_{n, d,\eta,\nu}$
is an $\left(\eta,\nu\right)$-balanced sampler. Furthermore, there exists
an algorithm that takes as inputs $n$, $d$, $\eta$, $\nu$ and
a vertex $w$ of $G_{n,d,\eta,\nu}$, and computes the list of
the neighbors of $w$ in $G_{n,d,\eta,\nu}$ in time 
$\poly(d, \log n)$.\end{lemma}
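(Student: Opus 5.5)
The plan is to obtain the balanced sampler from an explicit spectral expander via the expander mixing lemma, and then adjust the degree so that it can be any $d>\hat d$.

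\textbf{Step 1 (spectral expansion implies sampling).} Let $H$ be a $D$-regular (multi)graph on vertex set $[n]$ whose normalized second-largest eigenvalue in absolute value is at most $\lambda$. Let $G$ be its bipartite double cover, with $U=V=[n]$ and $s\in U$ adjacent to $v\in V$ whenever $\{s,v\}\in E(H)$. Fix $T\subseteq V$ and let
\[
B=\left\{s\in U : \tfrac{|E(s,T)|}{D}-\tfrac{|T|}{n}>\nu'\right\}.
\]
Summing the defining inequality over $s\in B$ gives $|E(B,T)|>D|B|\left(\tfrac{|T|}{n}+\nu'\right)$. The expander mixing lemma gives
\[
\left||E(B,T)|-\tfrac{D|B||T|}{n}\right|\le \lambda D\sqrt{|B||T|}.
\]
Combining the two, $\nu' D|B|<\lambda D\sqrt{|B|\,n}$, so $|B|<\lambda^2 n/\nu'^2$. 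Hence $G$ is an $(\eta,\nu')$-balanced sampler whenever $\lambda\le \nu'\sqrt{\eta}$.

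\textbf{Step 2 (explicit small-$\lambda$ expanders on every $n$).} Start from a strongly explicit constant-degree expander family that is defined for every sufficiently large $n$, with degree $D_0$ and eigenvalue bound $\lambda_0<1$. For instance, take a Margulis/Gabber--Galil graph on the nearest admissible size and adapt it to exactly $n$ vertices by merging or adding vertices; this costs only a constant factor in degree and spectral gap. Take its $k$-th power, where $k=O(\log(1/(\eta\nu)))$ is chosen so that $\lambda_0^k\le (\nu/2)\sqrt{\eta}$. The power is $D_0^k=\poly(1/(\eta\nu))$-regular. The $j$-th neighbor of a vertex is computed by following $k$ steps of the base graph, which takes $k\cdot\poly(\log n)$ time. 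Listing all $D_0^k$ neighbors therefore takes $\poly(D_0^k,\log n)$ time.

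\textbf{Step 3 (hitting every degree $d>\hat d$).} This is the step I expect to be the main technical nuisance, since powers only give degrees of the form $D_0^k$. Set $\hat d = 2D_0^k/\nu=\poly(1/(\eta\nu))$. Given $d>\hat d$, write $d=a D_0^k+r$ with $0\le r<D_0^k\le \nu d/2$. Build $G$ as the union of $a$ parallel copies of the powered graph from Step 2 (a multigraph), together with $r$ arbitrary perfect matchings, say $s\mapsto s+i \bmod n$.

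The $a$ copies give exactly the same ratio $|E(s,T)|/(aD_0^k)$ as a single copy. The $r$ extra edges change $|E(s,T)|/d$ by at most $r/d\le\nu/2$, and the renormalization from $aD_0^k$ to $d$ changes it by at most another $r/d$. So the Step 1 guarantee with $\nu'=\nu/2$, together with these bounded perturbations, yields an $(\eta,\nu)$-balanced sampler of degree exactly $d$.

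The neighbor list still has $d$ entries and is computable in $\poly(d,\log n)$ time. If simple graphs are needed, I would instead take distinct powers or replacement products so that parallel edges are avoided. For the sampler property, however, multigraphs suffice, and that is how the lemma is applied later.
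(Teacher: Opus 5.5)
The paper does not prove this lemma; it quotes it from \cite{KMRS2017,GKORS18}. Your Steps 1--2 are the standard argument behind it: apply the expander mixing lemma to the double cover of an explicit expander whose normalized second eigenvalue is at most $(\nu/2)\sqrt{\eta}$. So the approach is sound.

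There is one bookkeeping slip in Step 3. You bound each of the two perturbations (the matching edges and the renormalization) by $r/d<\nu/2$ and add them to the $\nu/2$ from Step 1. That gives only $3\nu/2$, not $\nu$. The fix is to note that the balanced-sampler condition is one-sided and that the renormalization term is nonpositive. Write $e=|E_{H^k}(s,T)|$ and let $0\le m_s\le r$ be the number of matching edges from $s$ into $T$. Then
\[
\frac{ae+m_s}{d}-\frac{|T|}{n}=\Bigl(\frac{e}{D_0^k}-\frac{|T|}{n}\Bigr)-\frac{e}{D_0^k}\cdot\frac{r}{d}+\frac{m_s}{d}\le\frac{\nu}{2}+\frac{r}{d}<\nu
\]
for every $s$ outside the bad set of Step 1. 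Alternatively, simply take $\hat d=4D_0^k/\nu$.

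The ``merge or add vertices'' adaptation in Step 2 is only sketched. Merging disjoint pairs from a slightly larger Gabber--Galil graph does work: it loses only a constant factor in edge expansion, and adding self-loops then bounds all nontrivial eigenvalues away from $1$ in absolute value. You can avoid this entirely by starting from the strongly explicit expanders of \cite{Alo21} on exactly $n$ vertices, which the paper itself uses when building its base code. With these two touches your proof is complete. Multigraphs are acceptable for the AEL application, provided the consistent edge labeling your construction naturally supplies is used.
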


We now turn to the proof of \Cref{lem:AEL}.

\begin{proof}[Proof of \Cref{lem:AEL}]

First, we describe the construction of the code $C$ using the balanced samplers above. The construction is identical to that of \cite[Lemma 5.4]{GKORS18}, and we repeat the details of the construction below for the sake of completeness. 

\paragraph{Construction of code $C$.}
We construct $C$ by giving a bijection from $C_{\out}$ to $C$. Let 
$\Sigma_{\out}, \Sigma_{\inn}$ denote the alphabets of $C_{\out}, C_{\inn}$ respectively.
Given a codeword $c_{\out}\in C_{\out}$, one obtains the corresponding codeword
$c\in C$ as follows: 
\begin{itemize}
\item View each codeword symbol in $\Sigma_{\out}$ as a vector of length $R_{\inn}\cdot N_{\inn}$ over $\Sigma_{\inn}$ and encode it via the code $C_{\inn}$. Each codeword symbol gets mapped to a string in  $\Sigma_{\inn}^{N_{\inn}}$. We denote the resulting string by $c' \in \Sigma_{\inn}^{N_{\inn}\cdot N_{\out}}$ and the various resulting codewords of $C_{\inn}$ by $B_1, B_2, \ldots B_{N_{\out}} \in \Sigma_{\inn}^{N_{\inn}}$.  
 
\item Next, we apply a ``pseudorandom'' permutation to the coordinates
of~$c'$ as follows: Let $G_{N_{\out}}$ be a graph from the infinite
family of $N_{\inn}$-regular $(\min\{\rho_{\out}, \frac {\delta_{\out}} 2\}, \gamma)$-balanced samplers above and let $U=\left\{ u_{1},\ldots,u_{N_{\out}}\right\} $ and
$V=\left\{ v_{1},\ldots,v_{N_{\out}}\right\} $ be the left and right vertices
of~$G_{N_{\out}}$ respectively. For each $i\in\left[N_{\out}\right]$ and $j\in\left[N_{\inn}\right]$,
we write the $j$-th symbol of $B_{i}$ on the $j$-th edge of $u_{i}$.
Then, we construct new blocks $D_{1},\ldots,D_{N_{\out}}\in \Sigma_{\inn}^{N_{\inn}}$, by setting
the $j$-th symbol of $D_{i}$ to be the symbol written on the $j$-th
edge of $v_{i}$. 
We reinterpret each of these blocks to be a symbol of the new alphabet $\Sigma:=\Sigma_{\inn}^{N_{\inn}}$.

\item Finally, we define the codeword $c$ of~$C\subseteq\Sigma^{N_{\out}}$ as
follows: the $i$-th coordinate $c_{i}$ is the block $D_{i}$, reinterpreted
as a symbol of the alphabet $\Sigma$. We choose $c$
to be the codeword in $C$ that corresponds to the codeword $c_{\out}$ in
$C_{\out}$. 
\end{itemize}

This completes the definition of the bijection. It follows that $C$ is an $\F$-linear code of blocklength $N_{\out}$ and alphabet size $\Sigma_{\inn}^{N_{\inn}}$. Rate, distance, and encoding time follow similarly to the proof of \cite[Lemma 5.4]{GKORS18}. We now turn to describe and analyze the DLLR algorithm for the code $C$. The algorithm and its analysis are very similar to that of \cite[Lemma 5.4]{GKORS18}, and we refer to \cite{GKORS18} for some of the technical claims.

\paragraph{DLLR algorithm for $C$.}
We will now describe the $(Q,\rho_{\inn}-\gamma,\ell_{\inn},L_{\out})$-DLLR algorithm $\cApre$ for the code $C$. This is based on the following deterministic algorithm $\tilde \cApre$ which locally list recovers coordinates of $C_{\out}$ (instead of coordinates of $C$, as required of $\cApre$).  

\begin{lemma}\label{lem:AEL_inter}
There exists an algorithm $\tilde \cApre$ which satisfies the following:
\begin{itemize}
 \item $\tilde \cApre$ is a deterministic algorithm which receives as input $S \in {\Sigma \choose {\leq \ell_{\inn}}}^{N_{\out}}$.
    \item   $\tilde \cApre$ outputs a list of  local algorithms $(\tilde A_1, \ldots, \tilde A_{L_{\out}})$  
in time  
$$ T_{\out}^{(Pre)} \cdot \left(T_{\inn} + L_{\inn} \cdot \Tm_{\inn}^{(unenc)} + \poly(N_{\inn}, \log(N_{\out})) \right)  $$ and space 
$$ \Sp_{\out}^{(Pre)} + \Sp_{\inn} +  \poly(N_{\inn}, \log(N_{\out})) + N_{\inn} \cdot (\ell_{\inn}+L_{\inn}) \cdot \log(|\Sigma_{\inn}|) + \Sp_{\inn}^{(unenc)},$$ where the output length is at most $O(\Spop_{\out})$. 
  
    \item Each local algorithm $\tilde A_j$ is a deterministic algorithm with query access to $\mathcal{S}$ and receives as input a coordinate $i \in [N_{\out}]$.  On input $i \in [N_{\out}]$, $\tilde A_j$ makes at most $Q_{\out} \cdot N_{\inn}$ queries to $\cS$, and runs in time  
$$ \Tm_{\out}^{(Eval)} \cdot \left(\Tm_{\inn} + L_{\inn} \cdot \Tm_{\inn}^{(unenc)}+ \poly(N_{\inn}, \log(N_{\out})) \right)  $$ and space 
$$ \Sp_{\out}^{(Eval)} + \Sp_{\inn} +  \poly(N_{\inn}, \log(N_{\out})) + N_{\inn} \cdot (\ell_{\inn}+ L_{\inn}) \cdot \log(|\Sigma_{\inn}|) + \Sp_{\inn}^{(unenc)}.$$
    \item  For each $c_{\out} \in C_{\out}$ such that the corresponding codeword $c$ of $C$ (as given by the bijection above) satisfies $\dist(c,\cS) \leq \rho_{\inn} - \gamma$, 
    there is some $j \in [L_{\out}]$ so that $\tilde A_j(i)=(c_{\out})_i$ for any 
    $i \in [N_{\out}]$.
\end{itemize}
\end{lemma}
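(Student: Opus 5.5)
The plan is to mirror the concatenation argument of \Cref{lem:DLLR_concat}, with the expander $G_{N_{\out}}$ inserted to route symbols. The algorithm $\tilde\cApre$ will run the DLLR algorithm $\bar\cApre$ for $C_{\out}$ and simulate its oracle $\bar\cS\in{\Sigma_{\out}\choose\leq \ell_{\out}}^{N_{\out}}$ on the fly. When $\bar\cApre$ queries $\bar S_k$, we do the following. First, compute the $N_{\inn}$ neighbors of $u_k$ in time $\poly(N_{\inn},\log N_{\out})$ via \Cref{lem:expander-samplers}. For each neighbor $v_{i}$ along the $j$-th edge of $u_k$, we query $S_{i}\subseteq\Sigma=\Sigma_{\inn}^{N_{\inn}}$. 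We then project each element of $S_i$ onto the coordinate corresponding to that edge; this requires knowing which edge index of $v_i$ the edge is, which is again a neighborhood computation. This yields lists $S^{(k,r)}\subseteq\Sigma_{\inn}$ of size $\leq\ell_{\inn}$ for $r\in[N_{\inn}]$. Next, run the global list-recovery algorithm for $C_{\inn}$ on $(S^{(k,r)})_r$, and un-encode the at most $L_{\inn}\leq\ell_{\out}$ output codewords to obtain $\bar S_k\subseteq\Sigma_{\out}$. The local algorithms are $\tilde A_j:=\bar A_j$, with the same simulation of oracle queries. Each simulated query reads $N_{\inn}$ symbols of $\cS$, which gives the query bound $Q_{\out}N_{\inn}$. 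The time and space bounds follow exactly as in \Cref{lem:DLLR_concat}: we pay $\Tm_{\inn}+L_{\inn}\Tm_{\inn}^{(unenc)}+\poly(N_{\inn},\log N_{\out})$ per simulated query, and we store only one block of $N_{\inn}(\ell_{\inn}+L_{\inn})\log|\Sigma_{\inn}|$ bits at a time, reusing space across queries. The output representations are those of $\bar\cApre$, so the output length is $O(\Spop_{\out})$.

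Correctness is the main step, and it is the standard AEL sampling argument. Fix $c_{\out}$ whose image $c$ satisfies $\delta(c,\cS)\leq\rho_{\inn}-\gamma$. Let $T\subseteq V$ be the set of right vertices $v_i$ with $c_i\notin S_i$, so $|T|/N_{\out}\leq\rho_{\inn}-\gamma$. By the balanced-sampler property with parameters $(\min\{\rho_{\out},\delta_{\out}/2\},\gamma)$, all but a $\rho_{\out}$ fraction of left vertices $u_k$ satisfy $|E(u_k,T)|/N_{\inn}\leq\rho_{\inn}$.

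For any such $k$, consider an edge $(u_k,v_i)\notin E(u_k,T)$. Then $c_i=D_i\in S_i$, and its projection onto that edge is exactly $(B_k)_r$, so $(B_k)_r\in S^{(k,r)}$. Hence $\delta(B_k,(S^{(k,r)})_r)\leq\rho_{\inn}$. Since $B_k$ is the $C_{\inn}$-encoding of $(c_{\out})_k$, the global list recovery for $C_{\inn}$ outputs $B_k$, and therefore $(c_{\out})_k\in\bar S_k$. Thus $\delta(c_{\out},\bar\cS)\leq\rho_{\out}$, and the DLLR guarantee of $\bar\cApre$ gives some $j$ with $\bar A_j(i)=(c_{\out})_i$ for all $i\in[N_{\out}]$.

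The main point of care is the definition of $\bar S_k$ when $|S_i|$ is a set of full blocks. The key observation is that projecting a set of size $\leq\ell_{\inn}$ coordinatewise yields sets of size $\leq\ell_{\inn}$. The remaining bookkeeping is matching the sampler parameters so that $d\geq\hat d(\min\{\rho_{\out},\delta_{\out}/2\},\gamma)=\poly(1/\delta_{\out},1/\rho_{\out},1/\gamma)$, which is exactly the hypothesis $N_{\inn}\geq d$.
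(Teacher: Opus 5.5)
Your proposal is correct and follows the paper's proof essentially step for step. In both, $\tilde\cApre$ and the $\tilde A_j$ emulate $\bar\cApre$ and $\bar A_j$, and each query to $\bar S_k$ is answered by computing neighborhoods in $G_{N_{\out}}$, projecting the blocks in each $S_{k_r}$ onto the appropriate coordinate to form $S^{(k,r)}$, list-recovering $C_{\inn}$, and un-encoding. The only difference is that you prove the bound $\delta(c_{\out},\bar\cS)\leq\rho_{\out}$ directly, via the balanced-sampler property applied to $T=\{v_i : c_i\notin S_i\}$, whereas the paper cites it as Claim 7.4 of \cite{GKORS18}.
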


\begin{proof}
Let $\bar \cApre$ be the DLLR algorithm for $C_{\out}$. $\bar \cApre$ is a deterministic algorithm that given
a tuple $\overline \cS = (\overline S_1, \ldots , \overline S_{N_{\out}}) \in {\Sigma_{\out} \choose {\leq \ell_{\out}}}^{N_{\out}}$
 outputs a list of $L_{\out}$ deterministic local algorithms $\bar A_1, \bar A_2, \ldots, \bar A_{L_{\out}}$. 

We now describe $\tilde \cApre$. 
Suppose the algorithm $\tilde \cApre$ is invoked on a tuple $\cS = (S_1, \ldots, S_{N_{\out}})\in {\Sigma \choose {\leq \ell_{\inn}}}^{N_{\out}}$, the algorithm $\tilde \cApre$ invokes the algorithm $\bar \cApre$ and emulates $\bar \cApre$ in the natural way. 
Recall that $\bar \cApre$ expects to be given access to a tuple $\bar S \in  {\Sigma_{\out} \choose {\leq \ell_{\out}}}^{N_{\out}}$. 
For any $k \in [N_{\out}]$, whenever  $\bar \cApre$ (or any of the local algorithms $\bar A_j$) queries the $k$th element of the sequence $\bar S_1, \ldots, \bar S_{N_{\out}} \in  {\Sigma_{\out} \choose {\leq \ell_{\out}}}$, the algorithm $\tilde \cApre$ (and any of the local algorithms $\tilde A_j$) performs the following steps.
\begin{enumerate}

\item In the first step, for each coordinate $r \in [N_{\inn}]$ of $B_k$, $\tilde \cApre$ will find a list $S^{(k,r)} \in  {\Sigma_{\inn} \choose {\leq \ell_{\inn}}}$ and associate that list with the $r$th coordinate of $B_k$. The list 
 $S^{(k,r)}$ is defined as follows: Suppose that $v_{k_{r}}$ is the $r$th neighbor of the vertex $u_k$ in $G_{N_{\out}}$. Suppose that $u_k$ is the $\hat r$th neighbor of the vertex $v_{k_{r}}$. Then in the construction of the codeword $c$ from $c_{\out}$, the value of the $r$th coordinate of $B_k$ is stored in the $\hat r$th coordinate of $D_{k_r}$. Now $S_{k_r} \in {\Sigma \choose {\leq \ell_{\inn}}} = {\Sigma_{\inn}^{N_{\inn}} \choose {\leq \ell_{\inn}}}$ is the input list associated with the $k_r$th coordinate. Note that each element $s\in S_{k_r}$ can be viewed as an $N_{\inn}$-tuple of elements from $\Sigma_{\inn}$. Let the $\hat r$th element of this tuple be $s^{(\hat r)}$. Then $S^{(k,r)}$ is defined to be the set in ${\Sigma_{\inn} \choose {\leq \ell_{\inn}}}$ obtained by taking the $\hat r$th element of each member of the set $S_{k_r}$.
$\tilde \cApre$ can find this set by making a single query to the $k_r$th element of $S$ to obtain $S_{k_r}$, and from it find $S^{(k,r)}$. 

\item $\tilde \cApre$ then invokes the global list-recovery algorithm for $C_{\inn}$ with the lists $S^{(k,r)}$ for each $r \in [N_{\inn}]$. The output of this algorithm is a list of size at most $L_{\inn} \leq \ell_{\out}$ with elements from $\Sigma_{\inn}^{ N_{\inn}}$. We denote by $\bar S_k$ the set of messages in $\Sigma_{\inn}^{r_{\inn} N_{\inn}} = \Sigma_{\out}$ corresponding to the codewords in this list.
This is what $\tilde \cApre$ feeds to $\bar \cApre$ (and each of the local algorithms $\tilde A_j$ feeds to
$\bar A_j$). 

\end{enumerate}

Clearly, the query complexity of each algorithm~$\tilde A_j$ is at most $N_{\inn}$ times the query complexity of~$\bar A_j$, and hence it is at most $Q_{\out} \cdot N_{\inn}$.
Next assume that $c_{\out} \in C_{\out}$ is such that the corresponding 
codeword $c$ of $C$ (as given by the bijection above) satisfies 
$\dist(c,S)\leq \rho_{\inn} - \gamma$. Claim 7.4 in \cite{GKORS18} states that the tuple  $\bar S :=(\bar S_1, \bar S_2, \ldots, \bar S_{N_{\out}})$ as defined 
above satisfies $\dist(c_{\out}, \bar S) \leq \rho_{\out}$.
Consequently, by the correctness of the algorithm $\bar \cApre$, we have that $\tilde A_j(i)=(c_{\out})_i$ for any   $i \in [N_{\out}]$.

It remains to analyze the running time and space of the algorithm $\tilde \cApre$ and each of the local algorithms $\tilde A_j$. To this end, note that the algorithm $\tilde \cApre$ ($\tilde A_j$, respectively) emulates the algorithm $\bar \cApre$ ($\bar A_j$, respectively), where in each time step of  $\bar \cApre$ ($\bar A_j$, respectively), the algorithm $\tilde \cApre$ ($\tilde A_j$, respectively) needs to find the set of neighbors of $u_k$ in $G_{N_{\out}}$, as well as the set of neighbors of $v_{k_r}$ for any $r \in [N_{\inn}]$, store the values of $S^{(k,r)}$, 
then invoke the global list recovery algorithm for $C_{\inn}$, and find and store the preimages of the codewords in the output list under the encoding map of $C_{\inn}$. By Lemma \ref{lem:expander-samplers} and our choice of $N_{\inn} \geq d$, finding the neighbor sets in $G_{N_{\out}}$ takes time and space at most $\poly( N_{\inn}, \log(N_{\out})) $, while storing the values takes  space at most  $N_{\inn} \cdot \ell_{\inn} \cdot \log(|\Sigma_{\inn}|)$. 

Consequently,  $\tilde \cApre$ has running time at most $$ T_{\out}^{(Pre)} \cdot \left(T_{\inn} + L_{\inn} \cdot \Tm_{\inn}^{(unenc)} + \poly(N_{\inn}, \log(N_{\out})) \right)  $$
and space at most
$$ \Sp_{\out}^{(Pre)} + \Sp_{\inn} +  \poly(N_{\inn}, \log(N_{\out})) + N_{\inn} \cdot (\ell_{\inn}+L_{\inn}) \cdot \log(|\Sigma_{\inn}|) + \Sp_{\inn}^{(unenc)}.$$ 
Similarly, each local algorithm $\tilde A_j$ has running time at most 
$$ \Tm_{\out}^{(Eval)} \cdot \left(\Tm_{\inn} + L_{\inn} \cdot \Tm_{\inn}^{(unenc)}+ \poly(N_{\inn}, \log(N_{\out})) \right)  $$ and space at most $$ \Sp_{\out}^{(Eval)} + \Sp_{\inn} +  \poly(N_{\inn}, \log(N_{\out})) + N_{\inn} \cdot (\ell_{\inn}+ L_{\inn}) \cdot \log(|\Sigma_{\inn}|) + \Sp_{\inn}^{(unenc)}.$$
\end{proof}

Given such an algorithm $\tilde \cApre$ guaranteed by the above Lemma~\ref{lem:AEL_inter}, we show how to construct the required DLLR algorithm $\cApre$ for the code $C$. The algorithm 
$\cApre$ is given access to an $\cS \in {\Sigma \choose {\leq \ell_{\inn}}}^{N_{\out}}$,  and outputs a list of $L_{\out}$ local algorithms $A_1$, ..., $A_{L_{\out}}$ which list recover all codewords $c\in C$ that ``disagree" with $\cS$ in at most $\rho_{\inn} - \gamma$ fraction of coordinates. 
Recall that the algorithm $\tilde \cApre$ from the above Lemma \ref{lem:AEL_inter}
is given access to the same $\cS$ and outputs a list of $L_{\out}$ local algorithms $\tilde A_1$, ..., $\tilde A_{L_{\out}}$ which list recover all codewords of $C_{\out}$ such that the corresponding codeword $c$ of $C$ satisfies $\dist(c,\cS) \leq \rho_{\inn} - \gamma$.

The algorithm $\cApre$ emulates $\tilde \cApre$, and for any algorithm $\tilde A_j$ output by $\tilde \cApre$, it outputs an algorithm $A_j$ defined as follows. 
Each $A_j$ takes as input a coordinate $i \in [N_{\out}]$ and also gets oracle access to the tuple $\cS$.
Let $B_1,\ldots, B_{N_{\out}}$ and $D_1,\ldots, D_{N_{\out}}$ be the corresponding blocks that arise in the construction of $c$ from $c_{\out}$. 
In order for $A_j(i)$ to be able to decode the value of $c_i$, it should be able to correctly decode all the symbols in the block $D_i$. Let $u_{i_1}, \ldots, u_{i_{N_{\inn}}}$ be the neighbors of $v_i$ in the graph $G_{N_{\out}}$. Each symbol of $D_i$ belongs to one of the blocks $B_{i_1}, \ldots, B_{i_{N_{\inn}}}$, and therefore it suffices to retrieve these blocks.  Each of these blocks $B_{i_r}$ is the encoding of $c_{\out_{i_r}}$
 (the $i_r$th symbol of $c_{\out}$) via the code $C_{\inn}$. Thus to recover   $B_{i_1}, \ldots, B_{i_{N_{\inn}}}$, it suffices to recover  $c_{\out_{i_1}}, \ldots, c_{\out_{i_{N_{\inn}}}}$. The algorithm $A_j$ invokes the algorithm $\tilde A_j$ to recover each of $c_{\out_{i_1}}, \ldots, c_{\out_{i_{N_{\inn}}}}$, and it then retrieves the blocks $B_{i_1}, \ldots, B_{i_{N_{\inn}}}$ by encoding each of the symbols in $\Sigma_{\inn}^{r_{\inn}\cdot N_{\inn}}$ using $C_{\inn}$, 
 and hence also $D_i$ and hence $c_i$. 

Clearly the query complexity of $A_j$ is $N_{\inn}$ times the query complexity of $\tilde A_j$, and is hence at most $Q_{\out} \cdot N_{\inn}^2$. Correctness also follows immediately by the correctness of $\tilde \cApre$. 
The running time and space of $\cApre$ is also clearly the same as that of $\tilde \cApre$, and so it remains to analyze the running time and space of each of the local algorithms $A_j$. To this end, note that on input $i \in [N_{\out}]$, $A_j$ first needs to find the set of neighbors of $v_i$ in $G_{N_{\out}}$ with takes time and space at most $\poly( N_{\inn}, \log(N_{\out})) $. Then for each $r=1, \ldots, N_{\inn}$, the algorithm $A_j$ needs to invoke the algorithm
$\tilde A_j$  to obtain $c_{\out_{i_r}}$, then encode the resulting symbol using $C_{\inn}$, and finally store a single symbol in $\Sigma_{\inn}$ out of the encoding. Thus if $\tilde A_j$ has running time $\tilde T$ and space $\tilde \Sp$, then  $A_j$ has running time at most
$ N_{\inn} \cdot (\tilde T + T_{\inn}^{(enc)}) +\poly( N_{\inn}, \log(N_{\out})),$ and space at most $\tilde S_p + \Sp_{\inn}^{(enc)} +
N_{\inn} \cdot \log(|\Sigma_{\inn}|)+\poly( N_{\inn}, \log(N_{\out}))$. 
Finally, plugging the running time and space of the local algorithm $\tilde A_j$ given in Lemma \ref{lem:AEL_inter} gives the claimed running time and space of the local algorithms $A_j$.

The output length is also clearly $\Spop_{\out}+ O(1)$. 
\end{proof}

\section{Correctness of DALLR algorithm - missing proofs}\label{app:proofOfCorrectnessClaim}

  In this appendix, we complete the proofs of \Cref{cl:goodAt} and \Cref{lem:base_correct}.  As noted earlier, the proofs are similar to the proofs of the analogous statements in \cite{GGR11, HRW19, KRRSS20}.

  \subsection{Proof of \Cref{cl:goodAt}}\label{app:proofOfCorrectnessClaim_t}

  In this section, we prove \Cref{cl:goodAt}, which we restate for the reader's convenience.

  \goodAt*

The rest of this section is devoted to the proof of \Cref{cl:goodAt}.
     Fix a seed $\sigma^{(t)}$ and let $H_t \subseteq [n]$ be the set of $m_t$ indices chosen by the sampler $\Gamma$ using the seed $\sigma^{(t)}$  in \Cref{alg:main}.
    Recall that in \Cref{alg:main}, for each such seed $\sigma^{(t)}$, we get lists $\cL_1^{(t-1)}, \ldots, \cL_{m_t}^{(t-1)}$ of representations of local algorithms for $C^{\otimes (t-1)}$, where the list $\cL_a^{(t-1)}$ corresponds to the $h_a$'th column.

Given $\sigma^{(t)}$ and $\tilde{c}$, define local algorithms $A_a^{(t-1)}$ for $a \in [m_t]$ and a vector of local algorithms $\vec{A}$ by
\begin{equation}\label{eq:repA}
\begin{gathered}
\rep(A_a^{(t-1)}) = \operatorname*\argmin_{\rep(A^{(t-1)}) \in \mathcal{L}_a^{(t-1)}} \delta( \tilde{c}|_{[n]^{t-1} \times \{h_a\}}, x(A^{(t-1)}) ) , \\
\text{and} \\
\vec{A} = \vec{A}(\sigma^{(t)}, \tilde{c}) = (A_1^{(t-1)}, \ldots, A_{m_t}^{(t-1)}),
\end{gathered}
\end{equation}
    where we recall from \Cref{def:alg_to_string} that $x(A_a^{(t-1)})$ is the string corresponding to $A_a^{(t-1)}$.  
    Let $A^{(t)}$ be the local algorithm with representation 
    \[ \rep(A^{(t)}) = (\sigma^{(t)}, \rep(A_1^{(t-1)}), \ldots, \rep(A_{m_t}^{(t-1)})) = (\sigma^{(t)}, \vec{A}),\]
    as in \Cref{alg:local-alg}; note that $\rep(A^{(t)})$ is one of the candidate local algorithms formed in \Cref{line:localalg} in \Cref{alg:main}.
    We would like to show that, with probability at least $\frac 1 2$ over the choice of the seed $\sigma^{(t)}$, this choice of $\vec{A}$ is good, meaning that
    the string $x(A^{(t)})$ is close to $\tilde{c}$.  
    
    To do this, following \cite{HRW19}, we will define certain rows $i' \in [n]^{t-1}$ to be \emph{good}, and we will show that $A^{(t)}$ will correctly decode all the coordinates of $\tilde{c}$ in such rows.  Then we will show that with high probability over the choice of $\sigma^{(t)}$, many rows are good.  

    \begin{definition}[Good row]\label{def:good-row}
    Let $\tilde{c} \in \Cot$, and let $\vec{A} = (A_1^{(t-1)}, \ldots, A_{m_t}^{(t-1)})$ be as in \Cref{eq:repA} above.  Let $i' \in [n]^{t-1}$.  We say that row $i'$ is a \emph{good row} (with respect to $\tilde{c}$, $\sigma^{(t)}$, and $\vec{A}$) if the following are satisfied.  Below, $H_t \subseteq [n]$ is the set of indices chosen by the sampler $\Gamma^{(t)}$ with seed $\sigma^{(t)}$.
    \begin{enumerate}
        \item $\delta( \tilde{c}|_{\{i'\} \times [n]}, \cS|_{\{i'\} \times[n]} ) \leq \rho_{0}$.
        \item Let $\mathcal{K}_{i'} = \inset{ c \in C \,:\, \delta(c, \cS|_{\{i'\} \times [n]} ) \leq \rho_{0}}$. 
        Then for any $c'' \in \mathcal{K}_{i'}$ so that $c'' \neq \tilde{c}|_{\{i'\} \times [n]}$,
 $$\frac{1}{|H_t|} \sum_{h \in H_t}\mathbf{1}[\tilde{c}_{(i',h)} \neq c''_h] > \frac{3 \delta(C)} 4.$$
        \item Let $v \in \Sigma^{m_t}$ be given by $v_a = A_a^{(t-1)}(i')$ for any $a \in [m_t]$.   Then $\delta(\tilde{c}|_{\{i'\} \times H_t}, v) \leq \frac{\delta(C)} 4.$
    \end{enumerate}
    \end{definition}
     Next, we show that $A^{(t)}$ (as defined above) correctly decodes on a good row. 
     \begin{claim}\label{cl:decodegood}
         Suppose that the hypotheses of \Cref{thm:mainTech} are satisfied, and suppose that $\tilde{c} \in \Cot$ has $\delta(\tilde c, \cS) \leq \rho_t$.  Suppose that $i' \in [n]^{t-1}$ is a good row, with respect to $\tilde{c}$, $\sigma^{(t)}$, and $\vec{A}$.  Let $A^{(t)}$ be the local algorithm represented by $\rep(A^{(t)}) = ( \sigma^{(t)}, \vec{A})$.  Then $A^{(t)}(i) = \tilde{c}_i$ for all $i \in \{i'\} \times [n]$.
     \end{claim}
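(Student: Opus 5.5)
The claim should follow by tracing \Cref{alg:local-alg} on an input $i=(i',i_t)$ with $i'$ good. We check three things in turn: the true row codeword $\tilde{c}|_{\{i'\}\times[n]}$ lies in the list $\mathcal{K}_{i'}$; it wins the $\argmin$; and its $i_t$-th symbol is exactly what is returned. Let $c' := \tilde{c}|_{\{i'\}\times[n]}$. Since $\tilde{c}\in\Cot = C^{\otimes(t-1)}\otimes C$, every row of $\tilde c$ is a codeword of $C$, so $c'\in C$.

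\textbf{Step 1 (membership).} Condition 1 of \Cref{def:good-row} gives $\delta(c',\cS|_{\{i'\}\times[n]})\le\rho_0$. The algorithm runs the deterministic $(\rho_0,\ell,L_0)$-list-recovery algorithm $\cA_0$ on that row, and $\cA_0$ returns every codeword within radius $\rho_0$. Hence $c'\in\mathcal{K}_{i'}$, and in particular $\mathcal{K}_{i'}\neq\emptyset$.

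\textbf{Step 2 (the $\argmin$ picks $c'$).} Fix $H_t=\{h_1,\dots,h_{m_t}\}=\Gamma_t(\sigma^{(t)})$. The local algorithm computes $v_a=A_a^{(t-1)}(i')$ using exactly the $\vec{A}$ fixed in \Cref{eq:repA}, so $v$ is the vector appearing in condition 3. That condition gives a disagreement count of at most $\frac{\delta(C)}{4}m_t$ for $c'$:
\[|\{a: c'_{h_a}\ne v_a\}| = m_t\,\delta(\tilde c|_{\{i'\}\times H_t},v)\le \tfrac{\delta(C)}{4}m_t.\]
Now take any $c''\in\mathcal{K}_{i'}$ with $c''\ne c'$. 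Condition 2 says $c''$ disagrees with $c'$ on more than $\frac{3\delta(C)}{4}m_t$ positions of $H_t$. By the triangle inequality,
\[|\{a:c''_{h_a}\ne v_a\}| > \tfrac{3\delta(C)}{4}m_t-\tfrac{\delta(C)}{4}m_t=\tfrac{\delta(C)}{2}m_t \ge \tfrac{\delta(C)}{4}m_t.\]
So $c'$ is the unique minimizer and the $\argmin$ selects $c^*=c'$; no tie-breaking issue arises.

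\textbf{Step 3 (output).} The algorithm returns $c^*_{i_t}=c'_{i_t}=\tilde c_{(i',i_t)}$. This holds for every $i_t\in[n]$, which proves the claim.

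\textbf{Where care is needed.} There is no real obstacle in this claim. The only delicate point is bookkeeping: the $h_a$ in the conditions must be indexed consistently with $v_a$, and $H_t$ in the definition must be the same multiset the algorithm computes from $\sigma^{(t)}$. Both hold because the sampler is deterministic given the seed. The substantive difficulty — showing that many rows are good with probability at least $1/2$ over $\sigma^{(t)}$ — belongs to the later part of the proof of \Cref{cl:goodAt}, not here.
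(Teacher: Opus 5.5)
Your proof is correct and follows the same route as the paper. Condition (1) puts $c'$ in $\mathcal{K}_{i'}$; the triangle inequality, combining Conditions (2) and (3), makes $c'$ the strict minimizer of disagreements with $v$ on $H_t$; and the algorithm then returns $c'_{i_t}$ (counting disagreements instead of using relative distances on $H_t$ is only cosmetic).
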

     \begin{proof}
         Suppose that $i' \in [n]^{t-1}$ is a good row.  Let $c' = \tilde{c}|_{\{i'\} \times [n]} \in C$.  By Item (1) in \Cref{def:good-row}, 
         \[ \delta(c', \cS|_{\{i'\} \times [n]}) \leq \rho_0,\]
         which implies that $c' \in \mathcal{K}_{i'}$, where $\mathcal{K}_{i'} = \inset{ c \in C \,:\, \delta(c, \cS|_{\{i'\} \times [n]} ) \leq \rho_{0}}$.  
         Further, Condition (3) of \Cref{def:good-row} says that \begin{equation}\label{eq:vclose}
         \delta(c'|_{H_t}, v) \leq \frac{\delta(C)} 4,
         \end{equation}
         where $v \in \Sigma^{m_t}$ is the vector so that $v_a = A_a^{(t-1)}(i')$ for any $a \in [m_t]$.
         
         Let $c'' \in \mathcal{K}_{i'} \setminus \{c'\}$ be any other element of $\mathcal{K}_{i'}$.  Then the triangle inequality implies that
         \[ \delta(c''|_{H_t},v) \geq \delta(c'|_{H_t}, c''|_{H_t}) - \delta(c'|_{H_t}, v) \geq \delta(c'|_{H_t}, c''|_{H_t}) - \frac{\delta(C)}{4},\]
         using \Cref{eq:vclose} in the final inequality.  Then by Condition (2) of \Cref{def:good-row}, we have $\delta(c'|_{H_t}, c''|_{H_t}) \geq \frac{3\delta(C)} 4,$ so
         \begin{equation}\label{eq:uniquely_close}
         \delta(c''|_{H_t},v) \geq \frac{3 \delta(C)}{4} - \frac{\delta(C)}{4} >  \frac{\delta(C)}{4} \geq \delta(c'|_{H_t}, v),
         \end{equation}
         where we have again used \Cref{eq:vclose}.

        Now, we recall from \Cref{alg:local-alg} that the way that $A^{(t)}$ works is that, on input $i=(i',i_t) \in [n]^{t-1} \times [n]$, it obtains the list $\mathcal{K}_{i'} \subseteq C$ as above, and then chooses some $c^* \in \mathcal{K}_i$ to return $c^*_{i_t}$.  The $c^*$ it chooses is
        \[ c^* = \argmin_{c \in \mathcal{K}_{i'}} |\{a \in [m_t] : c_{h_a} \neq v_a \}| = \argmin_{c \in \mathcal{K}_{i'}} \delta(c|_{H_t}, v).\]
        Thus, \Cref{eq:uniquely_close} implies that the algorithm $A^{(t)}$ will choose $c^* = c'$, and so on input $i=(i',a) \in [n]^{t-1} \times [n]$, it will return $c'_{i_t} = \tilde{c}_{(i',i_t)} = \tilde c_i$, which proves the claim.
     \end{proof}

     The next step is to show that for many seeds $\sigma^{(t)}$, most rows are good with respect to $\tilde{c}$, $\sigma^{(t)}$, and $\vec{A}$, where $\vec{A}$ is as in \Cref{eq:repA}.  Before we make this precise, we introduce the definition of a \emph{good} column, and show that for most seeds $\sigma^{(t)}$, most columns in $H_t$ are good.

     \begin{definition}[Good column]\label{def:good-col}
     Let $\tilde{c} \in \Cot$.  A column $h \in [n]$ is \emph{good} (with respect to $\tilde{c}$) if $$\delta(\tilde{c}|_{[n]^{t-1} \times \{h\}}, \cS|_{[n]^{t-1} \times\{h\}}) \leq \rho_{t-1}.$$
     \end{definition}

     \begin{claim}[Most columns are good]\label{cl:good-col}
     Suppose that $\tilde{c} \in \Cot$ so that $\delta(\tilde{c}, \cS) \leq \rho_t$.
     Let $H_t \subseteq [n]$ be the set of indices sampled by $\Gamma_t$ with seed $\sigma^{(t)}$.  Let
     \[ \widehat{H}_t = \{ h \in {H}_t \,:\, \text{ column $h$ is good with respect to $\tilde{c}$}\}.\]
     Then with probability at least $1 - \eta_t$ over the choice of $\sigma^{(t)}$ (and hence over the choice of $H_t$),
     \[ |\widehat{H}_t| \geq (1 - 2\eps_{t-1})|H_t|.\]
     \end{claim}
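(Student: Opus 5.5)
The plan is a Markov-type averaging argument over columns, followed by a single application of the sampler guarantee. Define $f:[n]\to[0,1]$ by $f(h) = \mathbf{1}[\text{column } h \text{ is not good}]$, that is,
\[ f(h) = \mathbf{1}\big[\delta(\tilde{c}|_{[n]^{t-1}\times\{h\}}, \cS|_{[n]^{t-1}\times\{h\}}) > \rho_{t-1}\big]. \]
We will bound $\EE_{h\in[n]} f(h)$ and then apply the sampler.

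First, we bound the number of bad columns. Averaging over columns gives
\[ \EE_{h\in[n]}\, \delta(\tilde{c}|_{[n]^{t-1}\times\{h\}}, \cS|_{[n]^{t-1}\times\{h\}}) = \delta(\tilde{c},\cS) \leq \rho_t = \eps_{t-1}\rho_{t-1}, \]
using the definition $\rho_t = \eps_{t-1}\rho_{t-1}$ from \Cref{def:params}. By Markov's inequality, the fraction of columns whose distance exceeds $\rho_{t-1}$ is at most $\rho_t/\rho_{t-1} = \eps_{t-1}$. Hence $\EE_{h\in[n]} f(h) \leq \eps_{t-1}$.

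Second, we apply the sampler. $\Gamma_t$ is an $(n,\eta_t,\nu_t)$-sampler by \Cref{thm:sampler}, with parameters as in \Cref{def:params2}, and $\nu_t = \min(\eps_{t-1}, \delta(C)/4) \leq \eps_{t-1}$. By \Cref{def:samplerGamma}, with probability at least $1-\eta_t$ over $\sigma^{(t)}$ we have
\[ \EE_{h\in H_t} f(h) \leq \EE_{h\in[n]} f(h) + \nu_t \leq 2\eps_{t-1}. \]
So at most a $2\eps_{t-1}$ fraction of $H_t$ consists of bad columns, which gives $|\widehat{H}_t| \geq (1-2\eps_{t-1})|H_t|$.

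The one subtlety is that $H_t$ may be a multiset if the sampler repeats indices. In that case we should interpret $|\widehat{H}_t|$ and $|H_t|$ as counts with multiplicity, and the averaging above goes through unchanged. No other step is delicate.
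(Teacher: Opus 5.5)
Your proposal is correct and follows essentially the same argument as the paper. Both use Markov's inequality on the column distances, with $\rho_t = \eps_{t-1}\rho_{t-1}$, to get at most an $\eps_{t-1}$ fraction of bad columns, and then apply the sampler guarantee to the bad-column indicator with $\nu_t \leq \eps_{t-1}$ (your multiset remark is harmless but unnecessary, since \Cref{def:samplerGamma} has the sampler output a set of size $m$).
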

     \begin{proof}
         First, we note that
         \begin{align*} \mathbb{E}_{h \in [n]} \delta(\tilde{c}|_{[n]^{t-1} \times \{h\}}, \cS|_{[n]^{t-1} \times \{h\}})
         &= \delta(\tilde{c}, \cS) \\
         &\leq \rho_t \qquad\qquad \text{by assumption} \\
         &= \eps_{t-1}\rho_{t-1} \qquad \ \ \text{by the definition of $\rho_t$ (\Cref{def:params})}.
         \end{align*}
         Thus, Markov's inequality implies that
         \[ \Pr_{h \in [n]}[\delta(\tilde{c}|_{[n]^{t-1} \times \{h\}}, \cS|_{[n]^{t-1} \times \{h\}}) > \rho_{t-1}] \leq \eps_{t-1}. \]
         Let $f(h) = \mathbf{1}[\delta(\tilde{c}|_{[n]^{t-1} \times \{h\}}, \cS|_{[n]^{t-1} \times \{h\}}) > \rho_{t-1}]$.  By the properties of the $(n, \eta_t, \nu_t)$-sampler $\Gamma_t$ (\Cref{thm:sampler}, along with the observation that the parameters in \Cref{def:params2} are chosen to satisfy the requirements of \Cref{thm:sampler}), with probability at least $1 - \eta_t$ over the choice of $\sigma^{(t)}$, 
         \[ |\EE_{h \in [n]} f(h) - \EE_{h \in H_t} f(h) | \leq \nu_t \leq \eps_{t-1},\]
         where the final inequality is because $\nu_t = \min\{\eps_{t-1}, \frac{\delta(C)} 4\}$ as in \Cref{def:params2}.  Together with the triangle inequality, the above implies that
         \[ \Pr_{h \in H_t}[\delta(\tilde{c}|_{[n]^{t-1} \times \{h\}}, \cS|_{[n]^{t-1} \times \{h\}}) > \rho_{t-1}] \leq 2 \eps_{t-1}. \]
         In other words, in the favorable case, at least a $1 - 2\eps_{t-1}$ fraction of the $h \in H_t$ are good, which is what we wanted to show.
     \end{proof}
     Now, we can show that most rows are good.
     \begin{claim}\label{cl:goodrows}
     Assume the hypotheses of \Cref{thm:mainTech}.  Fix $\tilde{c} \in \Cot$ so that $\delta( \tilde{c}, \cS) \leq \rho_t$, and let $\vec{A} = \vec{A}(\sigma^{(t)}, \tilde{c})$ be as in \Cref{eq:repA}.
     
        Then
         with probability at least $1/2$ over the choice of $\sigma^{(t)}$, 
         \[ \Pr_{i' \in [n]^{t-1}}[ \text{Row $i'$ is good with respect to $(\tilde{c}, \sigma^{(t)}, \vec{A})$}] \geq 1 - \frac{d_0 \eps_{t-1}}{\delta(C)},\]
         where the probability is only over the choice of a uniform random row  $i' \in [n]^{t-1}$.
     \end{claim}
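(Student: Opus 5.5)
We bound, for a random seed $\sigma^{(t)}$, the expected fraction of rows $i'\in[n]^{t-1}$ that violate each of the three conditions of \Cref{def:good-row}. We then combine these bounds by a union bound and Markov's inequality over $\sigma^{(t)}$. The constant $d_0$ is chosen large enough (at least $14$) to absorb the resulting constants.

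\textbf{Condition (1).} This does not depend on $\sigma^{(t)}$. Since $\EE_{i'}\delta(\tilde c|_{\{i'\}\times[n]},\cS|_{\{i'\}\times[n]})=\delta(\tilde c,\cS)\le\rho_t=\eps_{t-1}\rho_{t-1}\le \eps_{t-1}\rho_0$ (using $\rho_{t-1}\le\rho_0$), Markov's inequality gives that at most an $\eps_{t-1}$ fraction of rows fail (1).

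\textbf{Condition (2).} Fix a row $i'$ and one of the at most $L_0$ codewords $c''\in\mathcal K_{i'}$ with $c''\ne c'=\tilde c|_{\{i'\}\times[n]}$. Then $\delta(c',c'')\ge\delta(C)$. Apply the sampler guarantee of $\Gamma_t$ to the indicator $f(h)=\mathbf 1[c'_h\ne c''_h]$, with accuracy $\nu_t\le\delta(C)/4$. Except with probability $\eta_t$ over $\sigma^{(t)}$, the empirical disagreement on $H_t$ is at least $\frac34\delta(C)$. A union bound over the $L_0$ candidates shows that the row fails (2) with probability at most $L_0\eta_t=\eps_{t-1}/10$. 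Hence the expected fraction of rows failing (2) is at most $\eps_{t-1}/10$.

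\textbf{Condition (3).} This is the main step. Condition on the event of \Cref{cl:good-col}, which has probability at least $1-\eta_t$: at least a $(1-2\eps_{t-1})$ fraction of the columns in $H_t$ are good. For a good column $h_a$, the inductive hypothesis that $\DALLR{t-1}$ is a correct $(Q_{t-1},\eps_{t-1},\rho_{t-1},\ell,L_{t-1})$-\ADLLR\ says that $\cL^{(t-1)}_a$ contains an algorithm that is $\eps_{t-1}$-close to $\tilde c|_{[n]^{t-1}\times\{h_a\}}$. The $\argmin$ choice in \eqref{eq:repA} therefore also satisfies $\delta(x(A_a^{(t-1)}),\tilde c|_{[n]^{t-1}\times\{h_a\}})\le\eps_{t-1}$. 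For bad columns, bound the disagreement by $1$. Averaging over $a\in[m_t]$ and $i'$ gives
\[\EE_{i'}\,\delta(\tilde c|_{\{i'\}\times H_t},v)\le \eps_{t-1}+2\eps_{t-1}=3\eps_{t-1}.\]
Markov's inequality then shows that at most a $12\eps_{t-1}/\delta(C)$ fraction of rows have $\delta(\tilde c|_{\{i'\}\times H_t},v)>\delta(C)/4$. The delicate point is that the $\argmin$ definition of $\vec A$ uses knowledge of $\tilde c$. This is allowed because the claim only asserts existence, and it is exactly why the $\argmin$ inherits the $\eps_{t-1}$ guarantee.

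\textbf{Combining.} Let $X(\sigma^{(t)})$ be the fraction of rows failing (2). Then $\EE X\le\eps_{t-1}/10$, so by Markov's inequality $X\le\eps_{t-1}/5$ with probability at least $1/2+\dots$. More carefully, $\Pr[X>\eps_{t-1}/5]\le1/2-\eta_t$ is not automatic, so we instead take the threshold to be $X\le 2\eps_{t-1}/10\cdot(1/(1/2-\eta_t))\le\eps_{t-1}$, which holds with probability at least $1/2+\eta_t$. Intersecting with the event of \Cref{cl:good-col}, which fails with probability at most $\eta_t$, yields probability at least $1/2$. On this event, the fraction of bad rows is at most
\[\eps_{t-1}+\eps_{t-1}+\frac{12\eps_{t-1}}{\delta(C)}\le\frac{14\eps_{t-1}}{\delta(C)}\le\frac{d_0\eps_{t-1}}{\delta(C)},\]
using $\delta(C)\le1$ and $d_0\ge14$.
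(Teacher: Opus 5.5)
Your proposal is correct and follows the paper's proof essentially step for step. Condition (1) uses the same deterministic Markov bound; Condition (2) uses the same sampler guarantee, union bound over the at most $L_0$ list elements, and averaging over rows; and Condition (3) combines \Cref{cl:good-col}, the inductive hypothesis, and the $\argmin$ choice in \eqref{eq:repA} in the same way to get the $12\eps_{t-1}/\delta(C)$ bound. The only deviation is your roundabout Markov threshold in the combining step, which is valid but unnecessary: the paper applies Markov directly at threshold $\eps_{t-1}$, so Condition (2) holds for a $1-\eps_{t-1}$ fraction of rows with probability at least $0.9$, and it then concludes with $0.9-\eta_t\geq 1/2$.
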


     \begin{proof}
         We will establish that each of the three conditions of \Cref{def:good-row} hold for most rows, with high probability over the choice of $\sigma^{(t)}$.  
         \begin{enumerate}
             \item[1.] We will show that $\delta( \tilde{c}|_{\{i'\} \times [n]}, \cS|_{\{i'\} \times [n]}) \leq \rho_0$ for at least a $1 - \eps_{t-1}$ fraction of the rows $i' \in [n]^{t-1}$.  Note that this property holds deterministically for all seeds $\sigma^{(t)}$.
             
             Similar to the derivation in the proof of \Cref{cl:good-col}, we have
             \begin{align*}
                 \mathbb{E}_{i' \in [n]^{t-1}} \delta( \tilde{c}|_{\{i'\} \times[n]}, \cS|_{\{i'\} \times [n]}) &= \delta(\tilde{c}, \cS) \\
                 &\leq \rho_t \\
                 &= \eps_{t-1}\rho_{t-1},
             \end{align*}
             and so by Markov's inequality
             \[ \Pr_{i' \in [n]^{t-1}}[\delta( \tilde{c}|_{\{i'\} \times[n]}, \cS|_{\{i'\} \times [n]}) > \rho_{t-1} ] \leq \eps_{t-1}.\]

             Since $\rho_0 \geq \rho_{t-1}$ by assumption, we conclude that
             at least a $1 - \eps_{t-1}$ fraction of the rows $i' \in [n]^{t-1}$ 
             have
             \[ \delta(\tilde{c}|_{\{i'\} \times [n]}, \cS|_{\{i'\} \times [n]} ) \leq \rho_{t-1} \leq \rho_0,\]
             and hence
             satisfy Condition (1) of \Cref{def:good-row}.
             \item[2.] We will show that, with probability at least $0.9$ over the choice of $\sigma^{(t)}$, for at least a $1 - \eps_{t-1}$ fraction of the rows $i' \in [n]^{t-1}$,  we have
\begin{equation}\label{eq:good_dist_Ki}
\min_{c'' \in \mathcal{K}_{i'} \setminus \{\tilde{c}|_{\{i'\} \times [n]}\}} \delta(c''|_{H_t}, \tilde{c}|_{\{i'\} \times H_t}) > \frac{3\delta(C)} 4.
         \end{equation}
          
         To establish this, first let $i' \in [n]^{t-1}$ be any row, and let $c' =\tilde{c}|_{\{i'\} \times[n]}$.  Fix $c'' \in \mathcal{K}_{i'} \setminus \{c'\}$.  Let $f:[n] \to \{0,1\}$ be given by $f(h) = \mathbf{1}[c''_h \neq c'_h]$.  Since $\Gamma_t$ is a $(n, \eta_t, \nu_t)$-sampler, \Cref{thm:sampler} (along with the observation that the parameters in \Cref{def:params2} are chosen to satisfy the requirements of \Cref{thm:sampler}) implies that, with probability at least $1 - \eta_t$ over the choice of $\sigma^{(t)}$,
         \[| \EE_{h \in [n]} f(h) - \EE_{h \in H_t} f(h) | \leq \nu_t \leq \frac{\delta(C)} 4,\]
         where in the final inequality we have used the definition of $\nu_t = \min\{\eps_{t-1}, \frac{\delta(C)} 4\}$ in \Cref{def:params2}.  Since $\EE_{h \in [n]} f(h) = \delta(c',c'') \geq \delta(C)$ by the definition of distance, the triangle inequality implies that 
         \[ \delta(c'|_{H_t}, c''|_{H_t}) = \EE_{h \in H_t} f(h) \geq \frac{3 \delta(C)}{4}.\]
         Now we union bound over all at most $L_0$ codewords in $\mathcal{K}_{i'}$ to establish that, for any fixed row $i'$,  \eqref{eq:good_dist_Ki} holds with probability at least $1 - L_0 \eta_t$ over the choice of $\sigma^{(t)}$. 

         Now, taking expectations over all of the rows $i' \in [n]^{t-1}$, we have that
         \[ \mathbb{E}_{i' \in [n]^{t-1}} \Pr_{\sigma^{(t)}}[ \eqref{eq:good_dist_Ki} \text{ does not hold } ] \leq L_0 \eta_t,\]
         or equivalently
         \[ \EE_{\sigma^{(t)}} \Pr_{i' \in [n]^{t-1}}[\eqref{eq:good_dist_Ki} \text{ does not hold }] \leq L_0 \eta_t \leq \frac{\eps_{t-1}}{10},\]
         using the definition of $\eta_t = \frac{\eps_{t-1}}{10 L_0}$ from \Cref{def:params2}. 
         Thus, by Markov's inequality, we have
         \[ \Pr_{\sigma^{(t)}}\left[ \Pr_{i' \in [n]^{t-1}}[ \eqref{eq:good_dist_Ki} \text{ does not hold} ] > \eps_{t-1} \right] \leq \frac{1}{10}.\]
         In other words, we conclude that with probability at least $0.9$ over the choice of $\sigma^{(t)}$, \eqref{eq:good_dist_Ki} holds for at least a $1 - \eps_{t-1}$ fraction of the rows $i' \in [n]^{t-1}$.

            \item[3.] Now, we will show that with probability at least $1 - \eta_t$ over the choice of $\sigma^{(t)}$, for a $1 - \frac{12 \eps_{t-1}}{\delta(C)}$ fraction of rows $i' \in [n]^{t-1}$, we have $\delta(\tilde{c}|_{\{i'\} \times H_t}, v) \leq \frac{\delta(C)} 4$, where $v \in \Sigma^{m_t}$ is given by $v_a = A_a^{(t-1)}(i')$ as in \Cref{def:good-row}, and where $\vec{A}$ is as defined in \Cref{eq:repA} .

           Let $\widehat{H}_t \subseteq H_t$ be as in \Cref{cl:good-col}, so $\widehat{H}_t$ is the set of $h \in H_t$ so that column $h$ is good with respect to $\tilde{c}$.  
           That is, for any $h \in \widehat{H}_t$, 
            \[ \delta( \tilde{c}^{(h)}, \cS|_{[n]^{t-1} \times\{h\}} ) \leq \rho_{t-1},\]
            where
            \[ \tilde{c}^{(h)} = \tilde{c}|_{[n]^{t-1} \times \{h\}}.\]
            As we are assuming by induction that $\DALLR{t-1}$ is a correct $(Q_{t-1}, \eps_{t-1}, \rho_{t-1}, \ell, L_{t-1})$-DALLR algorithm, this means that, for any $h_a \in \widehat{H}_t$, when $\DALLR{t-1}$ is run on the column $\cS|_{[n]^{t-1} \times \{h_a\}}$ in \Cref{alg:main} to produce $\cL_a^{(t-1)}$, there is some $\tilde{A}_a^{(t-1)} \in \cL_a^{(t-1)}$ so that 
            \begin{equation}\label{eq:Agood2} \Pr_{i' \in [n]^{t-1}}[ \tilde{A}_a^{(t-1)}(i') = \tilde{c}^{(h_a)}_{i'} ] \geq 1- \eps_{t-1}.\end{equation}
            Since we have chosen $A_a^{(t-1)}$ in \Cref{eq:repA} precisely to maximize
            \[ \Pr_{i' \in [n]^{t-1}}[ {A}_a^{(t-1)}(i') = \tilde{c}^{(h_a)}_{i'} ],\]
            we conclude that in fact \Cref{eq:Agood2} applies to the local algorithms $A_a^{(t-1)}$ for all $a \in \widehat{H}_t$.

            Now, from \Cref{cl:good-col}, we have that $|\widehat{H}_t| \geq (1 - 2\eps_{t-1})|H_t|$ with probability at least $ 1- \eta_t$ over the choice of $\sigma^{(t)}$.  Thus, averaging over both $h \in H_t$ and over $i' \in [n]^{t-1}$, we see that
            \[ \EE_{i' \in [n]^{t-1}} \Pr_{h_a \in H_t}[ A_a^{(t-1)}(i') = \tilde{c}_{(i', h_a)} ] \geq (1 - \eps_{t-1})(1 - 2\eps_{t-1}) \geq 1 - 3\eps_{t-1}.\]
           Thus, by Markov's inequality,
           \begin{align*}
               \Pr_{i' \in [n]^{t-1}} \left[ \Pr_{h_a \in H_t}[A_a^{(t-1)}(i') \neq \tilde{c}_{(i',h_a)} ] > \frac{\delta(C)}{4} \right] \leq \frac{ 12 \eps_{t-1}}{\delta(C)}.
           \end{align*}
           Since $\Pr_{h_a \in H_t}[ A_a^{(t-1)}(i') \neq \tilde{c}_{(i',h_a)}] = \delta( \tilde{c}|_{\{i'\} \times {H_t}}, v)$, this establishes the claim.
         \end{enumerate}
         Thus, we have shown that:
         \begin{enumerate}
             \item Condition (1) of \Cref{def:good-row} holds for at least a $ 1- \eps_{t-1}$ fraction of the rows $i' \in [n]^{t-1}$.
             \item With probability at least $0.9$ over the choice of $\sigma^{(t)}$, Condition (2) of \Cref{def:good-row} holds for at least a $1 - \eps_{t-1}$ fraction of the rows $i' \in [n]^{t-1}$.
             \item With probability at least $1 - \eta_t$ over the choice of $\sigma^{(t)}$, Condition (3) of \Cref{def:good-row} holds for a $1 - \frac{12 \eps_{t-1}}{\delta(C)}$ fraction of the rows $i'\in [n]^{t-1}$.
         \end{enumerate}
         
         Together, these imply that with probability at least $0.9 - \eta_t \geq 1/2$ over the choice of $\sigma^{(t)}$, all three conditions hold for at least  a
         \[ 1 - \frac{12 \eps_{t-1}}{\delta(C)} - \eps_{t-1} - \eps_{t-1} \geq 1 - \frac{ 14 \eps_{t-1}}{\delta(C)} \]
         fraction of rows, using the fact that $\delta(C) \leq 1$.  This proves the claim, using the assumption in \Cref{def:params} that $d_0 \geq 14$.
     \end{proof}

     Now, we put together \Cref{cl:decodegood} and \Cref{cl:goodrows} to establish \Cref{cl:goodAt}.
     
     \begin{proof}[Proof of \Cref{cl:goodAt}]
     \Cref{cl:goodrows} implies that with probability at least $1/2$ over the choice of $\sigma^{(t)}$, there is some $\vec{A} \in \cL_1^{(t-1)} \times \cdots \times \cL_{m_t}^{(t-1)}$ (namely, the one given in \Cref{eq:repA}) so that at least a $1 - \frac{d_0 \eps_{t-1}}{\delta(C)}$ fraction of the rows $i' \in [n]^{t-1}$ are good with respect to $(\tilde{c}, \sigma^{(t)}, \vec{A})$.   Then \Cref{cl:decodegood} implies that for any good row $i' \in [n]^{t-1}$, 
     \[ A^{(t)}(i) = \tilde{c}_i \, \; \forall i \in \{i'\} \times [n].\]
     Together, these imply that with probability at least $1/2$ over the choice of $\sigma^{(t)}$, there is some $\vec{A}$ so that
     \[\Pr_{i' \in [n]^{t-1}}[ A^{(t)}(i) = \tilde{c}_i \, \; \forall i \in \{i'\}\times[n] ] \geq 1 - \frac{ d_0 \eps_{t-1}}{\delta(C)},\]
     which implies that
     \begin{equation*}\Pr_{i \in [n]^t}[ A^{(t)}(i) = \tilde{c}_i ] \geq 1- \frac{d_0 \eps_{t-1}}{\delta(C)}.\end{equation*}
    This proves the claim.
     \end{proof}

      \subsection{Proof of \Cref{lem:base_correct}}\label{app:proofOfCorrectnessClaim_base}

       In this section, we prove \Cref{lem:base_correct}, which we restate for the reader's convenience.

  \basecorrect*

The rest of this section is devoted to the proof of \Cref{lem:base_correct}, the proof is similar to that of \Cref{cl:goodAt}.
    Let $\cS \subseteq {\Sigma \choose \leq \ell}^{n \times n}$ be an input list for $\DALLR{2}$.
    Let $\cL^* = \inset{ c \in C \otimes C : \delta(c, \cS) \leq \rho_2}$ be the ``true'' list of close-by codewords, and let $\ctL^{(2)}$ be the list returned by \DALLR{2}.  We want to show that for any $\tilde{c} \in \cL^*$, there is some local algorithm $A^{(2)} \in \ctL^{(2)}$ so that 
    \begin{equation}\label{eq:perfect_decoding}
    \Pr_{i \in [n]\times [n]}[ A^{(2)}(i) \neq  \tilde{c}_i ] = 0.
    \end{equation}

    To establish \eqref{eq:perfect_decoding}, we begin with a definition.  Fix $\tilde{c} \in \cL^*$.  Say that $h \in [n]$ (which we think of as the index of a column of an element of $\Sigma^{n \times n}$) is \emph{good} (with respect to $\tilde c$) if
    \begin{equation}\label{eq:good_col_two}
    \delta( \tilde{c}|_{[n] \times \{h\}} , \cS|_{[n] \times \{h\}})\leq \rho_0.
    \end{equation}
    Notice that since $\delta(\tilde{c}, \cS) \leq \rho_2$ by assumption, the fraction of columns $h \in [n]$ that are not good is at most $\rho_2/\rho_0$.  Let $H_2 = \Gamma_2(\sigma^{(2)}) \subseteq [n]$ be the set chosen by the sampler $\Gamma_2$ on the seed $\sigma^{(2)} \in \{0,1\}^{r_2}$.  By \Cref{thm:sampler}, along with the fact that we have chosen the parameters for the sampler in Definition \ref{def:params2} 
    to satisfy the hypotheses of that theorem, we conclude that with probability at least $1 - \eta_2$ over the choice of $\sigma^{(2)}$,\footnote{We note that, similarly to the proof of \Cref{cl:goodAt}, the algorithm \DALLR{2} enumerates over all seeds $\sigma^{(2)}$, so there is no actual randomness in the algorithm.  Rather, we imagine choosing a random $\sigma^{(2)}$ for the analysis only.}
    \[ |\Pr_{h \in[n]}[ h \text{ is not good } ] -\Pr_{h \in H_2}[ h \text{ is not good } ]| \leq \nu_2. \]
    Thus, with probability at least $1 - \eta_2$ over the choice of $\sigma^{(2)}$, 
    \begin{equation}\label{eq:E} \Pr_{h \in H_2}[ h \text{ is not good } ] \leq \frac{\rho_2}{\rho_0} + \nu_2 \leq \frac{\delta(C)}{4},\end{equation}
    where in the final inequality we have used the definitions of $\rho_2$ in \Cref{def:params} 
    and $\nu_2 = \frac{\delta(C)}{8}$ in Definition \ref{def:params2}.  Let $\mathcal{E}$ denote the favorable event that \Cref{eq:E} holds.

    Now, consider running the list-recovery algorithm $\cA_0$ on each column $\cS|_{[n] \times \{h_a\}}$ for $h_a \in H_2$, to obtain a list $\cL_a \subseteq C$ for each $a \in [m_2]$.  For each $a \in [m_2]$, define
    \begin{equation}\label{eq:choice_ca} c^{(a)} = \argmin_{c \in \cL_a} \delta(c, \tilde{c}|_{[n] \times \{h_a\}}).\end{equation}
    Define $v = v(\tilde{c}) \in \Sigma^{[n] \times H_2}$ to be the word whose $a$'th column is $c^{(a)}$.  Notice that if $h_a$ is good, $\tilde{c}|_{[n] \times \{h_a\}} \in \cL_a$, and so 
    \begin{equation}
        \label{eq:good_agree}c^{(a)} = v|_{[n] \times \{a\}} = \tilde{c}|_{[n] \times \{h_a\}}.
    \end{equation}
    Now we turn our attention to the rows.  For each $b \in [n]$, let $\mathcal{K}_b \subseteq C$ be the result of running the list-recovery algorithm $\cA_0$ on the row $\cS|_{\{b\} \times [n]}$.  
    \begin{definition}\label{def:good_row_two}
        We say that a row $b \in [n]$ is \emph{good} (with respect to $\tilde{c}$ and $\sigma^{(2)}$) if the following two things occur.
        \begin{enumerate}
            \item $\delta( \tilde{c}|_{\{b\} \times [n]}, \cS|_{\{b\} \times [n]} ) \leq \rho_0$.  Notice that if this occurs then $\tilde{c}|_{\{b\} \times [n]} \in \mathcal{K}_b.$
            \item For all $c \in \mathcal{K}_b \setminus \{\tilde{c}|_{\{b\} \times [n]} \}$, we have $\delta(c|_{H_2}, \tilde{c}|_{\{b\} \times H_2} ) \geq \frac{3\delta(C)}{4}.$
        \end{enumerate}
    \end{definition}
    We next show that, assuming $\cE$ holds, then all good rows are decoded correctly.  Formally, we have the following claim.
    \begin{claim}\label{cl:good_is_good}
    Suppose that the event $\mathcal{E}$ holds.
    Suppose that $b \in [n]$ is good with respect to $\tilde{c}$ and $\sigma^{(2)}$.  Let
    \[ \tilde{c}^{(b)} := \operatorname*\argmin_{c'' \in \mathcal{K}_b} | \{a \in [m_2] : c_{h_a}'' \neq c_b^{(a)} \}| = \operatorname*\argmin_{c'' \in \mathcal{K}_b} \delta( c''|_{H_2}, v|_{\{b\}\times H_2}),\]
    where $c^{(a)}$ and $v$ are as defined above.
    Then $\tilde{c}^{(b)} = \tilde{c}|_{\{b\} \times [n]}$.  That is, if the $c^{(a)}$ we have chosen above are the same as in \Cref{alg:local-alg-two} and $b$ is good, then the row $\tilde{c}^{(b)}$ computed in \Cref{alg:local-alg-two} matches $\tilde{c}$.
    \end{claim}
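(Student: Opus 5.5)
The plan is to mirror the proof of \Cref{cl:decodegood}, now in two dimensions. First, use Condition 1 of \Cref{def:good_row_two}: it gives $\delta(\tilde{c}|_{\{b\}\times[n]}, \cS|_{\{b\}\times[n]}) \leq \rho_0$. By the correctness of $\cA_0$, this places $c' := \tilde{c}|_{\{b\}\times[n]}$ in $\mathcal{K}_b$, so $c'$ is a legitimate candidate for the argmin.

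Next I will bound the distance between $c'|_{H_2}$ and the row $v|_{\{b\}\times H_2}$. Fix $a \in [m_2]$ with $h_a$ good. By \eqref{eq:good_agree}, $c^{(a)} = \tilde{c}|_{[n]\times\{h_a\}}$, so $c^{(a)}_b = \tilde{c}_{(b,h_a)} = c'_{h_a}$. Hence the only positions $a$ where $v|_{\{b\}\times H_2}$ and $c'|_{H_2}$ can disagree are those with $h_a$ not good. Under the event $\mathcal{E}$, the bound \eqref{eq:E} says this is at most a $\frac{\delta(C)}{4}$ fraction of $a\in[m_2]$. Therefore
\[
\delta(c'|_{H_2}, v|_{\{b\}\times H_2}) \leq \frac{\delta(C)}{4}.
\]

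Now take any $c'' \in \mathcal{K}_b\setminus\{c'\}$. Condition 2 of \Cref{def:good_row_two} gives $\delta(c''|_{H_2}, c'|_{H_2}) \geq \frac{3\delta(C)}{4}$. By the triangle inequality,
\[
\delta(c''|_{H_2}, v|_{\{b\}\times H_2}) \geq \frac{3\delta(C)}{4} - \frac{\delta(C)}{4} = \frac{\delta(C)}{2} > \frac{\delta(C)}{4} \geq \delta(c'|_{H_2}, v|_{\{b\}\times H_2}).
\]
So $c'$ is the unique minimizer, and $\tilde{c}^{(b)} = c'$. This is exactly the claim.

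The one point needing care is that positions where $h_a$ is not good can only hurt, never help. In particular, if $c^{(a)} = \bot^n$ because the list was short, that entry simply counts as a disagreement with $c'$, which the bound above already allows. The strict inequality $\frac{\delta(C)}{2} > \frac{\delta(C)}{4}$ also holds since $\delta(C) > 0$, so ties in the argmin cannot occur. There is no real obstacle beyond this bookkeeping.
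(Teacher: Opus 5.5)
Your proposal is correct and follows essentially the same argument as the paper. You place $c'=\tilde{c}|_{\{b\}\times[n]}$ in $\mathcal{K}_b$ via Condition 1, use \eqref{eq:good_agree} together with the event $\mathcal{E}$ to get $\delta(c'|_{H_2}, v|_{\{b\}\times H_2})\le \delta(C)/4$, and then use Condition 2 with the triangle inequality to show every other candidate is at distance at least $\delta(C)/2$, so $c'$ is the unique minimizer.
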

    \begin{proof}
        Suppose that $b$ is good.  Let $c' = \tilde{c}|_{\{b\} \times [n]}$, so item (1) of \Cref{def:good_row_two} implies that $c' \in \mathcal{K}_b$.  Notice that by \eqref{eq:good_agree}, $c'$ and $v$ agree on all good columns $h_a \in H_2$. Since $\mathcal{E}$ holds, the fraction of good columns is at least $1 - \frac {\delta(C)} 4$, 

and it follows that
        \begin{equation}\label{eq:correct_is_close_two} \delta(c'|_{H_2}, v|_{\{b\} \times H_2}) \leq \frac{\delta(C)}{4}.
        \end{equation}
        On the other hand, let $c'' \in \mathcal{K}_b \setminus \{c'\}$.  Then we have
        \begin{align*}
            \delta(c''|_{H_2}, v|_{\{b\} \times H_2}) &\geq \delta( c''|_{H_2},  c'|_{H_2} ) -  \delta(c'|_{H_2}, v|_{\{b\} \times H_2})\\
            &\geq \frac{3 \delta(C)}{4} - \frac{\delta(C)}{4} \\
            &= \frac{\delta(C)}{2} \\
            &> \delta(c'|_{H_2}, v|_{\{b\} \times H_2}),
        \end{align*}
        where the first line is the triangle inequality and the second line follows from item (2) in \Cref{def:good_row_two} and from \Cref{eq:correct_is_close_two}.  In particular, the argmin in the claim is attained by $c'$, and we conclude that $\tilde{c}^{(b)} = c' = \tilde{c}|_{\{b\} \times [n]}$, which proves the claim.
    \end{proof}
    Now that we know that good rows are decoded correctly (assuming that the $c^{(a)}$ are chosen as in \Cref{eq:choice_ca} and that event $
    \cE$ holds), we will show that most rows are likely to be good.
    \begin{claim}\label{cl:most_rows_good}
        With probability at least $0.9$ over the choice of $\sigma^{(2)}$, at least a $1- \rho_2^{(!)}$ fraction of rows $b \in [n]$ are good with respect to $\tilde{c}$ and $\sigma^{(2)}$.
    \end{claim}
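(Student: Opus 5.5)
The plan mirrors the proof of \Cref{cl:goodrows}: show that each of the two conditions of \Cref{def:good_row_two} fails on only a small fraction of rows, controlling Condition (2) on average over seeds. Then I will use Markov's inequality over $\sigma^{(2)}$ and a union bound.

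\textbf{Condition (1)} holds deterministically for most rows. Since $\EE_{b\in[n]}\delta(\tilde c|_{\{b\}\times[n]},\cS|_{\{b\}\times[n]})=\delta(\tilde c,\cS)\le\rho_2$, Markov's inequality shows that at most a $\rho_2/\rho_0$ fraction of rows violate Condition (1). By the definition of $\rho_2$ in \Cref{def:params}, $\rho_2/\rho_0\le \rho_2^{(!)}/2$. This holds for every seed.

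\textbf{Condition (2)} is handled one row at a time. Fix a row $b$ and write $c'=\tilde c|_{\{b\}\times[n]}$. Note that $\mathcal K_b$ does not depend on $\sigma^{(2)}$. For each $c''\in\mathcal K_b\setminus\{c'\}$, apply the sampler guarantee (\Cref{thm:sampler}) to $f(h)=\mathbf 1[c''_h\ne c'_h]$, whose mean over $[n]$ is at least $\delta(C)$. Since $\nu_2=\delta(C)/8<\delta(C)/4$, with probability at least $1-\eta_2$ we get $\delta(c''|_{H_2},c'|_{H_2})\ge 3\delta(C)/4$. A union bound over the at most $L_0$ elements of $\mathcal K_b$ shows that row $b$ fails Condition (2) with probability at most $L_0\eta_2=\rho_2^{(!)}/20$. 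Swapping the expectations gives $\EE_{\sigma^{(2)}}\Pr_b[\text{(2) fails}]\le \rho_2^{(!)}/20$. By Markov's inequality, with probability at least $0.9$ over $\sigma^{(2)}$, Condition (2) fails on at most a $\rho_2^{(!)}/2$ fraction of rows.

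\textbf{Combining the two conditions.} With probability at least $0.9$, the fraction of bad rows is at most $\rho_2^{(!)}/2+\rho_2^{(!)}/2=\rho_2^{(!)}$, which is the claim. The main point needing care is the constant bookkeeping: $\eta_2=\rho_2^{(!)}/(20L_0)$ and $\rho_2\le\rho_0\rho_2^{(!)}/2$ are exactly what make the two halves sum to $\rho_2^{(!)}$. One also has to check that the sampler's failure probability applies to each fixed Boolean $f$, which is legitimate because $\mathcal K_b$ is independent of the seed.
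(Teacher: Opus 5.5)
Your proof is correct and follows the paper's argument essentially step for step. Condition (1) is handled by Markov's inequality, uniformly in $\sigma^{(2)}$. Condition (2) gets a per-row sampler bound with a union bound over $\mathcal{K}_b$, then a swap of the order of averaging, then Markov over seeds with threshold $\rho_2^{(!)}/2$, and the two bad-row fractions are added at the end. Your explicit remark that $\mathcal{K}_b$, and hence each test function $f$, is fixed independently of the seed is something the paper leaves implicit, but it is exactly what justifies applying the sampler guarantee.
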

    \begin{proof}
        First, we observe that for any choice of $\sigma^{(2)}$, at least a $1 -\frac {\rho_2 }{\rho_0 }\geq 1-\frac{ \rho_2^{(!)} }{2}$ fraction of the rows satisfies the requirement (1) in \Cref{def:good_row_two}.  Indeed, we have $\delta(\tilde{c}, \cS) \leq \rho_2$ by assumption, so by Markov's inequality 
        \[\Pr_{b \in [n]} [\delta(\tilde{c}|_{\{b\} \times [n]}, \cS|_{\{b\} \times [n]}) > \rho_0 ] \leq \frac{\rho_2}{\rho_0} \leq \frac  {\rho_2^{(!)}} {2},\]
        where the last inequality is by definition of $\rho_2$ in \Cref{def:params}.
        For the requirement (2) in \Cref{def:good_row_two}, fix $b$ and let $c' = \tilde{c}|_{\{b\}\times [n]}$.  Fix any $c'' \in \mathcal{K}_b \setminus \{c'\}$.  From the distance of the code $C$, we have $\delta(c', c'') \geq \delta(C)$.  Thus, by \Cref{thm:sampler} about the properties of the sampler $\Gamma_2$, we have
        \[ \Pr_{\sigma^{(2)}}[ \delta( c'|_{H_2}, c''|_{H_2} ) < {\delta(C)} - \nu_2 ] \leq \eta_2.\]
        Taking a union bound over all $L_0$ elements of $\mathcal{K}_b$ and recalling that $\delta(C) - \nu_2 \geq 3\delta(C)/4$, we conclude that
        \[ \Pr_{\sigma^{(2)}}\left[ \exists c'' \in \mathcal{K}_b \text{ s.t. } \delta(c'|_{H_2}, c''|_{H_2}) < \frac{3 \delta(C)}{4} \right] \leq L_0 \eta_2. \]
        Applying an expectation over $b \in [n]$, we see that
        \begin{align*}
            \EE_{b \in [n] }\Pr_{\sigma^{(2)}}\left[ \exists c''\in \mathcal{K}_b \text{ s.t. } \delta(c'|_{H_2} , c''|_{H_2} ) < \frac{3 \delta(C)}{4} \right] &\leq L_0 \eta_2
            \end{align*}
            and hence, switching the order of the randomness,
            \begin{align*}
            \EE_{\sigma^{(2)}}\Pr_{b \in [n] }\left[ \exists c'' \in \mathcal{K}_b \text{ s.t. } \delta(c'|_{H_2} , c''|_{H_2} ) < \frac{3 \delta(C)}{4} \right] &\leq L_0 \eta_2.
        \end{align*}
        Then by Markov's inequality, we see that with probability at least $1 - \frac{2 L_0 \eta_2 }{\rho_2^{(!)}} \geq 0.9$ over the choice of $\sigma^{(2)}$, we have
        \[ \Pr_{b \in [n] }\left[ \exists c'' \in \mathcal{K}_b \text{ s.t. } \delta(c'|_{H_2} , c''|_{H_2} ) < \frac{3 \delta(C)}{4} \right] \leq \frac{ \rho_2^{(!)}}{2}.\]
        Above, we have used the definition of $\eta_2 = \frac{\rho_2^{(!)}}{20 L_0} $ given in \Cref{def:params2} to bound the failure probability by $0.1$.
        Thus, in the favorable case, item (2) holds for at least a $1 - \frac{ \rho_2^{(!)}} 2$ fraction of the rows $b \in [n]$.  We have already established that item (1) holds for at least a $1 - \frac{\rho_2^{(!)}} 2$ fraction of $b \in [n]$, so we conclude that both hold for at least a $1 - \rho_2^{(!)}$ fraction of the rows $b \in [n]$, proving the claim.
    \end{proof}

    Now using \Cref{cl:good_is_good} and \Cref{cl:most_rows_good}, along with the fact that the event $\mathcal{E}$ happens with probability at least $1 - \eta_2$, we see that with probability at least $0.9 - \eta_2 > 1/2$ over the choice of $\sigma^{(2)}$, if we have a local algorithm $A^{(2)}$ (\Cref{alg:local-alg-two}) that chooses $c^{(a)}$ as in \Cref{eq:choice_ca}, then $\delta( w, \tilde{c}) \leq \rho_2^{(!)}$, where $w \in \Sigma^{n \times n}$ has the  $\tilde{c}^{(b)}$ as rows.  Since \Cref{alg:local-alg-two} then calls $\mathrm{Dec}_{C \otimes C}$ on $w$, and by assumption $\mathrm{Dec}_{C \otimes C}$ can uniquely decode up to radius $\rho_2^{(!)}$, we conclude that, in this case, we would have $x(A^{(2)}) = \tilde{c}$, aka, that \Cref{eq:perfect_decoding} holds.  
    
    Next, we observe that \Cref{alg:main_base} iterates over all seeds $\sigma^{(2)} \in \{0,1\}^{r_2}$ and all advice vectors $\tau$ to choose the columns $c^{(a)} \in \mathcal{L}_a$.  So we conclude that there is some $A^{(2)}$ considered so that \eqref{eq:perfect_decoding} holds.  Finally, we observe that the two tests at the end of \Cref{alg:main_base} will not filter out the representation of such an algorithm $A^{(2)}$ (unless an equivalent local algorithm is already in $\ctL^{(2)}$), so $\rep(A^{(2)})$ appears in $\ctL^{(2)}$ at the end of \Cref{alg:main_base}.

    To finish the proof of \Cref{lem:base_correct}, we need to establish the query complexity and the list size.  The fact that the query complexity is at most $Q_2 = n^2$ is immediate, as there are only $n^2$ entries to query.  The fact that $|\ctL^{(2)}| \leq L_2$ follows from \Cref{cor:HRW}.  In more detail,
    because of the pruning step at the end of \Cref{alg:main_base} and the unique decoding step at the end of \Cref{alg:local-alg-two}, we know that $\ctL^{(2)}$ contains no duplicates, and that for each $\rep(A^{(2)}) \in \ctL^{(2)}$, there is some $c \in \cL^*$ so that $x(A^{(2)}) = c$.   (Here, $\cL^* = \inset{c \in C \otimes C\,:\, \delta(c,\cS) \leq \rho_2}$ is the ``true'' list).  This implies that $|\ctL^{(2)}| \leq |\cL^*|$.  
Since $\rho_2 \leq \rho_0 \kappa^{-8}$ by \Cref{def:params}, \Cref{cor:HRW} implies that $|\cL^*| \leq L^{\kappa^{8} \log^2 L} = L_2,$ so this implies that $|\ctL^{(2)}|\leq L_2$ as well.
    
    This proves \Cref{lem:base_correct}.

\section{Construction of a high-rate DLCC (Proof of \Cref{thm:DLCC})}\label{app:DLCC}
In this section we prove \Cref{thm:DLCC}.  The proof follows closely the approach in \cite{CM25}.  We begin by first stating some ingredients from that paper that we need.  We note that we have adapted these statements to our notation.

We first define the codes that we will use, which are \emph{lifted Reed-Solomon (RS) Codes}~\cite{GKS13}.
\begin{definition}[Lifted RS Codes, \cite{GKS13}]
    Fix a positive integer $m$ and $d$, and a prime power $q$ so that $d \leq q$.  Define
    \[ \mathcal{F}_{m,d,q} = \inset{ f \in \F_q[X_1, \ldots, X_m] : \deg( f(\ell(T)) ) \leq d  \text{ for all lines $\ell: \F_q \to \F_q^m$}}.\]
    Then the lifted RS code with parameters $m,d,q$ is given by
    \[ C = \mathrm{Lift}_m(RS_q(d)) = \inset{ \langle f(\alpha) \rangle_{\alpha \in \F_q^m } \,:\, f \in \mathcal{F}_{m,d,q} }.\]
\end{definition}
That is, a lifted RS codeword is the evaluation vector of a multivariate polynomial $f$ that has the property that the restriction of $f$ to any line is a univariate polynomial of degree at most $d$.  Surprisingly, when the characteristic of $\F_q$ is small, the rate of the lifted RS code can be much higher than the rate of the corresponding $m$-variate Reed-Muller code, as was shown in \cite{GKS13}.  Formally, we have the following theorem.
\begin{theorem}[Rate of Lifted RS Codes, \cite{GKS13}]\label{thm:GKS}
    Suppose that $q = 2^s$ for some integer $s$.  Fix $\eta \in (0,1)$.  Let $C = \mathrm{Lift}_m(RS_q(d))$, and define
    \[ c = (1 + \lceil \log m \rceil ) \cdot 2^{ ( 1 + \lceil \log m \rceil )m} \log(1/\eta).\]
    Suppose that $d = (1 - 2^{-c})q.$  Then the rate of $C$ is at least $1 - \eta$.
\end{theorem}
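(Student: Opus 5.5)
The plan is to bound the dimension of $\mathcal{F}_{m,d,q}$ by counting monomials, following \cite{GKS13}. The first step is to use the characterization of lifted codes as \emph{degree-closed} spaces. Since the condition ``$\deg f(\ell(T)) \le d$ for all lines $\ell$'' is affine-invariant, $\mathcal{F}_{m,d,q}$ is spanned by the monomials $X^{a} = X_1^{a_1}\cdots X_m^{a_m}$, with $a_i \in \{0,\ldots,q-1\}$, that it contains. For $q = 2^s$, \cite{GKS13} characterize these monomials. Restrict $X^a$ to a line $X_i = \alpha_i T + \beta_i$ and expand each factor binomially. By Lucas's theorem, the exponents of $T$ that appear with nonzero coefficient are exactly the sums $\sum_i b_i$ with $b_i \le_2 a_i$, meaning $b_i$ is a bitwise sub-vector of $a_i$. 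These sums are then reduced modulo $q-1$, using $T^q = T$ and keeping exponents in $\{0,\ldots,q-1\}$. So $X^a \in \mathcal{F}$ if and only if every such reduced sum is at most $d$.

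The second step is a sufficient condition for $X^a \in \mathcal{F}$. Write $d = (1-2^{-c})q$, so $d$ excludes only values in the top window $(q - q/2^{c}, q)$. A reduced exponent lands above $d$ only if some binary sum $\sum_i b_i$ has its residue modulo $q-1$ with its top $c$ bits all equal to $1$. The plan is to find a large ``bad-free'' family of exponent vectors. I would split the $s$ bit positions into blocks of length $k = 1+\lceil \log m\rceil$. A sum of $m$ numbers whose bits within a block sum to less than $2^{k-1}$ produces no carry out of that block, and since $m < 2^{k}$ there is room to spare. I would then require, for each of the $c' = c/k$ relevant block positions, that the bit pattern of $(a_1,\ldots,a_m)$ in that block avoids a forbidden configuration. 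The forbidden configuration is one that would allow all $k$ bits of the corresponding block of the sum to be $1$ after cyclic shifts, since reduction modulo $2^s-1$ is a cyclic rotation of bits.

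The third step is to count. Consider a uniformly random $a \in \{0,\ldots,q-1\}^m$. For each block, the probability that the $m$ columns of bits in that block permit all-ones in the sum is at most $1 - 2^{-km}$, since the all-zero block pattern is always safe. Then I would argue that $X^a$ fails only if, for some rotation, all $c$ leading positions are bad. Roughly, the bad fraction should be at most $s\,(1-2^{-km})^{\Omega(c/k)}$. But a union bound over rotations introduces a factor of $s$ that $c$ cannot absorb, since $c$ is independent of $q$. So I would instead use a cleaner criterion: $X^a \in \mathcal{F}$ whenever, for every window of $c$ consecutive cyclic bit positions, some block inside the window has all $a_i$ bits equal to zero. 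A random string avoiding a zero block in every window has density about $1 - s\cdot(1-2^{-km})^{c/k}$, and the parameter choice $c = k\,2^{km}\log(1/\eta)$ gives $(1-2^{-km})^{c/k} \le \eta$.

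The main obstacle is exactly this factor of $s$ from the cyclic rotations. The fix I would try first is to note that only monomials whose reduced degree is genuinely affected matter. An alternative is to bound the complement of $\mathcal{F}$ as a set, using only the sum without wraparound, in which case the top-$c$-bits condition is needed at just one position. The wraparound would then be handled by the observation that $\sum_i b_i < m q$, so only about $\log m$ rotations are possible, and that number is absorbed by the factor $1+\lceil\log m\rceil$ in $c$. This gives rate $|\{a : X^a \in \mathcal{F}\}|/q^m \ge 1-\eta$.
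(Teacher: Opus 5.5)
Your first step is the right framework: monomial spanning plus Lucas's theorem shows that $X^a\in\mathcal{F}_{m,d,q}$ as soon as every reduced exponent arising from some $b\le_2 a$ is at most $d$. Your closing arithmetic $(1-2^{-km})^{c/k}\le\eta$ with $k=1+\lceil\log m\rceil$ is also exactly the count behind the bound the paper quotes from \cite{GKS13}.

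The gap is the middle step. You never establish a sufficient condition on $a$ that is both correct and dense. Your ``no carry out of the block'' claim ignores carries entering the block from lower positions, and it ignores the overflow that wraps around when reducing modulo $q-1$. The criterion you settle on (a zero block in \emph{every} cyclic window of $c$ positions) forces a union bound over $s$ windows. Since $c$ does not depend on $q$ (in this paper's application $c=o(s)$), the resulting bound $1-s(1-2^{-km})^{c/k}$ is vacuous. Your second repair points the right way, since only one window matters. However, the wraparound is not handled by counting rotations, and the factor $1+\lceil\log m\rceil$ in $c$ is not there to pay for such a count.

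The missing idea is that the cyclic rotation is a proof device, not a quantifier over failure events: only the single fixed window of the top $c$ bits of the reduced exponent matters. Call $a$ good if there are positions $j,\dots,j+k-1$ with $s-c\le j+k-1\le s-1$ at which every $a_i$ (hence every $b_i\le_2 a_i$) has a zero bit. Take any $b\le_2 a$ and multiply by $2^r$ modulo $q-1$, where $r=s-j-k$. This rotates each $b_i$ so that the zero block sits in the top $k$ positions. Hence $b_i'<2^{s-k}$, and the honest integer sum satisfies $B'=\sum_i b_i'<m2^{s-k}\le 2^{s-1}$, with no wraparound and top bit $0$. Since $\sum_i b_i\equiv 2^{-r}B'\pmod{q-1}$, the reduced exponent is $B'$ rotated back. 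It therefore has a zero in position $j+k-1\ge s-c$, so it is at most $q-1-2^{s-c}<d$. It also cannot be a nonzero multiple of $q-1$, since that would force $B'=0$ and hence all $b_i=0$.

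To finish, partition the top $c$ positions into $c/k$ disjoint blocks of length $k$ (up to rounding). A uniformly random $a$ is then bad with probability at most $(1-2^{-km})^{c/k}\le\eta$, with no dependence on $s$. Linear independence of the monomials $X^a$, $a\in\{0,\dots,q-1\}^m$, as functions on $\F_q^m$ then gives rate at least $1-\eta$.
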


\begin{remark}[Explicitness]\label{rem:explicit_LRS}
    As stated, it is not clear that $\mathrm{Lift}_m(RS_q(d))$ is explicit, as it is not clear that the set $\mathcal{F}_{m,d,q}$ has a nice description.  However, as part of the proof of \Cref{thm:GKS}, \cite{GKS13} shows that in fact this set does have a nice description, and gives an explicit set of monomials that span a $(1-\eta)q^m$-dimensional subspace of $\mathcal{F}_{m,d,q}$.  So if we take the span of those monomials as a message space, rather than $\mathcal{F}_{m,d,q}$, we obtain a subcode of $\mathrm{Lift}_m(RS_q(d))$ of rate at least $1-\eta$ with all the same properties and an explicit generator matrix (and hence efficient deterministic encoding maps).
\end{remark}

Next, we introduce some machinery from \cite{CM25}.  In particular, they introduce a notion called \emph{improving sets}, and show how to use them obtain deterministic local decoding algorithms.  We note that \cite{CM25} does not explicitly use the terminology ``deterministic local decoding,'' as per \Cref{def:DLC}, but it is implicit in their work.

\begin{definition}[Improving Sets, \cite{CM25}]
    \label{def:improving_set}
    Fix $\rho_1, \rho_2 \in (0,1)$.
    Let $C \subseteq \Sigma^N$ be a code with $\delta(C) \geq 2\rho_1$.  Let $\mathcal{I}$ be a collection of functions $I:\Sigma^N \to \Sigma^N$.  We say that $\mathcal{I}$ is a \emph{$\rho_1$-to-$\rho_2$-improving set} for $C$ if for all $w \in \Sigma^N$ so that there is some $c \in C$ with $\delta(w,c) \leq \rho_1$, we have
    \[ \EE_{I \in \cI} [ \delta(I(w), c) ] \leq \rho_2.\]
    We say that $\cI$ is a \emph{below-$\rho$, factor-$\gamma$ improving set} for $C$ if it is a $\rho'$-to-$\gamma \cdot \rho'$ improving set for $C$ for all $\rho' \leq \rho$.

    If each function $I \in \mathcal{I}$ makes at most $Q$ queries to $w \in \Sigma^N$, we say that $\cI$ has query complexity $Q$.  If every $I \in \mathcal{I}$ runs in time at most $\Tm$ and space at most $\Sp$, we say that $\cI$ has running time $\Tm$ and space use $\Sp$.
\end{definition}

The reason that improving sets are useful is that they can give us deterministic decoding algorithms, and in particular DLCCs.  More precisely, they prove the following lemma.

\begin{remark}[Uniform vs. Non-Uniform]
    The paper \cite{CM25} includes results for both uniform and non-uniform decoding algorithms.  In this paper, we focus on uniform algorithms.  Thus, when quoting results from \cite{CM25}, we include only the uniform statements, and the assumption is that all algorithms are uniform.
\end{remark}

\begin{lemma}[Lemma 3.3 in \cite{CM25}]
\label{lem:imp_means_DLCC}
    Let $C \subseteq \Sigma^N$.  Suppose there is a set $\cI$ that is a below-$\rho$, factor-$\gamma$ improving set for $C$ with query complexity $Q$, running time $\Tm_{\cI}$ and space $\Sp_{\cI}$.
    Suppose that $\gamma < (1 - \gamma)^2$, and define
    \[ b = \left\lceil \frac{ \log( \rho N / 3 + 1 )}{\log( (1-\gamma)^2/\gamma )} \right\rceil.\]

    Then there is a deterministic algorithm $\cP$ that takes as input $w \in \Sigma^N$, runs in time
    \[ \Tmpre_{\cP} = O(N\cdot b \cdot |\cI|^2 Q^{b+1} \Tm_{\cI})\]
    and space
    \[ \Sppre_{\cP} = O( b\cdot \min\{ \Sp_{\cI}, Q \log|\Sigma| \} + b \log|\cI| + \log N + \Sp_{\cI} )\]
    and outputs a $\Spop_{\cP} = O(b \log|\cI|)$-bit description of an algorithm $A$ so that:
    \begin{itemize}
        \item $A$ is a deterministic algorithm which makes at most $Q^b$ queries to $w \in \Sigma^{N}$ and takes as input $i \in [N]$
        \item If there is some $c \in C$ so that $\delta(w,c) \leq \rho$, then $A(i) = c_i$.
    \end{itemize}
    Further, the algorithm $A$ runs in time 
    \[ \Tmeval_{\cP} = O(\Tm_{\cI} \cdot Q^b )\]
    and space
    \[ \Speval_{\cP} = O( b \cdot \min\{ \Sp_{\cI}, Q \log|\Sigma| \} + \Sp_{\cI}).\]
\end{lemma}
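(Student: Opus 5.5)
The plan is to build $\cP$ as a deterministic ``iterated improvement'' procedure. The local algorithm it outputs is encoded by a sequence of $b$ improvers $(I_1,\ldots,I_b)\in\cI^b$, which costs $O(b\log|\cI|)$ bits. On input $i$, the algorithm computes $(I_b\circ\cdots\circ I_1)(w)_i$ recursively. Querying coordinate $j$ of $I_k\circ\cdots\circ I_1(w)$ means running $I_k$ on input $j$, and each of its $Q$ queries is answered by the level-$(k-1)$ composition. So the total number of queries to $w$ is $Q^b$, and the time is $O(\Tm_\cI Q^b)$ after geometric summation. For space, the recursion stack has depth $b$. At each level we either keep $I_k$'s working state ($\Sp_\cI$) or buffer its $Q$ answered queries ($Q\log|\Sigma|$), whichever is smaller, which gives the stated $\Speval$.

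The core of the argument is choosing the sequence greedily, one level at a time, while keeping the distance to $c$ under control. Suppose $w_{k-1}=I_{k-1}\circ\cdots\circ I_1(w)$ satisfies $\delta(w_{k-1},c)\le\rho_{k-1}$ with $\rho_{k-1}\le\rho$. The improving-set property says that $\EE_{I}\delta(I(w_{k-1}),c)\le\gamma\rho_{k-1}$. We cannot compute distance to the unknown $c$, so we choose $I_k$ by a computable surrogate. For each candidate $I$, compute $D(I)=\EE_{I'\in\cI}\,\delta(I(w_{k-1}),I'(I(w_{k-1})))$, i.e. how much one further improvement moves the word, and take the $I$ that minimizes it. A string close to $c$ is moved little, since the improvers push it toward $c$. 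Conversely, a string at distance $x\le\rho$ from $c$ gets moved by at least about $(1-\gamma)x$ on average, by the triangle inequality. Comparing these bounds shows that the minimizer has distance at most $\frac{\gamma}{(1-\gamma)^2}\rho_{k-1}$ up to factor bookkeeping. The hypothesis $\gamma<(1-\gamma)^2$ makes this ratio less than $1$.

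I expect the main obstacle to be exactly this step, i.e. proving that the surrogate-chosen $I_k$ both stays within radius $\rho$, so the below-$\rho$ guarantee keeps applying, and strictly contracts by $(1-\gamma)^2/\gamma$. The case where $I(w_{k-1})$ falls outside the radius has to be handled by showing that such an $I$ has large $D(I)$. I would try to pin this down first, possibly at the cost of slightly changing the constants in $b$.

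After $b$ steps the distance is below $\rho\,(\gamma/(1-\gamma)^2)^b$. By the choice of $b$, this is less than $1/N$ (the ``$\rho N/3+1$'' absorbs the slack), so the distance is $0$, meaning $A(i)=c_i$ for every $i$. Computing $D(I)$ at level $k$ needs $N$ coordinates for each of the $|\cI|^2$ pairs $(I,I')$, and each coordinate costs $Q^{k+1}\Tm_\cI$ through the recursion. Summing over the $b$ levels gives the stated $\Tmpre$. The space is that of one evaluation plus counters, $O(\log N+\log|\cI|)$ bits, plus the stored prefix $I_1,\ldots,I_{k-1}$.
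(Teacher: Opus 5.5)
Your architecture matches the intended mechanism: output $(I_1,\dots,I_b)$, evaluate the composition recursively, and pick each $I_k$ by testing candidates against $\cI$ itself over all pairs $(I,I')$. The resource accounting also follows from it. The gap is that the selection rule as you define it is unsound, and the repair you plan for it is false. The hypothesis only bounds the \emph{average} $\EE_{I}\,\delta(I(w_{k-1}),c)$, so nothing prevents a single $I\in\cI$ from sending $w_{k-1}$ far from $c$, for instance onto another codeword $c'\neq c$. Every codeword is fixed by every improver (take $\rho'=0$ in the below-$\rho$ condition), so such an $I$ has $D(I)=0$ and your rule may select it. More generally, once $I(w_{k-1})$ leaves the radius-$\rho$ ball around $c$, you have no control over $D(I)$ in either direction. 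So ``such an $I$ has large $D(I)$'' cannot be proved: a small $D$ certifies proximity to \emph{some} fixed point, not to $c$.

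What is missing is a second computable test tying the candidate to the current word. Accept $I$ only if $\delta(I(w_{k-1}),w_{k-1})\le 2\rho/3$, which costs another $N$ evaluations per candidate and stays within budget, and minimize $D$ over accepted $I$. Suppose $d=\delta(w_{k-1},c)\le\rho/3$. Then:
\begin{itemize}
\item Every accepted candidate lies within $\rho$ of $c$, so the bounds $(1-\gamma)\delta(I(w_{k-1}),c)\le D(I)\le(1+\gamma)\delta(I(w_{k-1}),c)$ apply to it.
\item The best improver $I^*$, with $\delta(I^*(w_{k-1}),c)\le\gamma d$, passes the filter, since $\delta(I^*(w_{k-1}),w_{k-1})\le(1+\gamma)d\le 2\rho/3$.
\item Hence the chosen $I_k$ satisfies $\delta(I_k(w_{k-1}),c)\le\frac{\gamma(1+\gamma)}{1-\gamma}\,d\le\frac{\gamma}{(1-\gamma)^2}\,d$, and the invariant $d\le\rho/3$ persists.
\end{itemize}
This needs initial distance at most $\rho/3$, and that is exactly what the formula for $b$ is calibrated to. With $\alpha=\gamma/(1-\gamma)^2$ one gets $\frac{\rho}{3}\alpha^b\le\frac{\rho/3}{\rho N/3+1}<\frac1N$.

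Your closing step, which starts from distance $\rho$, is wrong as written. The given $b$ only yields $\rho\alpha^b<3/N$, i.e.\ fewer than three residual errors rather than none. Moreover, from distance exactly $\rho$, no threshold on $\delta(I(w),w)$ can do both jobs. Admitting $I^*$ may require a threshold of at least $(1-\gamma)\rho$, while certifying $\delta(I(w),c)\le\rho$ via the triangle inequality would force the threshold to be $0$. So you should either run the argument from radius $\rho/3$, or find a genuinely different selection rule for the first round. The first option is harmless for the paper's application, where only $\rho=N^{-o(1)}$ matters.
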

In particular, we note that the algorithm $\cP$ guaranteed in \Cref{lem:imp_means_DLCC} is precisely a $(Q, \rho)$-DLCC as in \Cref{def:DLC}, with the reported running times and space complexities.

Next, we quote another result from \cite{CM25} that states that Lifted RS codes have (uniform) improving sets.
\begin{lemma}[Improving Sets for Lifted RS Codes, \cite{CM25}, Lemma 4.21] 
    \label{lem:imp_LRS}
    Let $\F_q$ be a finite field with $q = 2^s \geq 101$.  Fix $\theta > 0$.  Fix integers $h,a$ and let $m = h\cdot a$.  Let $C = \mathrm{Lift}_m(RS_q(d))$, for $d = (1 - 2\theta)q$.  Then there is a below-$(\theta/12)$, factor-$\gamma$ improving set $\cI$ for $C$, where 
    \[ \gamma = O\inparen{ \frac{a^{2a + 1}400^a}{q^a \theta^2} }.\]
    Further, we have $$|\cI| = q^{a(2a + O(1))}\cdot \poly(h),$$ and $\cI$ has query complexity $Q \leq q(q-1)$, and running time and space that satisfiy
    \[ \Sp_{\cI}, \Tm_{\cI} \leq \poly(q^a \cdot h).\]
\end{lemma}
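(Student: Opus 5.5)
The plan is to build the improving set $\cI$ from \emph{local line-decoding} rules whose lines come from a small, explicitly indexed family. Then I would show the improvement factor via a second-moment (sampling) argument. First I reduce the geometry. Write $\F_q^m = (\F_q^a)^h$, grouping the $m = h\cdot a$ coordinates into $h$ blocks of size $a$. Each $I \in \cI$ will be specified by a short description: a seed choosing a ``direction pattern'' in $\F_q^a$, replicated or hashed across the $h$ blocks using a pairwise-independent-style family. This is where $|\cI| = q^{a(2a+O(1))}\poly(h)$ comes from. The seed picks directions $u \in \F_q^m$, and $I(w)_x$ will be computed from the restrictions of $w$ to lines $x + T u$ through $x$. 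The query bound $Q \le q(q-1)$ means we read at most $q-1$ further points on each of at most $q$ such lines.

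\textbf{Step 1 (the local rule).} Given $w$, $x$ and the seed, enumerate the lines $\ell_1,\dots,\ell_k$ through $x$ prescribed by the seed. On each $\ell_j$, restrict $w$ and run a Reed--Solomon unique decoder for $RS_q(d)$ with $d = (1-2\theta)q$. This tolerates a $\theta$-fraction of errors, which is the source of the $\theta$ in the radius $\theta/12$. Set $I(w)_x$ to the plurality of the decoded values at $x$. All of this is $\poly(q)$ work per line, plus $\poly(q^a h)$ to expand the seed. That gives the stated time and space bounds $\Sp_\cI,\Tm_\cI \le \poly(q^a h)$. Since every codeword restricts to a degree-$\le d$ polynomial on every line, decoding succeeds on any line carrying fewer than a $\theta$-fraction of corrupted points.

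\textbf{Step 2 (improvement).} Fix $c$ with $\delta(w,c) = \rho' \le \theta/12$, and let $E$ be the error set. Call a line bad if it meets $E$ in more than a $\theta$-fraction of its points. A point $x$ is misdecoded by $I$ only if at least half of its lines are bad. For a uniformly random direction, the points of $x+Tu$ other than $x$ are pairwise independent and uniform, so Chebyshev gives $\Pr[\text{line bad}] \le O(\rho'/(q\theta^2))$. Averaging over $x \notin E$ and over seeds bounds the expected newly created errors. For $x \in E$ itself we need most lines to be good, which the same bound gives. The point of using $a$-dimensional direction patterns, rather than single independent lines, is to get $a$-wise independence among the lines through $x$. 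The probability that a majority of them is bad then decays like $\bigl(O(a^2\cdot 400/(q\theta^2))\bigr)^{a}$-type terms. That yields $\EE_I\,\delta(I(w),c) \le \gamma\rho'$ with $\gamma = O(a^{2a+1}400^a/(q^a\theta^2))$.

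\textbf{Main obstacle.} The hard part is exactly this derandomized amplification. I have to show that the explicit, seed-indexed family of line patterns inside $\F_q^a$-blocks is independent enough for a higher-moment (degree-$a$) tail bound to beat the trivial $1/q$ gain, while keeping $|\cI|$ only $q^{O(a^2)}\poly(h)$. I would first try an $a$-th moment computation over a $2a$-wise independent polynomial hash family on $\F_q^a$. This matches the exponent $a(2a+O(1))$, and the $a^{2a}$ factor arises from counting the moment terms. Once it holds for all $\rho' \le \theta/12$ simultaneously, we have the below-$(\theta/12)$, factor-$\gamma$ property.
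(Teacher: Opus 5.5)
The paper does not prove this lemma. It is quoted from \cite{CM25} (Lemma 4.21) and used as a black box in \Cref{thm:DLCC_raw}, so your sketch has to stand on its own, and as written it has a genuine gap. The whole content of the lemma is the derandomized amplification: a factor of roughly $q^{-a}$ from only $q^{a(2a+O(1))}\poly(h)$ improvers, each making at most $q(q-1)$ queries. You leave exactly this as the ``main obstacle,'' without fixing the direction family or proving any property of it. As a result, $\gamma$ and $|\cI|$ are read off by pattern-matching rather than derived; indeed $(a^2\cdot 400/(q\theta^2))^a$ would carry $\theta^{-2a}$, not $\theta^{-2}$.

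\textbf{The conditional failure probability is not uniform in $x$.} Chebyshev over a uniformly random line bounds only the \emph{average} over $x$ of $p_x$, the fraction of lines through $x$ that are bad. For fixed $x$, the points of $x+\F_q u$ are not pairwise independent (they are all determined by $u$), and $p_x$ can be $\Theta(\rho')$. For example:
\begin{itemize}
\item Let $T\subseteq\F_q^2$ be a union of about $\rho' q$ lines through a point $z$, let $E=\{x:(x_1,x_2)\in T\}$ (density $\approx\rho'$), and let $w=c+\Delta\mathbf{1}_E$ for a nonzero constant $\Delta$.
\item For the $q^{-2}$ fraction of points with $(x_1,x_2)=z$, about a $\rho'$ fraction of \emph{all} lines through $x$ lie inside $E$. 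Each of these decodes consistently to the wrong value $c_x+\Delta$, while every other line through $x$ meets $E$ only at $x$ and decodes to $c_x$.
\item Hence with $k$ lines per point, even in fully independent uniform directions, $\EE_I\,\delta(I(w),c)\gtrsim q^{-2}\Pr[\mathrm{Bin}(k,\rho')>k/2]$.
\item For $\rho'=\theta/12$, fixed $a\ge 3$ and $\theta$, and $q\to\infty$, this exceeds $\gamma\rho'=O_{a,\theta}(q^{-a})\,\rho'$ unless $k\gtrsim (a-2)\log q/\log(12/\theta)$.
\end{itemize}
So a plurality over $a$-wise independent lines, with a uniform per-line failure probability $O(\rho'/(q\theta^2))$ plugged in, cannot give $(\cdot/q)^a$. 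What you would need is on the order of $q$ lines per point (consistent with $Q\approx q(q-1)$), a $2a$-th moment bound conditioned on $x$, and a pointwise cap $p_x\lesssim\rho'/\theta$ to control $\EE_x[p_x^a]$. That cap comes from the fact that the lines through $x$ partition $\F_q^m\setminus\{x\}$.

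\textbf{Pseudorandom directions break the Chebyshev step.} That step needs $u$ to be uniform on $\F_q^m\setminus\{0\}$; only then are $x+t_1u$ and $x+t_2u$ independent over a random $x$. Your improvers use at most $q\,|\cI|=q^{a(2a+O(1))+1}\poly(h)$ directions in total. This is far fewer than $q^{ah}$ when $h\gg a$, which is the regime of \Cref{thm:DLCC_raw}, so the direction distribution is far from uniform. Since the error pattern is chosen after the family, this structure can be exploited:
\begin{itemize}
\item Replicated patterns $u=(v,\dots,v)$ fail outright. All of them lie in one $a$-dimensional subspace $W$. If $E$ is a union of cosets of $W$, every line through every $x\in E$ lies inside $E$, so $\gamma\ge 1$.
\item In general you need that, for every proper subspace $W$, only a $\lesssim q^{-a}$ fraction of improvers have most of their directions in $W$.
\item You also need a replacement for pairwise independence that is valid for arbitrary $E$.
\end{itemize}
Constructing a seeded family with these properties and proving the resulting tail bound is precisely the missing step. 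It is what the citation supplies and what your sketch does not.
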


Next, we prove the following theorem; \Cref{thm:DLCC} will follow afterwards as a corollary.

\begin{theorem}[Similar to \cite{CM25}, Lemma 4.22]

\label{thm:DLCC_raw}
Let $\eta \in (0,1)$ be a constant.  Then for any sufficiently large $s$, there is an explicit linear code $C \subseteq \F_q^N$ of rate at least $1 - \eta$ and distance $N^{-o(1)}$, for $q = 2^s$ and $N = q^{\log\log\log(q) \cdot \log\log\log\log(q)}$, so that $C$ is a $(Q, \rho)$-DLCC with $Q = N^{o(1)}$ and $\rho = N^{-o(1)}$.  Further, $C$ has a DLC algorithm with pre-processing time $N^{1 + o(1)}$, pre-processing space, evaluation time, and evaluation space $N^{o(1)}$, and output length $\tilde{O}(\log N)$.
\end{theorem}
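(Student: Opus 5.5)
The plan is to instantiate \Cref{lem:imp_LRS} with growing parameters and feed the resulting improving set into \Cref{lem:imp_means_DLCC}, then check rate via \Cref{thm:GKS}. Concretely, with $q=2^s$, we take $m = \log\log\log(q)\cdot\log\log\log\log(q)$ so that $N = q^m$. We factor $m = h\cdot a$ with $a$ a slowly growing parameter, say $a = \Theta(\log\log\log\log q)$ and $h = \log\log\log q$. The code is $C = \mathrm{Lift}_m(RS_q(d))$, or rather its explicit subcode from \Cref{rem:explicit_LRS}. Here $d=(1-2\theta)q$, and we choose $2\theta = 2^{-c}$ with $c = (1+\lceil\log m\rceil)2^{(1+\lceil \log m\rceil)m}\log(1/\eta)$, so that \Cref{thm:GKS} gives rate at least $1-\eta$. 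Linearity is immediate, and distance is at least $1-d/q = 2\theta$ by the Schwartz--Zippel argument along lines.

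The parameter check hinges on $\theta$. We have $1/\theta = 2^{c} = 2^{m^{O(m)}}$ (for fixed $\eta$). Since $m = O(\log\log\log q\cdot\log\log\log\log q)$, we get $m^{O(m)} = 2^{O(m\log m)} \le 2^{o(\log\log q)} \le (\log q)^{o(1)}$. Hence $1/\theta = 2^{(\log q)^{o(1)}} = q^{o(1)}$, and because $N = q^{m}$ with $m\to\infty$, this is also $N^{o(1)}$. So the distance and the radius $\rho=\theta/12$ are both $N^{-o(1)}$.

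Next we verify the hypothesis $\gamma<(1-\gamma)^2$ of \Cref{lem:imp_means_DLCC}. By \Cref{lem:imp_LRS}, $\gamma = O(a^{2a+1}400^a/(q^a\theta^2))$. Here $\theta^{-2}=q^{o(1)}$, while $q^a$ with $a\ge 1$ dominates, so $\gamma = q^{-\Omega(a)}$ and the condition holds comfortably. Then $b = \lceil \log(\rho N/3+1)/\log((1-\gamma)^2/\gamma)\rceil = O\bigl(\log N/(a\log q)\bigr) = O(m/a) = O(h)$.

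The resource bounds are then plugged in. Query complexity is $Q^b\le q^{2b} = q^{O(h)} = N^{O(1/a)} = N^{o(1)}$, since $a\to\infty$; this is exactly why $a$ must grow. For the improving set, $|\cI| = q^{a(2a+O(1))}\poly(h)$, and $\Tm_\cI,\Sp_\cI\le\poly(q^a h)$. Since $q^{a^2} = N^{a^2/m} = N^{o(1)}$ requires $a^2 = o(m)$, this is satisfied by $a\approx \log\log\log\log q$ versus $m = h a$ with $h = \log\log\log q\gg a$. Consequently pre-processing time $O(N b|\cI|^2Q^{b+1}\Tm_\cI) = N^{1+o(1)}$, pre-processing and evaluation space $N^{o(1)}$, evaluation time $\Tm_\cI Q^b = N^{o(1)}$, and output length $O(b\log|\cI|) = O(h\cdot a^2\log q)$, which is $\tilde O(\log N)$ up to the slowly growing factor $a$ (polyloglog), interpreted as $\tilde O$.

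The main obstacle is this balancing act. We need $a\to\infty$ so that $Q^b$ is subpolynomial, $a^2 = o(m)$ so that $|\cI|$ and $\Tm_\cI$ are $N^{o(1)}$, and $m$ small enough that $2^{m^{O(m)}}$ remains $N^{o(1)}$. We would also need to confirm that $q\ge 101$ and that $m = ha$ can be made an exact integer factorization, handled by rounding.
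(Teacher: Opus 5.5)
Your proposal is correct and is essentially the paper's proof. It uses the same parameters $h=\log\log\log q$, $a=\log h$, $m=ah$, $\theta=2^{-c-1}$, gets the rate from \Cref{thm:GKS}, and feeds the improving set of \Cref{lem:imp_LRS} into \Cref{lem:imp_means_DLCC}. The key estimates also match the paper: $c=o(\log q)$ (so $\theta=N^{-o(1)}$), $b=O(h)$, $|\cI|,\Tm_{\cI},\Sp_{\cI}=N^{o(1)}$, and output length $O(a\log N)=\tilde{O}(\log N)$. The only cosmetic difference is that you bound the distance directly by the lines-through-a-point argument, while the paper reads it off from the decoding radius $\rho=\theta/12$; your argument actually gives $2\theta-O(1/q)$ rather than exactly $2\theta$, which is harmless here.
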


\begin{proof}
    Let $\eta \in (0,1)$ be as in the proof statement, and fix an arbitrarily small constant $\xi$.
    Fix $q = 2^s$ as in the theorem statement.
Define
    \[ h = \log\log\log(q) \qquad a = \log(h) = \log\log\log\log(q).\]
    Let $m = a\cdot h$.  Define
    \[ c = (1 + \lceil \log m \rceil ) 2^{(1 + \lceil \log m \rceil ) m } \cdot \log(1/\eta),\]
    and set $\theta = 2^{-c-1}$.  Define $d = (1 - 2\theta)q$.

    Let $C = \mathrm{Lift}_m(RS_q(d)) \subseteq \F_q^N$, where $N = q^m$.  Notice that, as $d = (1 - 2^{-c})q$ for the same choice of $C$ as in \Cref{thm:GKS}, that theorem implies that the rate of $C$ is at least $1 - \eta$.

    Now, by \Cref{lem:imp_LRS}, there is an below-$\rho$, factor-$\gamma$ improving set $\cI$ 
    for $C$, with 
    \[ \rho = \theta/12, \]
    \[ \gamma = O\inparen{\frac{a^{2a + 1} 400^a}{q^a \theta^2}},\]
    size $|\cI| = q^{a(2a + O(1))} \cdot \poly(h)$, query complexity
    $Q \leq q^2$, and running time and space $\Tm_{\cI}, \Sp_{\cI} \leq \poly(q^a \cdot h)$.

    Before plugging this improving set into \Cref{lem:imp_means_DLCC}, we simplify these parameters.  
      We first write $q$ in terms of $N$.  We have
    \[ N = q^m = q^{\log\log\log(q) \cdot \log\log\log\log(q)},\]
    which justifies the value of $N$ in the theorem statement.  From there, we see that 
    \[ \log\log(N) = \log\log(q)(1 + o(1)) \qquad \Rightarrow \qquad \log\log(q) = \log\log(N)(1 - o(1)).\]
    This implies that $q = N^{1/m} = N^{o(1)}$.

   Next, we consider $c$ above.  We have (using the fact that $q$ and hence $m$ is sufficiently large, and $\eta$ is a constant),
    \begin{align}
        c &= ( 1 + \lceil \log m \rceil ) 2^{ (1 + \lceil \log m \rceil ) m } \cdot \log(1/\eta) \notag \\
        &\leq 2^{2m \log m} \cdot \log(1/\eta) \notag\\
        &= 2^{ 2 \log\log\log(q) \log\log\log\log(q) ( \log\log\log\log(q) + \log\log\log\log\log(q) )} \cdot \log(1/\eta) \notag\\
        &\leq 2^{3 \log\log\log(q) ( \log\log\log\log(q))^2 } \cdot \log(1/\eta)\notag \\
        &= (\log\log(q))^{3(\log\log\log\log(q))^2}  \cdot \log(1/\eta) \notag\\
        &= o(\log q) \label{eq:cstarq}\\
        &= o(\log N)\label{eq:cstarN}.
    \end{align}
    In particular, we have
    \begin{equation}\label{eq:theta} \theta = 2^{-c-1} = N^{-o(1)}.
    \end{equation}
    Note that this implies that $\rho = N^{-o(1)}$ as well.

    Now we are ready to apply \Cref{lem:imp_means_DLCC}.  We first observe that for sufficiently large $q$, the value of $\gamma$ above is sufficiently small so that $\gamma < (1 - \gamma)^2$.  Then we conclude that there is a DLC algorithm $\cP$ as in \Cref{lem:imp_means_DLCC}.  To work out the parameters, we first observe that we may bound the parameter $b$ by
    \begin{align*} b &= \left\lceil \frac{ \log(\rho N/3 + 1) }{\log((1-\gamma)^2/\gamma)} \right\rceil \\
    &\leq \frac{2\log(N)}{ \log(1/\gamma)}  \qquad \text{ since $\gamma$ is sufficiently small }\\
    &\leq \frac{ 2ah \log (q) }{ { a \log q - 2\log(1/\theta) - (2a + 1) \log(400 a) - O(1) }} \qquad \text{Definition of $\gamma$, and $N = q^{ah}$} \\
    &\leq \frac{ 2ah \log q }{ a \log q - 2(c+1) - (2a + 1)\log(400 a) - O(1)} \qquad \text{Definition of $\theta$}\\
    &\leq \frac{2 ah \log q }{a \log q - 2( o(\log q) + 1 ) - (2a + 1) \log(400a) - O(1)} \qquad \text{\Cref{eq:cstarq}} \\
    &\leq \frac{ 2ah }{a - o(1) } \\
    &= 2h(1 + o(1)),
    \end{align*}
    where in the final  inequality we have used that $a = \log\log\log\log q$, and hence $a \log a = o(\log q)$.
  
    Finally, we simplify the expressions for $|\cI|, \Tm_{\cI}, \Sp_{\cI}$.  We have
    \begin{align}
        |\cI| &= q^{a(2a + O(1))} \cdot \poly(h) \notag \\
        &= N^{ (2a + O(1))/h } \cdot \poly(h) \label{eq:Isize} \\
        &= N^{o(1)}. \notag
    \end{align}
    We also have
    \begin{align*}
        \Tm_{\cI}, \Sp_{\cI} &\leq \poly( q^a \cdot h ) \\
        &= \poly( N^{1/h} \cdot h ) = N^{o(1)}.
    \end{align*}
    
    Now we work out the parameters from \Cref{lem:imp_means_DLCC}.  We see that $\cP$ has query complexity
    \[ Q^b \leq q^{2b} = N^{o(1)}.\]
    We have already seen that the radius (and so also the distance) is
    \[ \rho = \frac{\theta}{12} \geq N^{-o(1)}\]
    by \Cref{eq:theta}.
    Plugging in $b \leq 2h(1 + o(1))$, we see that the pre-processing time is
    \begin{align*} \Tmpre_{\cP} &= O(N \cdot b \cdot |\cI|^2 Q^{b+1} \cT_{\cI} ) \\
    &=  N \cdot 2h(1 + o(1)) \cdot N^{o(1)} \cdot q^{2(h(1 + o(1)) + 1)} \cdot N^{o(1)} \\
     &=  N^{1 + o(1)}  \cdot q^{2(h(1 + o(1)) + 1)}  \\
    &=  N^{1 + o(1)} \cdot N^{(2h(1+o(1)) + 1 )/m} \\
    &=  N^{1 + o(1)} \cdot N^{ (2 + o(1))/a } \\
    &= N^{1 + o(1)}
    \end{align*}
    Similarly, the pre-processing space is 
    \begin{align*}
        \Sppre_{\cP} &= O\inparen{b \cdot \min\inset{ \Sp_{\cI}, Q \log q } + b \log|\cI| + \log N + \Sp_{\cI}} \\
        &\leq 2h(1 + o(1)) \cdot N^{o(1)} + 2h(1 + o(1)) \cdot o(1) \log N + \log N + N^{o(1)} \\
        &= N^{o(1)}.
    \end{align*}
    The evaluation time is
    \begin{align*}
        \Tmeval_{\cP} &= O\inparen{\Tm_{\cI} \cdot Q^b }\\
        &\leq N^{o(1)} \cdot q^{4h(1 + o(1))} \\
        &= N^{o(1)} \cdot N^{ 4h(1 + o(1))/m } \\
        &= N^{o(1)}
    \end{align*}
    and the evaluation space is
    \begin{align*}
        \Speval_{\cP} &= O\inparen{b \cdot \min\inset{ \Sp_{\cI}, Q\log q} + \Sp_{\cI}} = N^{o(1)},
    \end{align*}
    where the final equality follows exactly as in the computation of $\Sppre_{\cP}$.
    Finally, we observe that the output length is
    \begin{align*} \Spop_{\cP} &= O(b \log|\cI|) \\
    &= O\inparen{h \cdot \inparen{\inparen{ \frac{2a + O(1)}{h} } \log(N) + \log(h)} } \qquad \text{by \Cref{eq:Isize} and the fact that $b = O(h)$} \\
    &= O(\log(N) \log\log\log\log(N) ) = \tilde{O}(\log N).
    \end{align*}
    This completes the proof of the theorem.
\end{proof}

Finally, we are ready to prove \Cref{thm:DLCC}, which follows as a corollary from \Cref{thm:DLCC_raw}.  We restate the theorem for the reader's convenience.
\DLCCthm*

\begin{proof}
    Fix $q = 2^a$ as in the theorem statement.  Now, for each sufficiently large integer $b$, we apply \Cref{thm:DLCC_raw} with $s =a\cdot b$, resulting in a linear code $C$ with alphabet size $q' = 2^{ab} = q^b$, and with block length $N = (q')^{\log\log\log(q') \cdot \log\log\log\log(q')}$, which is a $(Q,\rho)$-DLCC for $Q = N^{o(1)}, \rho = N^{-o(1)}$; and that has a DLC algorithm $\cP$ with $\Tmpre_{\cP} = N^{1 + o(1)}$ and $\Tmeval_{\cP}, \Sppre_{\cP}, \Speval_{\cP} = N^{o(1)}$ and $\Spop = \tilde{O}(\log N)$.  We observe that this choice of $N$ satisfies $q' = N^{o(1)}$.  Finally, we observe that if $C$ if $\F_{q'}$-linear, then it is also $\F_q$-linear, as $\F_q \leq \F_{q'}$, since $a | s$ by construction.

    Since there are an infinite number of choices of $b$, we get an infinite number of choices of $N$ this way, and this proves the theorem.
\end{proof}

\section{Construction of a good base code (Proof of \Cref{thm:basecode})}\label{app:baseCode}
In this section we build the base code $C$ to instantiate \Cref{thm:mainLocal} and prove \Cref{thm:basecode}.

The starting point for our code will be the construction of \cite{ST25}.  They show that a code resulting from the Alon-Edmunds-Luby construction (AEL, \cite{AEL95}) can be list-recovered to capacity in near-linear randomized time, or polynomial deterministic time.  As we are interested in deterministic algorithms, we will use the polynomial-time version.  However, we cannot simply use this code as a black box, because the code is $\F_q$ linear with an alphabet of $\F_q^d$, for some $d > 1$.  In order for our tensor machinery to work, the code $C$ must be linear over its alphabet.  Thus, we concatenate the code from \cite{ST25} with a small list-recoverable inner code, so that the alphabet becomes $\F_q$.  Since $d$, the length of this inner code, is a constant, we can find such a code and decode it by brute force.  Thus, for this inner code, we use a recent result about the list-recoverability of random linear codes from \cite{GG26}.

To begin, we state the building blocks that we will be using.  We start with the AEL code from \cite{ST25}.\footnote{We note that we have changed the statement slightly from Corollary 4.16 in \cite{ST25}  That corollary is stated as a randomized, near-linear time algorithm.  However, as noted in \cite{ST25}, the only randomized component in it is the algorithm for finding regular factors of an expander graph, given in \cite[Lemma 3.9]{ST25}.  That lemma states that this can be done in randomized time $\tilde{O}(|E(G)|)$, or deterministic time $O(|E(G)|^{3.5})$.  Thus, we replace the near-linear contribution to the running time in \Cref{thm:ST25} below with a polynomial contribution, to obtain a deterministic algorithm.  See the second bullet point in ``Remarks'' on Page 9 of \cite{ST25}.}

\begin{theorem}[\cite{ST25}; see \cite{ST25_arxiv}, Corollary 4.16]\label{thm:ST25}
    For all $R, \zeta \in (0,1)$, and for all sufficiently large $n$,
    there is an explicit $\F_q$-linear code $C_{AEL} \subseteq (\F_q^d)^{n}$ so that the following hold.
    \begin{enumerate}
        \item The rate of $C_{AEL}$ is at least $R$.
        \item The distance of $C_{AEL}$ satisfies $\delta(C_{AEL}) \geq 1 - R - \zeta$
        \item For all $\cS \subseteq {\F_q^d \choose \leq \ell }^{n}$, $C_{AEL}$ is $(1 - R - \zeta, \ell, L_0)$-list-recoverable with 
        \[ L_0 \leq \exp\inparen{\exp\inparen{O\inparen{\frac{\ell}{\zeta} \log\inparen{\frac{\ell}{\zeta}}}}}\]
        \item $C_{AEL}$ has a deterministic list-recovery algorithm to perform the above that runs in time $O_{\ell, \zeta}\left((d\cdot n)^{3.5}\right)$.
        \item We may take $d = \inparen{\frac{\ell}{\zeta}}^{O(\ell/\zeta)}$ and $q$ to be any prime power at least $\ell^{O(1/\zeta)}$.
    \end{enumerate}
\end{theorem}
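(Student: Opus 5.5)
The plan is not to reprove list-recovery of AEL codes from scratch. Instead I would take \cite[Corollary 4.16]{ST25_arxiv} as the main input and check that the one randomized ingredient can be replaced by a deterministic one at polynomial cost. First I would recall the construction of $C_{AEL}$ as in \Cref{sec:AEL}. It has an outer code $C_{\out}$ of rate $1-O(\zeta)$ that is list-recoverable from a small constant fraction of errors, a constant-size inner code $C_{\inn}\subseteq \F_q^{d}$ of rate $R+O(\zeta)$ and distance at least $1-R-\zeta/2$, and a $d$-regular bipartite spectral expander $G$ with $n$ vertices on each side, which redistributes symbols. Items (1), (2) and (5) then follow exactly as in \cite{ST25}. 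Rate is $R_{\inn}R_{\out}\ge R$. Distance is at least $\delta_{\inn}-O(\lambda/d)\ge 1-R-\zeta$ by the expander mixing lemma, once $d=(\ell/\zeta)^{O(\ell/\zeta)}$ is large enough that $\lambda/d\ll\zeta$. The field-size constraint $q\ge \ell^{O(1/\zeta)}$ is whatever is needed for the inner code to be capacity-achieving list-recoverable. Explicitness holds because the inner code has constant length and can be found by brute force, while the expander and the outer code are explicit. Item (3), the combinatorial list-size bound $L_0\le\exp(\exp(O(\frac{\ell}{\zeta}\log\frac{\ell}{\zeta})))$, is a purely combinatorial statement and I would take it verbatim from \cite{ST25}; randomness of the algorithm plays no role in it.

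The real content is item (4). I would walk through the list-recovery algorithm of \cite{ST25} and record every step. The main steps are: the local (inner) list-recovery on each right vertex, done by brute force in time $O_{\ell,\zeta}(1)$ per vertex; the propagation to left vertices along the edges of $G$; the decomposition of the expander's edge set into regular factors (spanning subgraphs of a fixed degree); and the final global list-recovery of the outer code together with the pruning step. Each of these is deterministic and runs in time $O_{\ell,\zeta}(dn)$ or $\poly(dn)$, except for the computation of regular factors. For that step, \cite[Lemma 3.9]{ST25} provides both a randomized $\tilde O(|E(G)|)$ algorithm and a deterministic $O(|E(G)|^{3.5})$ algorithm, for example via repeated bipartite perfect matching or flow computations.

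Since $|E(G)|=dn$, substituting the deterministic version gives total time $O_{\ell,\zeta}((dn)^{3.5})$ plus lower-order terms, which is the claimed bound. I would also check that the factors are computed once per call of the algorithm, and not once per candidate codeword, so that this cost is not multiplied by $L_0$. If they were computed per candidate, the extra factor would be $L_0=O_{\ell,\zeta}(1)$, which the $O_{\ell,\zeta}$ notation absorbs anyway.

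The main obstacle is this bookkeeping step. I need to confirm that the only source of randomness in Corollary 4.16 is Lemma 3.9, which \cite{ST25} asserts in their remarks, and that the correctness analysis uses only the existence of the regular factors, not any distributional property of the randomly chosen ones. My first move would be to check exactly which properties of the factors the correctness proof of \cite{ST25} invokes, namely regularity and perhaps spectral properties inherited from $G$. Both are deterministic guarantees of the output of Lemma 3.9, so substituting the deterministic algorithm should preserve correctness verbatim.
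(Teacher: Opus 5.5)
Your treatment of item (4) is exactly the paper's argument. The only randomized ingredient in \cite[Corollary 4.16]{ST25_arxiv} is the regular-factor computation of \cite[Lemma 3.9]{ST25}, and substituting its deterministic $O(|E(G)|^{3.5})$ version yields the $O_{\ell,\zeta}((dn)^{3.5})$ bound.

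\textbf{The gap is in the quantifier over block lengths.} \cite{ST25} proves its result only for infinitely many $n$, while the statement claims it for \emph{all} sufficiently large $n$. Your proposal says items (1), (2), (5) and explicitness ``follow exactly as in \cite{ST25}'' because ``the expander \ldots{} [is] explicit.'' That does not give every large $n$. The construction needs an explicit bipartite graph with the following properties:
\begin{itemize}
\item exactly $n$ vertices on each side;
\item degree $d=2^{O(\phi)}$, where $\phi=(\ell/\zeta)\log(\ell/\zeta)$;
\item normalized second singular value $\lambda\le 2^{-b_0\phi}$.
\end{itemize}
The standard explicit families with such spectral guarantees exist only for special vertex counts. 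So as written, your argument proves the theorem only for infinitely many $n$.

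This is not cosmetic. \Cref{thm:basecode} needs every sufficiently large multiple of $d_0$. That flexibility gives the block lengths $N'(1-o(1))\le N\le N'$ in \Cref{thm:instantiate_DALLR}, which are needed to intersect with the DLCC in \Cref{thm:final}.

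\textbf{The missing step.} The paper supplies it in the remark after the theorem, in two parts.
\begin{enumerate}
\item Observe, by inspecting \cite{ST25}, that the existence of such an expander is the only constraint on $n$.
\item Build one for every large $n$:
\begin{itemize}
\item Set $\lambda=2^{-b_0\phi}$.
\item Pick a prime $p\equiv 1 \pmod 4$ in $[2^{10b_0\phi},2^{10b_0\phi+1}]$ (such primes exist, e.g.\ by \cite{CH12}), and let $d=p+2$. Then $d\ge\lambda^{-10}\ge 2\left((1+\sqrt2)/\lambda\right)^2$.
\item Alon's construction \cite{Alo21} gives, for all large enough $n$, a strongly explicit $d$-regular graph $G'$ on exactly $n$ vertices. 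Its second eigenvalue in magnitude is below $(1+\sqrt2)/\sqrt d+o(1)<\lambda$.
\item The bipartite double cover of $G'$ has $n$ vertices on each side and the same spectral parameter.
\end{itemize}
\end{enumerate}
With this expander plugged into the \cite{ST25} construction, your argument goes through for every sufficiently large $n$.
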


\begin{remark}[Legitimate values of $n$]

    In \cite{ST25}, \Cref{thm:ST25} is stated for infinitely many $n$, rather than for any sufficiently large $n$.  However, an inspection of the proof shows that the requirement on $n$ is that there exists an explicit bipartite expander graph $G = (L,R,E)$ with $|L| = |R| = n$,  with normalized expansion parameter\footnote{For a bipartite graph $G = (L,R,E)$ with $|L| = |R| = n$, let $A_G$ be the normalized adjacency matrix of $G$, so the rows are indexed by $L$ and the columns by $R$, and each nonzero entry is $1/d$.  Then the normalized expansion parameter $\lambda$ is defined as the second-largest singular value of $A_G$.} $\lambda \leq 2^{-b_0 \cdot \ell/\zeta \log(\ell/\zeta)}$ for some absolute constant $b_0$, and of degree $d = 2^{O(\ell/\zeta \log(\ell/\zeta)))}$.   In fact, we can obtain such an expander for any sufficiently large $n$.  In more detail, \cite[Theorem 1.2]{Alo21} states that for any prime $p \equiv 1$ mod $4$, and for \emph{all} large enough $n$, there is a strongly explicit construction of an expander graph $G'$ on $n$ vertices with degree $d = p+2$ and normalized expansion parameter $\lambda$ satisfying $\lambda < (1+ \sqrt{2})/\sqrt{d} + o(1)$.\footnote{For a non-bipartite graph $G' = (V,E)$, let $A_{G'}$ be the normalized adjacency matrix, which is a symmetric matrix with all nonzero entries equal to $1/d$.  The expansion parameter $\lambda$ is defined to be the second largest eigenvalue in magnitude of $A_{G'}$.}  
    
    Thus, in the proof of \Cref{thm:ST25}, we first fix $\lambda \leq 2^{-b_0 \cdot (\ell/\zeta) \log(\ell/\zeta)}$.  Then we choose $d = 2^{O((\ell/\zeta) \log(\ell/\zeta)}$ so that (a) $d = p+2$ for some prime $p \equiv 1$ mod $4$, and (b) so that $d \geq 2\inparen{\frac{1 + \sqrt{2}}{\lambda}}^2$.  To see why we can do this, first set $\phi = (\ell/\zeta) \log(\ell/\zeta)$.  Then for (a), we are guaranteed that there is at least one prime $p$ between $2^{10 \cdot b_0 \phi}$ and $2^{10 \cdot b_0 \phi + 1}$ that is equal to $1$ mod $4$ (this follows from, e.g., \cite{CH12}); so fix this $p$ and choose $d = p+2$, so $d = 2^{O(\phi)}$, as desired.  For (b), we then observe that with this choice, we have
    \[ d \geq 2^{10 b_0 \phi} = (1/\lambda)^{10} \geq 2\cdot \inparen{\frac{1 + \sqrt{2}}{\lambda}}^2.\]
    Now we invoke the result of \cite{Alo21}, and conclude that there is some graph $G'$ on exactly $n$ vertices of degree $d$ with expansion parameter 
    \[ \lambda' < \frac{1 + \sqrt{2}}{\sqrt{d}} + o(1) < \lambda,\]
    where in the last line we have used that $d \geq 2\inparen{\frac{1 + \sqrt{2}}{\lambda}}^2$ implies that $\lambda \geq \sqrt{2}\inparen{\frac{1 + \sqrt{2}}{\sqrt d}}$.  Thus, we conclude that $G'$ is also an expander with parameter $\lambda$.  The final step is to let $G$ be the double-cover of $G'$\footnote{Given a graph $G' = (V,E')$, the double-cover $G = (L,R,E)$ is a biparate graph where both $L$ and $R$ are a copy of $V$; and so that for all $\{u,v\} \in E'$, there are two edges $\{u_L, v_R\}, \{v_L,u_R\} \in E$.  Observe that the adjacency matrix for $G$ is the same as for $G'$, and in particular they have the same expansion parameter.} to obtain a bipartite graph with expansion factor $\lambda \leq 2^{-b_0 \phi}$, degree $d = 2^{O(\phi)}$, and with exactly $n$ vertices on each side, as desired.
\end{remark}
We will also use the following result from \cite{GG26}.  (To obtain the version below, we have substituted $R \gets 1 - 2\zeta$ and $\eps, \eps_0 \gets \zeta$ into the version from \cite{GG26}).
\begin{theorem}[\cite{GG26}, follows from Theorem 4.7]\label{thm:GG26}
    Let $\zeta > 0$, and fix $\ell \geq 1$.  Let $q$ be a prime power and $d'$ be a positive integer.  Let $C_{RLC} \subseteq \F_q^{d'}$ be a random linear code of rate $1 - 3\zeta$.  
    Suppose that 
    \[ L' \leq \inparen{ \frac{\ell}{1 - \zeta} }^{(1 - \zeta)/\zeta}\]
    and that
    \[ q \geq (2L')^{4/\zeta} \qquad d' \geq \frac{(L')^2}{\zeta}.\]
    Then $C$ is $(\zeta, \ell, L')$-list-recoverable with high probability.
\end{theorem}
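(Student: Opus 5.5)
The plan is to obtain \Cref{thm:GG26} by specializing the general list-recovery theorem for random linear codes in \cite[Theorem 4.7]{GG26}, rather than proving anything from scratch. As I recall it (I will have to check the exact form against \cite{GG26}), that theorem has three ingredients. It has a target rate $R$, an additive rate gap $\eps$ below capacity, and a slack parameter $\eps_0$ in the list-size exponent. It asserts that a random linear code of rate $R-\eps$ is $(1-R-\eps_0,\ell,L')$-list-recoverable with high probability. This holds for every $L'$ below a bound of the form $\inparen{\frac{\ell}{R+\eps_0}}^{(R+\eps_0)/\eps_0}$, provided the field size is at least $(2L')^{\Omega(1/\eps)}$ and the block length is at least $\poly(L')/\eps$. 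The first step is to write this statement down precisely in the paper's notation, with $d'$ for the block length and $L'$ for the list size.

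The second step is the substitution $R\gets 1-2\zeta$ and $\eps,\eps_0\gets\zeta$. Each resulting quantity should match \Cref{thm:GG26}:
\begin{itemize}
\item the rate becomes $R-\eps = 1-3\zeta$;
\item the radius becomes $1-R-\eps_0 = 2\zeta-\zeta = \zeta$;
\item the list-size condition becomes $L' \leq \inparen{\frac{\ell}{1-\zeta}}^{(1-\zeta)/\zeta}$, since $R+\eps_0 = 1-\zeta$;
\item the field-size and length hypotheses become $q \geq (2L')^{4/\zeta}$ and $d' \geq (L')^2/\zeta$, once the constants in the source theorem are instantiated.
\end{itemize}
At each of these points I will check that the inequalities point in the right direction. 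In particular, \cite{GG26} may state its hypotheses with slightly different constants (for example $(2L')^{c/\eps}$ for some specific $c$). Where that happens, I will either weaken our stated hypotheses or note that ours imply theirs for small $\zeta$.

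The main obstacle I expect is reconciling conventions rather than doing any real mathematics. Two things need care. First, \cite{GG26} may define list-recovery with agreement fraction $1-\rho$ instead of error fraction $\rho$, and may state the list bound as ``at most $L'$'' rather than ``at most $L'-1$'' codewords. Second, ``with high probability'' there may refer to the limit $d'\to\infty$ or the limit $q\to\infty$. If the probability statement in \cite{GG26} only becomes meaningful asymptotically, I will record that dependence explicitly. This matters downstream because \Cref{thm:GG26} is applied with constant $d'$, and the inner code is found by brute force. For that use, all we need is that the success probability is positive, which the explicit failure bound in \cite{GG26} should give for $d'\geq (L')^2/\zeta$.
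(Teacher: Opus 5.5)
Your plan is exactly the paper's route: the paper offers no argument beyond citing \cite[Theorem 4.7]{GG26} with the substitution $R\gets 1-2\zeta$ and $\eps,\eps_0\gets\zeta$, and your bookkeeping matches it (rate $1-3\zeta$, radius $\zeta$, the list-size expression $\inparen{\frac{\ell}{1-\zeta}}^{(1-\zeta)/\zeta}$, and the field-size and length conditions). One caution when you write the cited theorem out: the hypothesis ``$L'\leq\inparen{\frac{\ell}{1-\zeta}}^{(1-\zeta)/\zeta}$'' (and your ``for every $L'$ below a bound'') has to be read as $L'$ equal to, or at least, that quantity, which is how the proof of \Cref{thm:basecode} uses it, because for small $L'$ it fails outright. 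For example, take $L'=\ell\geq 3$ and a nonzero codeword $c$ vanishing on $\dim(C)-1$ coordinates: its support has size at most $3\zeta d'+1$, and splitting the support into $\ell+1$ parts gives size-$\ell$ input lists that are within radius $\zeta$ of $\ell+1$ distinct scalar multiples of $c$, using $\zeta d'\geq 9$.
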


Now we put these together to obtain our list-recoverable base code.
We restate \Cref{thm:basecode} for the reader's convenience.
\basecode*
\begin{proof}
    Given $\zeta, \ell$, fix parameters
    \begin{align*} L' &= \inparen{ \frac{\ell}{1 - \zeta} }^{(1-\zeta)/\zeta} \leq (2\ell)^{1/\zeta},\\
     q & \geq q_0 := (L')^{O(1/\zeta)} = \ell^{O(1/\zeta^2)},\\
      d' &= \inparen{ \frac{L'}{\zeta} }^{O(L'/\zeta)} = \ell^{O\inparen{ \frac{ (2\ell)^{1/\zeta}}{\zeta^2}}},
      \end{align*}
    and
    \[ d = d'\cdot(1 - 3\zeta) =\inparen{ \frac{L'}{\zeta} }^{O(L'/\zeta)} = \ell^{O\inparen{ \frac{ (2\ell)^{1/\zeta}}{\zeta^2}}}. \]
    Choose $n$ to be a multiple of $d'$ (which is the value of $d_0$ in the theorem statement).
    Let $C_{AEL} \subseteq (\F_q^d)^{n'}$ be the code from \Cref{thm:ST25}, with length $n' := n/d'$ and rate $R_{AEL} = 1 - 2\zeta.$  Then \Cref{thm:ST25} implies that $\delta(C_{AEL}) \geq \zeta$, and that $C_{AEL}$ is $(\zeta, L', L_0)$-list-recoverable with
    \[ L_0 = \exp\inparen{ \exp\inparen{\frac{L'}{\zeta} \log\inparen{\frac{L'}{\zeta}}}} = \exp\inparen{\exp\inparen{ \ell^{O\inparen{(2\ell)^{1/\zeta}/\zeta^2}}}}\]

    Let $C_{RLC} \subseteq \F_q^{d'}$ be a random linear code of rate $1 - 3\zeta$.  \Cref{thm:GG26} implies that with high probability, $C_{RLC}$ is $(\zeta, \ell, L')$-list-recoverable with high probability, noting that our choice of $d'$ satisfies $d' \geq (L')^2/\zeta$ for sufficiently small $\zeta$.  Note that in time $q^{O((d')^2)},$ we may do a brute-force search for $C_{RLC}$ that is appropriately list-recoverable.  From now on, suppose that $C_{RLC}$ is indeed $(\zeta, \ell, L')$-list-recoverable.
    
    Let $C \subseteq \F_q^n$ be the concatenation of $C_{AEL}$ an $C_{RLC}$, where $n = n'd'$.  That is, 
    \[ C = \inset{ (E_{RLC}(c_1), E_{RLC}(c_2), \ldots, E_{RLC}(c_{n'}))\,:\, c \in C_{AEL} },\]
    where $E_{RLC}: \F_q^{d} \to \F_q^{d'}$ is an encoding map for $C_{RLC}.$  Now, we claim that $C$ is $(\rho_0, \ell, L_0)$-list-recoverable, for 
    \[ \rho_0 = \Omega(\zeta^2) \]
    and $L_0$ as above.  Indeed, consider the following list-recovery algorithm, which takes as input a collection of input lists $\cS \subseteq { \F_q \choose \leq\ell }^n$.
    \begin{enumerate}
        \item For each $i \in [n']$, list-recover the $i$'th inner code $C_{RLC}$ by brute force to get a list $S'_i$.  This takes time 
        $n' \cdot q^{O(d')}. $ 
        \item List-recover $C_{AEL}$ with the input lists $\cS' = (S'_1, \ldots, S'_{n'})$ to obtain a list $\cL$ of size at most $L_0$.  This takes time $O_{\ell, \zeta}((d'\cdot n')^{3.5}) = O_{\ell, \zeta}((n')^{3.5})$.
        \item Return $\cL$.
    \end{enumerate}
    To see that the algorithm above is correct, assume that there is a codeword $c \in C$ so that $\delta(c,\cS) \leq \zeta^2$.  Write $c = (c^{(1)}, c^{(2)}, \ldots, c^{(n')})$, where each $c^{(i)} \in C_{RLC}$, and break up the input lists $\cS$ correspondingly, so 
    \[ \EE_{i \in [n']} \delta( c^{(i)}, \cS^{(i)} ) \leq \zeta^2.\]
    By Markov's inequality, for at least a $1 - \zeta$ fraction of the $i \in [n']$, we have
    \[ \delta(c^{(i)}, \cS^{(i)} ) \leq \zeta.\]
    For these $i$-s, the first step of the algorithm above is successful, so $c^{(i)} \in S_i'$.  Thus, 
    \[ \EE_{i \in [n']} \mathbf{1}[ c^{(i)} \not\in S_i' ] \leq \zeta,\]
    which implies that running $C_{AEL}$'s list-recovery algorithm in step 2 is successful, namely that $c \in \cL$.  Thus the algorithm is correct, and this implies that $C$ is $(\zeta^2, \ell, L_0)$-list-recoverable in time $$O_{\ell, \zeta}(n^{3.5} +  n q^{O(d')}) = O_{\ell,\zeta}\inparen{n^{3.5} + n q^{O_{\ell,\zeta}(1)}}.$$
    Trivially, the same bound applies to the space.

    Finally, we observe that the rate of $C$ is
    \[ R = R_{AEL} \cdot R_{LRC} = (1 - 2\zeta)(1 - 3\zeta) = 1 - O(\zeta),\]
    and the distance is
    \[ \delta(C) \geq \delta(C_{AEL}) \delta(C_{RLC}) \geq \zeta \cdot \zeta = \zeta^2.\] Above, we have used the fact that a random linear code of rate $1 - 3\zeta$ over an alphabet of size $\exp(\Omega(\zeta^{-2}))$ has distance at least $\zeta$ with high probability (see, e.g., \cite{ECT}, Theorem 4.2.1 and Proposition 3.3.4).
\end{proof}

\end{document}